\def\ps@pprintTitle{%
 \let\@oddhead\@empty
 \let\@evenhead\@empty
 \def\@oddfoot{\centerline{\thepage}}%
 \let\@evenfoot\@oddfoot}
\newcommand*{\mafont}{\fontfamily{iwona}\selectfont}
\DeclareTextFontCommand{\textma}{\mafont}
\newcommand{\divider}[1]{\vspace{0.5em}\\
        \textbf{\small#1}\hspace{1em}}
\newcommand{\dividerNoLineEnd}[1]{\vspace{0.5em}
        \textbf{\small#1}\hspace{1em}}
\newcommand{\dividersmall}[1]{\\
        \textbf{\small#1}\hspace{1em}}
\newcommand{\dividerNoLineEndsmall}[1]{
        \textbf{\small#1}\hspace{1em}}
\newcommand{\MoVfunc}[1]{\ensuremath{\textma{MoV}(#1)}}
\newcommand{\MoVfuncExt}[2]{\ensuremath{\textma{MoV}_{#1}(#2)}}
\newcommand{\MoVExt}[1]{\ensuremath{\textma{MoV}_{#1}}}
\newcommand{\SC}{\ensuremath{\mathsf{SC}}\xspace}
\newcommand{\CO}{\ensuremath{\mathsf{CO}}\xspace}
\newcommand{\BO}{\ensuremath{\mathsf{BO}}\xspace}
\newcommand{\UC}{\ensuremath{\mathsf{UC}}\xspace}
\newcommand{\wUC}{\ensuremath{\mathsf{wUC}}\xspace}
\newcommand{\kkings}{$k$-$\mathsf{kings}$\xspace}
\newcommand{\somekings}[1]{$#1$-$\mathsf{kings}$\xspace}
\newcommand{\someking}[1]{$#1$-$\mathsf{king}$\xspace}
\newcommand{\weight}[1]{\ensuremath{w(#1)}}
\newcommand{\weightAfterRev}[1]{\ensuremath{w^\wrevfunc(#1)}}
\newcommand{\weightAfterRevAlt}[2]{\ensuremath{w^{#2}(#1)}}
\newcommand{\wfunc}{\ensuremath{w}}
\newcommand{\margin}[1]{\ensuremath{m(#1)}}
\newcommand{\marginAfterRev}[1]{\ensuremath{m^\wrevfunc(#1)}}
\newcommand{\marginAfterRevAlt}[2]{\ensuremath{m^{#2}(#1)}}
\newcommand{\wrev}[1]{\ensuremath{R(#1)}}
\newcommand{\wrevalt}[2]{\ensuremath{R^{#1}(#2)}}
\newcommand{\wrevfunc}{\ensuremath{R}\xspace}
\newcommand{\wrevaltfunc}[1]{\ensuremath{R^{#1}}}
\newcommand{\margingraph}{\ensuremath{\mathcal{M}}}
\newcommand{\wrevfuncMono}{\ensuremath{Q}}
\newcommand{\wrevMono}[1]{\ensuremath{Q(#1)}}
\newcommand{\sBO}[1]{\ensuremath{\textma{s}_{\BO}(#1)}}
\newcommand{\MCbFP}{\textsc{Minimum Cost $b$-Flow}\xspace}
\newcommand{\egreen}{e_{\textcolor{ForestGreen}{green}}}
\newcommand{\ered}{e_{\textcolor{red}{red}}}
\newcommand{\greenedge}[1]{#1_{\textcolor{ForestGreen}{green}}}
\newcommand{\rededge}[1]{#1_{\textcolor{red}{red}}}
\newcommand{\splitnum}[1]{\normalfont{\ensuremath{\textma{Split\#}_{T}(#1)}}}
\newcommand{\splitset}[1]{\ensuremath{\textma{Split}_{T}(#1)}}
\newcommand{\splitnumAlt}[2]{\normalfont{\ensuremath{\textma{Split\#}_{#2}(#1)}}}
\newcommand{\splitsetT}{\ensuremath{\textma{Split}_{T}}}
\newcommand{\strength}[1]{\ensuremath{\textma{Strength}_{T}(#1)}}
\newcommand{\strengthAltTourn}[2]{\ensuremath{\textma{Strength}_{#2}(#1)}}
\newcommand{\stpathmatrix}{\ensuremath{\textma{P}_{T}}}
\newcommand{\deletemargingraph}{\margingraph_D}
\newcommand{\deletemargingraphRevAlt}[1]{(\margingraph^{#1})_D}
\newcommand{\MstCutP}{\textsc{Minimum Cut}}
\newcommand{\MFP}{\textsc{Maximum Flow}}
\newcommand{\desG}[1]{\ensuremath{G_{#1}}}
\DeclareMathOperator{\fctwoutdeg}{\delta^+_w}
\DeclareMathOperator{\fctoutdeg}{\delta^+}
\DeclareMathOperator{\fctwindeg}{\delta^-_{w}}
\DeclareMathOperator{\fctindeg}{\delta^-}
\newcommand{\wDegCony}{w-degree-consistency}
\newcommand{\wDegContent}{w-degree-consistent}
\newcommand{\eqwDegCony}{equal-w-degree-consistency}
\newcommand{\eqwDegContent}{equal-w-degree-consistent}
\newcommand{\strowDegCony}{strong-w-degree-consistency}
\newcommand{\strowDegContent}{strong-w-degree-consistent}
\DeclareMathOperator{\mon}{\textit{mon}}
\newcommand{\mo}{\ensuremath{\mon}}
\newcommand{\TRy}{\ensuremath{T^{R_y}}}
\newcommand{\TRx}{\ensuremath{T^{R_x}}}
\newcommand{\TRyPlus}{\ensuremath{(T^{R_y})^{\mo}}}
\newcommand{\TRxPlus}{\ensuremath{(T^{R_x})^{\mo}}}
\newcommand{\Tonor}{T^Q}
\newcommand{\wrevfuncTonor}{\ensuremath{\wrevfunc^Q}}
\newcommand{\wrevTonor}[1]{\ensuremath{\wrevfunc^Q(#1)}}
\newcommand{\z}{\ensuremath{z}}
\newcommand{\converging}{coinciding}
\newcommand{\converge}{coincide}
\newcommand{\vminus}[1]{\ensuremath{V_{-#1}}}
\newcommand{\titleMoV}{Margin of Victory}
\newcommand{\MoV}{MoV\xspace}
\newcommand{\defstyle}[1]{\emph{#1}}
\newcommand{\eg}{e.\,g.,\ }
\newcommand{\ie}{i.\,e.,\ }
\newcommand{\eqcom}[2]{\mathrel{\overset{\makebox[0pt]{\mbox{\normalfont\tiny\sffamily #1}}}{#2}}}
\newcommand{\cmark}{\textcolor{green!30!gray}{\ding{51}}} 
\newcommand{\xmark}{\textcolor{red!40!gray}{\ding{55}}} %
\newcommand{\greentext}{Green}
\theoremstyle{plain}
\newtheorem{theorem}{Theorem}[section]
\newtheorem{lemma}[theorem]{Lemma}
\newtheorem*{observation*}{Observation}
\newtheorem{proposition}[theorem]{Proposition}
\crefname{rrule}{Reduction Rule}{Reduction Rules}
\newtheorem{definition}[theorem]{Definition}
\theoremstyle{remark}
\newtheorem{example}{Example}
\newtheorem*{remark}{Remark}
\theoremstyle{plain}
\newtheoremstyle{tdd}
  {2pt}
  {2pt}
  {}
  {0pt}
  {\bfseries}
  {.}
  { }
  {\thmname{#1}\thmnumber{ #2}\textnormal{\thmnote{ (#3)}}}
\theoremstyle{tdd}
\crefname{example}{Ex.}{Exs.}
\crefname{theorem}{Thm.}{Thms.}
\crefname{proposition}{Prop.}{Props.}
\crefname{corollary}{Cor.}{Cors.}
\crefname{lemma}{Lm.}{Lms.}
\crefname{observation}{Obs.}{Obss.}
\crefname{definition}{Def.}{Defs.}
\crefname{conjecture}{Conj.}{Conjs.}
\newcommand{\ra}[1]{\renewcommand{\arraystretch}{#1}}
\begin{document}

\begin{frontmatter}

\title{Margin of Victory for Weighted Tournament Solutions}



\author[1]{Michelle D\"oring}
\ead{michelle.doering@hpi.de}
\author[2]{Jannik Peters}
\ead{jannik.peters@tu-berlin.de}

\address[1]{Hasso Plattner Institute, University of Potsdam, Potsdam, Germany}
\address[2]{Technische Universität Berlin, Berlin, Germany}


\begin{abstract}
Determining how close a winner of an election is to becoming a loser, or distinguishing between different possible winners of an election, are major problems in computational social choice. We tackle these problems for so-called weighted tournament solutions by generalizing the notion of margin of victory (\MoV) for tournament solutions by \citet{brill2022margin} to weighted tournament solutions. For these, the \MoV of a winner (resp. loser) is the total weight that needs to be changed in the tournament to make them a loser (resp. winner). We study three weighted tournament solutions: Borda's rule, the weighted Uncovered Set, and Split Cycle. For all three rules, we determine whether the \MoV for winners and non-winners is tractable and give upper and lower bounds on the possible values of the \MoV. Further, we axiomatically study and generalize properties from the unweighted tournament setting to weighted tournaments.  
\end{abstract}


\begin{keyword}
Computational Social Choice; Voting; Tournaments

\end{keyword}

\end{frontmatter}


         
\newcommand{\BibTeX}{\rm B\kern-.05em{\sc i\kern-.025em b}\kern-.08em\TeX}

\section{Introduction}

Social choice theory is primarily concerned with choosing a socially acceptable outcome from a given set of alternatives based on preference information. Those alternatives range from human beings, like politicians or athletes, to more abstract choices, like projects or desired goods.
The preference information is acquired from the opinion of individuals or through other methods of assessment, e.\,g., sports matches.
If we obtain full information on all pairs of alternatives, we get a \defstyle{tournament}.
Problems pertaining to tournaments, and more generally to collective decision-making have attracted significant attention from computational social choice researchers over the past few decades \citep{suksompong2021tournaments}.

Tournaments are omnipresent in sports competitions, where players or teams compete against each other in head-to-head matches in order to determine one single winner.
They are also applied in voting scenarios, where each voter contributes a preference list over all alternatives from which the pairwise preferences are read off. 
In order to determine the set of `best' choices among the alternatives, most preferably one winner, several \defstyle{tournament solutions} have been proposed \citep{brandt2016tournament, laslier1997tournament}.


Consider a tennis tournament in which everyone plays against everyone else in exactly one match.
The result of each match is binary; either $x$ wins against $y$ or $x$ loses against $y$.
This can be illustrated in a \textit{tournament graph}:
Every player is represented by a vertex, and we draw an edge from each match-winner towards the corresponding match-loser.
Using this information, the pre-chosen tournament solution determines the winner, let's call it $a$.
Now, assume there is an edge in the graph that, when reversed, leads to a different tournament in which $a$ is not a winner anymore.
In that case, the win of $a$ would be much less perspicuous compared to winning a tournament where we would need to reverse twenty edges before $a$ drops out of the winning set.

That is what we call the \defstyle{margin of victory} (\MoV) -- the minimum number of edges necessary to be reversed, such that a winner drops out of the winning set, or a loser gets into the winning set.
This notion was formally introduced by \citet{brill2022margin} for \defstyle{unweighted} tournaments, that is, tournaments in which one alternative either wins or loses against another alternative. 
We extend the \MoV to $n$-weighted tournaments, where alternatives are compared pairwise exactly $n$ times.
The resulting tournament graph contains two edges between any two alternatives $a$ and $b$.
One edge of weight $k$ from $a$ to $b$ if $a$ won $k$ out of the $n$ comparisons against $b$, and another edge of weight $n-k$ from $b$ to $a$. 
In an election setting this corresponds to $k$ out of $n$ voters ranking $a$ over $b$.
Now, the \MoV of a winning alternative is the minimum sum of weight necessary to be changed on the edges for this alternative to drop out of the winning set, and  for a non-winning alternative the negative of the minimum sum of weight necessary to get into the winning set. 

Using this notion we can assess how close a winning alternative is to dropping out of the winning set and, more generally, asses the robustness of a given outcome. 
If the \MoV\ values of the winning alternatives and runner ups are close to zero, the risk of a wrongly chosen winner due to errors in the aggregation process or due to small manipulations is elevated.
Thus, low absolute \MoV\ values might indicate the need for a recount or reevaluation of the given tournament. 
Furthermore, the \MoV allows us to better distinguish between all alternatives while adhering to the principle ideas of the chosen tournament solution. It can therefore be used as a refinement of any tournament solution, generating a full ranking of the alternatives.
This solves the problem of some prevalent tournament solutions which tend to choose a large winning set, which so far is reduced to one winner by some arbitrary tie-breaker.

\subsection{Our Results}
We investigate the \MoV of three weighted tournament solutions: First, we study Borda's rule (\BO), one of the most ubiquitous weighted tournament solutions.
Variants of it are used in the Eurovision Song Contest, in various sports awards, such as the award for most valuable player in Major League Baseball, and essentially in every scoring-based tournament. 
Second, we study the weighted Uncovered Set (\wUC) due to its interesting properties in the unweighted setting, for instance that determining the \MoV for non-winners in the unweighted setting is one of the rare problems solvable in quasi-polynomial time \cite{brill2022margin}.
Finally, we study the recently introduced Split Cycle (\SC), which was shown to admit quite promising axiomatic properties, such as Condorcet consistency and  spoiler immunity \citep{HP22a}. 

In \Cref{ch:ComputingMoV} we determine the complexity of computing the \MoV, first for winning (\textit{destructive} \MoV) and then non-winning alternatives (\textit{constructive} \MoV).
Destructive \MoV can be solved in polynomial time for all three tournament solutions.
Constructive \MoV is only polynomial time solvable for \BO, and NP-complete for \SC\ and \wUC.
Whenever we prove a problem to be polynomial time solvable, the provided algorithm computes the \MoV\ value for the given alternative and also a set of edges with corresponding weight witnessing that value.

In \Cref{ch:StructuralResults} we analyse structural properties of the \MoV.
First, in \Cref{sec:SR_Monotonicity} we prove that all three tournament solutions and their \MoV\ functions satisfy \defstyle{monotonicity}.
This is a basic principle of social choice theory stating that an alternative should not become unfavoured if reinforced.
The second notion of monotonicity, \defstyle{transfer-monotonicity}, holds only for \BO and \wUC, while \SC fails it.
Next, in \Cref{sec:SR_CoverConsistency} we prove the consistency of all three tournament solutions with the weighted extension of the covering relation, \textit{cover-consistency}.
For \BO\ and \wUC\ this property follows from monotonicity and transfer-monotonicity. For \SC, cover-consistency follows from inherent properties of \SC.
Lastly, in \Cref{sec:SR_DegreeConsistency} we show that none of the considered tournament solutions are in any way \textit{degree-consistent}.
Intuitively, degree-consistency is satisfied if a higher out-degree implies a higher \MoV\ value.

In \Cref{ch:Bounds}, we derive bounds for the \MoV\ of each tournament solution, \ie how much weight needs definitely to be changed to get an alternative out of, resp. into, the winning set.

And lastly, we analyse the expressiveness of the \MoV\ using the algorithms obtained in \Cref{ch:ComputingMoV} by running them for randomly generated tournaments of varying sizes. We used the experiments of \citet{brill2022margin} as a framework.
The results of these experiments can be found in \Cref{ch:Experiments}, along with our interpretation and a link to the code of our implementation.

For a summary of these results, refer to \Cref{tab:results_Intro}.

\newcommand{\boundSCCon}{\ensuremath{\left\lceil\frac{n}{2}\right\rceil\cdot(m-1)}}
\newcommand{\boundSC}{\ensuremath{\left\lceil\frac{(m-2)}{2}\right\rceil}}

\begin{table*}
\centering \ra{1.2}
\begin{adjustbox}{width=\textwidth}
\begin{tabu}{@{}llccc@{}} 
\tabucline[0.8pt]{-}
    \vspace{0.2em}&& \textbf{Borda} (\BO) & \textbf{Split Cycle} (\SC) & \textbf{weighted Uncovered Set} (\wUC)\\
\tabucline[0.3pt]{-}
    \multicolumn{2}{@{}l}{\textbf{Computing \MoV}}   \\
    \multicolumn{2}{@{}l}{destructive}
        & P (\cref{thm:MoVBordaDestructive}) & P (\cref{thm:SplitCycleDestructive}) & P (\cref{thm:wUCDestructive}) \\
    \multicolumn{2}{@{}l}{constructive}
        & P (\cref{thm:MoVBordaConstructive}) & NP-complete (\cref{thm:SplitCycleConstructive}) & NP-complete (\cref{thm:wUCConstructive}) 
\vspace{0.5em}\\
     \multicolumn{2}{@{}l}{\textbf{Structural Properties}} &&&\\
    \multirow{2}{*}{monotonicity}
    & tournament solution $S$
        & \cmark\  (\Cref{prop:resultMonotonicity}) & \cmark\  (\Cref{prop:resultMonotonicity}) & \cmark\  (\Cref{prop:resultMonotonicity}) \\
    & \MoVExt{S}
        & \cmark\  (\Cref{prop:resultMonotonicityMoV}) & \cmark\  (\Cref{prop:resultMonotonicityMoV}) & \cmark\  (\Cref{prop:resultMonotonicityMoV}) \\
    \multicolumn{2}{@{}l}{transfer-monotonicity}
        & \cmark\ (\Cref{prop:resultBOwUCtransfermono})   & \xmark\hspace{0.17em}  (\Cref{prop:resultSCtransferMono}) & \cmark\ (\Cref{prop:resultBOwUCtransfermono}) \\
    \multicolumn{2}{@{}l}{cover-consistency}
        & \cmark\  (\Cref{thm:resultcoverconsistencyMonoTmono})  & \cmark\  (\Cref{thm:resultSCCoverConsistency}) & \cmark\  (\Cref{thm:resultcoverconsistencyMonoTmono}) \\
    \multicolumn{2}{@{}l}{degree-consistency}
        & \xmark\  (\Cref{prop:resultSR_degree_BO})  & \xmark\  (\Cref{prop:resultSR_degree_SC}) & \xmark\  (\Cref{prop:resultSR_degree_wUC}) 
\vspace{0.5em}\\
       \multicolumn{2}{@{}l}{\textbf{Bounds}}  &&&\\
    \multicolumn{2}{@{}l}{destructive (upper bound)}
        & $\left\lfloor \frac{n\cdot(m-2)}{2}\right\rfloor + 1$ (\cref{thm:result_BoundsBordaDestructive})
        &   $-\boundSCCon$ (\cref{thm:result_BoundsSplitCycleDestructive})
        &   $\left\lfloor \frac{n}{2} \right\rfloor+1 + \left\lfloor \frac{n\cdot(m-2)}{2}\right\rfloor$ (\cref{thm:result_BoundswUCDestructive}) \\
    \multicolumn{2}{@{}l}{constructive (lower bound)}
        &   $-n\cdot(m-2)$ (\cref{thm:result_BoundsBordaDestructive})
        &   $n + \boundSC$ (\cref{thm:result_BoundsSplitCycleDestructive})
        &   $-\log_2(m)\cdot(\left\lfloor\frac{n}{2}\right\rfloor+1)$ (\cref{thm:result_BoundswUCDestructive}) \\
\tabucline[0.8pt]{-}
\end{tabu}
\end{adjustbox}
\caption{Result overview with references to the corresponding theorems and propositions given in parentheses.}
\label{tab:results_Intro}
\end{table*}

\subsection{Related Work}

Our work generally fits into the field of computational social choice \citep{BCE+14a}, in which both weighted \citep{fischer2016weighted} and unweighted tournaments \citep{brandt2016tournament} have found myriads of applications.
For a general overview on recent work on tournaments we refer the reader to \citep{suksompong2021tournaments}.
The closest related work to ours is the aforementioned work by \citet{brill2022margin} (and their two preceeding conference papers \citet{BSS20a, BSS21a}) on the \MoV for unweighted tournaments.
We use their framework and structural axioms as grounds for generalization to weighted tournaments. 
Further, our notion of \MoV is very similar to the \defstyle{microbribery} setting of \citet{faliszewski2009llull}.
In their setting, voters (with rankings) can be bribed to change individual pairwise comparisons between alternatives, even if this results in intransitive preferences of the voter.
This is very close to our reversal set notion, as the weight that needs to be reversed corresponds to the pairwise comparisons that need to be manipulated.
\citeauthor{faliszewski2009llull} studied the complexity of that microbribery problem for a parameterized version of Copeland's rule.
Further, \citet{erdelyi2020microbribery} considered microbribery under the model of group identification.
In that setting, one does not have distinct voter and alternative groups, but rather one set of individuals which approve or disapprove of each other (including themselves). 

The problem of determining the \MoV a important when studying of robustness of election outcomes. A winner with a low \MoV\ value is in some sense less robust and more prone to changes than a winner with a high \MoV\ value. For recent papers on robustness in elections, we refer the reader to the works of \citet{boehmer2021winner, boehmer2022quantitative}, \citet{SYE13a}, \citet{xia2012computing}, or \citet{baumeister2023complexity}, who computationally and experimentally studied the robustness of election winners. 

Finally, our problem is closely related to the classical study of bribery and manipulation in social choice, since the \MoV can be considered as a measure of how many games in a sports tournament need to be rigged, in order for a competitor to become the winner of the tournament. For an overview on this topic in social choice, we refer the reader to the chapter by \citet{FaRo15a}.

\section{Preliminaries} \label{ch:preliminaries}

A \defstyle{tournament} is a pair $T=(V,E)$ where $V$ is a nonempty finite set of $\lvert V\rvert=m$ alternatives and $E\subseteq V\times V$ is an irreflexive asymmetric complete relation on V, i.\,e., either $(x,y)\in E$ or $(y,x)\in E$ for all distinct $x,y\in V$.
Let $n$ be a positive integer. An \defstyle{$n$-weighted tournament} is a pair $T=(V,\wfunc)$ consisting of a finite set $V$ of alternatives and a weight function $\wfunc\colon V\times V \rightarrow \{0,\dots, n\}$ such that for each pair of distinct alternatives $(x,y)\in V\times V$ we have
    $\weight{x,y} + \weight{y,x} = n$.
Observe, that a $1$-weighted tournament $(V,w)$ can be associated with an unweighted tournament $(V,E)$ by setting $E=\{(x,y) \in V\times V\colon  \weight{x,y}=1\}$.
Given two distinct alternatives $x,y \in V$, we define the \defstyle{(majority) margin} of $x$ over $y$ as the difference between the number of wins by $x$ over $y$ and the number of wins by $y$ over $x$, that is
    \[\margin{x,y} = \weight{x,y}-\weight{y,x}.\]
Note that $\margin{x,y}=-\margin{y,x}$ holds and that the margins are either all even or all odd.
Given~those margins we denote by $\margingraph=(V,E)$, $E=\{(x,y)\in V\times V \colon x\neq y,  \margin{x,y}>0\}$, the \defstyle{margin graph} corresponding to the tournament.
The edges of the margin graph define an asymmetric \defstyle{weighted dominance relation} between the alternatives. If $(x,y)\in E$, we say that $x$ \defstyle{dominates} $y$.
An alternative who dominates every other alternative is called a \defstyle{Condorcet winner}, and a \defstyle{Condorcet loser}, if it is dominated by every other alternative.
In an unweighted tournament $T=(V,E)$ an alternative $x$ is said to \defstyle{cover} another alternative $y$ if 
$(x,y)\in E$ and $(x,z)\in E$, for all $z$ with $(y,z)\in E$. This notion can be extended to weighted tournaments, where an alternative $x$ is said to \defstyle{weighted cover} another alternative $y$ if $\margin{x,y}>0$ and $\margin{x,z}\geq\margin{y,z}$, for all $z\in V(T)\setminus\{x,y\}$, \ie every alternative dominated by $y$ is also dominated by $x$ by at least the same margin.
The \defstyle{(unweighted) outdegree} of $x$ is denoted by
$\fctoutdeg(x)
=\lvert \{y\in V\colon  (x,y)\in E\}\rvert$,
and the \defstyle{(unweighted) indegree} of $x$ by
$\fctindeg(x)
=\lvert\{y\in V\colon  (y,x)\in E\}\rvert$.
Equivalently, the \defstyle{weighted outdegree} of $x$ is denoted by $\fctwoutdeg(x)=\sum_{z\in V\setminus\{x\}} \weight{x,z}$, and the \defstyle{weighted indegree} of $x$ by $\fctwindeg(x)=\sum_{z\in V\setminus\{x\}} \weight{z,x}$.

\subsection{Weighted Tournament Solutions}
A \defstyle{(weighted) tournament solution} is a function $S$ mapping a tournament $T=(V,E)$, or weighted tournament $T=(V,w)$, to a non-empty subset of its alternatives, referred to as the \defstyle{winning set} $S(T)\subseteq V$.
In this paper we consider the following three weighted tournament solutions: 
\begin{itemize}
	\item The winning set according to \defstyle{Borda's rule} (\BO) are all alternatives with maximum Borda score (weighted outdegree)
        $\sBO{x,(V,\wfunc)}=
            \sum_{z\in V\setminus\{x\}}\weight{x,z}=
            \fctwoutdeg(x)$.
	\item The winning set according to \defstyle{Split Cycle} (\SC) \cite{holliday2021split} are all alternatives which are undominated after the following deletion process: for every directed cycle in the margin graph delete the edges with the smallest margin in the cycle, \ie the least deserved dominations. 
	\item The \defstyle{weighted Uncovered Set} (\wUC) \cite{DL99, de2000choosing} contains all alternatives that are not weighted covered by any other alternative.
\end{itemize}

Formal definitions of each tournament solution as well as equivalent characterizations, are given in the corresponding sections 
\ref{subsec:Borda}, \ref{subsec:SplitCycle} and \ref{subsec:weightedUncoveredSet}.
It is easy to see that \BO is always contained in \wUC.
For \wUC and \SC, no containment direction is satisfied.
We show this in the following example and demonstrate the behaviour of the three tournament solutions.
\begin{example}
    Consider the 14-weighted tournament $T_1$ in \Cref{fig:wUCnotinSC}, in which every margin is distinct.
    The only cycle in the margin graph is $abc$ with the lowest margin edge $ac$, and thus $\SC(T_1)=\{a\}$.
    The Borda scores are $\sBO{a}=25=\sBO{b}$, $\sBO{c}=11$, $\sBO{d}=23$, therefore, $\BO(T_1)=\{a,b\}$. Lastly, as $c$ is the only covered alternative we have $\wUC(T_1)=\{a,b,d\}$.
    Overall, $\SC(T_1)\subset\BO(T_1)\subset\wUC(T_1)$.
    
    Now, consider the tournament $T_2$ in \Cref{fig:wUCnotinSC}. 
    As every edge is contained in at least one cycle in the margin graph and all edges have the same weight $10$, all edges are deleted by \SC and $\SC(T_2)=\{a,b,c,d\}$.
    Alternative $a$ is covered by alternative $b$, while all other alternatives are not covered, and thus $\wUC(T_2)=\{b,c,d\}$.
    Every element in $\wUC(T_2)$ has a Borda score of $20$, while $a$ has a Borda score of $10$, \ie $\BO(T_2)=\{b,c,d\}$.
    Overall, $\BO(T_2)\subseteq\wUC(T_2)\subset\SC(T_2)$.
\end{example}
\begin{figure}[ht]
	\centering
	\includegraphics[width=0.45\textwidth]{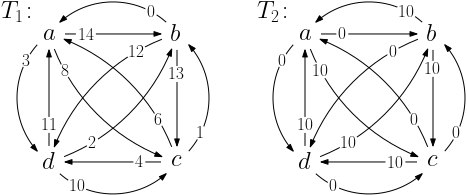}
	\caption{Counterexamples for $\wUC\not\subseteq\SC$ on the left and $\SC\not\subseteq\wUC$ on the right.}
	\label{fig:wUCnotinSC}
\end{figure}

For analysis and comparison, we additionally use the following unweighted tournament solutions:
The winner according to \defstyle{Copeland's rule} (\CO) have maximum Copeland score $s_\CO(x,(V,E))=\fctoutdeg(x)$.
The \defstyle{uncovered set} (\UC) contains all alternatives that are not covered by any other alternative. Equivalently, \UC\ is the set of alternatives that can reach every other alternative via a directed path of length at most two in the margin graph.
Finally, the \kkings are all alternatives that can reach every other alternative via a directed path of length at most $k$ in the margin graph.

\subsection{\titleMoV}
The \emph{margin of victory} (MoV)
for tournament solutions on unweighted tournaments was formally introduced by \citet{brill2022margin}.
Let $T=(V,E)$ be an unweighted tournament, $S$ a tournament solution, and $a\in S(T)$ a winner of the tournament according to $S$.
The margin of victory of $a$, denoted by $\MoV(a)$, is the minimum number of edges which have to be reversed such that after reversal alternative $a$ is not in the winning set anymore.
The corresponding set of edges is called the \defstyle{destructive reversal set} (DRS).
Equivalently, for a non-winner $d\in V\setminus S(T)$ the $\MoV(d)$ is the minimum number of edges which have to be reversed such that after reversal $d$ is in the winning set and the corresponding set of edges is called the \defstyle{constructive reversal set} (CRS).

In order to generalize this notion for weighted tournaments, we have to extend the notion of reversing edges to reversing weight between alternatives.
A \defstyle{reversal function} is a mapping $\wrevfunc\colon V\times V \rightarrow \{-n,\dots, n\}$ with $\wrev{x,y}=-\wrev{y,x}$ and $0 \leq \weight{x,y} + \wrev{x,y} \leq n$. For each pair of alternatives, the reversal function specifies the amount of weight to be reversed along the corresponding edge.
Whenever only one direction \wrev{x,y} is specified, we define the corresponding \wrev{y,x} to be set accordingly. We denote by $T^{\wrevfunc}$ the $n$-weighted tournament resulting from $T$ by reversing the weight given in $\wrevfunc$:
\[T^{\wrevfunc}=(V,\weight{x,y}+\wrev{x,y}).\]
We may also refer with \wrevfunc to the set of all edges with a non-zero reversal.
We may also refer to the set of all edges affected by the reversal of the reversal function as \wrevfunc.
If a winning alternative $a\in S(T)$ is not in the winning set in the reversed tournament, \ie $a\notin S(T^{\wrevfunc})$, we call \wrevfunc a \defstyle{weighted destructive reversal set} (wDRS) for $a$.
Analogously, if for a non-winner $d\notin S(T)$ we have $d\in S(T^{\wrevfunc})$, we call \wrevfunc a \defstyle{weighted constructive reversal set} (wCRS) for $d$.
These reversal sets are generally not unique and finding an arbitrary one is usually quite easy.
For example, given a Condorcet-consistent tournament solution, \ie whenever there is a Condorcet winner it is chosen as the only winner of the tournament,
a straightforward wCRS for any $d\not\in S(T)$ is given by $\wrev{d,y}=n-\weight{d,y}$, for all $y\in V$.
In the tournament after reversal $T^{\wrevfunc}$, $d$ is a Condorcet winner and thus $d\in S(T^{\wrevfunc})$. Using these reversal sets we formally define the \MoV. \begin{definition} \label{def:MoV}
	For an $n$-weighted tournament $T=(V,\wfunc)$ and a tournament solution $S$, the \defstyle{margin of victory} (\MoV) of a winning alternative $a\in S(T)$ is given by
	\begin{align*}
		\MoVfuncExt{S}{a,T} &=\hphantom{-} \min\left\{\sum\limits_{\substack{y,z\in V\\\wrev{y,z}>0}} \wrev{y,z}\colon  \wrevfunc\text{ is a wDRS for }a\text{ in }T \right\},
		\intertext{and for a non-winning alternative $d\notin S(T)$ it is given by}
		\MoVfuncExt{S}{d,T} &= -\min\left\{\sum\limits_{\substack{y,z\in V\\\wrev{y,z}>0}} \wrev{y,z}\colon  \wrevfunc\text{ is a wCRS for }d\text{ in }T \right\},
	\end{align*}
	whereas $\sum_{y,z\in V,\ \wrev{y,z}>0} \wrev{y,z}$ is called the \defstyle{size} of \wrevfunc.
\end{definition}
We omit the subscript $S$, whenever the tournament solution is clear from the context.
\begin{example}
Consider the $10$-weighted tournament $T$ in \Cref{fig:MoVdefinitionexample}. The set of Borda winners is $\BO(T)=\{a\}$ with a Borda score of $\sBO{a}=21$.
For \SC we consider all cycles of the margin graph $M$. 
In the cycle $(b,c,d)$ the edges $(c,d)$ and $(d,b)$ have the smallest margin, in the cycle $(a,c,d)$ the edge with smallest margin is $(d,a)$ and lastly, finally the cycle $(a,b,c,d)$ has its smallest margin on the edge $(d,a)$.
Thus, we delete the edges $(c,d)$, $(d,b)$ and $(d,a)$ from the margin graph, and obtain $a$ and $d$ as the only undominated alternatives, \ie $\SC(T)=\{a,d\}$.
For \wUC we check for covering relations in $T$. The only alternative covering another alternative is $a$ which covers $b$. Hence, $\wUC(T) = \{a,c,d\}$. 

The \MoV values and possible weighted reversal sets are given in the table of \Cref{tab:MoV_simple_example}.
For instance, $\MoV_{\SC}(a) = 2$, since reversing a weight of $2$ from $a$ to $d$ would increase the margin of $d$ over $a$ to $6$, after which this edge is no longer a minimum weight edge in any cycle. Thus, this edge is not deleted and  $a$ is no longer a Split Cycle winner. Similarly, $\MoV_{\SC}(b) = -3$, since after strengthening the edge from $b$ to $a$ by a weight of $3$, the margin of $a$ over $b$ is $2$, which would cause this edge to be deleted and $b$ to be undominated. It is also easy to see, that after any two changes, the edge from $a$ to $b$ cannot be a minimum weight edge in any cycle and hence the bound of $-3$ cannot be improved to $-2$. 
\end{example}
\setlength{\belowcaptionskip}{-10pt}
\begin{figure}[t]
\begin{subfigure}[c]{0.38\textwidth}
    \centering
    \includegraphics[width=\columnwidth]{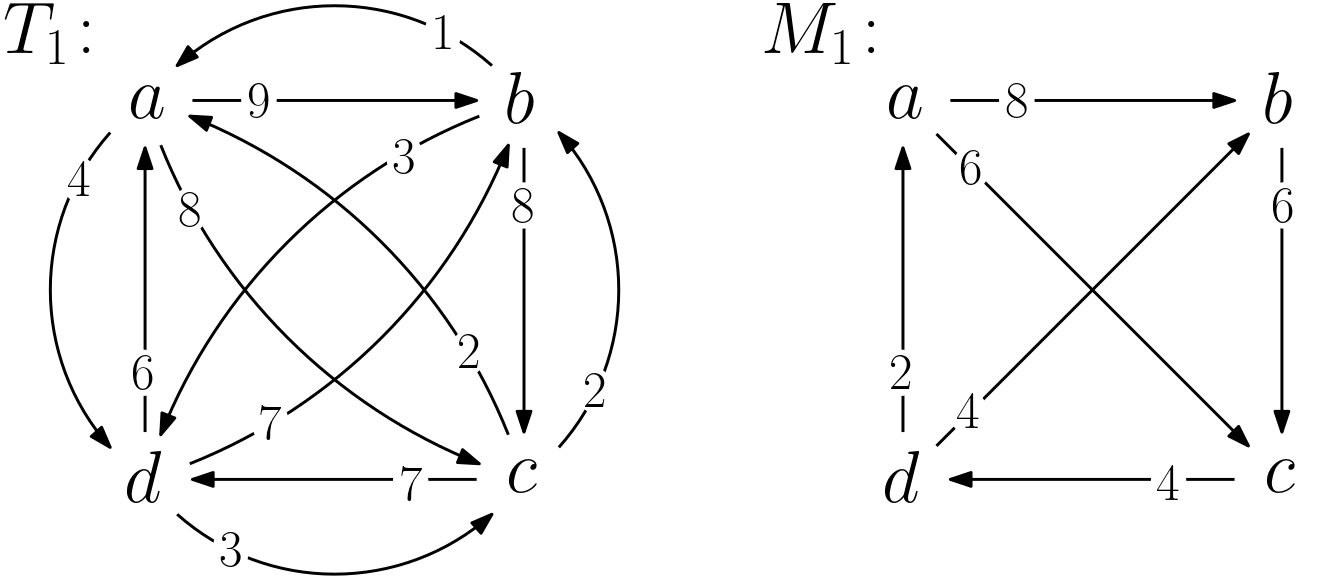}
\end{subfigure}\hfill
\begin{subfigure}[c]{0.58\textwidth}
    \centering 
    \begin{adjustbox}{width=\columnwidth,center}
    \begin{tabular}{@{}lcccc@{}} \toprule
        & a & b & c & d \\ \hline
    $\MoV_{\mathsf{BO}}(x,T_1)$   & 3 &-5 &-5 &-3 \vspace{0.3em}\\
    min. weighted   & $R(d,a)=3$ & $R(b,a)=5$ & $R(c,a)=5$, & $R(d,a)=3$   \\
    reversal set & & & & \\
    \bottomrule
    $\MoV_{\mathsf{SC}}(x,T_1)$   & 2 &-3 &-3 &1 \vspace{0.3em}\\
    min. weighted   & $R(d,a)=2$ & $R(b,a)=3,$ & $R(c,b)=1$, & $R(c,d)=1$   \\
    reversal set & & & $R(c,a)=1,$ & \\
    & & & $R(d,a) = 1$ & \\
    \bottomrule
    $\MoV_{\mathsf{UC}}(x,T_1)$   & 7 &-1 & 3 &2 \vspace{0.3em}\\
    min. weighted   & $R(d,b)=2$, & $R(b,c)=1$ & $R(a,d)=3$, & $R(a,d)=2$   \\
    reversal set & $R(d,c) = 5$ & & & \\ \bottomrule
    \end{tabular}
    \end{adjustbox}
\end{subfigure}
\caption{\MoV\ values of all alternatives $x\in\{a,b,c,d\}$ of the $10$-weighted tournament $T$ for each tournament solution \BO, \SC\ and \wUC\ together with possible minimum reversal sets.}
\label{fig:MoVdefinitionexample}
\label{tab:MoV_simple_example}
\end{figure}

\section{Computing the Margin of Victory} \label{ch:ComputingMoV}
We now begin the study of the computational complexity for computing the \MoV\ for the three tournament solutions.
For each tournament solution, we either give a polynomial-time algorithm for computing the \MoV or show that the problem is NP-complete.
Whenever we provide a polynomial-time algorithm, the algorithm does not only compute the \MoV, but also a corresponding minimum wDRS when considering winners, respectively a corresponding minimum wCRS when considering non-winners.

\subsection{Borda} \label{subsec:Borda}
The \MoV for \BO in weighted tournaments behaves similar to the \MoV for Copeland's rule in unweighted tournaments. Note that Copeland's rule and \BO coincide on $1$-weighted tournaments.
Due to the inherent similarities we can generalize the algorithm of \citet{brill2022margin} for determining the \MoV of Copeland's rule, \ie we design a simple greedy algorithm.
\begin{restatable}{theorem}{thmbopolydest} \label{thm:MoVBordaDestructive}
	Computing the \MoV\ of a \BO\ winner of an $n$-weighted tournament $T=(V,w)$ can be done in polynomial time.
\end{restatable}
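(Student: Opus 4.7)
The plan is to reduce the problem to a per-challenger analysis. Since $a$ drops out of $\BO(T^\wrevfunc)$ exactly when some other alternative has strictly greater Borda score than $a$ after the reversal, I would iterate over every $b \in V \setminus \{a\}$, compute the cheapest $\wrevfunc$ that achieves $\sBOAfterRev{b} > \sBOAfterRev{a}$, and return the minimum over all $b$.

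For a fixed challenger $b$, the key observation is that only reversals involving edges incident to $a$ or $b$ can affect the gap $\sBO{a} - \sBO{b}$, and there are essentially three useful move types. Reversing one unit of weight on the $(a,b)$-edge, i.e., transferring weight from $\weight{a,b}$ to $\weight{b,a}$, simultaneously decreases $\sBO{a}$ by $1$ and increases $\sBO{b}$ by $1$, shrinking the gap by $2$ at unit cost. Reversing one unit on any edge $(a,c)$ with $c \notin \{a,b\}$ decreases $\sBO{a}$ by $1$, and reversing one unit on any edge $(c,b)$ with $c \notin \{a,b\}$ increases $\sBO{b}$ by $1$; both of these shrink the gap by $1$ at unit cost. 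Any other reversal leaves both scores unchanged and can be discarded.

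Since the $(a,b)$-edge provides twice the benefit per unit cost, a greedy strategy is clear: first saturate this edge up to its capacity $\weight{a,b}$, then close any remaining gap with $1{:}1$ moves. Writing $D = \sBO{a} - \sBO{b} + 1$ for the gap to be closed, this yields a minimum cost of $\lceil D/2 \rceil$ when $D \le 2\weight{a,b}$ and of $D - \weight{a,b}$ otherwise, uniformly $\max(\lceil D/2 \rceil,\, D - \weight{a,b})$. I would additionally verify that the combined $1{:}1$ capacity $\sum_{c \notin \{a,b\}}(\weight{a,c} + \weight{c,b})$ always suffices to absorb the residual gap, which reduces to checking $(m-1)n \geq 1$ and therefore holds for any non-trivial weighted tournament.

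The algorithm then loops over all $b \neq a$, evaluates this closed-form expression, and returns the minimum, tracking the actual reversals along the way to output a witnessing wDRS. The main step requiring care is the optimality argument for fixed $b$: I would prove it via a standard exchange argument, showing that any optimal reversal function can be transformed, without increasing its size, into one that uses only edges incident to $a$ or $b$ and exhausts the $(a,b)$-edge before using any $1{:}1$ move. Once this structural lemma is in place, the rest is routine arithmetic and mirrors the Copeland algorithm of \citet{brill2022margin}, to which our procedure specialises when $n=1$.
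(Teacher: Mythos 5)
Your proposal is correct and takes essentially the same approach as the paper's proof: iterating over all challengers $b$, using an exchange argument to show a minimum wDRS only touches edges incident to $a$ or $b$, and greedily exhausting the $(a,b)$-edge (which shrinks the score gap by $2$ per unit) before using the $1{:}1$ moves, exactly as in the paper's greedy algorithm. The only cosmetic difference is that the paper handles the case $\lvert\BO(T)\rvert>1$ separately, whereas your closed form $\max\left(\lceil D/2\rceil,\, D-\weight{a,b}\right)$ together with the feasibility check $(m-1)\cdot n\geq 1$ absorbs it uniformly.
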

\begin{proof} We compute the \MoV for $a\in\BO(T)$ and distinguish two cases:
\\\textbf{\small Case 1 ($\vert\BO(T)\rvert>1$):}
As there are other \BO winners, lowering $\sBO{a,T}$ by 1 is enough for $a$ to drop out of the winning set.
Take any alternative $x\in V\setminus\{a\}$ for which $\weight{x,a}<n$ to ensure we can reverse at least $1$ weight from $a$ to $x$.
Such an alternative always exists, as otherwise $\sBO{a,T}=0$ while $\sBO{y,T}\geq n$ for all $y\in V\setminus\{a\}$, which contradicts $a$ having the highest Borda score.
Set $\wrev{a,x}=-1$ and 0 everywhere else.
Then, $a\notin \BO(T^\wrevfunc)$ and \wrevfunc\ is a minimum wDRS for $a$ in $T$ which is computable in $\mathcal{O}(\lvert V\rvert)$ time.
\\\textbf{\small Case 2 ($\vert\BO(T)\rvert=1$):} As there are no other \BO winners, at least one of the non-winning alternatives needs to be in the new winning set instead of $a$.
Consider a fixed minimum wDRS $\wrevfunc$ for $a$ and let $b$ be the alternative with ${\sBO{b,T^{\wrevfunc}}>\sBO{a,T^{\wrevfunc}}}$.
We claim, \wrevfunc\ reverses weight along edges adjacent to $a$ or $b$ only.
Towards contradiction, assume $\wrev{x,y}>0$ for some $x,y\in V\setminus\{a,b\}$.
Reversing that weight did not change the Borda scores of $a$ or $b$, therefore they stay the same when revoking that reversal, resulting in a wDRS of smaller size. This contradicts \wrevfunc\ being minimal.
This directly implies a simple polynomial-time greedy algorithm to compute the \MoV\ of $a$ and a corresponding minimum wCRS.
\dividersmall{Algorithm} We iterate over all $b\in V\setminus\{a\}$ and compute a minimum wDRS \wrevfunc\ with $\sBO{b,T^{\wrevfunc}}>\sBO{a,T^{\wrevfunc}}$.
We do so by greedily reversing weight away from $a$ or towards $b$, starting with the edge $(a,b)$.
Set
\[\wrev{b,a} =\text{min}\left\{ \weight{a,b},\left\lfloor \frac{\sBO{a,T}-\sBO{b,T}}{2} \right\rfloor +1\right\},\]
where the latter is the distance between their two Borda scores, \ie the necessary amount of weight to be reversed from $a$ to $b$.
If after that reversal $\sBO{a,T^\wrevfunc}\geq\sBO{b,T^\wrevfunc}$ still holds, we greedily set
\begin{align*}
    \wrev{x,a}&=\text{min}\left\{ \weight{a,x}, \hspace{0.5em}
    \sBO{a,T^\wrevfunc}-\sBO{b,T^\wrevfunc}+1\right\},\\
    \wrev{b,x}&=\text{min}\left\{ \weight{x,b}, \hspace{0.5em}
    \sBO{a,T^\wrevfunc}-\sBO{b,T^\wrevfunc}+1\right\},
\end{align*}
for $x\in V\setminus\{a,b\}$, until $\sBO{a,T^\wrevfunc}<\sBO{b,T^\wrevfunc}$.
Among all choices of $b$, we select one inducing a wDRS of minimum size.
\dividersmall{Correctness} The correctness of the algorithm directly follows from the following observation:
For a fixed $b$ the algorithm terminates when $\sBO{a,T^\wrevfunc}-\sBO{b,T^\wrevfunc}<0$. Reversing one weight from $a$ to $b$  reduces the difference between their Borda scores by two, and reversing one weight from $a$ to any $x\in V\setminus\{a,b\}$, resp. from any $x\in V\setminus\{a,b\}$ to $b$, reduces the difference by one.
A minimum wDRS thus has to consider reversal between $a$ and $b$ first, and then reverse weight from $a$, resp. to $b$, arbitrarily, until $\sBO{a,T^\wrevfunc}<\sBO{b,T^\wrevfunc}$. This is the procedure of the algorithm.
\dividersmall{Polynomial runtime}
The algorithm clearly runs in time $\mathcal{O}(\lvert V\rvert)$.
\end{proof}

\noindent Turning to \BO non-winners, we show that the problem of computing the \MoV can be reduced to the \MCbFP problem, see for instance \cite{bernhard2008combinatorial}.
Our algorithm iterates over all possible Borda scores $l$, such that $d$ is a \BO winner with Borda score $l$ after reversal.
We construct a suitable flow network \desG{l}, such that any $b$-flow of \desG{l} corresponds to a weighted reversal set for $d$ in $T$ of the same weight, and compute a minimum cost $b$-flow.

A flow network is a directed graph $G=(V,E)$ together with a capacity function $u\colon E\rightarrow\mathbb{R}_+$. Additionally, we are given costs $c\colon E\rightarrow\mathbb{R}$ for the edges and balances $b\colon V\rightarrow\mathbb{R}$ with $\sum_{v\in V}b(v)=0$ for the vertices.
A \textit{$b$-flow} in $G$ is a mapping $f\colon E\rightarrow\mathbb{R}_+$, subject to $f(e)\leq u(e)$ for all $e\in E$, and 
\[\sum_{e\in{\fctoutdeg(v)}} f(e) - \sum_{e\in{\fctindeg(v)}} f(e)=b(v),\]
for all $v\in V$.
Vertices $v$ with $b(v)>0$ are called \textit{sources}, those with $b(v)<0$ are called \textit{sinks}.
The \defstyle{cost} of a $b$-flow $f$ is defined by $c(f)= \sum_{e\in E}f(e)\cdot c(e)$.
The \MCbFP problem is to find a $b$-flow of minimum cost, if one exists.
This can be solved in polynomial time, e.\,g., by the minimum mean cycle-cancelling algorithm \citep{bernhard2008combinatorial, goldberg1989finding}, which runs in $\mathcal{O}(|E|^3|V|^2log|V|)$ if all edge costs are integral.

In our construction we use multiedges. Observe that any flow network $G$ with multiedges can be transformed into an equivalent network $G'$ without multiedges by the following construction:
For every multiedge $e_1=(u,w)=e_2$, we introduce two vertices $v_1$ and $v_2$ with balances $b(v_1)=b(v_2)=0$, one for each edge.
We replace $e_i$ by $(u,v_i)$ and $(v_i,w)$ with cost $c(u,v_i)=c(v_i,w)=c(e_i)$ and capacity $u(u,v_i)=u(v_i,w)=u(e_i)$.
Refer to \Cref{fig:BordaMultiedgesDef} for an illustration of this construction.
For every pair of multiedges in $G$, the constructed graph G' contains two vertices and two edges more than $G$, thus the minimum mean cycle-cancelling algorithm still runs in polynomial time even for flow networks with multiedges.
\begin{figure}[ht]
	\centering
	\includegraphics[width=0.6\columnwidth]{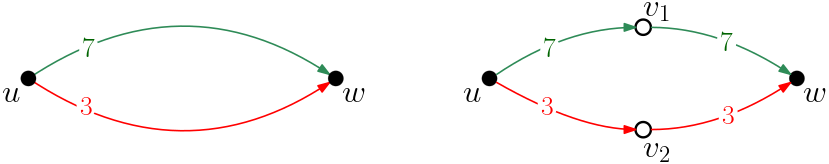}
	\caption{Multiedge construction for edges $e_1=(u,w)$ and $e_2=(u,w)$, with costs $c(e_1)=+1$ and $c(e_2)=-1$, illustrated by green, resp. red colour, and capacities $u(e_1)=7$ and $u(e_2)=3$.} 
	\label{fig:BordaMultiedgesDef}
\end{figure}

We further reduce the range of possible Borda scores by using the following simple observation concerning a lower bound on a winning Borda score.
\begin{restatable}{observation}{bordascoremin}
\label{lem:BordaScoreMinimum}
Let $T=(V,w)$ be an $n$-weighted tournament with $m$ alternatives.
For any $a\in\BO(T)$, we have
\[\sBO{a}\geq\left\lceil \frac{n\cdot \frac{m\cdot(m-1)}{2}}{m} \right\rceil = \left\lceil \frac{n}{2}\cdot(m-1) \right\rceil.\]
\end{restatable}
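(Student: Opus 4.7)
The plan is to use a simple averaging argument: the winning Borda score must be at least the average Borda score over all alternatives, and the total Borda score over all alternatives equals the total weight in the tournament.

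First, I would compute the sum of Borda scores over all alternatives. By definition, $\sum_{x \in V} \sBO{x} = \sum_{x \in V} \sum_{y \in V \setminus \{x\}} \weight{x,y}$. This double sum counts $\weight{x,y}$ for every ordered pair of distinct alternatives $(x,y)$, which we can reorganize into a sum over unordered pairs: for each unordered pair $\{x,y\}$, the contribution is $\weight{x,y} + \weight{y,x} = n$, by the definition of an $n$-weighted tournament. Since there are $\binom{m}{2} = \frac{m(m-1)}{2}$ unordered pairs, we conclude that
\[\sum_{x \in V} \sBO{x} = n \cdot \frac{m(m-1)}{2}.\]

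Next, since $a \in \BO(T)$ is an alternative with maximum Borda score, its score must be at least the average score, i.e., $\sBO{a} \geq \frac{1}{m} \sum_{x \in V} \sBO{x} = \frac{n(m-1)}{2}$. Finally, because Borda scores are integers (they are sums of integer weights), we can round up the right-hand side, giving $\sBO{a} \geq \left\lceil \frac{n(m-1)}{2} \right\rceil$, which matches the claim.

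There is no real obstacle here; this is a textbook pigeonhole/averaging argument. The only minor care needed is to justify the ceiling, which follows because $\sBO{a}$ is an integer while $\frac{n(m-1)}{2}$ may not be (when $n(m-1)$ is odd).
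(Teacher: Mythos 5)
Your proof is correct and follows essentially the same route as the paper: both compute the total Borda score as $n\cdot\frac{m(m-1)}{2}$ and apply an averaging/pigeonhole argument to conclude the winner's score is at least the mean. If anything, your justification of the ceiling via integrality of $\sBO{a}$ is slightly crisper than the paper's, which instead motivates the right-hand side by exhibiting a tournament with all margins zero.
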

\begin{proof}
	The sum of all Borda scores adds up to the maximum margin between each pair of alternatives times the number of edges in the tournament -- in total $n\cdot\frac{m\cdot(m-1)}{2}$.
	Dividing~this total weight equally among all $m$ alternatives would yield a tournament where every alternative is a Borda winner with a Borda score of $\frac{n\cdot \frac{m\cdot(m-1)}{2}}{m}$.
	If in an arbitrary $n$-weighted tournament an alternative had a Borda score smaller than that, there would be at least one alternative with a higher Borda score. 
	The right hand side can be motivated as follows.
	If $n/2\cdot(m-1)$ is a natural number, all alternatives are a Borda winner with that same Borda score, which can be achieved by setting the margin between every two alternatives to zero, \ie every alternative gains a score of $n/2$ for each of its $m-1$ opponents.
\end{proof}
Following this observation, we can iterate over which alternative will beat the \BO winner after reversal and with what Borda score.
\begin{theorem}\label{thm:MoVBordaConstructive}
Computing the \MoV\ of a \BO\ non-winner of an $n$-weighted tournament $T=(V,w)$ can be done in polynomial time. 
\end{theorem}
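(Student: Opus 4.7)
The plan is to reduce the computation of $\MoVfuncExt{\BO}{d,T}$ to polynomially many instances of \MCbFP. By \Cref{lem:BordaScoreMinimum} every \BO-winner of an $n$-weighted tournament on $m$ alternatives has Borda score at least $\lceil n(m-1)/2\rceil$, so I iterate $l$ over the $O(nm)$ integers in $[\lceil n(m-1)/2\rceil,\, n(m-1)]$ and, for each $l$, compute the minimum-size weighted reversal set that forces $\sBO{d,T^\wrevfunc}=l$ and $\sBO{x,T^\wrevfunc}\leq l$ for every $x\neq d$. The returned $\MoV$ is the minimum of these values, taking $\infty$ for infeasible instances; $l=n(m-1)$ is always feasible (make $d$ a Condorcet winner) so the final output is finite.

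For a fixed $l$ I build the flow network $\desG{l}$ on vertex set $V\cup\{t\}$. For every ordered pair $(x,y)$ with $x,y\in V$ and $x\neq y$ I add an \emph{internal} edge $x\to y$ of capacity $\weight{x,y}$ and cost $1$; one unit of flow along it encodes reversing one unit of weight from the $x\to y$ direction to the $y\to x$ direction, thereby decreasing the Borda score of $x$ by one and increasing that of $y$ by one. To model the one-sided constraint on non-winners, I add a \emph{slack} edge $t\to x$ of large capacity (say $n(m-1)$) and cost $0$ for every $x\neq d$. Balances are $b(v)=\sBO{v,T}-l$ for all $v\in V$ and $b(t)=ml-n\binom{m}{2}$; the lower bound on $l$ guarantees $b(t)\geq 0$, and $\sum_{v\in V\cup\{t\}}b(v)=0$ holds by construction, so $\desG{l}$ is a well-posed \MCbFP instance.

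Next I establish a cost-preserving correspondence between feasible $b$-flows of $\desG{l}$ and weighted reversal sets for $d$ in $T$ with $\sBO{d,T^\wrevfunc}=l$. Given a $b$-flow $f$, define $\wrev{x,y}=f(y,x)-f(x,y)$ on the internal edges: antisymmetry is immediate, and the capacity $f(x,y)\leq\weight{x,y}$ yields $\weight{x,y}+\wrev{x,y}\in[0,n]$. A short calculation with the balance equations and $f(t,x)\geq 0$ gives $\sBO{d,T^\wrevfunc}=l$ and $\sBO{x,T^\wrevfunc}\leq l$ for $x\neq d$, so $d\in\BO(T^\wrevfunc)$. Cancelling any two opposing internal flows $f(x,y)$ and $f(y,x)$ neither changes a balance nor violates a capacity, so an optimum satisfies $\min(f(x,y),f(y,x))=0$ for every pair; hence the cost equals $\sum_{\{x,y\}}|\wrev{x,y}|$, which is the size of $\wrevfunc$. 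The reverse direction, from a valid reversal $\wrevfunc$ back to a $b$-flow of the same cost, is witnessed by $f(x,y)=(\wrev{y,x})^+$ on internal edges, with the residual balance at each $x\neq d$ absorbed through the slack edge $t\to x$.

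The main technical obstacle I expect is faithfully encoding the one-sided constraint $\sBO{x,T^\wrevfunc}\leq l$: the slack edges and balances must be arranged so that no feasible reversal is excluded and no spurious ones are introduced. Once this bookkeeping is verified, the rest is routine: \MCbFP is solvable in polynomial time via the minimum mean cycle-cancelling algorithm cited earlier, and we solve only $O(nm)$ such instances. The optimal flow simultaneously yields a witnessing minimum wCRS through the formula $\wrev{x,y}=f(y,x)-f(x,y)$, fulfilling the stronger guarantee mentioned at the start of \Cref{ch:ComputingMoV}.
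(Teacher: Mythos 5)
Your proposal is correct, and it follows the paper's overall strategy exactly: use \Cref{lem:BordaScoreMinimum} to restrict the target winning score $l$ to $O(nm)$ values, and for each $l$ solve a \MCbFP instance whose optimal cost equals the minimum size of a wCRS forcing $\sBO{d,T^\wrevfunc}=l$ and $\sBO{x,T^\wrevfunc}\leq l$ for $x\neq d$. Where you genuinely diverge is the network gadget. The paper builds a four-layer network with a node for every \emph{edge} of the tournament and, per edge, parallel ``green'' (keep, cost $0$) and ``red'' (reverse, cost $1$) arcs to both endpoints; this encodes the absolute post-reversal weights, enforces the cap $\sBO{x,T^\wrevfunc}\leq l$ via capacity-$l$ arcs into the sink, and requires an auxiliary multiedge-elimination construction before the min-cost flow algorithm applies. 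You instead encode only the \emph{net change}: nodes are $V\cup\{t\}$, each internal arc $x\to y$ has cost $1$ and capacity $\weight{x,y}$, balances $b(v)=\sBO{v,T}-l$ shift each alternative to its target score, and the zero-cost slack arcs $t\to x$ absorb the one-sided inequality. I checked the bookkeeping you flagged as the main risk, and it goes through: flow conservation at $x\neq d$ gives $\sBO{x,T^\wrevfunc}=l-f(t,x)\leq l$, the absence of a slack arc at $d$ pins $\sBO{d,T^\wrevfunc}=l$, the balances sum to zero because $\sum_{v}\sBO{v,T}=n\binom{m}{2}$, and since internal arcs have strictly positive cost, any optimum automatically satisfies $\min\{f(x,y),f(y,x)\}=0$, so the cost equals the size of $\wrevfunc$ under $\wrev{x,y}=f(y,x)-f(x,y)$. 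Your encoding buys a smaller, multiedge-free network (nodes indexed by alternatives rather than by tournament edges) at the price of a less literal correspondence between flow paths and ``kept versus reversed'' weight, which the paper's construction makes very explicit; both yield the same polynomial bound and both recover a witnessing minimum wCRS from the optimal integral flow.
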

\begin{proof}
    Let $d\in V\setminus BO(T)$. 
    We iterate over $l\in \{\left\lceil n/2\cdot(m-1) \right\rceil, \dots,n\cdot(m-1)\}$, and compute the wCRS \wrevfunc such that $d\in\BO(T^\wrevfunc)$ and $\sBO{d,T^\wrevfunc}=l$.
    \dividersmall{Construction}
    Let $G_{d,l}$ be a flow network with
    \begin{align*}
        V(G_{d,l}) &= \{s\} \cup E(T) \cup V(T) \cup \{t\}.
    \intertext{We refer to the elements of $V(G_{d,l})$ as nodes and the elements of $V(T)$ as vertices.
    The edges of the network $G_{d,l}$ are}
    E(G_{d,l}) &= E_S \cup   E_V   \cup    E_T,
    \intertext{with}
        E_S &=  \{(s,e)\colon e\in E(T)\},\\
        E_V &=  \{\rededge{(e,v)},\greenedge{(e,v)}\colon e\in E(T)\text{, }v\in e\},\\
        E_T &=  \{(v,t)\colon v\in V(T)\setminus\{d\}\}.    
    \end{align*}
    In the following, we give a more detailed explanation of the edges, as well as their costs and capacities.
    \begin{itemize}
    \item[$E_S$:] For every edge $e\in E(T)$ in $T$, there is an edge $(s,e)\in E_S$ in $G_{d,l}$, with cost $c((s,e))=0$ and capacity $u((s,e))=n$.
    \item[$E_V$:] For every edge $e=(v,w)\in E(T)$ in the $T$, there are \textbf{two} edges $\egreen,\ered\in E_V$ in $G_{d,l}$, \textbf{to each} endpoint $v$ and $w$, \ie in total \textbf{four} outgoing edges of $e$: \begin{enumerate}
        \item Edge $\egreen=\greenedge{(e,v)}$  has cost $c(\egreen)=0$ and capacity $u(\egreen)=\weight{v,w}$. We refer to it as \textit{green edge}.
        \item Edge $\ered=\rededge{(e,v)}$ has cost $c(\ered)=1$ and capacity $u(\ered)=\weight{w,v}$. We refer to it as \textit{red edge}.
        \item Edge $\egreen=\greenedge{(e,w)}$ has cost $c(\egreen)=0$ and capacity $u(\egreen)=\weight{w,v}$.
        \item Edge $\ered=\rededge{(e,w)}$ has cost $c(\ered)=1$ and capacity $u(\ered)=\weight{v,w}$.
    \end{enumerate}
    \item[$E_T$:] For every vertex $v\in V(T)\setminus\{d\}$ in $T$, there is an edge $(v,t)\in E_T$ in $G_{d,l}$, with cost $c((v,t))=0$ and capacity $u((v,t))=l$.
	\end{itemize}
    The capacities of the edges in $E_S$ ensure that between each pair of alternatives of the tournament a total weight of $n$ is distributed.
    The capacities of the edges in $E_T$ ensure, that every alternative of the tournament apart from $d$ has a Borda score of at~most~$l$.
    The~green~edges $\greenedge{(e,v)}$ in $E_V$ represent \textit{keeping} up to $\weight{v,w}$ weight from $v$ to $w$, guaranteed by the capacity $u(\egreen)=\weight{v,w}$.
    Since keeping weight \textit{does not} increase the size of the reversal function, those edges are constructed with cost $c(\egreen)=0$.
    The~red~edges~$\rededge{((v,w),v)}$ in $E_V$ represent \textit{reversing} up to $\weight{w,v}$ weight from $v$ to $w$, guaranteed by the capacity $u(\ered)=\weight{w,v}$.
    Since reversing weight \textit{does} increase the size of the reversal function, those edges are constructed with cost $c(\ered)=1$.

    \noindent The balance of each node $v\in V(G_{d,l})$ is defined by \[b(v) = \begin{cases}
        \hspace{-0.2em}-n\cdot \frac{m\cdot(m-1)}{2} , & \text{if } v=s;\\
        \hspace{-0.2em}\hphantom{-}l, & \text{if } v=d;\\
        \hspace{-0.2em}\hphantom{-}n\cdot \frac{m\cdot(m-1)}{2}-l, & \text{if } v=t;\\
        \hspace{-0.2em}\hphantom{-}0, & \text{else.}
    \end{cases}\]
    With $n\cdot \frac{m\cdot(m-1)}{2}$ being the total score to be divided between all alternatives and $b(d)=l$ ensuring, that $d$ has a Borda score of $l$ after reversal.
    The balances add up to $0= \sum_{v\in V} b(v)= -n\cdot \frac{m\cdot(m-1)}{2} + n\cdot \frac{m\cdot(m-1)}{2}-l + l$, thus the construction fulfills the definition of a $b$-flow network.
\begin{figure}[p]
    \centering
    \includegraphics[width=0.5\columnwidth]{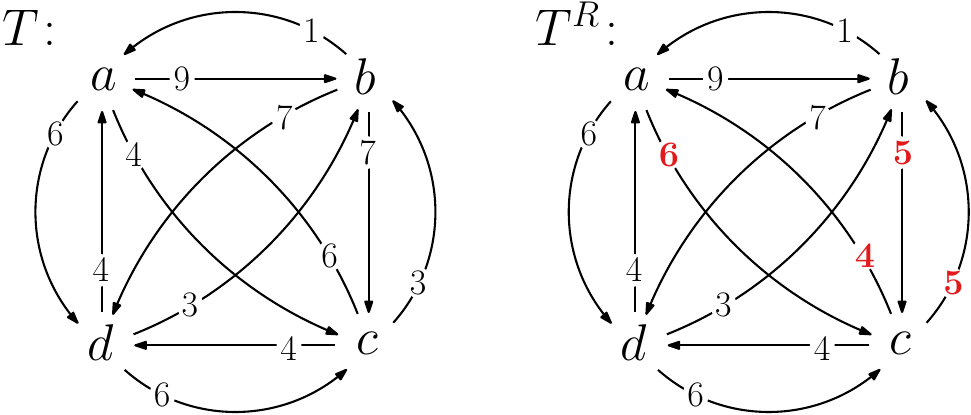}
    \caption{A 10-weighted tournament $T$ with $\BO(T)=\{a\}$. On the right is $T^R$ with minimum wCRS $R$, $R(b,c)=-2$, $R(a,c)=+2$, for $c$ corresponding to the flow in \Cref{fig:IllustrationConstructionBordaExample_ReversalFlow}.}
    \label{fig:IllustrationConstructionBorda1}
\end{figure}
\begin{figure}[p]
    \centering
    \includegraphics[width=0.7\columnwidth]{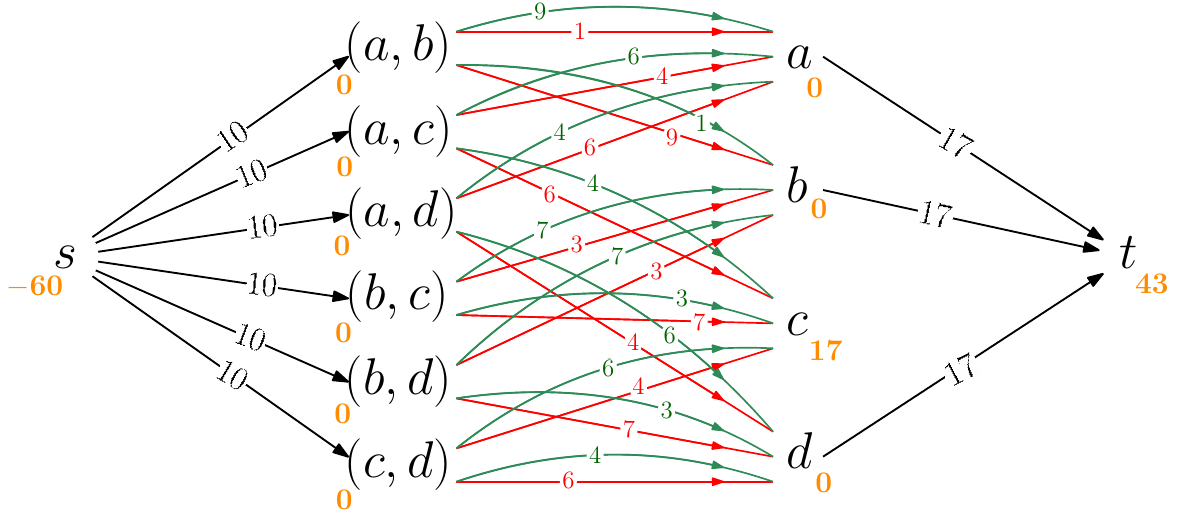}
    \caption{Illustration of $G_{c,17}$ for $T$ in \Cref{fig:IllustrationConstructionBorda1}.
s        The edges $\ered$ are drawn in red, resp. $\egreen$ in green.}
    \label{fig:IllustrationConstructionBordaExample}
\end{figure}
\begin{figure}[p]
    \centering
    \includegraphics[width=0.7\columnwidth]{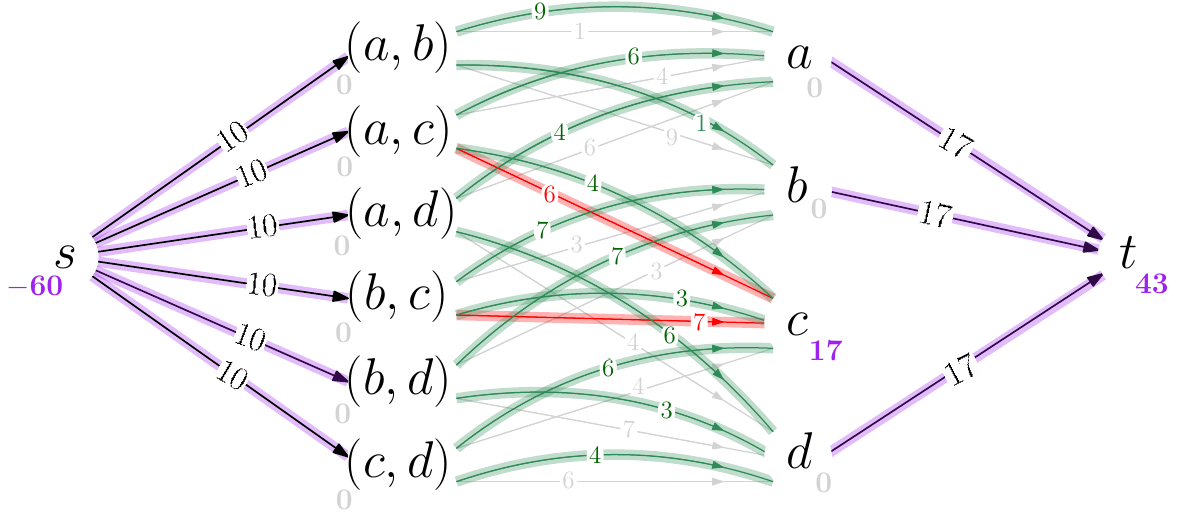}
    \caption{Flow corresponding to the wCRS $R$ for $c$ in \Cref{fig:IllustrationConstructionBorda1}.}
    \label{fig:IllustrationConstructionBordaExample_ReversalFlow}
\end{figure}
    Refer to \Cref{fig:IllustrationConstructionBordaExample} for an example of the construction for the $10$-weighted tournament $T$ from \Cref{fig:IllustrationConstructionBorda1}.
    Given a minimum cost $b$-flow $f$, we construct the reversal function \wrevfunc\ for $d$ in $T$ by
    \begin{align} \label{eq:wCRSConstructionFrombFlow}
        \wrev{x,y} = \begin{cases}
            f(\rededge{((v,w),v)}), &\text{if } (x,y)=(v,w) \text{, }f(\rededge{((v,w),v)})>0;\\
            0 ,                 &\text{otherwise.}
        \end{cases}
    \end{align}
    Observe, $f(\rededge{((v,w),v)})>0$ implies $f(\rededge{((v,w),w)})=0$, and vice versa.
    Among all choices of $(d,l)$, we select the one inducing a wCRS $\wrevfunc$ of minimum size.
    \dividersmall{Correctness} From a min~cost $b$-flow in the network $G_{d,l}$ one can construct a wCRS \wrevfunc\ for \BO non-winner $d$ of minimum size such that $d$ is a \BO winner with Borda score $l$ in $T^{\wrevfunc}$, and vice versa.
    Consider again \Cref{fig:IllustrationConstructionBordaExample} in which the nodes are arranged in four levels.
    The first level ($L_1$) contains the source $s$, the second-level nodes ($L_2$) correspond to the edges of the tournament $E(T)$, the third-level nodes ($L_3$) correspond to the vertices of the tournament $V(T)$, and the last level ($L_4$) contains the sink $t$.
    The nodes in $L_2$ have outgoing edges to $L_3$ of either cost 0 or 1, representing the choice between \textit{keeping the weight along the corresponding edge} in $T$ and \textit{reversing weight along that edge} in $T$.
    For each such node $e$ in $L_2$, any feasible flow can in totatl only send at most $n$ along its four outgoing edges.
    This is because of the restricting capacity $u((s,e))=n$ and $b(e)=0$.
    
    \dividerNoLineEndsmall{minimum cost $b$-flow to minimum size wCRS} Let $f$ be a min~cost $b$-flow in $G_{d,l}$, and \wrevfunc\ the reversal function defined in \Cref{eq:wCRSConstructionFrombFlow}.
    A green edge $\greenedge{(v,w)}$ represents the weight from $v$ to $w$ in $T$.
    A red edge $\rededge{(v,w)}$ represents the weight reversable from $w$ to $v$ in $T$.
    The amount of flow that reaches a vertex-node on $L_3$ corresponds to the Borda score of the vertex in $T^{\wrevfunc}$.
    As each edge from $L_3$ to $t$ has capacity $l$, and every vertex-node (despite $d$) has balance 0, we can guarantee that a feasible flow reaching a vertex-node is at most $l$.
    Moreover, since the vertex $d$ has a balance of $l$ and no capacity on the edge to $t$, we can guarantee $\sBO{d,T^{\wrevfunc}}=l$.
    Hence, $d$ has the highest Borda score in $T^\wrevfunc$ and $d\in\BO(T^{\wrevfunc})$.
    The cost of the minimum $b$-flow $f$ equals the sum of the values $f(\rededge{(e,x)})$, as every other edge has cost 0. Therefore, the size of \wrevfunc\ is the same as the cost of the flow.
    
    \dividerNoLineEndsmall{minimum size wCRS to minimum cost $b$-flow}
    Let \wrevfunc\ be the reversal function corresponding to a wCRS of minimum size for \BO\ non-winner $d$, and $f$ the corresponding $b$-flow, which for all edges $e=(v,w)\in E(T)$ in $T$ with $\wrev{v,w}>0$, resp. $\wrev{w,v}<0$, and all $x\in V(T)\setminus\{d\}$, 
    is defined by
    \begin{alignat*}{3}
        &f(s,e)					&&= n,\\
        &f(\greenedge{(e,v)})	&&= \weight{v,w},\\
        &f(\rededge{(e,v)})		&&=                  &&\wrev{v,w},\\
        &f(\greenedge{(e,w)})	&&= \weight{w,v} +   &&\wrev{w,v},\\
        &f(\rededge{(e,w)}) 	&&= 0,\\
        &f(x,t)					&&=\sBO{x,T^\wrevfunc},
    \end{alignat*}
    otherwise, it is 0. For an illustration of this flow, refer to \Cref{fig:IllustrationConstructionBordaExample_ReversalFlow}.
    Since \wrevfunc\ corresponds to a wCRS for $d$, every vertex in $T^{\wrevfunc}$ has Borda score at most $l$ and $d$ has exactly $l$.
    Following the interpretation of the incoming flow of every vertex-node in the network as the Borda score in $T^\wrevfunc$, we observe that the capacity restrictions of the edges are fulfilled by $f$.
    As the total flow sums up to $n\cdot \frac{m\cdot(m-1)}{2}$, the balances of the vertices are also fulfilled.
    Finally, since we cannot reverse more weight from $v$ to $w$ than $\weight{w,v}$, the edge constraints between $L_2$ and $L_3$ are also fulfilled.
    Hence, $f$ is a feasible integral $b$-flow for the network $G_{d,l}$.
    The cost of the minimum $b$-flow $f$ equals the sum of the values $f(\rededge{(e,x)})$, as every other edge has cost 0. Therefore, the cost of the flow is the same as the size of \wrevfunc.
    \dividersmall{Polynomial runtime}
    The algorithm for computing the minimum wCRS runs in time $\mathcal{O}\left(\frac{n\cdot(m-1)}{2} \cdot a^3b^2log(b)\right)$, with
    \begin{align*}
        a   = \lvert E(G_{d,l}')\rvert
		&= 2\cdot\lvert E(T)\rvert + \lvert V(G_{d,l})\rvert    \\
		&= 2\cdot\lvert E(T)\rvert + \lvert V(T)\rvert + \lvert E(T)\rvert + \lvert \{s,t\}\rvert \text{, and }\\
		b   = \lvert V(G_{d,l}')\rvert 
		&= 2\cdot\lvert E(T)\rvert + \lvert E(G_{d,l})\rvert    \\
		&= 2\cdot\lvert E(T)\rvert + \lvert E(T)\rvert + 2\cdot\lvert E(T)\rvert + \lvert V(T)\rvert.
	\end{align*}
    This follows, as a minimum cost $b$-flow in $G_{d,l}'$ can be found in polynomial time, precisely $\mathcal{O}(a^3b^2log(b))$, as stated in the remark. Furthermore, we observed that such a flow corresponds to a min~cost flow in $G_{d,l}$. We compute a minimum $b$-flow for every value $k\in \{\left\lceil n/2\cdot(m-1) \right\rceil, \dots,n\cdot(m-1)\}$, \ie  in total $n/2\cdot(m-1)$ times. This yields the stated (polynomial) running time.
\end{proof}

\subsection{Split Cycle}\label{subsec:SplitCycle}
Split Cycle is a tournament solution directly dealing with the problem of majority cycles in a tournament. It was introduced by \citet{HP21a, HP22a} to combat common problems of voting rules like the ``spoiler effect'' and the ``no show paradox''.

A \defstyle{dominance cycle} $C=(x_1,\dots,x_k)$ in a tournament $T$ is a sequence of alternatives, such that $k>1$, $x_1=x_k$ and for all $i\in \{1,\dots,k-1\}$, we have $\margin{x_i,x_{i+1}}>0$. We denote by $C(T)$ the set of all dominance cycles in $T$.
The \defstyle{splitting number} of a dominance cycle $C$ is defined as the smallest margin between consecutive alternatives, \ie \[\splitnum{C}  =   \min\{\margin{x_i,x_{i+1}} \colon x_i,x_{i+1}\in C\},\]
and the set of \defstyle{splitting edges} is the set of corresponding edges, \ie
\[\splitset{C}  =   \{(x_i,x_{i+1}) \colon x_i,x_{i+1}\in C\text{ and }\margin{x_i,x_{i+1}}=\splitnum{C}\}.\]
Let $\splitsetT=\bigcup_{C\in C(T)}\splitset{C}$ denote the union of the sets of splitting edges for all dominance cycles.

\begin{definition}\label{def:SplitCycle}
    Let $T=(V,w)$ be an $n$-weighted tournament and $\margingraph=(V,E)$ the corresponding margin graph.
    Further, let $\deletemargingraph=(V,E_D)$ denote the subgraph of \margingraph\ in which every splitting edge for every dominance cycle  in $T$ is deleted, \ie $E_D=E\setminus\splitsetT$.
    The set of Split Cycle winners are all alternatives not dominated in $\deletemargingraph$,
    \[\SC(T) = \{x\in V \colon (y,x)\notin E_D \text{ for all } y\in V\}.\]
\end{definition}
\begin{example}\label{ex:SplitCycle}
    Consider the tournament $T$ in \Cref{fig:SplitCycle}. It contains three dominance cycles, namely $C_1=(a,d,c)$, $C_2=(a,d,b)$ and $C_3=(a,d,b,c)$ with splitting numbers 4, 6 and 4, respectively. The graph $\deletemargingraph$ is given on the right and it yields $\SC(T)=\{a,c\}$.
\end{example}
\begin{figure}[H]
    \centering
    \includegraphics[width=0.7\columnwidth]{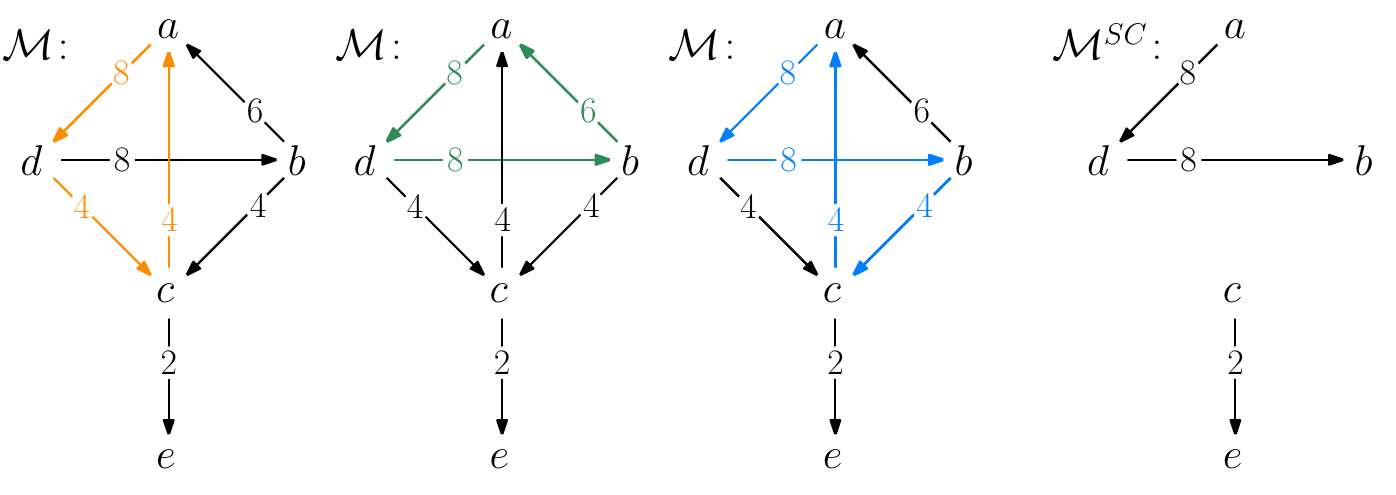}
    \caption{On the left is the margin graph \margingraph\ of a tournament $T$, displayed three times to highlight the three dominance cycles. All omitted edges have margin 0. On the right is $\deletemargingraph$.} 
    \label{fig:SplitCycle}
\end{figure}
\noindent The number of directed cycles in a complete graph can be exponential in the input size, which could result in exponential running time for computing an \SC winner.
But since \SC is strongly related to the Beat Path voting method \citep{schulze2011new}, we can alternatively define the \SC via the notion of strongest paths and compute the \SC winners in polynomial time by using a modification of the Floyd-Warshall algorithm \cite[Chapter 3.3]{HP22a}.

A \defstyle{path} $p=(x_1,\dots,x_k)$ from $x$ to $y$ in $T$ is a sequence of alternatives, such that $x_1=x$, $x_k=y$ and for all $i\in\{1,\dots,k-1\}$, we have $\margin{x_i,x_{i+1}}>0$. The \defstyle{strength} of such a path is defined as the smallest margin between consecutive alternatives in $p$, \ie
\[\strength{p}  =   \min\{\margin{x_i,x_{i+1}}\colon x_i,x_{i+1}\in p\}.\]
Note the similarities to \splitnum{C}.
An $x$-$y$-path of maximum strength is referred to as \defstyle{strongest path} from $x$ to $y$, and by \stpathmatrix\ we denote the \textit{strongest path matrix} of a tournament
\[\stpathmatrix[x,y]=\max\{\strength{p} \colon p\text{ is $x$-$y$-path in } T\}.\]
This matrix can be computed with the Floyd algorithm \citep{floyd1962algorithm} in $\mathcal{O}(|V|^3)$ as proven in \citet[Chapter 2.3]{schulze2011new}.
The following lemma shows the relation between strongest paths and splitting cycles. It is proven in \citet[Lemma 3.16]{HP22a}
\begin{lemma}\label{lem:strongestpath}
    Let $T=(V,w)$ be an $n$-weighted tournament and $x,y\in V$. Then $y$ dominates $x$ in $\deletemargingraph$, \ie $(y,x)\in E_D$, if and only if $\margin{y,x}>0$ and $\margin{y,x}>\stpathmatrix[x,y]$.
\end{lemma}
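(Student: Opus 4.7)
My plan is to prove each direction of the biconditional by taking the contrapositive and explicitly converting between dominance cycles containing the edge $(y,x)$ and paths from $x$ to $y$ in \margingraph. The core idea is that an edge $(y,x)$ is deleted by Split Cycle precisely when it can be ``closed off'' by an $x$-to-$y$ path of strength at least $\margin{y,x}$, forming a dominance cycle in which $(y,x)$ attains the minimum margin.

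For the forward direction, suppose $(y,x)\in E_D$. By definition of $E_D\subseteq E$, we immediately have $\margin{y,x}>0$. For the second condition, I would argue the contrapositive: assume $\stpathmatrix[x,y]\ge \margin{y,x}$, and let $p=(x=x_1,x_2,\dots,x_k=y)$ be a strongest $x$-$y$-path, so every consecutive margin on $p$ is at least $\margin{y,x}$. Appending the edge $(y,x)$ yields the sequence $C=(x_1,\dots,x_k,x_1)$, which is a dominance cycle since $\margin{y,x}>0$ and all other consecutive margins are positive. Because every edge of $p$ has margin $\ge \margin{y,x}$, the splitting number satisfies $\splitnum{C}=\margin{y,x}$, so $(y,x)\in\splitset{C}\subseteq\splitsetT$, contradicting $(y,x)\in E_D$.

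For the backward direction, assume $\margin{y,x}>0$ and $\margin{y,x}>\stpathmatrix[x,y]$, and again prove the contrapositive: if $(y,x)\notin E_D$, then $(y,x)$ lies in $\splitsetT$, so there exists a dominance cycle $C=(z_1,\dots,z_\ell)$ with $z_\ell=z_1$ and an index $i$ such that $(z_i,z_{i+1})=(y,x)$ and $\margin{z_i,z_{i+1}}=\splitnum{C}$. Rotating indices, the remaining arcs of $C$ form an $x$-to-$y$ path $p=(x,z_{i+2},\dots,z_{i-1},y)$ whose every consecutive margin is at least $\splitnum{C}=\margin{y,x}$. Hence $\strength{p}\ge \margin{y,x}$, giving $\stpathmatrix[x,y]\ge \margin{y,x}$, a contradiction.

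There is essentially no serious obstacle here; the whole argument reduces to the observation that a dominance cycle containing a fixed edge decomposes into that edge plus a path in the opposite direction, and that the strength of such a path directly governs whether the fixed edge is minimum within the cycle. The only care required is bookkeeping: making sure the concatenation of path and edge is really a dominance cycle (which follows because all relevant margins are strictly positive) and that the minimum over the cycle is attained at $(y,x)$ rather than being strictly smaller (which is why the strength inequalities are formulated with $\ge$, not $>$).
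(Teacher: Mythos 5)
Your proof is correct. Note that the paper does not prove this lemma itself: it defers to \citet[Lemma 3.16]{HP22a}, so there is no internal proof to compare against; your argument --- closing a strongest $x$-$y$-path with the edge $(y,x)$ to exhibit a dominance cycle in which $(y,x)$ attains the splitting number, and conversely decomposing any cycle witnessing $(y,x)\in\splitsetT$ into that edge plus a reverse $x$-$y$-path of strength at least $\splitnum{C}=\margin{y,x}$ --- is precisely the standard argument behind the cited result. The two bookkeeping points you flag are indeed the only delicate spots, and both go through: the paper's definitions of path and dominance cycle are purely sequential (no vertex-distinctness is required), so the concatenation is a legitimate dominance cycle without any simplicity argument, and in the forward direction your hypothesis $\stpathmatrix[x,y]\geq\margin{y,x}>0$ guarantees that a strongest $x$-$y$-path actually exists before you append the edge.
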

\noindent This immediately implies an equivalent definition of \SC.
\begin{definition}\label{def:SplitCycleSP}
    Let $T=(V,w)$ be an $n$-weighted tournament with strongest path matrix $\stpathmatrix$.
    The set of \SC winners is given by
    \[\SC(T) = \{x\in V \colon \margin{y,x} \leq (0 \text{ or } \stpathmatrix[x,y])\text{ for all }y\in V\}.\]
\end{definition}
Thus, to get a winning alternative $a$ out of the winning set we need to reverse weight such that at least one incoming edge $(d,a)\in E(\margingraph)$ of $a$ has margin greater than the strength of every $a$-$d$-path.
Our algorithm iterates over all alternatives $d$ and margin values $l$, and computes the wDRS \wrevfunc such that $d$ dominates $a$ in $\deletemargingraphRevAlt{R}$ and $\marginAfterRev{d,a}=l$.
To this end, we construct a suitable flow network \desG{d,l} and compute a minimum $a$-$d$-cut, resp. a maximum $a$-$d$-flow of $\desG{d,l}$.
The value of this $a$-$d$-flow will bound the size of the corresponding wDRS and the algorithm will select ($d,l$) inducing minimum size.

Let $G=(V\cup\{s,t\},E)$ be a flow network with a capacity function $c\colon E\rightarrow\mathbb{R}_+$, with $s$ and $t$ being the source and sink of $G$, respectively. 
An \defstyle{$s$-$t$-cut} $C=(S,T)$ is a partition of $V$ such that $s\in S$ and $t\in T$. The \defstyle{cut-set} 
$X_C = \{(x,y)\in E \colon x\in S, y\in T\}$ of a cut $C$ is the set of edges that connects the partitions $S$ and $T$,
and the \defstyle{capacity}
$c(S,T)=\sum_{(x,y)\in X_C} c(x,y)$ of an $s$-$t$-cut is the sum of the capacities of the edges in its cut-set.
The \MstCutP\ problem is to find an $s$-$t$-cut of minimum capacity.
An \defstyle{$s$-$t$-flow} $f\colon E\rightarrow\mathbb{R}_+$ is a mapping subject to $f(e)\leq c(e)$ for all $e\in E$, and
$\sum_{\{u\colon (u,v)\in E\}}f(u,v) = \sum_{\{w\colon (v,w)\in E\}}f(v,w)$ for all $v\in V\setminus\{s,t\}$.
The \defstyle{value} of a flow is defined by
$|f| = \sum_{\{v\colon (s,v)\in E\}}f(s,v)=\sum_{\{v\colon (v,t)\in E\}}f(v,t)$ and the \MFP\ problem is to find an $s$-$t$-flow of maximum value. 
The max-flow\,min-cut\,theorem states that the value of a maximum $s$-$t$-flow is equal to the capacity of a minimum $s$-$t$-cut. 
\begin{remark}
	\MstCutP\ and \MFP\ can be solved in polynomial time, for example by the Edmonds-Karp algorithm, which runs in $\mathcal{O}(|V||E|^2)$ time \citep{lammich2016formalizing}.
\end{remark}

\begin{restatable}{theorem}{SplitCycleDestructive}\label{thm:SplitCycleDestructive}
	Computing the \MoV of an \SC winner of an $n$-weighted tournament $T=(V,w)$ can be done in polynomial time.
\end{restatable}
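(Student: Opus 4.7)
The approach rests on the strongest-path characterization of \Cref{def:SplitCycleSP}: an \SC\ winner $a$ ceases to win after a reversal $\wrevfunc$ iff there is some $d\in V\setminus\{a\}$ whose $\marginAfterRev{d,a}$ is positive and strictly exceeds the strength of every $a$-$d$-path in $T^\wrevfunc$. Equivalently, there exist $d$ and an integer $l>0$ of the same parity as $n$ such that $\marginAfterRev{d,a}\geq l$ and every $a$-$d$-path in the margin graph of $T^\wrevfunc$ uses at least one edge of margin strictly below $l$. The plan is to iterate over all pairs $(d,l)$, compute the minimum-size reversal realizing these two conditions for each pair via a minimum cut, and return the best total as the $\MoV$.

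For a fixed pair $(d,l)$, construct a flow network $\desG{d,l}$ on node set $V$ with source $a$ and sink $d$. For every ordered pair $(x,y)\neq(d,a)$ with $\margin{x,y}\geq l$, include a directed edge $x\to y$ of capacity $(\margin{x,y}-l+2)/2$, which is the minimum weight reversal on this edge needed to bring its margin strictly below $l$. A minimum $a$-$d$-cut in $\desG{d,l}$, combined with the fixed cost $\max\{0,(l-\margin{d,a})/2\}$ of forcing $\marginAfterRev{d,a}\geq l$, yields a candidate total; the algorithm returns the smallest candidate over all $(d,l)$. Since there are $O(m\cdot n)$ such pairs and a minimum $a$-$d$-cut can be computed in polynomial time (e.g., by Edmonds--Karp), the total runtime is polynomial.

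Correctness requires both directions of a cut/reversal correspondence. From a minimum cut one defines $\wrevfunc$ by reversing exactly the capacity on each cut edge plus the fixed adjustment on $(d,a)$; by cut-path duality every $a$-$d$-path in the subgraph of edges with $\margin{\cdot,\cdot}\geq l$ inside the margin graph of $T^\wrevfunc$ is destroyed, so \Cref{lem:strongestpath} implies $d$ dominates $a$ in $\deletemargingraphRevAlt{R}$, i.e.\ $a\notin\SC(T^\wrevfunc)$. Conversely, from any reversal witnessing $a\notin\SC(T^\wrevfunc)$ via $d$, setting $l:=\marginAfterRev{d,a}$ forces the set of edges whose margin has been pushed below $l$ to contain an $a$-$d$-separator in the $\{\margin{\cdot,\cdot}\geq l\}$-subgraph of the original margin graph, of capacity at most the reversal size minus the $(d,a)$ adjustment.

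The main obstacle is the auxiliary claim that an optimal reversal may be assumed to only decrease margins on its cut edges and only increase $\margin{d,a}$: any other modification neither helps separate $a$ from $d$ in the strongest-path sense nor reduces cost, and any edge whose orientation is flipped by the reversal has margin strictly below $l$ in $T^\wrevfunc$ by construction and therefore cannot lie on a new path of strength $\geq l$. Once this WLOG reduction is in hand, the tightness of the min-cut bound follows, and iterating over the polynomially many pairs $(d,l)$ yields the claimed polynomial algorithm, which additionally outputs the realized minimum wDRS.
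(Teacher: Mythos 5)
Your algorithm is structurally the same as the paper's proof of \Cref{thm:SplitCycleDestructive}: iterate over witnesses $d$ and thresholds $l$ of the parity of $n$, build the subgraph of margin-graph edges with margin at least $l$ with capacities equal to the cost of pushing a margin strictly below $l$, add the fixed cost of forcing $\marginAfterRev{d,a}\geq l$, and take a minimum $a$-$d$-cut (your weight-unit capacities $(\margin{x,y}-l+2)/2$ are the paper's $\margin{x,y}-(l-2)$ up to a uniform rescaling, and your closing remark about flipped edges is the same soundness point the paper needs). However, there is one concrete defect: you exclude only the pair $(d,a)$ from the network, whereas the paper removes \emph{both} $(a,d)$ and $(d,a)$. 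If $\margin{a,d}\geq l$, your network contains the direct source-to-sink edge $(a,d)$, which every $a$-$d$-cut must pay for --- even though the fixed adjustment on $(d,a)$ already destroys the direct path, since after it $\marginAfterRev{a,d}\leq -l<0$ and no $a$-$d$-path can use that pair at all. Your candidate for such a pair therefore double-counts $\{a,d\}$: it charges $(l-\margin{d,a})/2+(\margin{a,d}-l+2)/2=\margin{a,d}+1$ units of weight on that pair instead of the true $(l+\margin{a,d})/2$, and for $\margin{a,d}=n$ the combined reversal even violates the constraint $0\leq\weight{a,d}+\wrev{a,d}$, so it is not a valid reversal function. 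This is exactly where your converse direction breaks: the separator extracted from an optimal reversal must contain $(a,d)$ whenever $\margin{a,d}\geq l$, and its capacity is \emph{not} covered by ``reversal size minus the $(d,a)$ adjustment.''

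This is not merely a cosmetic slip, because it can change the returned value, not just individual candidates. If $a$ is a Condorcet winner, then $\margin{a,d}>0$ for every $d$, and the optimal pair may well have $l\leq\margin{a,d}$ (this is the regime of the tight instances in \Cref{thm:result_BoundsSplitCycleDestructive}); then every relevant candidate is inflated by $(\margin{a,d}-l+2)/2>0$ and the minimum over all $(d,l)$ strictly overestimates the \MoV. The repair is the one-line change the paper makes: define $E(\desG{d,l})$ to exclude both orientations between $a$ and $d$, so that the entire interaction on that pair is accounted for by the fixed term $\max\{0,(l-\margin{d,a})/2\}$. With that fix, your WLOG reduction (cut edges end with forward margin at most $l-2$, flipped edges point backwards across the cut and so create no new $a$-$d$-path of strength $\geq l$, and by \Cref{lem:strongestpath} the edge $(d,a)$ then survives deletion) goes through, and your argument coincides with the paper's.
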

\begin{proof}
    Let $a\in \SC(T)$. We iterate over all $d\in V\setminus\{a\}$ and all $l\in\{(n\bmod 2),(n\bmod 2)+2\dots,n-2,n\}$ and compute the wDRS \wrevfunc\ for letting $d$ dominate $a$ in $\deletemargingraphRevAlt{R}$ with margin $l$.
    \dividersmall{Construction}
    Let $\desG{d,l}$ be a flow network with vertices $V(\desG{d,l})=V(T)$ with source $s=a$, sink $t=d$, and edges \[E(\desG{d,l})=\{(x,y) \colon x,y\in V\text{, } \margin{x,y}\geq l\}\setminus\{(a,d),(d,a)\}.\]
    For every $(x,y)\in E(\desG{d,l})$ we define capacity $c(x,y)=\margin{x,y}-(l-2)$.
    Refer to \Cref{fig:SplitCycleConstructionExample} for an example of the construction for a $10$-weighted tournament with 5 alternatives. Note that the size of $E(\desG{d,l})$ heavily depends on the margin value $l$ and in some cases, \eg $\desG{e,6}$ in \Cref{fig:SplitCycleConstructionExample}, the minimum cut might even be empty as there is no path from $a$ to $d$ in $G_{d,l}$.
    Given a maximum flow $f$, we construct the reversal function \wrevfunc\ for $a$ by
    \begin{align} \label{eq:MaxFlowtoMinwDRS}
        \wrev{x,y}= \begin{cases}
        -f(x,y) , & \text{if } (x,y)\in E(\desG{d,l});\\
        l-\margin{d,a}, & \text{if }(x,y)=(d,a);\\
        0, &\text{ else}.
    \end{cases}
    \end{align} Recall, $(d,a)\notin E(G_{d,l})$ per construction. Among all possible choices of $(d,l)$ we select the one inducing a wDRS \wrevfunc\ of minimum size.
    \begin{figure}[t]
        \centering
        \includegraphics[width=0.55\columnwidth]{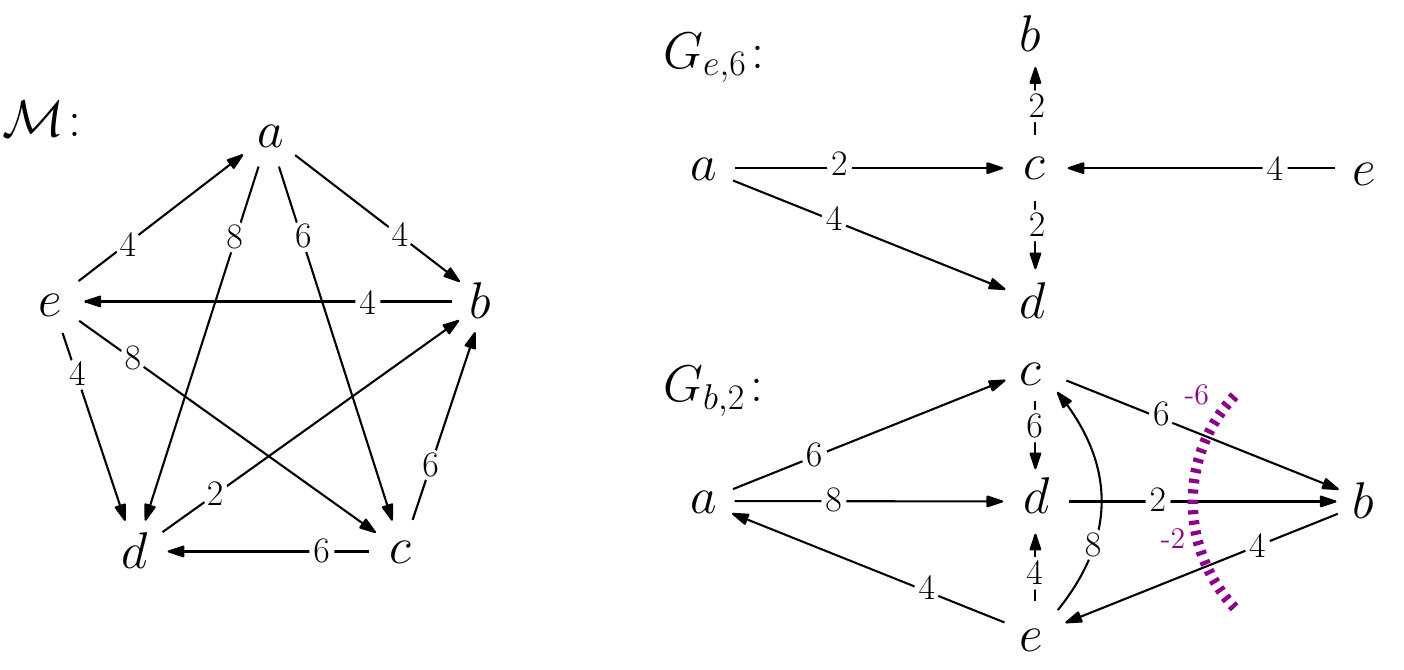}
        \caption{Illustration of the construction used in the proof of \Cref{thm:SplitCycleDestructive} for Split Cycle winner $s=a$ and $t=e$ with margin value $l=6$ in the upper graph $G_{e,6}$, as well as $t=b$ with margin value $l=2$ in the lower graph $G_{b,2}$. For $G_{b,2}$ a minimum cut of $8$ is given, while for $G_{e,6}$ the minimum cut is empty.}
        \label{fig:SplitCycleConstructionExample}
    \end{figure}
    \dividersmall{Correctness}
    From a maximum flow $f$ of $\desG{d,l}$ one can construct a wDRS for \SC\ winner $a$ of minimum size such that
    $(d,a)\in E(\deletemargingraphRevAlt{R})$ with $\marginAfterRev{d,a}=l$, and vice versa.
    Observe, we achieve $\marginAfterRev{d,a}=l$ by reversing $l-\margin{d,a}$ weight from $a$ to $d$.
    By definition of \SC, we then need to ensure that after reversing every $a$-$d$-path $P_{a,d}$ in $\deletemargingraphRevAlt{R}$ has $\strengthAltTourn{P_{a,d}}{T^\wrevfunc} < l$.
    For every $a$-$d$-path with strength less than $l$ this is already \mbox{fulfilled}. 
    Therefore, we can disregard all such paths and are left with a, most likely not complete, directed subgraph of the tournament in which we are interested in all paths from $a$ to $d$.
    The problem of finding a minimum wDRS can now be solved by subtracting $(l-2)$ from the margin value of  every edge and computing which edges have to be deleted such that there is no $a$-$d$-path left.
    After achieving that, there cannot be an $a$-$d$-path with strength greater or equal to $l$.
    Thus, the problem of finding a minimum wDRS can be constructed as a max-flow\,min-cut\,instance.
    
    \dividerNoLineEndsmall{maximum flow to minimum wDRS}
    Let $f$ be a maximum flow in $\desG{d,l}$, $C$ the corresponding minimum $a$-$d$-cut and \wrevfunc\ the reversal function defined in \Cref{eq:MaxFlowtoMinwDRS}.
    Assume towards contradiction \wrevfunc\ does not induce a minimum wDRS for Split Cycle winner $a$ such that $(d,a)\in E(\deletemargingraphRevAlt{R})$ with $\marginAfterRev{d,a}=l$.
    
    First, assume \wrevfunc\ is not a wDRS.
    This means, there has to be an $a$-$d$-path $P_{a,d}$ in $\margingraph^\wrevfunc$ with $\strengthAltTourn{P_{a,d}}{T^\wrevfunc}\geq l$ because of which $(d,a)$ gets deleted from $\margingraph^\wrevfunc$.
    Let $c^\wrevfunc$ denote the capacities in $\desG{d,l}$ after the minimum cut.
    Since $\strengthAltTourn{P_{a,d}}{T^\wrevfunc}\geq l$, for the capacity $c^\wrevfunc(e)$ of every edge $e\in P_{a,d}$ of the $a$-$d$-path holds,
    \begin{align*}
        c^\wrevfunc(e)                      \quad
        &\eqcom{Def. $c^\wrevfunc$}{=}  \quad
        c(e) \hphantom{\hspace{1.4em}- (l-2)}- f(x,y)\\
        &\eqcom{construction}{=}        \quad
        \margin{e} \hspace{0.5em}- (l-2) - f(x,y) \\
        &\eqcom{Def. $\wrevfunc$}{=}    \quad
        \margin{e} \hspace{0.5em}- (l-2) + \wrev{x,y}\\
        &\eqcom{Def. $m^\wrevfunc$}{=}  \quad
        \marginAfterRev{e} - (l-2)\\
        &\eqcom{Def. $\textma{Strength}_{T^\wrevfunc}$}{\geq}\quad          
        \strengthAltTourn{P_{a,d}}{T^\wrevfunc} - (l-2)\\
        &\geq    \hspace{1em} l-l+2\\
        &=       \hspace{1em} 2 \, >\, 0.
    \end{align*}
    Therefore, $P_{a,d}$ is an $a$-$d$-path in $G_{d,l}$ after the minimum cut, which is a contradiction to $C$ being an $a$-$d$-cut.
    
    Secondly, assume the \wrevfunc\ is not of minimum size.
    This means, there is a wDRS $\wrevfunc'$ of size smaller than \wrevfunc. Let $C'$ be the cut corresponding to \wrevaltfunc{'}.
    Since $\wrevaltfunc{'}$ is a wDRS, there is no $a$-$d$-path $P_{a,d}$ in $\margingraph^{\wrevaltfunc{'}}$ with $\strengthAltTourn{P_{a,d}}{T^\wrevfunc}\geq l$, \ie $(d,a)$ does not get deleted from \margingraph. 
    Consequently, there is no $a$-$d$-path in $G_{d,l}$ after the minimum cut $C'$. It follows that $C'$ is an $a$-$d$-cut of fewer capacity than $C$, which is a contradiction to $C$ being of minimum capacity.
    
    \dividerNoLineEndsmall{minimum wDRS to maximum flow} The other direction follows analogously.
    Let \wrevfunc\ be the reversal function corresponding to a wDRS of minimum size for \SC\ winner $a$ and $f$ the flow in $\desG{d,l}$ defined by 
    \[f(x,y)= -R(x,y),\]
    for all $(x,y)\in E(G_{d,l})$ with corresponding $a$-$d$-cut $C$.
    Assume towards contradiction $f$ is not a max~flow, resp. $C$ not a min~cut.
	
    First, assume $C$ is not a cut of $G_{d,l}$. Analogous to the other direction, this would imply the existence of an $a$-$d$-path in $G_{d,l}$, which would in turn corresponds to an $a$-$d$-path in $T^\wrevfunc$ with strength greater or equal to $l$.
    But in that case $(d,a)$ would be deleted and $a\in SC(T^\wrevfunc)$, which is a contradiction to \wrevfunc\ being a wDRS.
    
    Second, assume $C$ is not of minimum capacity.  Again analogous to the other direction, this would imply the existence of a cut $C'$ of fewer capacity, which would correspond to a reversal function of smaller size, a contradiction to \wrevfunc\ being of minimum size.
    \dividersmall{Polynomial runtime}
    The algorithm runs in time $\mathcal{O}(\vert V\rvert \cdot n\cdot\vert V\rvert \vert E\rvert ^2)=\mathcal{O}(n\vert V\rvert ^2\vert E\rvert ^2)$.
    This immediately follows as
    (1) the algorithm is executed for every margin value $l\in\{n\bmod 2,n\bmod 2+2, \dots,n-2,n\}$ and every $d\in V\setminus\{a\}$, \ie for $n/2\leq n$ values for $l$ and $\lvert V \rvert -1\leq \lvert V\rvert$ values for $d$,
    (2) the algorithm for max\,flow runs in time $\mathcal{O}(\vert V(G_{d,l})\rvert\vert E(G_{d,l})\rvert^2)$, and
    (3) $\vert V(\desG{d,l})\rvert =\vert V\rvert$, $\vert E(\desG{d,l})\rvert \leq\vert E\rvert$.
\end{proof}

For the constructive case we can show NP-completeness by reducing from \textsc{Dominating Set}, again utilizing the alternative path definition of \SC.

\begin{restatable}{theorem}{scconstnp}\label{thm:SplitCycleConstructive}
    Deciding whether there is a wCRS of size $k$ for an \SC non-winner of an $n$-weighted tournament is NP-complete.
\end{restatable}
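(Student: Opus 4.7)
The plan has two parts: NP-membership and NP-hardness via reduction from \textsc{Dominating Set}, both built on the strongest-path characterisation of \SC given by \Cref{def:SplitCycleSP} and \Cref{lem:strongestpath}.

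\textbf{Membership.} Given a candidate reversal function \wrevfunc of total size at most $k$, I would construct $T^\wrevfunc$, compute its strongest-path matrix \stpathmatrix\ by the modified Floyd--Warshall algorithm in $\mathcal{O}(|V|^3)$ time, and then verify that every $y$ with $\marginAfterRev{y,d} > 0$ satisfies $\marginAfterRev{y,d} \leq \stpathmatrix[d,y]$. This certificate check is polynomial, so the decision problem lies in NP.

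\textbf{Hardness.} Starting from an instance $(G,k)$ of \textsc{Dominating Set} with $G=(V,E)$, I would construct an $n$-weighted tournament $T$ (for a small even $n$, say $n=4$) on the vertex set $\{d\} \cup \{a_v \colon v \in V\}$ together with a few padding alternatives, so that ``buying'' a vertex $v \in V$ into the dominating set corresponds exactly to reversing a fixed amount $c$ of weight on the edge $(a_v,d)$. Concretely, I would set $\margin{a_v,d}=t$ for a common positive threshold $t$, set $\margin{a_u,a_v}=t$ precisely when $\{u,v\}\in E(G)$ (orienting arbitrarily), and make the remaining inter-$a$ margins either $0$ or strictly smaller than $t$. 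If \wrevfunc\ performs exactly a buy-reversal for a set $S\subseteq V$, then in $T^\wrevfunc$ the alternatives dominating $d$ are precisely $\{a_v \colon v \notin S\}$, and by \Cref{lem:strongestpath} we have $d \in \SC(T^\wrevfunc)$ iff for every $v\notin S$ there is a $d$-to-$a_v$ path of strength at least $t$. Any such path must start with a hop $d\to a_u$ where $u\in S$ (the only positive outgoing margins from $d$ after reversal) and continue along an edge $a_u\to a_v$ of strength $\geq t$, which by construction requires $\{u,v\}\in E(G)$. Hence a wCRS of size $c \cdot k$ exists iff $G$ has a dominating set of size $k$.

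\textbf{Main obstacle.} The subtle step is arguing that no cheaper strategy than the intended buy-reversals can ever work. I need to rule out (i) partial reversals of the $(a_v,d)$ edges, which spend weight without flipping the edge and therefore cannot supply a $d\to a_u$ hop of strength $\geq t$; (ii) reversals on inter-$a$ or padding edges that might appear to construct a long strong path more cheaply; and (iii) combinations of the above that together beat the buy strategy. I would handle (i) and (ii) by choosing $t$ and the other margins so that the only way to obtain an outgoing margin of at least $t$ from $d$, or to preserve an $a_u\to a_v$ edge of margin $\geq t$, is via a full buy-reversal of some $(a_v,d)$ edge together with an original edge of $G$. For (iii) I would design the padding alternatives to be already dominated by $d$ with a large margin and to not lie on any short strong path, so touching their incident edges is always wasted. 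Once these structural claims are established, a bi-directional cost comparison between wCRSes and dominating sets yields the equivalence, and together with the NP-membership argument completes the proof.
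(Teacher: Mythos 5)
Your membership argument is fine, and you even picked the same source problem as the paper (\textsc{Dominating Set}), but your gadget has a genuine flaw at its core: you cannot encode an undirected adjacency $\{u,v\}\in E(G)$ by a margin between \emph{single} alternatives $a_u$ and $a_v$, because margins are antisymmetric --- setting $\margin{a_u,a_v}=t$ forces $\margin{a_v,a_u}=-t$, so your ``orienting arbitrarily'' silently replaces $G$ by an orientation of $G$, and buying $u$ then covers only the \emph{out}-neighbours of $u$. \textsc{Dominating Set} is not preserved under this translation: orient a star $K_{1,r}$ with all edges towards the centre; $G$ has a dominating set of size $1$, but in the orientation every leaf would have to be bought. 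So the forward direction of your claimed equivalence already fails. A second, independent problem is that strongest paths are not restricted to length two: with all adjacency margins equal to $t$ and the bought hop $d\to a_u$ also at margin $t$, the path $d\to a_u\to a_w\to a_v$ has strength $t$ and, by \Cref{lem:strongestpath}, already forces the deletion of $(a_v,d)$. Coverage in your instance is therefore directed \emph{reachability} from the bought set, not adjacency, and minimising the number of sources from which everything is reachable is polynomial-time solvable (source components of the condensation), so the reverse direction fails as well. Your obstacle (ii) cannot be patched by tuning $t$ or the padding, since every adjacency edge must carry margin $t$ to be usable as the second hop, and then it is automatically usable as an intermediate hop too.

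The paper's construction resolves both issues simultaneously by splitting each vertex $i\in V(G)$ into \emph{two} alternatives, a selector $a_i\in A$ and an element $b_i\in B$: the undirected edge $\{i,j\}$ is encoded symmetrically by the two distinct tournament edges $(a_i,b_j)$ and $(a_j,b_i)$ of weight $2r$ (together with $(a_i,b_i)$ for the closed neighbourhood), so no orientation choice is ever made. Moreover, the layered margin structure $x\to A\to B\to x$ ensures that every dominance cycle has the form $(x,a_i,b_j,x)$ and every relevant strong path from $x$ to a $b_j$ has length exactly two, which makes the deletion of $(b_j,x)$ correspond precisely to $j$ lying in the closed neighbourhood of a bought vertex; the ``buy'' operation is a unit reversal $\wrev{x,a_j}=1$, and two exchange arguments show a minimum wCRS only contains such reversals. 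If you rebuild your gadget with this two-layer structure, your cost-comparison plan between wCRSes and dominating sets goes through essentially as you envisioned it.
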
 
\begin{proof} We reduce \textsc{Dominating Set} to \textsc{Constructive} \MoVExt{\SC}.
	\dividersmall{Construction}
	Given an instance $(G,k)$ of \textsc{Dominating Set} with vertex set $V(G)$ of size $r$ and edge set $E(G)$, we construct a $(2r)$-weighted tournament $T=(V,w)$ with alternatives
	\[V= \{x\} \cup A \cup B, \]
	where $A=\{a_i \colon i\in V(T)\}$ and $B=\{b_i \colon i\in V(T)\}$.
	This means, for every vertex of the \textsc{Dominating Set} instance there is one corresponding alternative in $A$ and another corresponding alternative in $B$.
	Let~the weight function $w$ of $T$ be defined by
    \begin{alignat*}{4}
    &\weight{x,a_i}&&=r+1,&& \\    
    &\weight{a_i,b_i}&&=2r,&& \\  
    &\weight{a_i,b_j}&&=2r  \text{, }&&\hspace{-0.7em}(i,j)\in E(G), \\
    &\weight{a_i,b_j}&&=r   \text{, }&&\hspace{-0.7em}(i,j)\notin E(G),\\
    &\weight{b_i,x}&&=r+2,&& 
	\end{alignat*}
    while the weight between any two alternatives in $A$, resp. in $B$, can be arbitrary.
    To ensure readability of illustrations, we set this weight to $r$, \ie no such edge appears in the margin graph.
    The alternative for which we compute the wCRS is $x$.
    Refer to \Cref{fig:SC_Constructive_Construction_Example} for an illustration of the construction on a graph $G$ with 7 vertices.
\begin{figure}[b]
        \centering
        \includegraphics[width=0.65\textwidth]{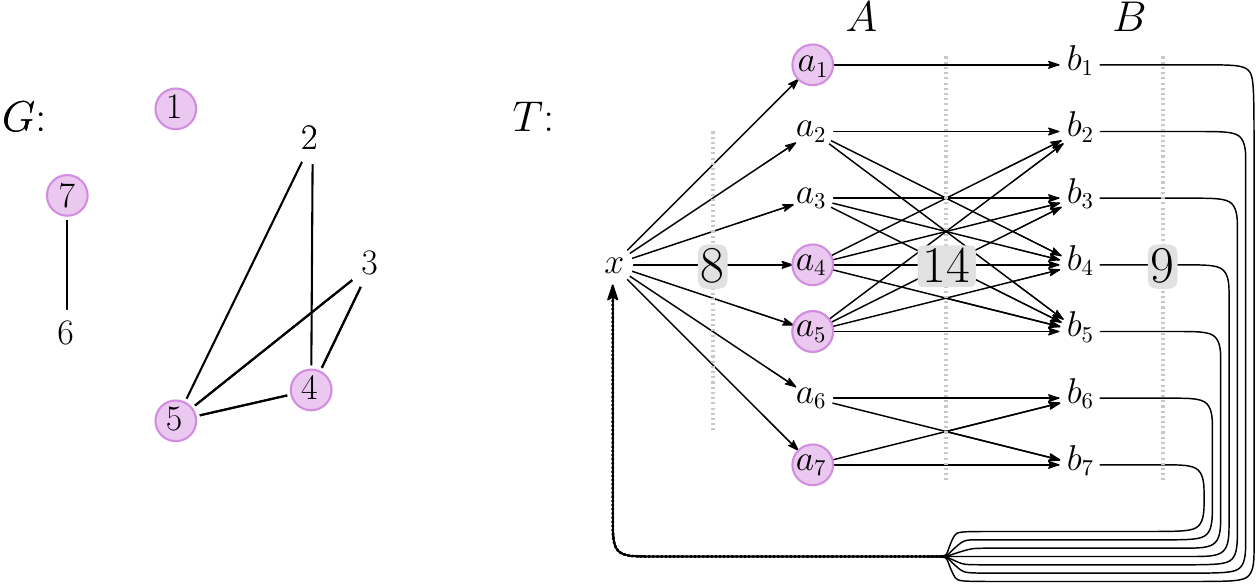}
        \caption{Illustration of the construction used in \Cref{thm:SplitCycleConstructive} for graph $G$ with $r=\lvert V(G)\rvert=5$. A dominating set of size $2$ is given by $\{3,5\}$ and a corresponding minimum wCRS is $\wrev{x,3}=+1$, $\wrev{x,5}=+1$.}
        \label{fig:SC_Constructive_Construction_Example}
\end{figure}
    We make a few observations regarding the construction, before proving its correctness.
    \begin{remark}\label{rem:SCConstructive_remark}
    In the constructed tournament $T$, 
    \begin{enumerate}
        \item every simple cycle in \margingraph\ is of the form $(x,a_i,b_j,x)$ with either $i=j$ or $(i,j)\in E(G)$, \label{enummm1}
        \item the strength of each cycle is $2$, and \label{enummm2}
        \item every such cycle has exactly one splitting edge, \ie we have
        $\splitset{x,a_i,b_j,x}=\{(x,a_i)\}$. \label{enummm3}
    \end{enumerate}
    \end{remark}
    \begin{observation*}
        $x\not\in\SC(T)$.
    \end{observation*}
    \begin{proof}
    In the margin graph \margingraph, alternative $x$ has an incoming edge $(b_j,x)$ for every $b_j\in B$.
    By the remark, \cref{enummm1}, each $(b_j,x)$ is part of the cycle $(x,a_j,b_j,x)$, and all cycles $(x,a_i,b_j,x)$ with $(i,j)\in E(G)$.
    By~the remark, \cref{enummm2}, the splitting edges of these cycles are $(x,a_j)$ and $(x,a_i)$.
    Therefore, the margin graph after deletion $\deletemargingraph$ is only missing edges from $x$ to $A$.
    In~particular, every edge $(b_j,x)$ is in $E(\deletemargingraph)$, which implies $x\not\in\SC(T)$.
    \end{proof}
    \begin{observation*} Every wCRS can be modified such that it only contains edges from $x$ to $A$.
    \end{observation*}
    \begin{proof}
        Let \wrevfunc\ be the reversal function corresponding to a wCRS 
        for \SC non-winner $x$ in the constructed tournament $T$.
        Recall, $x$ has an incoming edge $(b_j,x)$ in \margingraph\ for every $b_j\in B$.
        To make $x$ an \SC\ winner, we need to reverse weight such that for every incoming edge there is at least one cycle $C$ containing it with $\marginAfterRev{b_j,x}=\strengthAltTourn{C}{T^\wrevfunc}$. Then every $(b_j,x)$ gets deleted and $x\in\SC(T^{\wrevfunc})$.
        
        Assume there is an alternative $b_j\in B$ with $\wrev{b_j,x}\neq0$ or with $\wrev{a_i,b_j}\neq 0$, \ie the corresponding edges are contained in the wCRS.
        Without loss of generality, let \mbox{$\wrev{b_j,x}\neq0$}. The argument for $\wrev{a_i,b_j}\neq 0$ follows analogously.
        We show, reversing 1 weight along $(x,a_j)$ instead of $\lvert\wrev{b_j,x}\rvert\geq1$ weight along $(b_j,x)$, yields a wCRS of smaller or equal size.
        Let \wrevaltfunc{'} be defined by 
        \[\wrevalt{'}{y,z} = \begin{cases}
            1   &\text{, if }(y,z)=(x,a_j);\\
            0   &\text{, if }(y,z)=(b_j,x);\\
            \wrevalt{}{y,z}    &\text{, else.}
        \end{cases}\]
        Obviously, $\lvert \wrevfunc'\rvert\leq\lvert \wrevaltfunc{}\rvert$.
        Every incoming edge of $x$ aside from $(b_j,x)$ will still be deleted after the reversal of \wrevaltfunc{'}. This follows, as \wrevfunc\ is a wCRS for $x$, and \wrevaltfunc{'} only differs from \wrevfunc\ regarding $(b_j,x)$.
        Additionally, we have $\marginAfterRevAlt{b_j,x}{\wrevaltfunc{'}}=\strengthAltTourn{x,a_j,b_j,x}{T^{\wrevaltfunc{'}}}$. This follows from 
        \begin{alignat*}{4}
            &\weightAfterRevAlt{x,a_j}{\wrevaltfunc{'}}
            &&=\weight{x,a_j} &&+\wrevalt{'}{x,a_j}
            &&=(r+1)+1,\\
            &\weightAfterRevAlt{a_j,b_j}{\wrevaltfunc{'}}
            &&=\weight{a_j,b_j} &&+\wrevalt{'}{a_j,b_j}
            &&=2r+0,\\
            &\weightAfterRevAlt{b_j,x}{\wrevaltfunc{'}}
            &&=\weight{b_j,x} &&+\wrevalt{'}{b_j,x}
            &&=(r+2)+0,
        \end{alignat*}
        and 
        \[\strengthAltTourn{x,a_j,b_j,x}{T^{\wrevaltfunc{'}}} = \min\{ \weightAfterRevAlt{x,a_j}{\wrevaltfunc{'}}, \weightAfterRevAlt{a_j,b_j}{\wrevaltfunc{'}}, \weightAfterRevAlt{b_j,x}{\wrevaltfunc{'}}\}.\]
        Therefore, the edge $(b_j,x)$ will be deleted and $x\in\SC(T^{\wrevaltfunc{'}})$.
    \end{proof}
    \begin{figure}[t]
        \centering
        \includegraphics[width = 0.755\textwidth]{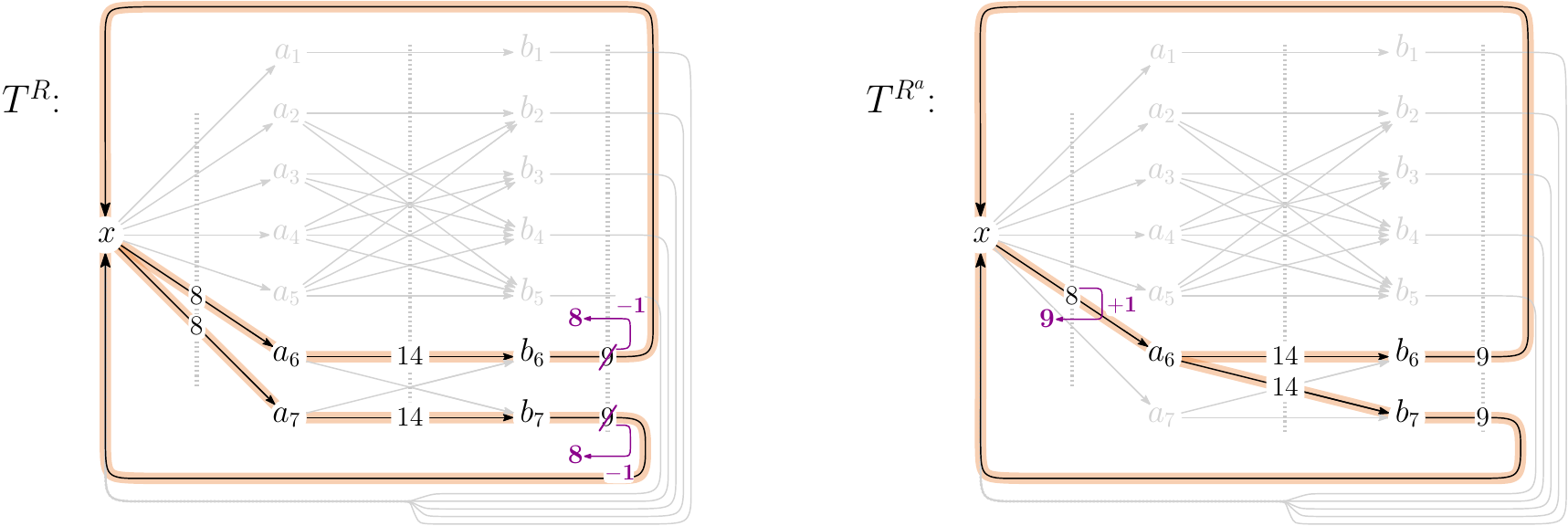}
        \caption{Illustration of the behavior of minimum reversal sets mentioned in the observations regarding  \Cref{thm:SplitCycleConstructive}.
        On the left is a reverse function \wrevaltfunc{} reversing weight on incoming edges of $x$, namely $(b_3,x)$ and $(b_4,x)$.
        Additionally illustrated are the cycles $(x,a_3,b_3,x)$ and $(x,a_4,b_4,x)$ by which the edges $(b_3,x)$ and $(b_4,x)$ are deleted.
        On the right is the reverse function \wrevaltfunc{a} together with the cycles $(x,a_3,b_3,x)$ and $(x,a_3,b_4,x)$ by which the two edges are deleted.}
        \label{fig:SCConstructiveRbtoRa}
    \end{figure}
    Note, that in general any reversal set of minimum size results in  more than one edge being be deleted with one cycle. This is only possible by reversing weight along an edge from $x$ to $A$: Assume there are two alternatives $b_i,b_j\in B$, $i\neq j$, contained in cycles $(x,a,b_i,x)$ and $(x,a,b_j,x)$, \ie either [$b_i\neq a\neq b_j$ and both have an edge to $a$] or [$b_i=a$ and $b_j$ has an edge to $a$]. Refer to \Cref{fig:SCConstructiveRbtoRa} for an illustration.
    Let \wrevfunc\ be a wCRS with $\wrev{b_i,x}=-1=\wrev{b_j,x}$. We define \wrevaltfunc{a} by
    \[\wrevalt{a}{y,z} = \begin{cases}
        1                  &\text{, if }(y,z)=(x,a);\\
        0                   &\text{, if }(y,z)=(b_i,x)\text{ or }(y,z)=(b_j,x);\\
        \wrevalt{}{y,z}    &\text{, else.}
    \end{cases}\]
    \noindent Instead of reversing 1 weight \textbf{each} along $(b_i,x)$ and $(b_j,x)$, we reverse 1 weight \textbf{once} from $a$ to $x$. This results in the deletion of both edges $(b_i,x)$ and $(b_j,x)$ at once and thus yield a wCRS of smaller size.
    \dividersmall{Correctness} 
    From a wCRS of size $k$ for \SC non-winner $x$ in the tournament $T$ one can construct a dominating set $S$ of the same size for the \textsc{dominating set} instance $G$, and vice versa.
    
    \dividerNoLineEndsmall{wCRS to dominating set}
    Let \wrevfunc\ be the reversal function corresponding to a wCRS for \SC\ non-winner $x$ of size $k$. Without loss of generality, we can assume \wrevfunc\ to be of minimum size.
    We construct the corresponding dominating set
    \begin{align}S = \{ j\in V(G) \mid \wrev{x,a_j} >0 \}.\label{eq:S}\end{align}
    
    First, observe $S$ is indeed a dominating set.
    Let $i\in V(G)$ with $ i\not\in S$, \ie $\wrev{x,a_i}=0$ per definition of $S$.
    Since $(b_i,x)$ is deleted after reversal, alternative $b_i$ has to be contained in a cycle $(x,a_j,b_i,x)$ for some $j\in V(G)$ with $j\neq i$ and $\wrev{x,a_j}>0$, \ie $j\in S$. 
    By~construction, $b_i$ can only be contained in such a cycle if $(i,j)\in E(G)$ (see the remark, \cref{enummm1}).
    Therefore, for every $i\not\in S$ there is $j\in S$ with $(i,j)\in E(G)$.
    So every vertex of $G$ is either in $S$ itself or one of its neighbours has to be in $S$, \ie $S$ is a dominating set.
    
    Second, $S$ is of size $k=\lvert\wrevfunc\rvert$, because
    \[\lvert S\rvert = \lvert \{ j\in V(G) \mid \wrev{x,a_j} >0 \}\rvert = \sum\limits_{\substack{j\in V(G)\\\wrev{x,a_j}>0}} 1 = \sum\limits_{\substack{y,z\in V(T)\\\wrev{y,z}>0}} \wrev{y,z} = \lvert \wrevfunc\rvert.\]
    
    \dividerNoLineEndsmall{Dominating set to wCRS}
    The other direction follows analogously. Let $S$ be a dominating set for $G$. We construct the corresponding reversal function \wrevfunc\ of the same size for \SC\ non-winner $x$ in the constructed tournament $T$ by
    \[\wrev{x,a_j}=1\text{ for all }j\in S,\]
    and 0 otherwise.
    Since $S$ is a dominating set of $G$, for all vertices $i\in V(G)$ either $i\in S$ or there is $j\in S$ with $(i,j)\in E(G)$.
    Therefore, every $b_i\in B$ is contained in a cycle $(x,a_i,b_i,x)$, with $\wrev{x,a_i}=1$, or $(x,a_i,b_i,x)$ for some $j\in V(G)$ with $\wrev{x,a_j}=1$. 
	Thus, after reversal $(b_i,x)$ is deleted for all $b_i\in B$ and \wrevfunc\ is a wCRS for \SC\ non-winner $x$ of size $k=\lvert\wrevfunc\rvert = \lvert S \rvert$.
\end{proof}

\subsection{Weighted Uncovered Set}\label{subsec:weightedUncoveredSet}
In an unweighted tournament $T=(V,E)$ we say $y$ \defstyle{covers} $x$ if $y$ dominates $x$ and each alternative $z$ dominated by $x$, \ie $(y,x)\in E$ and $(x,z)\in E$ implies $(y,z)\in E$.
If there is no alternative covering $x$, we say that $x$ is \defstyle{uncovered}.
This (un)covering relation is asymmetric and transitive, and its maximal set is called the Uncovered Set (\UC).
Equivalently, \UC\ consists precisely of those alternatives that can reach every other alternative via a path of length at most two \citep{Shepsle1984UncoveredSA}.
This definition can be extended to the concept of \kkings\ with $k\in\{2,\dots, n-1\}$ \citep{maurer1980king}, which consist of all alternatives that can reach every other alternative via a path of length at most $k$.
\UC\ can thus be interpreted as the set of \somekings{2} and the \somekings{(n-1)} are exactly those alternatives that are chosen by the well known tournament solution Top Cycle \citep{good1971note}.

\UC\ is one of the earliest and best-known tournament solution for unweighted tournaments. We formalize a generalization for weighted tournaments which has been mentioned in \citep{fischer2016weighted} and was defined from a game theoretic perspective in \citep{de2000choosing}.
We do so using the following weighted covering relation.
\begin{definition}\label{def:wUC}
    Let $T=(V,w)$ be an $n$-weighted tournament and $x,y\in V$. We say $y$~\defstyle{w-covers}~$x$, if $y$ dominates $x$, \[\margin{y,x}>0,\] 
    and is preferred over every other alternative $z\in V\setminus\{x,y\}$ at least as much as $x$,
    \[\weight{y,z}\geq\weight{x,z},\]
    and thus $\margin{y,z}\geq \margin{x,z}$.
    In particular, this has to be fulfilled even if $\margin{x,z}<0$.
    If~clear from the context, we might just say $y$ covers $x$, equivalently, $x$ is covered by $y$.
    This weighted (un)covering relation is asymmetric and transitive, and the weighted Uncovered Set (\wUC) is the set of all alternatives not w-covered by any other alternative.
\end{definition}
\begin{remark}
	Every unweighted tournament $T_u$ can be interpreted as a 1-weighted tournament $T_w$, and $\UC(T_u)=\wUC(T_w)$: Simply set $w^{T_w}(x,y)=1$ if and only if $(x,y)\in E(T_u)$.
\end{remark}
We mainly use the following alternative characterization of \wUC via a path definition similar to \kkings.
Remember that in the unweighted setting, the uncovered alternatives correspond to \somekings{2} and, consequently, are exactly those that can reach every other alternative by a path of length at most two \cite[Proposition~5.1.3]{laslier1997tournament}.
In the weighted setting, we get \defstyle{decreasing paths} of length at most two.
\begin{definition}\label{def:decreasingPaths}
    Let $T=(V,w)$ be an $n$-weighted tournament and $x,y\in V$.
    A \defstyle{decreasing path} $p=(x,y)$ from $x$ to $y$ of length $1$ exists if and only if
    $\margin{x,y}\geq0$.
    A \defstyle{decreasing path} from $x$ to $y$ of length $k\in\{2,\dots,n-1\}$ is a sequence of alternatives $p=(v_1,\dots,v_{k+1})$, such that $v_1=x$, $v_{k+1}=y$, and 
	$\min\{\weight{v_{i-1},v_{i}} \colon 1\leq i\leq k\} > \weight{y,v_k}$.
\end{definition}

\begin{figure}[H]
	\centering
	\includegraphics[width=0.6\textwidth]{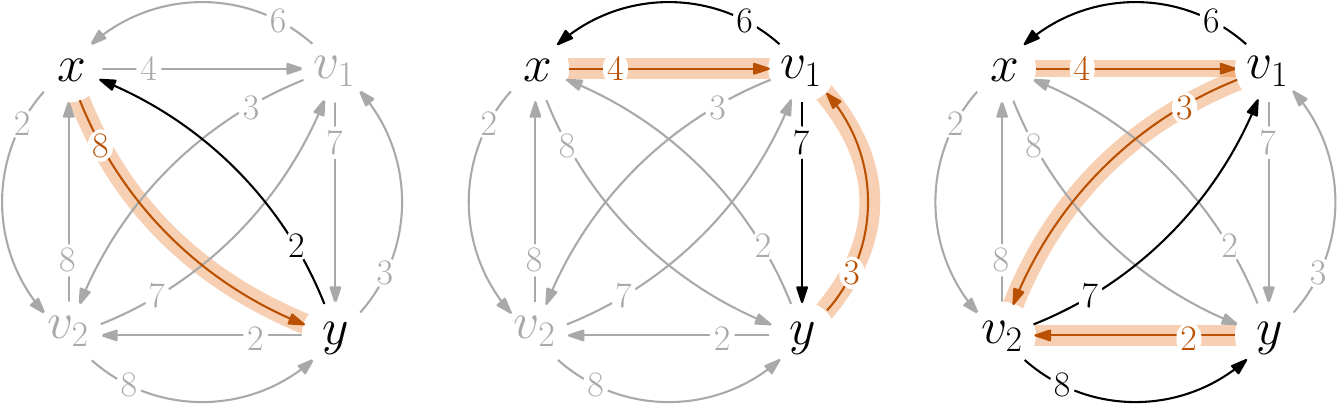}
	\caption{Examples of decreasing paths from $x$ to $y$.}
	\label{fig:wUC_CoveringRelation}
\end{figure}
\noindent Refer to \Cref{fig:wUC_CoveringRelation} for examples of decreasing paths of length one to three. 
Analogous to the unweighted setting, \wUC corresponds precisely to those alternatives which can reach every other alternative with a decreasing path of length $2$.
\begin{restatable}{lemma}{lemdecreasingpaths}\label{lem:wUCasDecreasingPaths}
An alternative $x$ is in $\wUC(T)$ if and only if it can reach every other alternative by a decreasing path of length at most 2.
\end{restatable}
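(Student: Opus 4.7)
The plan is to prove the equivalence by directly unpacking the definition of w-covering and aligning its two defining inequalities with the conditions for length-1 and length-2 decreasing paths from \Cref{def:decreasingPaths}.

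For the forward direction, I would assume $x \in \wUC(T)$ and fix an arbitrary $y \in V \setminus \{x\}$, then split on $\margin{x,y}$. If $\margin{x,y} \geq 0$, then $(x,y)$ is already a length-1 decreasing path from $x$ to $y$. Otherwise $\margin{y,x} > 0$, so the first requirement of $y$ w-covering $x$ is met; since $y$ does not actually w-cover $x$, the second requirement must fail, which yields some $z \in V \setminus \{x,y\}$ with $\weight{y,z} < \weight{x,z}$, equivalently $\weight{x,z} > \weight{y,z}$. This inequality is exactly what is needed to make $(x,z,y)$ a length-2 decreasing path from $x$ to $y$.

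For the backward direction, I would argue by contrapositive: if $x \notin \wUC(T)$, then some $y \neq x$ w-covers $x$, and I would show that no decreasing path from $x$ to $y$ of length at most 2 exists. Because $\margin{y,x} > 0$, we have $\margin{x,y} < 0$, which rules out a length-1 decreasing path. Moreover, the w-covering inequality $\weight{y,z} \geq \weight{x,z}$ holding for every $z \in V \setminus \{x,y\}$ is precisely the negation of the condition that an intermediate vertex $z$ would need to satisfy in order to make $(x,z,y)$ a length-2 decreasing path, so no such intermediate vertex exists either.

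The main obstacle is bookkeeping rather than real insight: one has to read the min condition in \Cref{def:decreasingPaths} carefully for the $k=2$ case so that it matches the second inequality in the w-covering definition, and one has to handle the boundary between the strict dominance $\margin{y,x} > 0$ demanded by w-covering and the non-strict $\margin{x,y} \geq 0$ allowed for length-1 decreasing paths. Once the two inequalities are lined up side by side, both directions reduce to immediate unpacking of the definitions.
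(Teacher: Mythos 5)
Your proof is correct and takes essentially the same route as the paper: both arguments negate the two defining conditions of the w-covering relation and match the resulting disjuncts with the length-1 condition $\margin{x,y}\geq 0$ and the length-2 condition $\weight{x,z}>\weight{y,z}$ for a path $(x,z,y)$. The paper merely compresses your two directions into a single chain of equivalences, so there is no substantive difference.
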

    \begin{proof}
    Let $x\in \wUC(T)$, \ie no other alternative covers $x$.
    Then for all ${y\in V\setminus\{x\}}$ we have $\margin{y,x}\leq0$, or there must be an alternative $z\in V\setminus\{x,y\}$ with $\weight{y,z}<\weight{x,z}$.
    Therefore, $x\in\wUC(T)$ if and only if for all $y\in V\setminus\{x\}$,
    \begin{align*}
        \margin{x,y}&\geq0\text{\hspace{3em} (\defstyle{decreasing path} of length 1),} \\
        \intertext{or there is an alternative $z\in V\setminus\{x,y\}$, such that}
        \weight{x,z}&>\weight{y,z}\text{\hspace{0.3em} (\defstyle{decreasing path} of length 2).}
    \end{align*}
    \end{proof}

In the unweighted setting, computing the \MoV\ of a \UC\ winner, a \someking{3}, or an \someking{(n-1)} can be done in polynomial time as shown by \citet[Chapter 3.1.2.]{brill2022margin}, via an $l$-length bounded $a$-cut in the tournament for winner $a$. 
Unfortunately, simply extending this approach to decreasing paths in the weighted setting is not possible.
This is mainly due to the fact that instead of just two choices for every edge, \ie reversing or keeping it, we have $n+1$, \ie changing the weight on the edge $xy$ to anything from $0$ to $n$.

Luckily, finding a polynomial time algorithm for \wUC\ is quite straightforward.
It builds on the following property inherent exclusively to decreasing paths of length at most two.
\begin{proposition}\label{prop:decreasingPathsTwo}
	Given two alternatives, all decreasing paths of length at most two between them are pairwise distinct aside from their endpoints.
\end{proposition}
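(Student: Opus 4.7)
The plan is to argue this essentially by a short case analysis on the length of a decreasing path between $x$ and $y$. First, I would observe that a decreasing path of length $1$ from $x$ to $y$ is, by \Cref{def:decreasingPaths}, simply the two-term sequence $(x,y)$ and so has no internal vertices at all. Second, a decreasing path of length $2$ from $x$ to $y$ has the form $(x,v,y)$ with a single intermediate vertex $v$. I would note that $v \notin \{x,y\}$: if $v = x$ then the ``first'' edge would be a self-loop, contradicting that $w(x,x)$ is undefined and that \Cref{def:decreasingPaths} requires a positive weight on consecutive alternatives; and if $v = y$ then the strict inequality $\min\{w(x,v),w(v,y)\} > w(y,v)$ would require $w(y,y)$ to be defined and to be strictly smaller than itself, which is impossible. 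Hence the unique intermediate vertex of a length-$2$ decreasing path is a genuine third alternative.

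Next, I would conclude by combining the two cases. A length-$2$ decreasing path is completely determined (as a sequence) by its single intermediate vertex $v$, so two \emph{distinct} length-$2$ decreasing paths $(x,v,y)$ and $(x,v',y)$ must satisfy $v \neq v'$, and therefore their sets of internal vertices $\{v\}$ and $\{v'\}$ are disjoint. The length-$1$ path, if it exists, has an empty set of internal vertices and therefore trivially shares no internal vertex with any length-$2$ path. Putting these together yields the pairwise distinctness aside from endpoints claimed in the proposition.

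The main (non-)obstacle is just pinning down the definitional detail that the intermediate vertex of a length-$2$ decreasing path must genuinely lie in $V\setminus\{x,y\}$; once this is in place, the proposition is immediate from the fact that a length-at-most-$2$ path is characterised by at most one internal vertex. It is worth highlighting in the write-up that this is exactly the place where the proposition fails to extend to length $\geq 3$ -- longer decreasing paths can reuse internal vertices -- which presumably explains why the authors only need the statement for length at most two (I expect they will use it to reduce the $\wUC$ \MoV problem to a flow/cut formulation in which internally disjoint short paths can be treated independently).
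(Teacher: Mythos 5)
Your proof is correct and takes essentially the same route as the paper's: both enumerate the two possible forms of a decreasing path of length at most two (the direct edge $(x,y)$ or $(x,z,y)$ with $z\in V\setminus\{x,y\}$) and conclude that since each length-two path is determined by its single, genuinely distinct intermediate vertex, the paths are pairwise disjoint aside from endpoints; you merely make explicit the detail $z\notin\{x,y\}$, which the paper asserts without proof. (Your closing speculation is slightly off -- the paper uses this proposition not for a flow/cut formulation but for the simple greedy algorithm in \Cref{thm:wUCDestructive}, processing the pairwise-disjoint paths independently -- but this does not affect the correctness of your argument.)
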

\begin{proof}
Let $T=(V,w)$ be an $n$-weighted tournament, $x,y\in V$.
Any decreasing path of length at most two is either the direct edge $(x,y)$, or of the form $(x,z,y)$ with $z\in V\setminus\{x,y\}$ and $\weight{x,z}>\weight{y,z}$.
Obviously, for an arbitrary $z$ any edge $(x,z)$ is only contained in the path via $z$, and the same holds for any edge $(z,y)$.
\end{proof}

\noindent Given a \wUC\ winner $a$, the algorithm iterates over all alternatives $d\in V\setminus\{a\}$ and computes the minimum wDRS such that $d$ covers $a$ after reversal.
This is equivalent to eliminating every decreasing $a$-$d$-path of length at most two. Since those paths pairwise don't intersect, we can process all such paths iteratively, and greedily compute for each the minimum necessary reversals.

\begin{restatable}{theorem}{thmwucdest}\label{thm:wUCDestructive}
Computing the \MoV\ of a \wUC\ winner of an $n$-weighted tournament $T=(V,w)$ can be done in polynomial time.
\end{restatable}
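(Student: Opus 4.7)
My plan is to characterize when $a$ leaves the winning set after reversal via the decreasing-path description of \wUC, use the edge-disjointness property from \Cref{prop:decreasingPathsTwo} to decompose the problem across potential covers, and solve each independent subproblem in closed form.

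First, I would observe that $a\notin\wUC(T^\wrevfunc)$ if and only if some alternative $d\neq a$ w-covers $a$ in $T^\wrevfunc$. By \Cref{lem:wUCasDecreasingPaths}, this is equivalent to demanding that no decreasing $a$-$d$-path of length at most $2$ exists in $T^\wrevfunc$. The algorithm iterates over all candidates $d\in V\setminus\{a\}$ and, for each, computes the minimum total reversal weight needed to make $d$ cover $a$; the \MoV\ is then the minimum of these values, and collecting the corresponding reversals yields a minimum wDRS.

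For a fixed $d$, I would decompose the task as follows. The direct decreasing path $a\to d$ is destroyed iff $\marginAfterRev{d,a}>0$, which involves only the edge $(a,d)$. For each $z\in V\setminus\{a,d\}$, the decreasing path $(a,z,d)$ is destroyed iff $\min\{\weightAfterRev{a,z},\weightAfterRev{z,d}\}\leq\weightAfterRev{d,z}$, and the only edges whose weights enter this condition are $(a,z)$ and $(z,d)$. By \Cref{prop:decreasingPathsTwo} these edge sets are pairwise disjoint across different $z$'s (and distinct from $(a,d)$), so reversals performed to destroy one path neither affect the existence condition of any other path nor can create a new $a$-$d$-path of length at most $2$. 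Hence the minimum total reversal weight is the sum of independent minimum costs, one per path.

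Each per-path subproblem has a closed-form solution. For the direct edge, the cost is $0$ if $\margin{d,a}>0$ already, and otherwise $\lfloor\margin{a,d}/2\rfloor+1$ units must be reversed from $(a,d)$ to $(d,a)$. For each $z$ on which a length-$2$ path currently exists, the minimum cost is the smaller of (i) reversing $\weight{a,z}-\weight{d,z}$ weight on edge $(a,z)$ so that $\weightAfterRev{a,z}\leq\weightAfterRev{d,z}$, or (ii) reversing $\lceil(\weight{z,d}-\weight{d,z})/2\rceil$ weight on edge $(z,d)$ so that $\weightAfterRev{z,d}\leq\weightAfterRev{d,z}$. A brief exchange argument verifies that any mixed reversal using both edges $(a,z)$ and $(z,d)$ can be replaced by a pure strategy of the same or smaller size, so these two candidates suffice. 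The entire computation runs in time $O(m^2)$. I do not anticipate a serious obstacle here; the only care needed is the mixed-strategy argument per $z$ and the parity case analysis for the direct edge.
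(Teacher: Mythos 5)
Your overall strategy is exactly the paper's: iterate over candidate coverers $d$, reduce to destroying all decreasing $a$-$d$-paths of length at most two, invoke \Cref{prop:decreasingPathsTwo} to treat the paths independently, and solve each per-path subproblem in closed form (your direct-edge cost $\lfloor\margin{a,d}/2\rfloor+1$ agrees with the paper's $\lfloor n/2\rfloor+1-\weight{d,a}$, and the $O(m^2)$ bound matches). However, there is a genuine error in your per-$z$ destruction criterion. The solution \wUC\ is defined through the w-covering relation (\Cref{def:wUC}): $d$ covers $a$ in $T^\wrevfunc$ if and only if $\marginAfterRev{d,a}>0$ and $\weightAfterRev{d,z}\geq\weightAfterRev{a,z}$ for every $z\in V\setminus\{a,d\}$. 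So the length-two path via $z$ is destroyed precisely when $\weightAfterRev{a,z}\leq\weightAfterRev{d,z}$; the weight $\weightAfterRev{z,d}$ of the second leg plays no role. This is also how the proof of \Cref{lem:wUCasDecreasingPaths} and the paper's own algorithm read the condition --- the literal $\min$ over both path edges in \Cref{def:decreasingPaths} is not the operative condition for covering, and an equivalence built on it would characterize a different tournament solution. Your criterion $\min\{\weightAfterRev{a,z},\weightAfterRev{z,d}\}\leq\weightAfterRev{d,z}$ therefore admits an unsound option (ii): reversing $\lceil(\weight{z,d}-\weight{d,z})/2\rceil$ weight on $(z,d)$ only raises $\weightAfterRev{d,z}$ to roughly $n/2$, which need not reach $\weight{a,z}$. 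Concretely, take $n=10$, $\weight{a,z}=10$, $\weight{d,z}=4$: your option (ii) costs $1$ and yields $\weightAfterRev{d,z}=5<10=\weightAfterRev{a,z}$, so $d$ still fails to cover $a$ with respect to $z$ and $a$ remains in \wUC; the true minimum cost at this $z$ is $\weight{a,z}-\weight{d,z}=6$. Whenever option (ii) is strictly cheaper, your algorithm underestimates the \MoV\ and outputs a reversal function that is not a wDRS.

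The repair is simply to drop option (ii): the per-$z$ minimum is exactly $\weight{a,z}-\weight{d,z}$ (when positive), achievable by any split between decreasing $\weight{a,z}$ and increasing $\weight{d,z}$ --- the paper sets $\wrev{d,z}=\weight{a,z}-\weight{d,z}$ --- and your exchange argument for mixed reversals then goes through for these two edges. With that correction your proof coincides with the paper's.
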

\begin{proof}
    \begin{figure}[t]
    \centering
    \includegraphics[width=0.7\columnwidth]{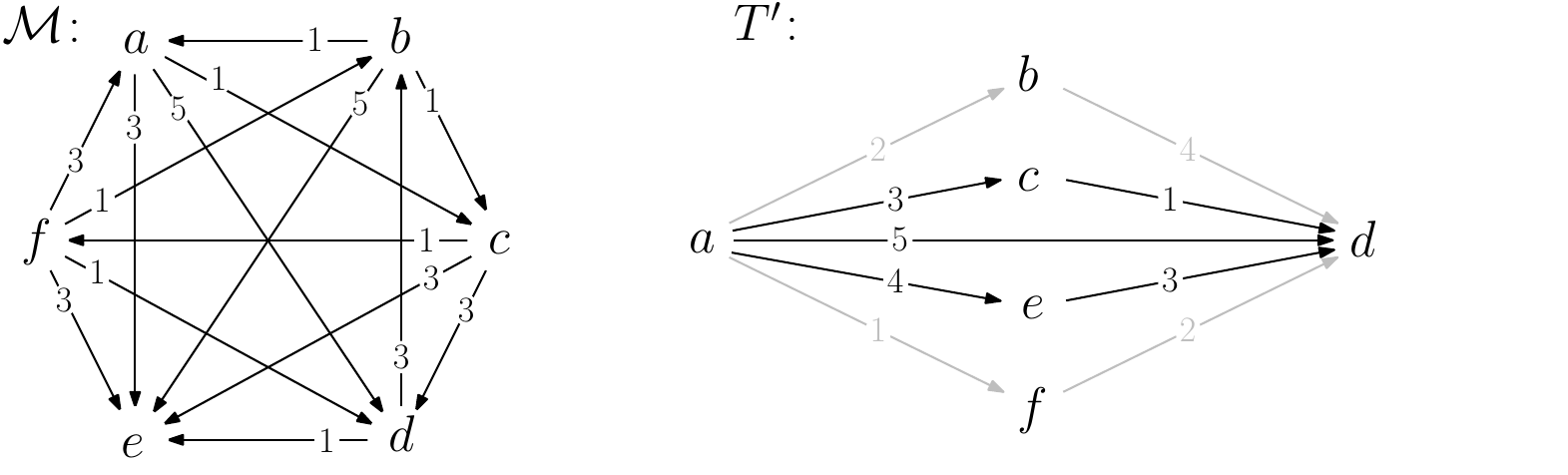}
    \caption{Illustration of the algorithm in \Cref{thm:wUCDestructive} swith an iteration for \wUC\ winner $a$ and alternative $d$ in a 9-weighted tournament $T$. Given are the margin graph on the left as well as an cutout $T'$ of the tournament on the right.
    A possible minimum wCRS is given by $\wrev{d,a}=\lfloor\frac{9}{2}\rfloor +1-2=3, \wrev{d,c}=6-3=3, \wrev{d,e}=6-5=1$.}
    \label{fig:wUC_Destructive_Example}
    \end{figure}
We compute the \MoV for $a\in\wUC(T)$.
\dividersmall{Algorithm}  We iterate over all $d\in V\setminus\{a\}$ and compute the wDRS $R$ for letting $d$ cover $a$ in $T^{R}$. We do so by iterating over all $x\in V\setminus\{a\}$ and checking for decreasing $a$-$d$-paths via $x$.\vspace{0.3em}
\\\textbf{\small Case 1 ($x=d$):} Check for decreasing $a$-$d$-path of length one, \ie $\margin{a,d}\geq0$.
If so, we need to reverse weight such that $\marginAfterRevAlt{a,d}{R}<0$.
To get a minimum size wDRS only $\marginAfterRevAlt{a,d}{R}\in\{-1,-2\}$ is necessary, depending on the parity of $n$.
Set \[\wrev{d,a}=\left\lfloor \frac{n}{2}\right\rfloor+1 - \weight{d,a}.\]
\textbf{\small Case 2 ($x\neq d$):} Check for decreasing $a$-$d$-path of length two via $x$, \ie $\weight{a,x}>\weight{d,x}$.
If so, we need to reverse  weight such that $\weightAfterRevAlt{a,x}{R}\leq\weightAfterRevAlt{d,x}{R}$.
To get a minimum size wDRS, we only need $\weightAfterRevAlt{d,x}{R}=\weightAfterRevAlt{a,x}{R}$.
Set
\begin{align*}
    \wrev{d,x}&=\weight{a,x}-\weight{d,x}.
\end{align*}
Among all possible choices of $d$, we select the one inducing a wDRS $\wrevfunc$ of minimum size.
\dividersmall{Correctness}
The correctness of the algorithm directly follows from \Cref{prop:decreasingPathsTwo}:
An alternative $a$ is a \wUC\ winner if and only if it can reach every other alternative by a decreasing path of length at most two.
Equivalently, if it is not covered by any other alternative.
If we reverse weight such that there is at least one alternative which $a$ cannot reach by a decreasing path of length at most two, then $a\not\in\wUC(T^\wrevfunc)$.
By \Cref{prop:decreasingPathsTwo}, all decreasing paths of length at most two between two fixed alternatives are pairwise distinct. Therefore, we can reverse weight along all such paths without influencing the other paths.
\dividersmall{Polynomial runtime} The algorithm clearly runs in $\mathcal{O}(\lvert V\rvert^2)$.
\end{proof}

For \UC non-winners, \cite[Theorem 3.7]{brill2022margin} showed that the problem of computing the \MoV in the unweighted case is equivalent to the \textsc{Minimum Dominating Set} problem on tournaments. Since tournaments always admit a dominating set of size $\mathcal{O}(\log(n))$, this reduction implies a $n^{\mathcal{O}(\log(n))}$ time algorithm and makes the problem unlikely to be NP-complete. 
For weighted tournaments, though, we can show that \wUC is actually NP-complete by reducing from \textsc{Set Cover}.
   
\begin{restatable}{theorem}{thmwucconst}\label{thm:wUCConstructive}
	Deciding whether there is a wCRS of size $k$ of a \wUC\ non-winner of an $n$-weighted tournament is NP-complete.
\end{restatable}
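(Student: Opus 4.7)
The problem lies in NP since a proposed reversal function $R$ of size at most $k$ can be verified in polynomial time by constructing $T^R$ and checking, via \Cref{lem:wUCasDecreasingPaths}, that $d$ reaches every other alternative by a decreasing path of length at most two.

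For NP-hardness, I plan to reduce from \textsc{Set Cover}. Given an instance with universe $U = \{u_1,\dots,u_m\}$, set family $\mathcal{S}=\{S_1,\dots,S_\ell\}$, and budget $k$, construct an $n$-weighted tournament with $n = 2(k+1)$, a designated non-winner $d$, $K := k+1$ copies $u_i^1,\dots,u_i^K$ of each element $u_i$, and one set-alternative $s_j$ per $S_j$. The weights are set so that (i) every element copy crushes $d$, namely $\weight{d,u_i^t} = 0$; (ii) $d$ ties with each $s_j$, i.e.\ $\weight{d,s_j} = n/2$; (iii) $\weight{u_i^t,s_j} = n/2$ when $u_i \in S_j$ and $\weight{u_i^t,s_j} = n$ otherwise; and (iv) all remaining pairs carry tie weight $n/2$. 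A direct check via \Cref{lem:wUCasDecreasingPaths} shows $d$ fails to reach any copy $u_i^t$ by a decreasing path of length at most two, so $d \notin \wUC(T)$.

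The forward direction is immediate: given a set cover $\mathcal{J} \subseteq \{1,\dots,\ell\}$ of size at most $k$, reversing exactly one unit of weight on each edge $(s_j, d)$ in the direction increasing $\weight{d, s_j}$ for $j \in \mathcal{J}$ raises $\weightAfterRev{d, s_j}$ to $n/2+1$, creating the decreasing length-two path $d \to s_j \to u_i^t$ for every copy of every $u_i \in S_j$. Since $d$ also retains its direct length-one decreasing path to every $s_j$, it reaches all alternatives, and $d \in \wUC(T^R)$.

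The main work lies in the reverse direction. Given a wCRS $R$ of cost at most $k$, I argue by cases that the only affordable way to create a decreasing $d$-to-$u_i^t$ path of length at most two in $T^R$ is via some $s_j$ with $u_i \in S_j$: a length-one path requires raising $\weight{d,u_i^t}$ from $0$ to at least $n/2 = k+1$; a length-two path through any other alternative or through an $s_j$ with $u_i \notin S_j$ requires closing an initial weight gap of at least $n/2 = k+1$; both exceed the budget. Define $\mathcal{J} := \{j : R(d, s_j) \geq 1\}$, so that $|\mathcal{J}| \leq \sum_j R(d, s_j) \leq k$. The main obstacle is to verify that $\mathcal{J}$ actually covers $U$, and this is the purpose of the $K = k+1$ duplicated element copies. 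Suppose for contradiction some $u_i$ is uncovered by $\mathcal{J}$, so $R(d, s_j) = 0$ for every $j$ with $u_i \in S_j$. Then for each of the $K$ copies $u_i^t$, the only remaining way to open a decreasing length-two path through any relevant $s_j$ is to spend at least one unit of reversal lowering $\weight{u_i^t, s_j}$. These contributions are pairwise disjoint across $t$, summing to at least $K = k+1 > k$ and contradicting the budget. Hence $\mathcal{J}$ is a set cover of size at most $k$, completing the reduction.
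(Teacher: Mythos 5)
Your proof is correct, and while it shares the paper's reduction source (\textsc{Set Cover}) and its reliance on the decreasing-path characterization of \wUC\ (\Cref{lem:wUCasDecreasingPaths}), the gadget and the correctness argument are genuinely different. The paper builds a $(2r)$-weighted tournament whose weight scale is tied to the universe size $r$: cheap (cost-$1$) reversals $\wrev{x,a_b}=1$ open length-two paths through correct sets, while every other route to an element costs $r$, and the correspondence is then argued via a threshold (``minimum wCRS of size $\le r$ iff a cover exists'') together with minimality of the reversal set; alternative reversals such as lowering $\weight{b,a}$ are dispatched with an ``analogously'' remark. You instead tie the weight scale to the \emph{budget}, setting $n=2(k+1)$ so that every unwanted route (direct edge, or a path through a wrong set or another element copy) costs at least $k+1$ and is killed outright by the budget with no appeal to minimality; and where the paper's minimality argument rules out covering elements by per-element reversals of type $\wrev{u_i^t,s_j}<0$, you rule them out combinatorially by duplicating each element $k+1$ times, so that such reversals, being pairwise disjoint across copies, would cost $k+1>k$. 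This buys a self-contained argument that works for arbitrary (not necessarily minimum) reversal sets and handles all reversal types uniformly, at the price of a factor-$(k+1)$ blow-up in the number of alternatives (still polynomial, since one may assume $k\le |\mathcal{S}|$); the paper's construction is smaller and yields an exact size-$k$-to-size-$k$ correspondence between reversal sets and covers, but its extraction step leans on minimality and on the informally handled alternative reversals. You also make NP membership explicit, which the paper leaves implicit.
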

    \begin{proof} We reduce \textsc{Set Cover} to \textsc{Constructive \MoVExt{\wUC}}.
    \dividersmall{Construction} Given an instance $(\mathcal{U},\mathcal{S})$ of \textsc{Set Cover} with a universe $\mathcal{U}$ of size $r$ and a family $\mathcal{S}$ of $s$ sets, we construct a $(2r)$-weighted tournament $T=(V,w)$ with alternatives
    \[V=\{x\}\cup A \cup B,\]
    where $\lvert A\rvert = s$ and $\lvert B\rvert=r$. This means, every alternative in $A$ corresponds to one of the $s$ sets of $\mathcal{S}$, while every alternative in $B$ corresponds to one of the $r$ elements of $\mathcal{U}$. Let $a\in A, b\in B$ and the weight function $w$ of $T$ defined by
    \begin{figure}[t]
        \centering
        \includegraphics[width=0.8\textwidth]{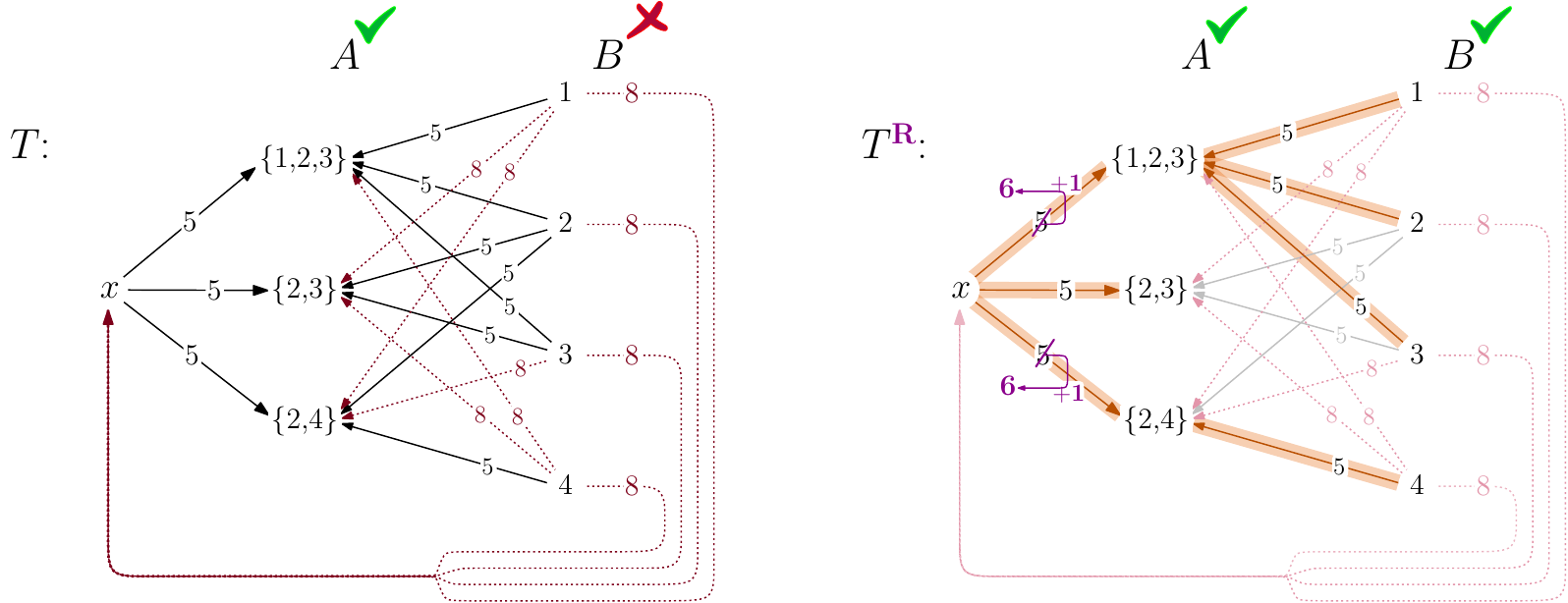}
        \caption{Illustration of the construction used in proof of \Cref{thm:wUCConstructive}.
            On the left the $2r$-weighted tournament is given with all weights \weight{a,b} for which $\weight{a,b}>\weight{b,a}$. All dotted edges have $\weight{a,b}=2r$.
            On the right is the tournament after reversal. The decreasing paths from $x$ to the elements in $b\in B$ are highlighted in a colour for each of the $a_b\in A$ with $b\in a_b$.}
        \label{fig:wUC_Constructive_Construction_Example}
    \end{figure}
    \begin{alignat*}{4}
    &\weight{x,a}&&=r+1,&&   \\
    &\weight{b,a}&&=r+1&&\text{, if }b\in a,\\
    &\weight{b,a}&&=2r\text{,}&&\text{\hphantom{,} if }b\not\in a,\\
    &\weight{b,x}&&=2r,&&
    \end{alignat*}
    while the weight between any two alternatives in $A$, resp. $B$, can be arbitrary.
    To ensure readability of illustrations, we set the weight to $r$, \ie no such edge appears in the margin graph.
    The alternative for which we compute the wCRS is $x$.
    Refer to \Cref{fig:wUC_Constructive_Construction_Example} for an illustration of the construction on a universe of size 4 and a family of 3 subsets.
    We make a few observations regarding the construction, before proving its correctness.
    \begin{observation*} $x\not\in\wUC(T)$.
    \end{observation*}
    \begin{proof}
        Every $b\in B$ covers $x$. This follows immediately from $\margin{b,x}=2r>0$ and $\weight{b,a}\geq r+1=\weight{x,a}$ for every $a\in A$.
    \end{proof}
    \noindent Note that no $a\in A$ covers $x$, since $x$ has a decreasing path of length one to every $a$.
    This follows immediately from $\margin{x,a}=\weight{x,a}-\weight{a,x}=(r+1)-(r-1)=2>0$.
    Therefore, to achieve $x\in\wUC(T)$, we only need to ensure that $x$ reaches all alternatives in $B$ with a decreasing path of length at most two after reversal, while keeping the direct edges to the alternatives in $A$.
    For every $b\in B$ this can be achieved by either \begin{enumerate}
        \item reversing weight from $b$ to $x$ until $\margin{x,b}\geq0$ (direct edge), \label{enum1}
        \item reversing weight from $a_b$ to $x$ until $\weight{x,a_b}>\weight{b,a_b}$ for some $a_b\in A$ with $b\in a_b$ (decreasing path via $a_b$), or \label{enum3}
        \item reversing weight from $a_{\overline{b}}$ to $x$ until $\weight{x,a_{\overline{b}}}>\weight{b,a_{\overline{b}}}$ for some $a_{\overline{b}}\in A$ with $b\notin a_{\overline{b}}$ (decreasing path via $a_{\overline{b}}$). \label{enum2}
    \end{enumerate}
    This results in the reversal functions \begin{enumerate}
        \item $\wrev{x,b}=r$,
        \item $\wrev{x,a_b}=1$ for some $a_b\in A$ with $b\in a_b$, or
        \item $\wrev{x,a_{\overline{b}}}=r$ for some $a_{\overline{b}}\in A$ with $b\notin a_{\overline{b}}$.
    \end{enumerate}
    We claim, that a minimum wCRS for $x$ is either (i) of size at most $r$ if and only if there is a set cover for $\mathcal{U}$ and $\mathcal{S}$,
    or (ii) of size strictly greater than $r$ if and only if there is no set cover.
    Case (i) is achieved by choosing $\wrev{x,a_b}=1$ for some $a_b\in A$ with $b\in a_b$ (\cref{enum3}) for all $b\in B$.
    Case (ii) is achieved by choosing $\wrev{x,b}=r$ (\cref{enum1}) or $\wrev{x,a_{\overline{b}}}=r$ for some $a_{\overline{b}}\in A$ with $b\notin a_{\overline{b}}$ (\cref{enum2}) for at least one $b\in B$.
    
    Note that another option for the decreasing path via any $a\in A$ would be to set $\wrev{b,a}=-1$, instead of $\wrev{x,a}=1$.
    In that case the claim and proof follow analogously.
    \dividersmall{Correctness}
    From a minimum wCRS \wrevfunc\ of size $k\leq r$ for \wUC\ non-winner $x$ in the tournament $T$ of the construction, one can construct a set cover for the \textsc{Set Cover} instance $(\mathcal{U},\mathcal{S})$, with $\lvert\mathcal{U}\rvert=r$, $\lvert\mathcal{S}\rvert=s$, of the same size $k\leq r$, and vice versa. 
    Equivalently, $k>r$ if and only if there is no set  cover.
    
    Without loss of generality, we assume $r=\mathcal{U}$ to be of size at least 2, as it makes the construction more approachable.
    For $\mathcal{U}=\{1\}$ a set cover is given by any $S\in\mathcal{S}$ with $1\in S$ and can be found in $\mathcal{O}(\lvert\mathcal{S}\rvert)$ time.
    First, we show $\lvert\wrevfunc\rvert>r$ if and only if there is no set cover of $(\mathcal{U},\mathcal{S})$.
    And afterwards that $\lvert\wrevfunc\rvert\leq r$ if and only if there is a set cover of $(\mathcal{U},\mathcal{S})$ of the same size.

    \dividerNoLineEndsmall{wCRS of size greater $r$ to no set cover}
    Let \wrevfunc\ be the reversal function corresponding to a wCRS for \wUC\ non-winner $x$ of size greater than $r$.
    This can only be the case if there is at least one alternative $b_0\in B$ for which $\wrev{x,b_0}=r+1$ (\cref{enum1})
    or $\wrev{x,a_{\overline{b_0}}}=r$ for some $a_{\overline{b_0}}\in A$ with $b_0\notin a_{\overline{b_0}}$ (\cref{enum2}).
    Otherwise, the size of \wrevfunc\ would be at most $\lvert B\rvert\cdot 1=r$.
    Thus, the decreasing path from $x$ to $b_0$ after reversal is not a decreasing path via some $a_{b_0}\in A$ with $b_0\in a_{b_0}$, but either a direct edge or a decreasing path via some $a_{\overline{b_0}}\in A$ with $b_0\notin a_{\overline{b_0}}$.
    Since \wrevfunc\ is of minimum size, we can conclude that if there was an $a\in A$ with $b_0\in a$, the minimum wCRS would have reversed the $1$ weight from $a$ to $x$ instead of reversing $r$ weight.
    This implies, that in the \textsc{Set Cover} instance there is an element $b_0\in\mathcal{U}$ such that for all $a\in\mathcal{S}$ we have $b_0\not\in a$. Therefore, there is no set cover.
    
    \dividerNoLineEndsmall{No set cover to wCRS of size greater $r$}
    Equivalently, if there is no set cover of $(\mathcal{U},\mathcal{S})$, there is at least one element $b_0\in\mathcal{U}$ such that for all $a\in\mathcal{S}$ we have $b_0\not\in a$. Consequently, 
    for $b_0$ \cref{enum2} is not possible and choosing either \cref{enum1} or \cref{enum3} results in a reversal of $r$ weight.
    As $\lvert B\rvert>2$, there is at least one more alternative in $B$ for which a reversal is necessary, \ie $\lvert \wrevfunc\rvert>r$.\vspace{0.5em}
    
    \dividerNoLineEndsmall{wCRS of size less than $r$ to set cover}
    Let \wrevfunc\ be the reversal function corresponding to a wCRS for \wUC\ non-winner $x$ of size at most $r$. We construct the corresponding set cover \[S=\{a\in A \colon \wrev{x,a}>0\}.\]
    Again, $\lvert\wrevfunc\rvert\leq r$ can only be the case, if for all alternatives $b\in B$,  $\wrev{x,a_b}=1$ (\cref{enum3}).
    Thus, the decreasing path from $x$ to $b$ after reversal is via some $a_b\in A$ with $b\in a_b$. I.\,e., in the \textsc{Set Cover} instance for all elements $b\in\mathcal{U}$ there is an $a_b\in\mathcal{S}$ with $b\in a_b$.
    Therefore, $S=\{a\in A \colon \wrev{x,a}>0\}$ is a set cover with $\lvert S\rvert = \lvert\wrevfunc\rvert$.
    
    \dividerNoLineEndsmall{Set cover to wCRS of size less than $r$} 
    Now, let $S$ be a set cover for $(\mathcal{U},\mathcal{S})$.
    We construct the corresponding reversal function \wrevfunc\ for \wUC\ non-winner $x$ in the constructed tournament $T$ by
    \[\wrev{x,a}=1\text{ for all }a\in S\text{ and } 0 \text{ otherwise.}\]
    Since $S$ is a set cover of $(\mathcal{U},\mathcal{S})$, for all elements $b\in\mathcal{U}$ there is an $a_b\in\mathcal{S}$ with $b\in a_b$ and $\wrev{x,a_b}=1$.
    Thus, after reversal for all $b\in B$ there is a decreasing path from $x$ to $b$ via the $a_b\in A$ with $\wrev{x,a_b}=1$. 
    Therefore, \wrevfunc\ is a wCRS for \wUC\ non-winner $x$ of size $\lvert\wrevfunc\rvert = \lvert S \rvert<r$.
    \end{proof}

\section{Structural Results} \label{ch:StructuralResults}
Now that we have analyzed the computational complexity of the \MoV\ for the three tournament solutions, we turn to generalize some structural properties of the \MoV from the unweighted to the weighted setting.

\subsection{Monotonicity}\label{sec:SR_Monotonicity}
We start by considering the classical structural property of monotonicity. A tournament solution is monotonic if a winner of the tournament stays a winner after being reinforced, i.e., after increasing the margin of a winning alternative over any other alternative, the winning alternative does not drop out of the winning set.
We generalize this notion for unweighted tournaments.
\begin{definition}\label{def:Monotonicity}
    A tournament solution $S$ for an $n$-weighted tournament $T=(V,w)$ is said to be \defstyle{monotonic}, if for any $a,b\in V$ with $\weight{a,b}<n$, 
    \[a\in S(T) \quad \text{implies} \quad a\in S(T^{\wrevfunc}),\]
    where the reversal function \wrevfunc\ is defined as $\wrev{a,b}= 1$ and 0 otherwise.
\end{definition}
It is straightforward to show that all three considered tournament solutions are monotonic.
\begin{restatable}{proposition}{propmonsat}\label{prop:resultMonotonicity}
	\BO, \SC\ and \wUC\ satisfy monotonicity.
\end{restatable}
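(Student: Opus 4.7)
The plan is to handle each of the three tournament solutions separately, exploiting that the reversal $R$ with $\wrev{a,b}=1$ (hence $\wrev{b,a}=-1$) changes only two weights: $\weight{a,b}$ increases by $1$, $\weight{b,a}$ decreases by $1$, and therefore $\margin{a,b}$ goes up by $2$ while every other margin and weight is unchanged. For \BO the argument is immediate: $\sBO{a}$ increases by $1$, $\sBO{b}$ decreases by $1$, and every other Borda score is unchanged, so $a$ still attains the maximum Borda score in $T^R$.

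For \wUC, I will show that no $c\in V\setminus\{a\}$ w-covers $a$ in $T^R$, via case analysis on whether $c=b$. If $c=b$, then $\marginAfterRev{b,a}=\margin{b,a}-2$; this is either already non-positive (so $b$ fails to dominate $a$), or $\margin{b,a}>0$ in $T$, in which case $a\in\wUC(T)$ supplies a witness $z\in V\setminus\{a,b\}$ with $\weight{b,z}<\weight{a,z}$, and these two weights are untouched by $R$ so $z$ still witnesses. If $c\neq b$, then $\marginAfterRev{c,a}=\margin{c,a}$ is unchanged, so either $c$ already does not dominate $a$, or the witness $z$ that $c$ did not w-cover $a$ in $T$ transfers to $T^R$: for $z\neq b$ the relevant weights are unchanged, and for $z=b$ the strict inequality $\weight{c,b}<\weight{a,b}$ only widens under $R$ because $\weight{c,b}$ is unchanged while $\weight{a,b}$ grows.

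For \SC, the plan is to use the strongest-path characterization of \Cref{def:SplitCycleSP} (equivalently, \Cref{lem:strongestpath}): I need to verify that for every $y\in V\setminus\{a\}$ either $\marginAfterRev{y,a}\leq 0$, or $\marginAfterRev{y,a}$ is at most the strength of the strongest $a$-$y$-path in $T^R$. For $y\neq b$, the incoming margin $\marginAfterRev{y,a}=\margin{y,a}$ is unchanged, and a strongest $a$-$y$-path in $T$ can be taken to be simple (shortcutting a cycle never decreases the path strength); such a simple path cannot use the edge $(b,a)$, since that would require revisiting $a$, so the path is still present in $T^R$ with at least the same strength and the defining inequality $\margin{y,a}\leq\stpathmatrix[a,y]$ from $a\in\SC(T)$ transfers. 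For $y=b$ we have $\marginAfterRev{b,a}=\margin{b,a}-2$, which can only decrease; the same simple-path argument, together with the fact that the only modified edge that a simple $a$-$b$-path could contain is $(a,b)$ (whose margin only increases), shows that the strength of the strongest $a$-$b$-path does not drop, and the claim again follows from $a\in\SC(T)$. The main obstacle lies in the \SC case, namely ruling out that the strongest path in $T$ silently relies on the edge $(b,a)$ whose margin has dropped by $2$; this is dispatched by restricting attention to simple paths, after which the weakened edge is structurally excluded.
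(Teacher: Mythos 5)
Your proposal is correct, and for \BO\ it coincides with the paper's proof verbatim (score of $a$ up by one, score of $b$ down by one, all others fixed). For \wUC\ the content is also the same, just phrased differently: the paper argues via the decreasing-path characterization of \Cref{lem:wUCasDecreasingPaths} (any decreasing path through $(a,b)$ survives, all other paths are untouched), while you work directly with the w-covering relation of \Cref{def:wUC} and check witnesses case by case; these are equivalent, and your version makes the witness-transfer explicit where the paper leaves it implicit. The genuine divergence is in the \SC\ case. The paper stays in the deletion picture of \Cref{def:SplitCycle}: since $a\in\SC(T)$, every incoming edge of $a$ in \margingraph\ was deleted as a splitting edge, and the only affected edge $(b,a)$ has its margin lowered while the remaining edges of any witnessing cycle are unchanged, so it remains a splitting edge (or leaves the margin graph entirely) and $a$ stays undominated in $\deletemargingraph$. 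You instead verify the strongest-path criterion of \Cref{def:SplitCycleSP} (via \Cref{lem:strongestpath}), showing $\marginAfterRev{y,a}\leq\stpathmatrix[a,y]$ survives the reversal, with the simple-path normalization ruling out any reliance on the weakened edge $(b,a)$. The trade-off: the paper's argument is shorter but terse --- it does not spell out why the witnessing cycle still exists or why $(b,a)$ is still minimal on it --- whereas your route makes every step checkable (in particular the observation that a simple path starting at $a$ can never traverse $(b,a)$, and that shortcutting cycles never decreases strength) at the cost of invoking the strongest-path machinery. Both are sound; yours is arguably the more robust write-up of the \SC\ case.
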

    \begin{proof}
    Let $T=(V,w)$ be an $n$-weighted tournament,  and $S\in\{\BO,\SC,\wUC\}$, $a,b\in V$, with $a\in S(T)$, and the reversal function $\wrevfuncMono$ defined by $\wrevMono{a,b}= 1$ and 0 otherwise.
    \dividersmall{Borda}
    Since $a\in\BO(T)$, $a$ has the highest Borda score $\sBO{a,T}\geq \sBO{z,T}$, for all $z\in V\setminus\{a\}$.
    The reversal of $\wrevMono{a,b}>0$ weight implies $\weightAfterRevAlt{a,b}{\wrevfuncMono}>\weight{a,b}$ and thus $\sBO{a,T^{\wrevfuncMono}}>\sBO{a,T},$ while 
    $\sBO{b,T^{\wrevfuncMono}}<\sBO{b,T}$ and $\sBO{z,T^{\wrevfuncMono}}=\sBO{z,T}$,
    for all other alternatives $z\in V\setminus\{a,b\}$.
    Thus, $a$ has still the highest Borda score, and $a\in\BO(T^\wrevfuncMono)$.
    \dividersmall{Split Cycle}
    Since $a\in\SC(T)$, $a$ is not dominated in $\margingraph_D$.
    The reversal of $\wrevMono{a,b}>0$ weight implies $\weightAfterRevAlt{a,b}{\wrevfuncMono}>\weight{a,b}$ and thus
    ${\marginAfterRevAlt{b,a}{\wrevfuncMono}<\margin{b,a}}$.
    If $(b,a)\not\in E(\margingraph)$, the reversal does not influence the incoming edges of $a$ in \margingraph\ as every edge that was deleted before is still deleted and $a\in \SC(T^\wrevfuncMono)$.
    If $(b,a)\in E(\margingraph)$, this incoming edge has a lower margin in $\margingraph^\wrevfuncMono$ and is still deleted.
    Thus, $a$ is still not dominated in $\margingraph_D^\wrevfuncMono$, and $a\in\SC(T^\wrevfuncMono)$.
    \dividersmall{weighted Uncovered Set}
    Since $a\in\wUC(T)$, $a$ is not covered by any alternative, \ie it has a decreasing path of length at most two to every alternative.
    The reversal of $\wrevMono{a,b}>0$ weight implies $\weightAfterRevAlt{a,b}{\wrevfuncMono}>\weight{a,b}$, and thus $\marginAfterRevAlt{a,b}{\wrevfuncMono}>\margin{a,b}$.
    Any decreasing path containing the edge $(a,b)$ is still a decreasing path and all other paths are not affected by the reversal. Thus, $a$ still has a decreasing path of length at most two to every alternative in $T^\wrevfuncMono$, and $a\in\wUC(T^\wrevfuncMono)$.
    \end{proof} 

A similar property can be defined for the \MoV of a tournament solution. For this we say that the \MoV of a tournament solution is monotonic if the \MoV of an alternative does not decrease after this alternative is reinforced.
In the following two definitions we use \wrevfuncMono\ for reversal functions instead of \wrevfunc\ to ensure readability of the proofs.
\begin{definition}
    Given a tournament solution $S$, we say $\MoV_S$ is \defstyle{monotonic} if, for any $n$-weighted tournament $T=(V,\wfunc)$ and any alternatives $a,b\in V$ with $\weight{a,b}<n$,
    \[\MoVfuncExt{S}{a,T^\wrevfuncMono}\geq\MoVfuncExt{S}{a,T},\]
    where the reversal function \wrevfuncMono\ is defined as $\wrevMono{a,b}= 1$ and 0 otherwise.
\end{definition}
Interestingly, the \MoVExt{S} of any monotonic weighted tournament solution $S$ behaves monotonically itself.
This implies monotonicity of the \MoV\ for \BO, \SC\ and \wUC.

    \begin{restatable}{theorem}{thmmonmon}\label{prop:resultMonotonicityMoV}
    Let $S$ be a weighted tournament solution. If $S$ is monotonic, 
    \MoVExt{S} is monotonic as well.
    \end{restatable}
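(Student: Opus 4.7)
The plan is to proceed by case analysis on whether $a \in S(T)$, using monotonicity of $S$ to pin down whether $a$ is also a winner in $T^{\wrevfuncMono}$, and then to translate reversal sets back and forth between $T$ and $T^{\wrevfuncMono}$. The central observation I would rely on is that $T^R = (T^{\wrevfuncMono})^{R'}$ whenever $R(a,b) = R'(a,b) + 1$ and $R$ agrees with $R'$ on every other pair; so up to redistributing a single unit of weight on the edge $(a,b)$, reversal sets for $T$ and for $T^{\wrevfuncMono}$ are in bijection. In each case my goal is to start from an optimum reversal set on one side and construct one of no larger size on the other.

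First I would handle the case $a \in S(T)$. Monotonicity of $S$ gives $a \in S(T^{\wrevfuncMono})$, so both $\MoV$ values are wDRS sizes. Given a minimum wDRS $R'$ for $a$ in $T^{\wrevfuncMono}$, I would split on the sign of $R'(a,b)$. If $R'(a,b) \leq -1$, I would set $R(a,b) = R'(a,b) + 1$ and leave $R$ equal to $R'$ elsewhere; the identity above yields $T^R = (T^{\wrevfuncMono})^{R'}$, so $R$ is a wDRS for $a$ in $T$, and the contribution of the pair $\{a,b\}$ to the size drops by exactly one. If $R'(a,b) \geq 0$, I would simply take $R = R'$; since $\weight{a,b} + R'(a,b) \leq n-1$, the function $\wrevfuncMono$ is a valid reinforcement of $T^{R'}$, and the contrapositive of monotonicity turns $a \notin S((T^{R'})^{\wrevfuncMono}) = S((T^{\wrevfuncMono})^{R'})$ into $a \notin S(T^{R'})$. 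In both subcases $|R| \leq |R'|$, which gives $\MoVfuncExt{S}{a,T} \leq \MoVfuncExt{S}{a,T^{\wrevfuncMono}}$.

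If $a \notin S(T)$, the subcase $a \in S(T^{\wrevfuncMono})$ is immediate because $\MoVfuncExt{S}{a,T^{\wrevfuncMono}}$ is then strictly positive while $\MoVfuncExt{S}{a,T}$ is strictly negative. In the remaining subcase both $\MoV$ values are wCRS sizes, and I would run the symmetric construction starting from a minimum wCRS $R$ for $a$ in $T$: if $R(a,b) \geq 1$ I would set $R'(a,b) = R(a,b) - 1$ and copy $R$ elsewhere, so that $(T^{\wrevfuncMono})^{R'} = T^R$ and the size drops by one; if $R(a,b) \leq 0$ I would take $R' = R$ and apply monotonicity in the forward direction to pass from $a \in S(T^R)$ to $a \in S((T^R)^{\wrevfuncMono}) = S((T^{\wrevfuncMono})^R)$, relying on $\weight{a,b} + R(a,b) < n$.

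The main obstacle I expect is the delicate bookkeeping around feasibility of the constructed reversal functions: for each of the four nontrivial subcases I have to verify that the perturbed edge weights remain in $\{0,\ldots,n\}$ and that every appeal to monotonicity is applied to a pair whose current weight on $(a,b)$ is strictly below $n$. Both conditions follow from the standing hypothesis $\weight{a,b} < n$ together with the case-specific bound on the sign of $R(a,b)$ or $R'(a,b)$, but the subcases must be verified individually.
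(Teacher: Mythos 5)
Your proof is correct and takes essentially the same route as the paper's: the same case split on whether $a \in S(T)$, the same transfer of a minimum reversal set across the identity $T^{R} = (T^{Q})^{R'}$ with a one-unit shift on the edge $(a,b)$, and the same appeal to monotonicity (in contrapositive form for the destructive case) to handle the boundary subcase, including the trivial subcase $a \notin S(T)$, $a \in S(T^{Q})$. The only deviation is cosmetic and slightly streamlining: where the paper first argues via minimality that a minimum reversal set never strengthens (resp.\ weakens) $a$, so that only the subcases $R'(a,b)<0$ and $R'(a,b)=0$ remain, you handle all of $R'(a,b)\geq 0$ uniformly with the contrapositive argument and thereby avoid that sign observation altogether.
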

    \begin{proof}
    Let $T=(V,w)$ be an $n$-weighted tournament, $a,b\in V$ with $\weight{a,b}<n$, and let $\wrevfuncMono$ be defined by $\wrevMono{a,b}=1$, and $0$ otherwise. Furthermore, let $S$ be a monotonic weighted tournament solution.
    \dividersmall{Case 1 ($a\in S(T)$):}
    By monotonicity of $S$, it holds $a\in S(\Tonor)$. We have to show $\MoVfuncExt{S}{a,\Tonor}\geq\MoVfuncExt{S}{a,T}$.
    Given a minimum wDRS $\wrevfuncTonor$ for $a$ in~$\Tonor$, \ie $a\not\in S((\Tonor)^{\wrevfuncTonor})$, 
    we construct a wDRS $\wrevfunc$ for $a$ in $T$ of size $ \lvert\wrevfunc\rvert\leq \lvert\wrevfuncTonor\rvert$.
    Note that for any $z\in V\setminus\{a\}$,
    \[\wrevTonor{a,z}\leq0.\]
    Otherwise, by monotonicity, not reversing $\wrevTonor{a,z}$ would yield a wDRS of $a$ in $\Tonor$ of smaller size, which contradicts the minimality of \wrevfuncTonor.
    We define \begin{align*}
        \wrevfunc(a,b) &= \max\{\wrevTonor{a,b}, \wrevTonor{a,b}+\wrevMono{a,b}\} \quad (\leq 0)\\
        \wrevfunc(x,y) &= \wrevTonor{x,y}.
    \end{align*}
    Then,
    \begin{align*}
    \lvert \wrevfunc \rvert 
    &\quad\eqcom{Def $\lvert\wrevfunc\rvert$}{=} \quad 
        \sum\limits_{\substack{x,y\in V}} \lvert\wrev{x,y}\rvert
    \hspace{1.2em}\eqcom{Def $\wrevfunc$}{=} \hspace{1.2em}
        \lvert \wrevfunc(a,b)\rvert + \sum\limits_{\substack{x,y\in V\\(x,y)\neq(a,b)}} \lvert\wrevTonor{x,y}\rvert\\
    &\quad\eqcom{$\wrevfunc(a,b)\leq \wrevTonor{a,b}$}{\leq} \qquad
        \lvert \wrevTonor{a,b}\rvert + \sum\limits_{\substack{x,y\in V\\(x,y)\neq(a,b)}} \lvert\wrevTonor{x,y}\rvert\\
    &\quad\eqcom{Def $\lvert\wrevfuncTonor\rvert$}{=} \quad 
        \lvert \wrevfuncTonor \rvert.
    \end{align*}
    Additionally, if $\wrevTonor{a,b}<0$
    \begin{align*}
        \weightAfterRev{x,y}
        \hspace{0.5em}\eqcom{Def $w^R$}{=}\hspace{0.5em}
        \weight{x,y}+\wrev{x,y}
        \eqcom{Def $R$}{=}\hspace{0.5em}
        \weight{x,y}+\wrevTonor{x,y} + \wrevMono{x,y}
        \hspace{1.2em}\eqcom{Def $(w^{\wrevfuncMono})^{\wrevfuncTonor}$}{=}\hspace{1em}
        (w^{\wrevfuncMono})^{\wrevfuncTonor}(x,y),
    \end{align*}
    for all $x,y\in V$, which implies $T^\wrevfunc = (\Tonor)^{\wrevfuncTonor}$.
    Therefore, $a\not\in  S(T^\wrevfunc)$, and $\wrevfunc$ is a wDRS for $a$ in $T$.
    If, however $\wrevTonor{a,b}=0$, then we get from $(\Tonor)^{\wrevfuncTonor}$ to $T^\wrevfunc$ by only weakening $a$ against $b$ once, and since $a\not\in S((\Tonor)^{\wrevfuncTonor})$ and $S$ is monotonic, we also get $a\not\in S(T^\wrevfunc)$.
    \dividersmall{Case 2 ($a\not\in S(T)$):}
    If $a\in S(\Tonor)$, then $\MoVfuncExt{S}{a,\Tonor}\geq\MoVfuncExt{S}{a,T}$ holds trivially.
    Thus, assume $a\not\in S(\Tonor)$. We can argue analogous to Case 1.
    We have to show $\MoVfuncExt{S}{a,\Tonor} \geq \MoVfuncExt{S}{a,T}$.
    As the \MoV\ of a non-winner is always negative, this is equivalent to $\lvert\MoVfuncExt{S}{a,\Tonor}\rvert \leq \lvert\MoVfuncExt{S}{a,T}\rvert$.
    Given a minimum wCRS $\wrevfunc$ for $a$ in $T$, \ie $a\in S(T)$, we construct a wCRS $\wrevfuncTonor$ for $a$ in $\Tonor$ of size $\lvert\wrevfuncTonor\rvert \leq \lvert\wrevfunc\rvert$.
    Note that for any $z\in V\setminus\{a\}$,
    $\wrev{a,z}\geq0$.
    Otherwise, by monotonicity, not reversing $\wrev{a,z}$ would yield a wCRS of $a$ in $T$ of smaller size, which contradicts the minimality of \wrevfunc.
    Thus, if we define
    \begin{align*}
        \wrevfunc(a,b) &= \min\{\wrevTonor{a,b}, \wrevTonor{a,b}-\wrevMono{a,b}\}\\
        \wrevfunc(x,y) &= \wrevTonor{x,y},\end{align*}
    we can use arguments analogue to Case 1.
    \end{proof}

As final monotonicity notion, we consider transfer-monotonicity as done by \citet{brill2022margin} An unweighted tournament solution is transfer-monotonic if and only if a winning alternative $a$ remains in the winning set, when an alternative $c$ is "transferred" from the dominion of another alternative $b$ to the dominion of $a$.
To generalize this to weighted tournaments, we do not consider the transfer of \textit{an alternative $c$} from one dominion to another, but the transfer of \textit{weight over an alternative $c$} from one alternative to another. 
\begin{definition}\label{def:weightedTransferMonotonicity}
	A tournament solution $S$ for an $n$-weighted tournament $T=(V,w)$ is said to be \defstyle{transfer-monotonic}, if for any $a,b,c\in V$ with $\weight{b,c}>0$ and $\weight{a,c}<n$,
	\[a\in S(T) \quad \text{implies} \quad a\in S(T^{\wrevfuncMono}),\]
	where the reversal function \wrevfuncMono\ is defined as $\wrevMono{b,c}=-1$, $\wrevMono{a,c}=+1$ and 0 otherwise.
\end{definition}

All unweighted tournament solutions studied by \citeauthor{brill2022margin} are transfer-monotonic. For us this is not the case as \BO\ and \wUC\ are both transfer-monotonic, while \SC is not.
\begin{proposition}\label{prop:resultBOwUCtransfermono}
	\BO\ and \wUC\ satisfy transfer-monotonicity.
\end{proposition}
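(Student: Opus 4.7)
The plan is to handle $\BO$ and $\wUC$ separately. For $\BO$ a direct score calculation suffices; for $\wUC$ I would case-analyze which alternative could become a new w-coverer of $a$ after the transfer and show that in each case such a coverer would already have covered $a$ in $T$.

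For $\BO$, the reversal function $\wrevfuncMono$ alters only the four weights $w(a,c),w(c,a),w(b,c),w(c,b)$. Hence $\sBO{a,T^{\wrevfuncMono}}=\sBO{a,T}+1$, $\sBO{b,T^{\wrevfuncMono}}=\sBO{b,T}-1$, and $\sBO{c,T^{\wrevfuncMono}}=\sBO{c,T}$ since the $-1$ at $w(c,a)$ cancels the $+1$ at $w(c,b)$; every other Borda score is unchanged. Because $a\in\BO(T)$ implies $\sBO{a,T}\ge\sBO{z,T}$ for every $z$, after the transfer $a$'s score strictly exceeds $b$'s and weakly dominates all others, so $a\in\BO(T^{\wrevfuncMono})$.

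For $\wUC$, I would assume $a\in\wUC(T)$ and suppose toward contradiction that some $d\in V\setminus\{a\}$ w-covers $a$ in $T^{\wrevfuncMono}$. The transfer affects only margins involving $c$: $\margin{a,c}$ rises by $2$ and $\margin{b,c}$ falls by $2$. For $d\notin\{b,c\}$, every margin $m^{\wrevfuncMono}(d,z)$ equals $\margin{d,z}$ and the only changed right-hand side is $m^{\wrevfuncMono}(a,c)=\margin{a,c}+2$, so the w-covering inequality at $z=c$ is \emph{stricter} in $T^{\wrevfuncMono}$ than in $T$; consequently $d$ would have w-covered $a$ already in $T$, a contradiction. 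For $d=b$, the inequality at $z=c$ becomes $\margin{b,c}-2\ge\margin{a,c}+2$, which is even stronger than $\margin{b,c}\ge\margin{a,c}$, and the other inequalities are untouched, so the same argument yields a contradiction.

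The main obstacle is the case $d=c$. Here $m^{\wrevfuncMono}(c,a)=\margin{c,a}-2$, so the dominance requirement sharpens to $\margin{c,a}\ge 3$, while $m^{\wrevfuncMono}(c,b)=\margin{c,b}+2$ \emph{weakens} the inequality at $z=b$ from $\margin{c,b}\ge\margin{a,b}$ to $\margin{c,b}\ge\margin{a,b}-2$. The plan for closing this case is to exploit the hypothesis $a\in\wUC(T)$: since $c$ does not w-cover $a$ in $T$, either $\margin{c,a}\le 0$ (incompatible with $m^{\wrevfuncMono}(c,a)>0$) or there is a witness $z\ne a,c$ with $\margin{c,z}<\margin{a,z}$; for any such $z\ne b$ the witness survives unchanged in $T^{\wrevfuncMono}$. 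The delicate sub-case is when the only available witness is $z=b$ with $\margin{a,b}-\margin{c,b}\le 2$; pushing the argument through here is the heart of the proof and is where any auxiliary structural property of w-covering should be invoked.
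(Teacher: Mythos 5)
Your \BO\ half is correct and is essentially the paper's own computation: $\sBO{a,T^{\wrevfuncMono}}=\sBO{a,T}+1$, $\sBO{b,T^{\wrevfuncMono}}=\sBO{b,T}-1$, and every other score, including $c$'s, is unchanged, so $a$ retains the maximum. Your \wUC\ case analysis is also sound as far as it goes: the cases $d\notin\{b,c\}$ and $d=b$ are handled correctly, and you are right that the entire difficulty sits in the case $d=c$ with sole witness $z=b$ and $\weight{a,b}=\weight{c,b}+1$. But that sub-case is left open, with only the hope that ``any auxiliary structural property of w-covering'' will close it -- and no such property can, because the sub-case is realizable. Take $n=10$ and $V=\{a,b,c\}$ with $\weight{a,b}=5$, $\weight{a,c}=0$, $\weight{b,c}=6$. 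Then $a\in\wUC(T)$: we have $\margin{b,a}=0\not>0$, and $c$ fails the covering test at $z=b$ since $\weight{c,b}=4<5=\weight{a,b}$. The transfer $\wrevMono{b,c}=-1$, $\wrevMono{a,c}=+1$ is admissible ($\weight{b,c}>0$, $\weight{a,c}<n$), yet in $T^{\wrevfuncMono}$ one gets $\marginAfterRevAlt{c,a}{\wrevfuncMono}=8>0$ and $\weightAfterRevAlt{c,b}{\wrevfuncMono}=5\geq 5=\weightAfterRevAlt{a,b}{\wrevfuncMono}$, so $c$ w-covers $a$ and $a\notin\wUC(T^{\wrevfuncMono})$. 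So the step you deferred is not merely delicate; it is exactly where the argument breaks under the paper's definitions, and your proof plan for \wUC\ cannot be completed as announced.

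It is worth seeing how this interacts with the paper's own proof, which argues via the decreasing-path characterization (\Cref{lem:wUCasDecreasingPaths}) rather than your coverer-by-coverer analysis: it asserts that the only decreasing paths of $a$ that could change are those to $c$ or via $c$, and that these survive because $\weight{a,c}$ increases. That justification overlooks that the transfer also raises $\weight{c,b}$, which is precisely the quantity governing the length-two path from $a$ to $c$ via $b$ (the condition $\weight{a,b}>\weight{c,b}$); in the tournament above this path is the unique way $a$ reaches $c$, and the transfer destroys it. In other words, your instinct in isolating the sub-case was sharper than the paper's treatment of it, but a complete argument would have to either strengthen the hypotheses (ruling out $\weight{a,b}=\weight{c,b}+1$ as the sole witness) or weaken the covering notion (e.g., testing only alternatives $z$ with $\margin{a,z}\geq 0$), neither of which your proposal or the paper's proof supplies.
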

\begin{proof}
    Let $T=(V,w)$ be an $n$-weighted tournament, with $S\in \{\BO,\wUC\}$ and $a,b,c\in V$ such that $a\in S(T)$, $\weight{b,c}>0$ and $\weight{a,c}<n$. Further, let \wrevfuncMono\ be defined by $\wrevMono{b,c}=-1$, $\wrevMono{a,c}=+1$ and 0 otherwise.
\dividersmall{Borda}
    Since $a\in\BO(T)$, $a$ has the highest Borda score. 
    The definition of \wrevfuncMono\ implies $\sBO{a,T^\wrevfuncMono}=\sBO{a,T}+1$, $\sBO{b,T^\wrevfuncMono}=\sBO{b,T}-1$ and $\sBO{z,T^\wrevfuncMono}=\sBO{z,T}$ for all $z\in V\setminus\{a,b\}$.
    Thus, $a$ still has the highest Borda score, and $a\in\BO(T^\wrevfuncMono)$.
\dividersmall{weighted Uncovered Set}
    Since $a\in\wUC(T)$, $a$ is not covered by any alternative, \ie for every alternative $x$, there is a decreasing $a$-$x$-path of length at most two.
    The definition of \wrevfuncMono\ implies $w^\wrevfuncMono(a,c)=\weight{a,c}+1$ and $w^\wrevfuncMono(b,c)=\weight{b,c}-1$. 
    The only decreasing path from $a$ to some alternative in $T$ that could be different in $T^\wrevfunc$, is the decreasing path to $c$ or via $c$. But since the weight on the outgoing edge $(a,c)$ of $a$ increases, any decreasing path in $T$ is still a decreasing path in $T^\wrevfuncMono$.
	Thus, $a$ has a decreasing path of length at most two to every alternative in $T^\wrevfuncMono$, and $a\in\wUC(T^\wrevfuncMono)$.
\end{proof}

\begin{restatable}{proposition}{SCtransferMono}\label{prop:resultSCtransferMono}
	\SC\ does not satisfy transfer-monotonicity.
\end{restatable}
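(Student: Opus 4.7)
The plan is to prove this by exhibiting a concrete counterexample: a four-alternative weighted tournament in which some $a$ is an $\SC$ winner, but after the prescribed transfer $a$ drops out of $\SC$. The appropriate characterization to work with is the strongest-path definition (\Cref{def:SplitCycleSP}/\Cref{lem:strongestpath}), since it makes explicit what the transfer can break: an incoming edge $(y,a)$ with $\margin{y,a}>0$ is deleted iff there exists a path $a\to\dots\to y$ of strength at least $\margin{y,a}$. To push $a$ out of $\SC$, I need to force exactly one such edge $(y,a)$ to become undeleted after the transfer, meaning every $a$-$y$-path must lose enough strength to fall strictly below $\margin{y,a}$.

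Since the transfer $Q$ changes only $\margin{a,c}$ (up by $2$) and $\margin{b,c}$ (down by $2$), the only paths whose strengths decrease are those using $(b,c)$; paths using $(a,c)$ get stronger (and can even newly appear), so I must prevent any such path from rescuing the strength. The natural setup is therefore: engineer a single relevant $a$-$y$-path of the form $a\to b\to c\to y$ whose bottleneck is exactly $(b,c)$, with $\margin{b,c}=\margin{y,a}$ so that the initial strength matches the threshold and the post-transfer strength falls $2$ short. Because all margins share the parity of $n$, a gap of $2$ is the minimum allowed, and this forces the strict-inequality bookkeeping to work out on the nose.

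Concretely, I will take $V=\{a,b,c,y\}$ with $n=10$ and margins $\margin{a,b}=\margin{c,y}=10$, $\margin{b,c}=\margin{y,a}=2$, $\margin{a,c}=\margin{b,y}=-2$, and $\margin{a,y}=-2$. All corresponding weights lie in $\{0,\dots,10\}$ and satisfy $\weight{b,c}>0$ and $\weight{a,c}<n$, so the transfer is admissible. The verification has two parts. First, in $T$ the simple cycles are $(a,b,c,a)$, $(b,c,y,b)$ and $(a,b,c,y,a)$, each having $\splitnum=2$; their split edges together include every edge into $a$ (namely $(y,a)$, since $(c,a)$ is in $E(\margingraph)$ but deleted, and no other edge reaches $a$), so $a\in\SC(T)$. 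Second, in $T^Q$ both $\margin{a,c}$ and $\margin{b,c}$ become $0$, which simultaneously removes these two edges from $\margingraph^Q$; a quick reachability check shows $\margingraph^Q$ is acyclic, so no edge is deleted by $\SC$, and in particular $(y,a)\in E(\margingraph_D^Q)$ witnesses that $y$ dominates $a$. Hence $a\notin\SC(T^Q)$.

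The only real obstacle is getting the parity and strictness right: because margins change by exactly $2$, setting $\margin{b,c}>\margin{y,a}$ leaves $(b,c)$ still tied with or above $\margin{y,a}$ after the transfer, preserving deletion of $(y,a)$; while setting $\margin{b,c}<\margin{y,a}$ does not make $a$ an $\SC$ winner in $T$ to begin with. The equality case $\margin{b,c}=\margin{y,a}$ together with $\margin{a,c}\leq -2$ (so that even after the transfer no $a$-$c$ edge appears) is what makes the construction actually flip $\SC$ membership. Once this calibration is fixed, the enumeration of cycles in a four-vertex graph is short and the rest is routine bookkeeping.
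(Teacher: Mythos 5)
Your counterexample is correct and takes essentially the same route as the paper's proof: both exhibit a concrete four-alternative $10$-weighted tournament in which transferring one unit of weight over $c$ from $b$ to $a$ lowers the margin of $(b,c)$, so that an incoming edge of $a$ that was previously deleted as a splitting edge survives in the post-transfer margin graph and $a$ drops out of \SC. The only cosmetic difference is that in your instance the transfer destroys all cycles outright (the margins of $(b,c)$ and $(c,a)$ drop to $0$, leaving $\margingraph^Q$ acyclic), whereas in the paper's tournament cycles persist but the incoming edge of $a$ simply stops being a splitting edge.
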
 
\begin{proof}
    We provide a counterexample. Let $T=(V,w)$ be a $10$-weighted tournament as illustrated by the margin graph on the left in \Cref{fig:SR_transfmon_SC}.
    All edges that are discarded by \SC, are marked orange. Clearly, $\SC(T)=\{a,c\}$.
    
    Now we have $a,b,c\in V$, $\weight{b,c}=6+ \frac{10-6}{2}=8>0$, $\weight{a,c}=2+\frac{10-2}{2}=6<n$, and $a\in\SC(T)$.
    Let \wrevfunc\ be defined by $\wrev{b,c}=-1$, $\wrev{a,c}=+1$, and 0 otherwise.
    Then, $\marginAfterRev{a,c}=\margin{a,c}+2$ and $\marginAfterRev{b,c}=\margin{b,c}-2$.
    The margin graph of $T^\wrevfunc$ is displayed on the right in \Cref{fig:SR_transfmon_SC}.
    The edges $(a,c)$ and $(b,c)$ still get deleted by \SC, but $(d,a)$ does not get deleted anymore.
    Clearly, $a\notin\SC(T^\wrevfunc)=\{c\}$, which contradicts \SC\ being transfer-monotonic.
\end{proof}
\begin{figure}[t]
    \centering
    \includegraphics[width=0.4\columnwidth]{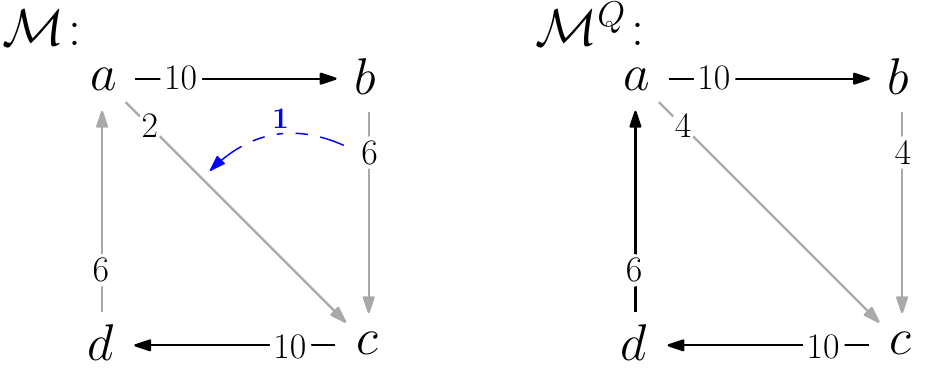}
    \caption{The counterexample used in the proof of \Cref{prop:resultSCtransferMono}.
    Edges ignored by \SC\ are marked in grey; the weight-transfer from $b$ to $a$ over $c$ is indicated by a dashed arrow in blue.
    We have $a\in\SC(T)$, but $a\notin\SC(T^\wrevfuncMono)$.} 
    \label{fig:SR_transfmon_SC}
\end{figure}

\subsection{Cover-Consistency}\label{sec:SR_CoverConsistency}
We first state the following simple combinatorial result, necessary for the analysis of cover-consistency.
\begin{lemma}\label{lem:absoluteValue4Cases}
	For all $a,b,A,B\in\mathbb{N}$, with $a<b$ and $a + A>b-B$, holds 
	\[\lvert (a + A)-b \rvert + \lvert (b - B)-a \rvert \leq A + B.\]
\end{lemma}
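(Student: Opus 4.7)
The plan is to substitute $d := b - a$, which is a positive integer since $a < b$, and rewrite both sides in terms of $d$. The first absolute value becomes $|(a+A)-b| = |A-d|$ and the second becomes $|(b-B)-a| = |d-B|$, so the claim reduces to
\[|A - d| + |d - B| \leq A + B,\]
under the hypothesis $a + A > b - B$, which in these new variables is simply $A + B > d$.

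From there I would do a straightforward case analysis on whether $A \geq d$ and whether $B \geq d$. In the case $A \geq d,\ B \geq d$, the left-hand side equals $A + B - 2d$, which is at most $A+B$ because $d \geq 0$. In the case $A \geq d,\ B < d$, the left-hand side equals $A - B \leq A + B$ since $B \geq 0$. Symmetrically, $A < d,\ B \geq d$ gives $B - A \leq A + B$ since $A \geq 0$.

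The only case that actually uses the hypothesis is the last one, $A < d$ and $B < d$: here the left-hand side equals $2d - A - B$, and the required inequality $2d - A - B \leq A + B$ rearranges to $d \leq A + B$, which is exactly (a weak form of) the hypothesis $A + B > d$. This is the one place where the assumption $a + A > b - B$ is needed, and it is the only mild subtlety; the rest is routine. No further structure is required, so the whole argument is a compact four-case check after the substitution.
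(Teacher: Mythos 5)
Your proof is correct: the substitution $d = b - a$ is sound, all four sign cases for $\lvert A-d\rvert + \lvert d-B\rvert$ check out, and you rightly locate the only place the hypothesis $a+A > b-B$ (equivalently $A+B > d$) is needed, namely the case $A < d$ and $B < d$, where it yields $2d - A - B \leq A+B$. The paper states this lemma without a printed proof, but its label (\texttt{absoluteValue4Cases}) signals that the intended argument is exactly such a four-case analysis of the two absolute values, so your route coincides with the paper's, with the substitution $d=b-a$ being a mild (and welcome) streamlining.
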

Recall the definition of the weighted covering relation as a natural extension of covering alternatives in an unweighted tournament:
An alternative $x$ w-covers another alternative $y$, if $x$ dominates $y$ and wins against all other alternatives at least as often as $y$ does.
Refer to \Cref{subsec:weightedUncoveredSet} for the formal definition.
Intuitively, if $x$ covers $y$, $x$ should be preferable to $y$, \ie $\MoVfunc{x}\geq\MoVfunc{y}$.
Any $\MoV_S$ of a tournament solution $S$ satisfying this, is cover-consistent.
\begin{definition}
	Given a weighted tournament solution $S$, we say that $\MoV_S$ is \defstyle{cover-consistent}, if for any $n$-weighted tournament $T=(V,\wfunc)$ and any alternatives $x,y\in V$,
	\[x\text{ w-covers }y \quad \text{implies} \quad \MoV_S(x,T)\geq\MoV_S(y,T).\]
\end{definition}
For unweighted tournamentss \citet{brill2022margin} proved cover-consistency via monotonicity and transfer-monotonicity.
For \BO\ and \wUC\ we follow this approach. For \SC, which is not transfer-monotonic (see \Cref{prop:resultSCtransferMono}), we will provide a direct proof.

We first prove that monotonicity and transfer-monotonicity of a weighted tournament solution $S$ imply cover-consistency for $\MoV_S$. This implies the latter for $\MoV_\BO$ and $\MoV_\wUC$. 
Before discussing the formal proof, we give an intuition as to how the proof works, and what monotonicity and transfer-monotonicity are necessary for.
The following proof idea sketches Case 3 of \Cref{thm:resultcoverconsistencyMonoTmono}.
\divider{Proof Idea.}
Let $S$ be a monotonic and transfer-monotonic weighted tournament solution and consider an $n$-weighted tournament $T=(V,w)$ with non-winner alternatives $x,y\notin S(T)$ such that $x$ covers $y$. 
We show $\MoVfuncExt{S}{x,T}\geq\MoVfuncExt{S}{y,T}$, \ie $\lvert\MoVfuncExt{S}{x,T}\rvert \leq \lvert\MoVfuncExt{S}{y,T}\rvert$: \begin{enumerate}
	\item Given a minimum wCRS $R_y$ for $y$ in $T$ we construct a reversal function $R_x$ with $\lvert R_x\rvert\leq\lvert R_y\rvert$.
	\item We prove $x\in S(\TRx)$, \ie $R_x$ is a wCRS for $x$ in $T$. This implies 
	$\lvert\MoVfuncExt{S}{x,T}\rvert \leq \lvert\MoVfuncExt{S}{y,T}\rvert$.
\end{enumerate}

\noindent Since $R_y$ is a wCRS for $y$ in $T$ by assumption, we know $y\in S(\TRy)$.
From the tournament \TRy, we create another tournament $\TRyPlus$ using only monotonic and transfer-monotonic reversals.
Since $S$ is monotonic and transfer-monotonic, we therefore ensure $y\in S(\TRyPlus)$.
By construction of the reversal functions $R_x$ and $\mo$, there will be an isomorphism $\pi$ between \TRyPlus\ and \TRx\ with
\[\pi(x)=y, \quad \pi(y)=x, \quad\text{and}\quad \pi(z)=z\text{ for all }z \in V\setminus\{x,y\}.\]
This means, $x$ and $y$ correspond to one another in these two tournaments, while all other alternatives $z$ correspond to themselves.
The claim $x\in S(\TRx)$ follows from ${y\in S(\TRyPlus)}$ and the isomorphism.

The idea for the construction of $R_x$ from $R_y$ and of \TRyPlus\ from \TRy is as follows.
The goal is to achieve the isomorphism between $x$ and $y$. As a weighted tournament is a complete graph with weights, the only structure we have to consider regarding the isomorphism is the weight. The goal is to switch the weights of $x$ and $y$ in the corresponding tournaments:
\begin{align*}
	\weightAfterRevAlt{x,z}{(R_y)^{\mo}}    =\weightAfterRevAlt{y,z}{R_x}   &\text{,}\quad\quad
	\weightAfterRevAlt{y,z}{(R_y)^{\mo}}    =\weightAfterRevAlt{x,z}{R_x},   \\
	\text{and}\quad \weightAfterRevAlt{x,y}{(R_y)^{\mo}}   &=\weightAfterRevAlt{y,x}{R_x}.
\end{align*}
We will call this \defstyle{\converging\ $y$ and $x$ (regarding alternative $z$} or \defstyle{regarding an edge)}.



Coinciding $y$ and $x$ regarding the edge $(x,y)$ can be done via monotonic reversals from \TRy\ to \TRyPlus.
Since $x$ covers $y$ in $T$, $\weight{x,y}>\weight{y,x}$.
To coincide $y$ and $x$, we have to increase the weight of $y$, \ie reinforce $y$.
Since $S$ is monotonic and $y\in S(\TRy)$, we know that $y$ stays in the choice set.
Coinciding $y$ and $x$ regarding any alternative $z\in V\setminus\{x,y\}$ can be done via transfer-monotonic reversals from \TRy\ to \TRyPlus\ as long as we only \textit{increase} the weight of $y$, \ie as long as $\weight{x,z}\geq\weight{y,z}$.
The only case in which we are unable to coincide $y$ and $x$ via (transfer-)monotonic reversals is when $\weight{x,z} < \weight{y,z}$. Since $x$ covers $y$, this cannot occur in the original tournament $T$, but only in $\TRy$ because of reversals of $R_y$.
In that case, we construct $R_x$ such that it mimics all of these reversals in \TRx, and $y$ and $x$ coincide.

After this construction, the desired isomorphism exists, and from $y\in S(\TRyPlus)$, the claim $x\in S(\TRx)$ follows.

\begin{theorem}\label{thm:resultcoverconsistencyMonoTmono}
	If a weighted tournament solution $S$ is monotonic and transfer-monotonic, then $\MoVExt{S}$ satisfies cover-consistency.
\end{theorem}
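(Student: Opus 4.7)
The plan is to case on whether $x, y \in S(T)$, using in each nontrivial case a transposition $\pi$ that swaps $x$ and $y$ (and fixes all other alternatives) to transport membership in $S$ between the two sides. The subcase $x \in S(T), y \notin S(T)$ is immediate since $\MoVfuncExt{S}{x,T} \geq 0 > \MoVfuncExt{S}{y,T}$. The subcase $y \in S(T), x \notin S(T)$ is ruled out first: since $x$ covers $y$, one can reinforce $y$ monotonically along $(y, x)$ and transfer-monotonically shift weight from $x$ to $y$ at each $z \in V \setminus \{x, y\}$ (legal because $\weight{x, z} \geq \weight{y, z}$), reaching a tournament $T'$ in which $y$'s incident weights equal $x$'s original weights and vice versa. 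Then $y \in S(T) \Rightarrow y \in S(T')$ by (transfer-)monotonicity, and the isomorphism $\pi$ delivers $x \in S(T)$, a contradiction. This leaves two symmetric main cases.

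For $x, y \notin S(T)$ (the case sketched by the authors), I would start from a minimum wCRS $R_y$ for $y$ in $T$ and define $R_x$ so that, for every $z \in V \setminus \{x, y\}$, $\weightAfterRevAlt{x,z}{R_x} = \weightAfterRevAlt{y,z}{R_y}$ and $\weightAfterRevAlt{y,z}{R_x} = \weightAfterRevAlt{x,z}{R_y}$, and analogously on the edge $(x, y)$; in other words, $R_x$ targets the $\pi$-image of the final state of $R_y$. Simultaneously, I construct an intermediate tournament $\TRyPlus$ from $\TRy$ using only monotonic reinforcements of $y$ along $(y, x)$ and transfer-monotonic shifts of weight from $x$ to $y$ through each $z$ where the $R_y$-weights still satisfy $\weightAfterRevAlt{x,z}{R_y} \geq \weightAfterRevAlt{y,z}{R_y}$. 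At the remaining \emph{crossover} coordinates $z$, where $R_y$ has already inverted the inequality despite $\weight{x, z} \geq \weight{y, z}$, I let $R_x$ itself inherit the reversal directly rather than using a (transfer-)monotonic step. By construction $\pi$ is then an isomorphism $\TRyPlus \to \TRx$; since $y \in S(\TRy)$ is preserved by (transfer-)monotonicity into $\TRyPlus$, the isomorphism yields $x \in S(\TRx)$, so $R_x$ is a wCRS for $x$.

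The key size estimate $|R_x| \leq |R_y|$ is exactly where Lemma~\ref{lem:absoluteValue4Cases} enters: on each crossover pair $(x, z), (y, z)$, writing $\weight{y, z} = a < b = \weight{x, z}$ and letting $R_y$ add $+A$ to $\weight{y, z}$ and $-B$ to $\weight{x, z}$ to produce the inversion, the reversal $R_x$ needed to reach the $\pi$-swapped state has combined size $|(a + A) - b| + |(b - B) - a| \leq A + B$, so its mass on these two edges is bounded by the $R_y$-mass. Summing over crossover $z$ and adding an analogous bound on the single edge $(x, y)$ yields $|R_x| \leq |R_y|$, and hence $|\MoVfuncExt{S}{x,T}| \leq |\MoVfuncExt{S}{y,T}|$. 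The symmetric case $x, y \in S(T)$ runs by the mirror-image template: starting from a minimum wDRS $R_x$ for $x$, build $R_y$ with $|R_y| \leq |R_x|$ and an intermediate $\TRxPlus$ witnessing $x \notin S$ via purely (transfer-)monotonic moves, then transport non-membership to $y$ through $\pi$.

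The main obstacle will be verifying the micro-level preconditions of every (transfer-)monotonic step: each monotonic reinforcement needs the target edge to have current weight strictly less than $n$, and each transfer requires $\weight{b, c} > 0$ and $\weight{a, c} < n$. I must check these invariants throughout the reversal sequence so that the tournament never leaves $\{0, \dots, n\}$, which will force a careful ordering of the moves (some reinforcements must precede or follow certain transfers) and a case split on the relative magnitudes of $\weight{x, z}, \weight{y, z}, \weightAfterRevAlt{x, z}{R_y}, \weightAfterRevAlt{y, z}{R_y}$. The bookkeeping is tedious but finite, and once it is organized so that Lemma~\ref{lem:absoluteValue4Cases} applies edge-pair-wise on crossover coordinates, the size bound and the isomorphism follow mechanically.
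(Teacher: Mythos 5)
Your proposal is correct and takes essentially the same route as the paper's own proof: the identical four-case decomposition, ruling out the case $x\notin S(T)$, $y\in S(T)$ via the swap isomorphism $\pi$, the construction of an intermediate tournament $\TRyPlus$ from $\TRy$ using only monotonic and transfer-monotonic reinforcements of $y$ while letting $R_x$ directly mirror the reversals at the crossover coordinates, and \Cref{lem:absoluteValue4Cases} for the size bound $\lvert R_x\rvert\leq\lvert R_y\rvert$. The only micro-level difference is that on crossover coordinates with $R_y(x,z)>0$ the paper obtains equality of the reversal masses by direct computation and invokes \Cref{lem:absoluteValue4Cases} only when $R_y(x,z)<0$, a distinction your announced ``case split on the relative magnitudes'' would recover.
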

\begin{proof}
	Let $S$ be a monotonic and transfer-monotonic tournament solution and suppose an alternative $x$ covers another alternative $y$ in an $n$-weighted tournament $T=(V,w)$. We show $\MoVfuncExt{S}{x,T}\geq\MoVfuncExt{S}{y,T}$.
 \dividersmall{Case 1 ($x\in S(T)$, $y\notin S(T)$):} Then $\MoVfuncExt{S}{x,T}>0>\MoVfuncExt{S}{y,T}$ per definition.
\dividersmall{Case 2 ($x\notin S(T)$, $y\in S(T)$):} This cannot be the case. Assume towards contradiction $x$ covers $y$, and $y\in S(T)$, but $x\notin S(T)$.
	By definition of the covering relation, we have $\weight{x,z}\geq\weight{y,z}$ for all $z\in V\setminus\{x,y\}$.
	Using monotonic and transfer-monotonic reversals enforcing $y$, we will create a tournament $T^{\mo}$ with $y\in S(T^{\mo})$, such that there is an isomorphism $\pi \colon T\rightarrow T^{\mo}$ with $\pi(x)=y$, $\pi(y)=x$, and $\pi(z)=z$.
	This implies $x\in S(T)$, which is a contradiction.
	The argument is similar to the idea of Case 3 and Case 4 (refer to the proof idea).
	It is only missing additional reversals because of $R_y$ happening beforehand.
	Let $z,z'\in V\setminus\{x,y\}$. The reversal function $\mo$ is defined by \begin{align*}
		{\mo}(z,z')&=0, \\
		{\mo}(y,x)\,&=\weight{x,y}-\weight{y,x}>0,\\
		{\mo}(x,z)\,&=\weight{y,z}-\weight{x,z}\leq 0,\\
		{\mo}(y,z)\,&=\weight{x,z}-\weight{y,z}\geq 0.
		\intertext{Then $\pi(x)=y$, $\pi(y)=x$ and $\pi(z)=z$ follow from}
		\weightAfterRevAlt{x,z}{\mo}&=\weight{x,z}+\weight{y,z}-\weight{x,z}=\weight{y,z},\\
		\weightAfterRevAlt{y,z}{\mo}&=\weight{y,z}+\weight{x,z}-\weight{y,z}=\weight{x,z},\\
		\weightAfterRevAlt{y,x}{\mo}&=\weight{y,x}+\weight{x,y}-\weight{y,x}=\weight{x,y}\text{, and}\\
		\weightAfterRevAlt{z,z'}{\mo}&=\weight{z,z'}.
	\end{align*}
	Additionally, $y\in S(T^{\mo})$ implies $x\in S(T^{\mo})$.
	But $x$ is enforced from $T^{\mo}$ to $T$ by transfer-monotonic and monotonic reversals, as $\mo(z,x)>0$ for all $z\in V\setminus\{x\}$.
	Revoking these reversals enforces $x$.
	By monotonicity $x\in S(T^{\mo})$ implies $x\in S(T)$, which is a contradiction to $x\notin S(T)$.
\dividersmall{Case 3 ($x\notin S(T)$, $y\notin S(T)$):}
	We have to show $\MoV_S(x,T)\geq\MoV_S(y,T)$, respectively $\lvert\MoV_S(x,T)\rvert \leq \lvert\MoV_S(y,T)\rvert$.
	Given a minimum wDRS $R_y$ for $y$ in $T$, \ie $y\in\TRy$, we construct a wDRS $R_x$ for $x$ in $T$ of size $\lvert R_x\rvert\leq\lvert R_y\rvert$.
	Note, $R_y(y,z)\geq 0$ for all $z\in V\setminus\{y\}$ as otherwise, by monotonicity, setting $R_y(y,z)$ to 0 would keep $R_y$ as a wDRS for $y$, contradicting the minimality of $R_y$.
	For all $z,z'\in V\setminus\{x,y\}$, we define 
	\begin{alignat*}{2}
		&\ R_x(z,z')   = R_y(z,z').
		\intertext{For $(x,y)$, we set} 
		&\ R_x(y,x)     \,= \weightAfterRevAlt{x,y}{R_y}-\weight{y,x}
		\quad&&  \text{, if $\weightAfterRevAlt{y,x}{R_y} > \weightAfterRevAlt{x,y}{R_y}$,}\\
		&\ R_x(y,x)     \,= R_y(y,x)   
		&&  \text{, otherwise.}
		\intertext{For $(x,z)$ and $(y,z)$ for all $z\in V\setminus\{x,y\}$, we set}
		&\begin{array}{l}
			R_x(y,z)   \,= \weightAfterRevAlt{x,z}{R_y}-\weight{y,z}\\
			R_x(x,z)   \,= \weightAfterRevAlt{y,z}{R_y}-\weight{x,z} 
		\end{array} 
		&&\text{, if } \weightAfterRevAlt{y,z}{R_y}>\weightAfterRevAlt{x,z}{R_y},\\
		&\begin{array}{l}
			R_x(y,z)   \,= R_y(y,z)\\
			R_x(x,z)   \,= R_y(x,z)
		\end{array} 
		&&\text{, otherwise.}
	\end{alignat*}
	
	\begin{remark} The construction of $R_x$ is motivated as follows.\begin{enumerate}
		\item All reversals between any alternatives $z,z'\in V\setminus\{x,y\}$ are done in $R_x$ as in $R_y$ to achieve $\weightAfterRevAlt{z,z'}{R_x}=\weightAfterRevAlt{z,z'}{R_y}$.
		
		\item The reversal between $y$ and $x$ in $R_y$ is mirrored by $R_x$ only if it is necessary (refer to the proof idea):
		
		If $\weightAfterRevAlt{y,x}{R_y} \leq \weightAfterRevAlt{x,y}{R_y}$, we will have to \textit{increase} the weight of $y$ over $x$ to coincide $y$ and $x$ regarding $(x,y)$.
		This is possible via (transfer-) monotonic reversals and will be done in later step.
		We get $\weightAfterRevAlt{x,y}{R_x}=\weightAfterRevAlt{x,y}{R_y}$ and
		$\weightAfterRevAlt{y,x}{R_x}=\weightAfterRevAlt{y,x}{R_y}$.
		
		If $\weightAfterRevAlt{y,x}{R_y} > \weightAfterRevAlt{x,y}{R_y}$, we would have to \textit{decrease} the weight of $y$ over $x$ to coincide $y$ and $x$ regarding $(x,y)$.
		This is \textbf{not} possible via (transfer-) monotonic reversals. 
		Therefore, we construct $R_x$ to achieve the correspondence of $y$ and $x$:
		Assume $\weightAfterRevAlt{y,x}{R_y} > \weightAfterRevAlt{x,y}{R_y}$.
		Observe, $R_x(x,y) = -R_x(y,x) = \weightAfterRevAlt{y,x}{R_y} - \weight{x,y}$.
		Then \begin{align}
			\weightAfterRevAlt{y,x}{R_x} &= \weight{y,x} + (\weightAfterRevAlt{x,y}{R_y}-\weight{y,x}) = \weightAfterRevAlt{x,y}{R_y} \text{, and}\label{4.1}\\
			\weightAfterRevAlt{x,y}{R_x} &= \weight{x,y} + (\weightAfterRevAlt{y,x}{R_y}-\weight{x,y}) = \weightAfterRevAlt{y,x}{R_y}.
		\end{align}

		\item Equivalently, any reversals between $y$ (resp. $x$) and some alternative $z$ in $R_y$ are mirrored by $R_x$ only if it is necessary:
		If $\weightAfterRevAlt{y,z}{R_y}>\weightAfterRevAlt{x,z}{R_y}$, we would have to \textit{decrease} the weight of $y$ over $z$ (resp. increase the weight of $x$ over $z$) to coincide $y$ and $x$ regarding $z$.
		This is not possible via (transfer-) monotonic reversals.
		Therefore, we construct $R_x$ to achieve the correspondence of $y$ and $x$:
		Assume there is an alternative $z$ with $\weightAfterRevAlt{y,z}{R_y} > \weightAfterRevAlt{x,z}{R_y}$.
		Then \begin{align} 
			\weightAfterRevAlt{y,z}{R_x} &= \weight{y,z} + (\weightAfterRevAlt{x,z}{R_y} - \weight{y,z}) = \weightAfterRevAlt{x,z}{R_y} \text{, and}\label{4.3}\\
			\weightAfterRevAlt{x,z}{R_x} &= \weight{x,z} + (\weightAfterRevAlt{y,z}{R_y} - \weight{x,z}) = \weightAfterRevAlt{y,z}{R_y}.\label{4.4}
		\end{align}
	\end{enumerate}
	\end{remark}
 
	\dividerNoLineEndsmall{Claim 1 ($\mathbf{\lvert R_x\rvert \leq \lvert R_y\rvert}$):}
	We prove the claim in three steps as indicated in the following inequality chain.
	\begin{alignat*}{3}
		\lvert R_x\rvert\quad
		&= \hspace{1.1em}\sum\limits_{\substack{a,b\in V(T)}}   \hspace{1.25em}\lvert R_x(a,b)\rvert \\
		&=  \sum\limits_{\substack{z,z'\in V(T)\setminus\{x,y\}}}   \lvert R_x(z,z')\rvert
		+   \sum\limits_{\substack{z\in V(T)\setminus\{x,y\}}} (\lvert R_x(x,z)\rvert + \lvert R_x(y,z)\rvert)
		\quad&& + \lvert R_x(x,y) \rvert \\
		&\leq
		    \overbrace{\sum\limits_{\substack{z,z'\in V(T)\setminus\{x,y\}}}     \lvert R_y(z,z')\rvert}^{(1)}
		+
		    \overbrace{\sum\limits_{\substack{z\in V(T)\setminus\{x,y\}}} (\lvert R_y(y,z)\rvert + \lvert R_y(x,z)\rvert)}^{(3)}
		&&+ 
		    \overbrace{\lvert R_y(x,y) \rvert}^{(2)} \\
		&= \hspace{1.1em}\sum\limits_{\substack{a,b\in V(T)}}   \hspace{1.25em}\lvert R_y(a,b)\rvert\\
		&= \quad \lvert R_y\rvert. 
	\end{alignat*}
	Obviously, $\lvert R_x(a,b)\rvert \leq \lvert R_y(a,b)\rvert$ in all cases where $R_x(a,b)$ was defined to be equal to $R_y(a,b)$. 
	In particular for all $(a,b)=(z,z')$ with $z,z'\in V(T)\setminus\{x,y\}$. This proves \textbf{($1$)}.
	Now, $R_x(a,b)=R_y(a,b)$ also holds for $(a,b)=(x,y)$, but only if $\weightAfterRevAlt{y,x}{R_y}\leq\weightAfterRevAlt{x,y}{R_y}$.
	So, assume $\weightAfterRevAlt{y,x}{R_y}>\weightAfterRevAlt{x,y}{R_y}$. Then by definition of $R_x$, 
	\begin{align*}
		R_x(y,x)\quad& \eqcom{Def. $R_x$}{=}\quad \weightAfterRevAlt{x,y}{R_y} \hspace{3.75em}-
		\weight{y,x} \\
		& \eqcom{Def. $w^R$}{=}\quad \weight{x,y} + R_y(x,y)- \weight{y,x} \\
		& =\quad \underbrace{\weight{x,y}-\weight{y,x}}_{>0}+\underbrace{R_y(x,y)}_{<0}  > R_y(x,y),
	\end{align*}
	which implies $\lvert R_x(x,y)\rvert = \lvert R_x(y,x)\rvert < \lvert R_y(x,y) \rvert$ and proves \textbf{($2$)}.
	Lastly, let $z\in V\setminus\{x,y\}$.
	We prove \textbf{($3$)}.
	Again, in all cases where the values were defined to be equal, the inequality holds trivially. So, assume \begin{align}\weightAfterRevAlt{x,z}{R_y}<\weightAfterRevAlt{y,z}{R_y}. \label{eq:someInequality1}\end{align} 
	Since $R_y(y,z)\geq0$ is fixed by monotonicity, we distinguish the cases $R_y(x,z)>0$ and $R_y(x,z)<0$ and prove $\lvert R_x(y,z)\rvert + \lvert R_x(x,z)\rvert \leq \lvert R_y(x,z) \rvert + \lvert R_y(y,z) \rvert$.
	
	First, assume $R_y(x,z)>0$. Then from 
	\begin{align*}
		\weightAfterRevAlt{y,z}{R_y} 
		\quad\eqcom{\Cref{eq:someInequality1}}{>}\quad \weightAfterRevAlt{x,z}{R_y}
		\quad\eqcom{$R_y(x,z)>0$}{\geq}\quad    \weight{x,z}
		\quad\eqcom{$x$ covers $y$}{\geq}\quad  \weight{y,z},
	\end{align*}
	follow $R_x(x,z)=\weightAfterRevAlt{y,z}{R_y}-\weight{x,z}\geq0$ and $R_x(y,z)=\weightAfterRevAlt{x,z}{R_y}-\weight{y,z}\geq0$. Therefore,
	\begin{alignat*}{3}
		\lvert R_x(x,z)\rvert + \lvert R_x(y,z)\rvert \quad
		&= \quad R_x(x,z) + R_x(y,z&&) \\
		&\eqcom{Def. $R_x$}{=}\quad 
		\weightAfterRevAlt{y,z}{R_y} &&- \weight{x,z} 
		+   \weightAfterRevAlt{x,z}{R_y} &&- \weight{y,z}\\
		& \eqcom{Def. $w^{R_y}$}{=}\quad
		\weight{y,z} + R_y(y,z) &&- \weight{x,z}
		+   \weight{x,z} + R_y(x,z) &&- \weight{y,z} \\
		& =\quad \lvert R_y(y,z)\rvert +  \lvert R_y(x&&,z) \rvert.
	\end{alignat*}
	
	Second, assume $R_y(x,z)<0$.
	By \Cref{lem:absoluteValue4Cases} with $a=\weight{y,z}$, $b=\weight{x,z}$, ${A=R_y(y,z)}$ and $B=-R_y(x,z)$ follows  $\lvert R_x(x,z)\rvert + \lvert R_x(y,z)\rvert \leq \lvert R_y(x,z) \rvert + \lvert R_y(y,z)\rvert$.
	This proves \textbf{($3$)} and all together $\lvert R_x\rvert\leq\lvert R_y\rvert$.
 
	\dividerNoLineEndsmall{Claim 2 ($R_x$ is a wCRS for $x$ in $T$ ($x\in S(\TRx)$)):}
	Per definition of $R_y$, we have ${y\in S(T^{R_y})}$.
	Using (transfer-) monotonic reversals, which ensure $y$ stays in the winning set, we construct a tournament $(T^{R_y})^{\mo}$ such that $y\in S((T^{R_y})^{\mo})$ and there is an isomorphism between $(T^{R_y})^{\mo}$ and $T^{R_x}$ such that $x$ corresponds to $y$ and all other alternatives correspond to themselves.
 
	The following holds by construction of $R_x$: for all $z,z'\in V\setminus\{x,y\}$,
	\begin{alignat}{3}
		\label{RxiswCRSforx_Equation1}
		\weightAfterRevAlt{z,z'}{R_x} &= \weightAfterRevAlt{z,z'}{R_y},
		\intertext{if $\weightAfterRevAlt{x,y}{R_y}<\weightAfterRevAlt{y,x}{R_y}$,}
		\label{RxiswCRSforx_Equation2}
		\weightAfterRevAlt{x,y}{R_x} &= \weightAfterRevAlt{y,x}{R_y} \quad&&\text{and}\quad
		\weightAfterRevAlt{y,x}{R_x} &&= \weightAfterRevAlt{x,y}{R_y},
		\intertext{and if $\weightAfterRevAlt{x,z}{R_y}<\weightAfterRevAlt{y,z}{R_y}$,}
		\label{RxiswCRSforx_Equation3}
		\weightAfterRevAlt{x,z}{R_x} &= \weightAfterRevAlt{y,z}{R_y} \quad&&\text{and}\quad
		\weightAfterRevAlt{y,z}{R_x} &&= \weightAfterRevAlt{x,z}{R_y}.
	\end{alignat}
	All alternatives but $x$ and $y$ already correspond to each other in \TRy\ and \TRx\ (see \ref{RxiswCRSforx_Equation1}).
	Also, $x$ and $y$ correspond already regarding
	    the edge $(x,y)$ in the case that we would have had to weaken $y$ (see \ref{RxiswCRSforx_Equation2})
	and regarding
	all alternatives $z$ where we would have had to weaken $y$ (see \ref{RxiswCRSforx_Equation3}). 
	Therefore, we still need to \converge\ $y$ and $x$ regarding the edge $(x,y)$ if the margin \mbox{of $x$ over $y$} in \TRy\ is greater 0, and regarding all alternatives $z$ with whom the weight \mbox{of $x$ over $z$} in \TRy\ is greater than the weight of $y$ over $z$.
	Keep in mind that the weight of $x$ and $y$ over such an alternative $z$ in \TRy\ is the same as in \TRx.\\
	For all $z,z'\in V\setminus\{x,y\}$, we define 
	\begin{alignat*}{2}
		& \mo(z,z')   = 0.
		\intertext{For $(x,y)$, we set} 
		&\mo(y,x)     \,= 0
		\quad&&  \text{, if $\weightAfterRevAlt{y,x}{R_y} > \weightAfterRevAlt{x,y}{R_y}$,}\\
		&\mo(y,x)     \,= \weightAfterRevAlt{x,y}{R_y}-\weightAfterRevAlt{y,x}{R_y}
		&&  \text{, otherwise.}
		\intertext{For $(x,z)$ and $(y,z)$ for all $z\in V\setminus\{x,y\}$, we set}
		&\begin{array}{l}
			\mo(y,z)   \,= 0\\
			\mo(x,z)   \,= 0 
		\end{array} 
		&&\text{, if } \weightAfterRevAlt{y,z}{R_y}>\weightAfterRevAlt{x,z}{R_y},\\
		&\begin{array}{l}
			\mo(y,z)   \,= \weightAfterRevAlt{x,z}{R_y}-\weightAfterRevAlt{y,z}{R_y}\\
			\mo(x,z)   \,= \weightAfterRevAlt{y,z}{R_y}-\weightAfterRevAlt{x,z}{R_y}
		\end{array} 
		&&\text{, otherwise.}
	\end{alignat*}
    The reversal function \mo\ only reinforces $y$ via monotonic ($\mo(y,x)\geq 0$) and transfer-monotonic ($\mo(y,z) = - \mo(x,z)\geq 0$) reversals.
    Therefore, $y\in S(\TRy)$ implies $y\in S(\TRyPlus)$.
	After reversal, all alternatives $z,z'\in V\setminus\{x,y\}$ still correspond to themselves in \TRyPlus\ and \TRx.
	The alternatives $y$ and $x$ correspond to each other regarding the edge $(x,y)$ (if they did not already):
	\begin{alignat*} {3}
		(w^{R_y})^{\mo}(y,x)
		&= \weightAfterRevAlt{y,x}{R_y}\, + {\mo}(y,x)\\
		&= \weightAfterRevAlt{y,x}{R_y} + (\weightAfterRevAlt{x,y}{R_y}-\weightAfterRevAlt{y,x}{R_y})
		&&= \weightAfterRevAlt{x,y}{R_y}
		&&= \weightAfterRevAlt{x,y}{R_x},\\
		(w^{R_y})^{\mo}(x,y)
		&= \weightAfterRevAlt{x,y}{R_y}\, + {\mo}(x,y) \\
		&= \weightAfterRevAlt{x,y}{R_y} + (\weightAfterRevAlt{y,x}{R_y}-\weightAfterRevAlt{x,y}{R_y})
		&&= \weightAfterRevAlt{y,x}{R_y}
		&&= \weightAfterRevAlt{y,x}{R_x},\\
		\intertext{and regarding all $z\in V\setminus\{x,y\}$, (again in particular for those where they did not correspond before):}
		(w^{R_y})^{\mo}(x,z)
		&= \weightAfterRevAlt{x,z}{R_y} + {\mo}(x,z)\\
		&= \weightAfterRevAlt{x,z}{R_y} + (\weightAfterRevAlt{y,z}{R_y}-\weightAfterRevAlt{x,z}{R_y})
		&&= \weightAfterRevAlt{y,z}{R_y}
		&&= \weightAfterRevAlt{y,z}{R_x},\\
		(w^{R_y})^{\mo}(y,z)
		&= \weightAfterRevAlt{y,z}{R_y} + {\mo}(y,z)\\
		&= \weightAfterRevAlt{y,z}{R_y} + \weightAfterRevAlt{x,z}{R_y}-\weightAfterRevAlt{y,z}{R_y} 
		&&= \weightAfterRevAlt{x,z}{R_y}
		&&= \weightAfterRevAlt{x,z}{R_x}.
	\end{alignat*}
	Now, $y\in S(\TRyPlus)$ implies $x\in S(\TRx)$.
	
	In conclusion, given a minimum wCRS for $y$ in $T$, we can always construct a wCRS for $x$ in $T$ of smaller or equal size. Since the \MoV\ of a non-winner is negative, $\MoVfuncExt{S}{x,T}\geq\MoVfuncExt{S}{y,T}$ follows.
	\divider{Case 4 ($x\in S(T)$, $y\in S(T)$):}
	We can argue analagous to Case 3:\\
	We have to show $\MoV_S(x,T)\geq\MoV_S(y,T)$, respectively $\MoV_S(y,T) \leq \MoV_S(x,T)$.
	Recall the alternative interpretation of (transfer-) monotonicity:
	A non-winner remains outside the choice set whenever the margin of some other alternative over him increases, or whenever weight over another alternative $z$ is transferred from him to another alternative.
	
 
    Given a minimum wDRS $R_x$ for $x$ in $T$, we  construct a wDRS $R_y$ for $y$ in $T$ of size $\lvert R_y\rvert\leq\lvert R_x\rvert$.
    Since $R_x$ is a wDRS for $x$, we know $x\notin S(\TRx)$. 
    From \TRx\ we can thus create another tournament $\TRxPlus$ using only monotonic and transfer-monotonic reversals and therefore ensure, $x\notin S(\TRxPlus)$.
    By construction, there will be an isomorphism $\pi$ between \TRxPlus\ and \TRy\ with $\pi(x)=y$, $\pi(y)=x$, and $\pi(z)=z$, for all other alternatives $z$. This means, $x$ and $y$ correspond to one another in these two tournaments, while all other alternatives correspond to themselves.
    The claim $y\notin S(\TRy)$ then follows from $x\notin S(\TRxPlus)$ and the isomorphism.
    The construction of $R_x$ and the reversal function $\mo$ follow the same pattern as in Case~3. 
\end{proof}

\Cref{prop:resultMonotonicity}, \Cref{prop:resultMonotonicityMoV}, and \Cref{prop:resultBOwUCtransfermono} together with \Cref{thm:resultcoverconsistencyMonoTmono} imply the following:
\begin{theorem}
	\MoVExt{\BO} and \MoVExt{\wUC} satisfy cover-consistency. 
\end{theorem}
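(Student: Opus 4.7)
The plan is to derive this theorem as an immediate corollary of \Cref{thm:resultcoverconsistencyMonoTmono}, which reduces cover-consistency of $\MoVExt{S}$ to the joint requirements that $S$ be monotonic and transfer-monotonic. Since the heavy lifting has already been done in that theorem, the task here is purely to verify both hypotheses for $\BO$ and $\wUC$ and then invoke it.

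For monotonicity, I would cite \Cref{prop:resultMonotonicity}, which establishes this property simultaneously for all three tournament solutions studied in the paper, in particular for $\BO$ and $\wUC$. For transfer-monotonicity, I would cite \Cref{prop:resultBOwUCtransfermono}, which is stated precisely for these two rules. Thus the hypotheses of \Cref{thm:resultcoverconsistencyMonoTmono} are satisfied by both $\BO$ and $\wUC$, and applying that theorem once to each yields that $\MoVExt{\BO}$ and $\MoVExt{\wUC}$ are both cover-consistent.

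There is no genuine obstacle in this proof: every ingredient is already in place, and no new construction or case analysis is needed. The only point worth flagging is that this reduction route is unavailable for $\SC$, since $\SC$ fails transfer-monotonicity by \Cref{prop:resultSCtransferMono}. This is why cover-consistency for $\MoVExt{\SC}$ is handled separately by a direct argument rather than bundled into the present theorem.
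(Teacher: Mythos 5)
Your proposal is correct and matches the paper's own derivation exactly: the paper obtains this theorem as an immediate consequence of \Cref{prop:resultMonotonicity} and \Cref{prop:resultBOwUCtransfermono} combined with \Cref{thm:resultcoverconsistencyMonoTmono}, just as you do. Your remark that this route fails for \SC\ (via \Cref{prop:resultSCtransferMono}), necessitating the separate direct proof of \Cref{thm:resultSCCoverConsistency}, also agrees with the paper.
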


Unfortunately, \SC\ does not satisfy transfer-monotonicity and, as proven in \citet[Appendix 1]{brill2020margin}, neither monotonicity nor transfer-monotonicity can be dropped from the condition of the corresponding result \cite[Lemma 4]{brill2020margin} for unweighted tournaments.
This implies the same for weighted tournaments.

Nevertheless, \SC\ does satisfy cover-consistency.
The proof depends on the fact that all cycles containing $x$ are in strong correlation with cycles containing alternatives covered by $x$.
First, we make some observations concerning these correlations of an alternative $x$ covering another alternative $y$.
\begin{lemma} \label{lem:SC_CoverConsistency}
	Let $T=(V,w)$ be an $n$-weighted tournament and suppose an alternative $x$ covers another alternative $y$.
	Then the following three statements hold. \begin{enumerate}
		\item If there is an incoming edge $(z,x)\in E(\margingraph)$ of $x$, then there is also an incoming edge $(z,y)\in E(\margingraph)$ of $y$ and $\margin{z,x} \leq \margin{z,y}$. \label{lem:SC_CoverConsistency1}
		\item For these incoming edges of $x$ and $y$ from \cref{lem:SC_CoverConsistency1} holds: \\
		If there is a cycle $C_y=(y,c_1,\dots,c_{k},z,y)$ in \margingraph\ containing $(z,y)$, then there is also a cycle $C_x=(x,c_1,\dots,c_{k},z,x)$ in \margingraph\ containing $(z,x)$.
		\label{lem:SC_CoverConsistency2}
		\item If $(z,x)\in E(\deletemargingraph)$, then also $(z,y)\in E(\deletemargingraph)$. In particular, $\splitnum{C_x}\leq\splitnum{C_y}$, for all cycles $C_x,C_y$ as in \cref{lem:SC_CoverConsistency2}. \label{lem:SC_CoverConsistency3}
	\end{enumerate}
\end{lemma}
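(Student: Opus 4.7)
Statement 1 is immediate from the definition of w-covering: $\margin{x,z} \geq \margin{y,z}$ for every $z \in V \setminus \{x,y\}$ rewrites as $\margin{z,y} \geq \margin{z,x}$ by antisymmetry, and positivity propagates from $(z,x) \in E(\margingraph)$.

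For Statement 2, I verify edge-by-edge that $C_x = (x, c_1, \ldots, c_k, z, x)$ is a dominance cycle in $\margingraph$. The interior edges $(c_i, c_{i+1})$ and the edge $(c_k, z)$ are shared with $C_y$ and therefore retain positive margin. The outgoing edge $(x, c_1)$ satisfies $\margin{x, c_1} \geq \margin{y, c_1} > 0$ by w-covering together with $C_y$ being a dominance cycle. The incoming edge $(z, x)$ is positive by the hypothesis inherited from Statement~1.

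Statement 3 is the crux, and I handle it via the strongest-path characterization \Cref{lem:strongestpath}. The hypothesis $(z, x) \in E(\deletemargingraph)$ yields $\margin{z, x} > \stpathmatrix[x, z]$. To derive $(z, y) \in E(\deletemargingraph)$ it suffices to prove $\stpathmatrix[x, z] \geq \stpathmatrix[y, z]$, for then $\margin{z, y} \geq \margin{z, x} > \stpathmatrix[x, z] \geq \stpathmatrix[y, z]$ and a second application of \Cref{lem:strongestpath} closes the sub-goal. The strongest-path comparison is obtained by transforming any (without loss of generality simple) strongest $y$-$z$-path $(y, v_1, \ldots, v_l, z)$ into an $x$-$z$-path of equal or greater strength: if $x = v_i$ for some $i$, take the tail subpath $(x, v_{i+1}, \ldots, v_l, z)$; otherwise replace the prefix $y$ by $x$, noting that $\margin{x, v_1} \geq \margin{y, v_1}$ by covering while all later edges are unchanged. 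For the splitting-number comparison, the hypothesis forces the minimum in $C_x$ to be attained on the $x$-to-$z$ portion (since $(z, x)$ is not a splitting edge of $C_x$), so $\splitnum{C_x} = \min\{\margin{x, c_1}, \margin{c_1, c_2}, \ldots, \margin{c_k, z}\}$; a term-wise comparison via covering (the outgoing edge dominates and all interior edges coincide) then relates $\splitnum{C_x}$ to $\splitnum{C_y}$ in the direction needed for the proof of cover-consistency of $\SC$.

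The main obstacle is the $\stpathmatrix$ comparison in Statement 3, since a strongest $y$-$z$-path might visit $x$ in its interior and so cannot be directly rewritten by substituting the prefix; the tail-shortcut argument handles exactly this case. The remaining steps are purely definitional, hinging only on the inequalities already supplied by Statement~1 and the w-covering hypothesis.
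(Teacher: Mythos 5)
Your statements 1 and 2 are handled exactly as in the paper (definition chasing, plus lifting $C_y$ to $C_x$ via $\margin{x,c_1}\geq\margin{y,c_1}>0$). For the main claim of statement 3 you take a genuinely different route. The paper argues by contradiction: if $(z,y)\notin E(\deletemargingraph)$, some cycle $C_y$ has $(z,y)$ as a splitting edge, so $\margin{z,y}=\splitnum{C_y}$; lifting to $C_x$ and chaining $\margin{z,x}\leq\margin{z,y}\leq\min\{\margin{y,c_1},\dots\}\leq\min\{\margin{x,c_1},\dots\}$ shows $(z,x)$ would itself be a splitting edge of $C_x$, contradicting $(z,x)\in E(\deletemargingraph)$. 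You instead invoke \Cref{lem:strongestpath} and prove $\stpathmatrix[x,z]\geq\stpathmatrix[y,z]$ by path surgery: tail-shortcut when $x$ lies on a (WLOG simple) strongest $y$-$z$-path, prefix substitution using $\margin{x,v_1}\geq\margin{y,v_1}$ otherwise. This is correct: combined with $\margin{z,y}\geq\margin{z,x}>0$ from statement 1, the chain $\margin{z,y}\geq\margin{z,x}>\stpathmatrix[x,z]\geq\stpathmatrix[y,z]$ yields $(z,y)\in E(\deletemargingraph)$, and your tail-shortcut correctly disposes of the one case where substitution would fail. Arguably this is cleaner than the paper's contradiction argument, at the price of leaning on the strongest-path characterization, which the paper imports anyway.

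However, your plan for the final clause $\splitnum{C_x}\leq\splitnum{C_y}$ has a real gap — as sketched, it proves the opposite inequality. You correctly note that the hypothesis forces $\margin{z,x}>\splitnum{C_x}$, so $\splitnum{C_x}=\min\{\margin{x,c_1},R\}$, where $R$ denotes the common margins $\margin{c_1,c_2},\dots,\margin{c_k,z}$. But covering gives $\margin{x,c_1}\geq\margin{y,c_1}$, so the term-wise comparison yields $\splitnum{C_x}=\min\{\margin{x,c_1},R\}\geq\min\{\margin{y,c_1},R\}\geq\min\{\margin{y,c_1},R,\margin{z,y}\}=\splitnum{C_y}$, i.e., a lower bound, not the claimed upper bound; the phrase "in the direction needed" conceals rather than closes this. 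The paper obtains the $\leq$ in a setting where it is immediate: inside its contradiction branch, $(z,y)$ is a splitting edge of $C_y$, hence $\splitnum{C_y}=\margin{z,y}$, and then $\splitnum{C_x}\leq\margin{z,x}\leq\margin{z,y}=\splitnum{C_y}$ follows precisely because $\margin{z,x}$ is itself one of the terms in the minimum defining $\splitnum{C_x}$ — the very term your argument deliberately discards. (The paper's own "additionally" display annotates this step loosely as a term-wise covering comparison, which, as your computation inadvertently shows, is not valid coordinate-wise at the $c_1$ position; it is saved only because in that branch the right-hand minimum is attained at $\margin{z,y}$.) To repair your write-up, compare against $\margin{z,x}$ before removing it from the minimum, or restrict the claim to cycles with $\splitnum{C_y}=\margin{z,y}$ as the paper's computation implicitly does.
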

\begin{proof} Let $T=(V,w)$ be an $n$-weighted tournament and let alternative $x$ cover $y$.
	\begin{enumerate}
		\item Since $x$ covers $y$, we have $\margin{x,z}\geq\margin{y,z}$, respectively $\margin{z,x}\leq\margin{z,y}$, for all $z\in V\setminus\{x,y\}$.
		If there is an incoming edge $(z,x)\in E(\margingraph)$ of $x$, \ie $\margin{z,x}>0$, then by $\margin{z,y}\geq\margin{z,x}>0$, there is an incoming edge $(z,y)\in E(\margingraph)$ of $y$.
		\item Let $(z,x),(z,y)\in E(\margingraph)$ as in \cref{lem:SC_CoverConsistency1}.
		If there is a cycle $C_y=(y,c_1,\dots,c_{k},z,y)$ in \margingraph\ containing $(z,y)$, then $(y,c_1)\in E(\margingraph)$, \ie $\margin{y,c_1}>0$.
		Since $x$ covers $y$, we have $\margin{x,c_1}\geq\margin{y,c_1}>0$, which implies $(x,c_1)\in E(\margingraph)$.
		Thus, $C_x= (x,c_1,\dots,c_{k},z,x)$ is a cycle in \margingraph\ containing $(z,x)$.
		\item Now, let $(z,x)\in E(\deletemargingraph)$ and assume towards contradiction $(z,y)\notin E(\deletemargingraph)$.
		This means there is a cycle $C_y=(y,c_1,\dots,c_{k},z,y)$ in \margingraph\ with $(z,y)\in\splitset{C_y}$, \ie $\margin{z,y}=\splitnum{C_y}$.
		By \cref{lem:SC_CoverConsistency2}, $C_x=(x,c_1,\dots,c_{k},z,x)$ is in \margingraph.
		From $\margin{x,c_1}\geq\margin{y,c_1}\geq\splitnum{C_y}$ and $\splitnum{C_y}=\margin{z,y}\geq\margin{z,x}$ follows,
		\begin{align*}
			\margin{z,x}&\leq\margin{z,y}\\
			&\leq\min\{\margin{y,c_1},\margin{c_i,c_{i+1}}, \margin{c_k,z} \colon i\in[k-1]\}\\
			&\leq\min\{\margin{x,c_1},\margin{c_i,c_{i+1}}, \margin{c_k,z} \colon i\in[k-1]\}.
		\end{align*}
		This implies $\margin{z,x}=\splitnum{C_x}$, $(z,x)\in\splitset{C_x}$, and
		thus $(z,x)$ is deleted.
		This is a contradiction to $(z,x)\in E(\deletemargingraph)$. Additionally,
 		\begin{align*}
			\splitnum{C_x}\quad 
			&\eqcom{Def. Split\#}{=} 
			\quad \min\{\margin{x,c_1},\margin{c_i,c_{i+1}}, \margin{c_k,z}, \margin{z,x} \colon i\in[k-1]\}\\
			&\eqcom{$x$ covers $y$}{\leq} 
			\quad\min\{\margin{y,c_1},\margin{c_i,c_{i+1}}, \margin{c_k,z}, \margin{z,y} \colon i\in[k-1]\}\\
			&\eqcom{Def. Split\#}{=} 
			\quad \splitnum{C_y}.
		\end{align*}
	\end{enumerate}
\end{proof}

\begin{theorem} \label{thm:resultSCCoverConsistency}
	\MoVExt{\SC} satisfies cover-consistency.
\end{theorem}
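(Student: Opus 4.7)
The plan is to mirror the four-case structure of the proof of \Cref{thm:resultcoverconsistencyMonoTmono}, replacing the appeal to transfer-monotonicity (which \SC fails by \Cref{prop:resultSCtransferMono}) with direct appeals to \Cref{lem:SC_CoverConsistency}.

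Case 1 ($x \in \SC(T)$ and $y \notin \SC(T)$) is immediate from the sign convention: $\MoVfuncExt{\SC}{x,T} > 0 > \MoVfuncExt{\SC}{y,T}$. Case 2 ($x \notin \SC(T)$ and $y \in \SC(T)$) is ruled out directly by \Cref{lem:SC_CoverConsistency}: since $y \in \SC(T)$, every incoming edge of $y$ in $\margingraph$ is absent from $\deletemargingraph$; by item 1 of the lemma, every incoming edge $(z,x)$ of $x$ entails an incoming edge $(z,y)$ of $y$, which is deleted; the contrapositive of item 3 then forces $(z,x)$ to be deleted as well. Hence no incoming edge of $x$ survives in $\deletemargingraph$, giving $x \in \SC(T)$, contradicting the case assumption.

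For Case 3 (both non-winners), given a minimum wCRS $R_y$ for $y$, I construct a reversal function $R_x$ of size at most $|R_y|$ such that $x \in \SC(T^{R_x})$. The construction follows the pattern of Case 3 in \Cref{thm:resultcoverconsistencyMonoTmono}: on edges among $V \setminus \{x,y\}$, set $R_x = R_y$; on edges incident to $x$ or $y$, perform a careful swap so that, whenever this saves reversal weight, $x$ in $T^{R_x}$ inherits the weight profile of $y$ in $T^{R_y}$ (and symmetrically for $y$ in $T^{R_x}$). To establish $x \in \SC(T^{R_x})$, I verify that every incoming edge $(z,x)$ in $\margingraph^{R_x}$ is deleted by exhibiting, through the swap, a dominance cycle in $T^{R_x}$ in which $(z,x)$ is a splitting edge, translated from the cycle in $T^{R_y}$ that witnesses deletion of the corresponding incoming edge of $y$. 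Case 4 (both winners) is treated analogously, constructing a wDRS for $y$ from a minimum wDRS $R_x$ for $x$.

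The main obstacle is the size bound $|R_x| \leq |R_y|$ (and its counterpart in Case 4). As in \Cref{thm:resultcoverconsistencyMonoTmono}, the swap redistributes reversal weight between edges incident to $x$ and edges incident to $y$, and the covering inequality $\weight{x,z} \geq \weight{y,z}$ combined with \Cref{lem:absoluteValue4Cases} will guarantee that the total positive reversal is non-increasing. The harder, \SC-specific obstacle is that, without transfer-monotonicity, I cannot transport membership in \SC via an isomorphism between reinforced tournaments; the cycle-by-cycle verification must instead be carried out explicitly, using the splitting-number inequality $\splitnum{C_x} \leq \splitnum{C_y}$ from item 3 of \Cref{lem:SC_CoverConsistency} to argue that each dominance cycle in $T^{R_y}$ responsible for deleting an incoming edge of $y$ induces a cycle in $T^{R_x}$ of no larger splitting number that deletes the corresponding incoming edge of $x$.
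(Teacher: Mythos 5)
Your proposal follows the paper's proof in all essentials: the same four-case decomposition, Case 2 eliminated via \Cref{lem:SC_CoverConsistency}, and Cases 3 and 4 handled by swapping the reversal amounts on edges incident to $x$ and $y$ and then verifying explicitly, cycle by cycle, that each surviving incoming edge of $x$ is a splitting edge of a cycle translated from one witnessing the deletion of the corresponding incoming edge of $y$. Two implementation points differ. First, the paper's swap is \emph{unconditional} -- it sets $R_x(z,x)=R_y(z,y)$, $R_x(x,z)=R_y(y,z)$, $R_x(z,y)=R_y(z,x)$, $R_x(y,z)=R_y(x,z)$, and $R_x(x,y)=0$ -- so $\lvert R_x\rvert\leq\lvert R_y\rvert$ is immediate (a transposition of reversal amounts plus zeroing one edge) and \Cref{lem:absoluteValue4Cases} is never invoked; your conditional ``swap only where it saves weight'' would also work, since in the no-swap case the condition $\weightAfterRevAlt{y,z}{R_y}\leq\weightAfterRevAlt{x,z}{R_y}$ itself yields the inequality $\weightAfterRevAlt{x,z}{R_x}\geq\weightAfterRevAlt{y,z}{R_y}$ that the translation chain needs, but it buys nothing for \SC and complicates the bookkeeping. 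Relatedly, note that item 3 of \Cref{lem:SC_CoverConsistency} is a statement about $T$, where $x$ covers $y$; after reversal the covering relation may be destroyed, so the comparison of $\splitnumAlt{C_x}{T^{R_x}}$ with $\splitnumAlt{C_y}{T^{R_y}}$ cannot be cited from the lemma but must be rederived from the covering inequality in $T$ together with the defining equations of the swap -- your phrase ``carried out explicitly'' suggests you intend exactly the min-chain computation the paper performs, so this is a matter of wording rather than substance.

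The one place where ``treated analogously'' conceals a real subcase is Case 4 when the edge witnessing $x\notin\SC(T^{R_x})$ is $(y,x)$ itself, i.e., $z=y$. There the plain swap with $R_y(x,y)=0$ can fail: the witnessing margin is $\marginAfterRevAlt{y,x}{R_x}=-\margin{x,y}+2R_x(y,x)$, which exceeds $\margin{x,y}$ whenever $R_x(y,x)>\margin{x,y}$, and then the untouched edge $(x,y)$ in $T^{R_y}$, with margin only $\margin{x,y}$, may equal the splitting number of a translated cycle and be deleted, so $y$ would not be dominated. The paper handles this by setting $R_y(x,y)=\weightAfterRevAlt{y,x}{R_x}-\weight{x,y}$, which aligns $\marginAfterRevAlt{x,y}{R_y}=\marginAfterRevAlt{y,x}{R_x}$ and lets the strict inequality against every cycle's splitting number carry over (a short computation also confirms the size bound survives this modified entry). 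Your sketch should make this subcase and its modified definition explicit; with that addition, it is the paper's proof.
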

\begin{proof}
	Suppose an alternative $x$ covers another alternative $y$ in an $n$-weighted tournament  $T=(V,w)$. We show $\MoVfuncExt{\SC}{x,T}\geq\MoVfuncExt{\SC}{y,T}$.
	\dividersmall{Case 1 ($x\in \SC(T), y\notin \SC(T)$):} Then $\MoVfuncExt{\SC}{x,T}>0>\MoVfuncExt{\SC}{y,T}$ per definition.
    \dividersmall{Case 2 ($x\notin \SC(T), y\in \SC(T)$):} This cannot be the case. Assume towards contradiction $x$ covers $y$, and $y\in \SC(T)$, but $x\notin \SC(T)$. Then there has to be an alternative $z\in V\setminus\{x,y\}$ dominating $x$ after deletion, \ie $(z,x)\in E(\deletemargingraph)$.
    But \Cref{lem:SC_CoverConsistency}, \cref{lem:SC_CoverConsistency3} would imply that also $(z,y)\in E(\deletemargingraph)$.
    This is a contradiction to $y\in \SC(T)$.
    \dividersmall{Case 3 ($x\notin \SC(T), y\notin \SC(T)$):} 
    We have to show $\MoVfuncExt{\SC}{x,T}\geq \MoVfuncExt{\SC}{y,T}$.
    Given a minimum wCRS $R_y$ for $y$ in $T$, \ie $y\in S(T^{R_y})$, we construct a wCRS $R_x$ for $x$ in $T$ of size $\lvert R_x\rvert \leq \lvert R_y\rvert$.
    Note, $R_y(y,z)\geq 0$, respectively $R_y(z,y)\leq 0$, for all $z\in V\setminus\{y\}$ as otherwise, by monotonicity, setting $R_y(y,z)$ to 0 would keep $R_y$ as a wCRS for $y$, contradicting the minimality of $R_y$.
    Since $R_y$ is a wCRS, the following holds for all alternatives $z$ with $(z,y)\in E(\deletemargingraph)$:
    \begin{enumerate}
        \item $\marginAfterRevAlt{z,y}{R_y}<0$, or \label{z_y_edge_SC_1}
        \item there is a cycle $C_y=(y,c_1,\dots,c_k,z,y)$ in $\margingraph^{R_y}$, s.t. $\marginAfterRevAlt{z,y}{R_y}=\splitnumAlt{C_y}{T^{R_y}}$. \label{z_y_edge_SC_2}
    \end{enumerate}
    \begin{figure}[t]
        \centering
        \includegraphics[width=0.5\columnwidth]{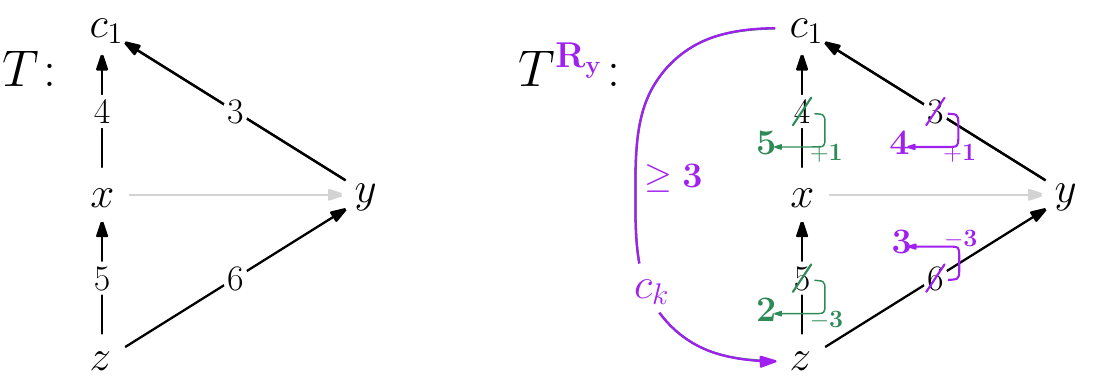}
        \caption{
        Illustration of Case 3 of \Cref{thm:resultSCCoverConsistency}. We want $(z,x)\notin E(\deletemargingraphRevAlt{R_x})$.
        Since $x$ covers $y$, $(z,x)\in E(\deletemargingraph)$ implies $(z,y)\in E(\deletemargingraph)$ and $\weight{z,x}\leq\weight{z,y}$.
        Assume there is a wCRS \textcolor{purple}{$R_y$} for $y$ given by $R_y(z,y)=-3, R_y(y,c_1)=+1$ and otherwise defined such that there is a $c_1$-$z$-path with strength greater equal 3.
        Then $\splitnumAlt{y,c_1,\dots,c_k,z,y}{T^{R_y}}= 3=w^{R_y}(z,y)$.
        Let \textcolor{\greentext}{$R_x$} be defined as in the proof, mirroring the reversals of $R_y$.
        Then $2=w^{R_x}(z,x)
        =\splitnumAlt{x,c_1,\dots,c_k,z,x}{T^{R_x}}$.}
        \label{fig:CoverConsistency_proofCase2}
	\end{figure}
    Let $z_1, z_2\in V\setminus\{x,y\}$ and define $R_x$ as follows :
    \begin{alignat*}{4}
        &R_x(z_1,x)&&=R_y(z_1,y),\quad     && R_x(x,z_1)&&=R_y(y,z_1),  \\
        &R_x(z_1,y)&&= R_y(z_1,x),\quad    && R_x(y,z_1)&&= R_y(x,z_1),  \\
		&R_x(x,y)&&=0,            &&R_x(z_1,z_2)&&=R_y(z_1,z_2).
	\end{alignat*}
    Obviously, $\lvert R_x\rvert\leq\lvert R_y\rvert$.
    Consider any $(\z,x)\in E(\deletemargingraph)$. We have to show \begin{enumerate}
        \item $\marginAfterRevAlt{z,x}{R_x}<0$, or that \label{z_x_edge_SC_1}
        \item there is a cycle $C_x=(x,c_1,\dots,c_k,z,x)$ in $\margingraph^{R_x}$, s.t. $\marginAfterRevAlt{z,x}{R_x}=\splitnumAlt{C_x}{T^{R_x}}$. \label{z_x_edge_SC_2}
    \end{enumerate}
    By \Cref{lem:SC_CoverConsistency}, \cref{lem:SC_CoverConsistency3}, we know that also $(\z,y)\in E(\deletemargingraph)$ and $\margin{z,x}\leq\margin{z,y}$. 
    Now, if $\marginAfterRevAlt{z,y}{R_y}<0$ (\cref{z_y_edge_SC_1} for $(z,y)\in E(\deletemargingraph)$), then
    \begin{alignat*}{2}
        \marginAfterRevAlt{z,x}{R_x} \quad \eqcom{Def $m^{R_x}$}{=} \quad \margin{z,x} + R_x(z,x) 
        \leq \margin{z,y} + R_y(z,y) \quad \eqcom{Def $m^{R_y}$}{=} \quad \marginAfterRevAlt{z,y}{R_y}< 0,\hspace{0.8em}
    \end{alignat*}
    \ie \cref{z_x_edge_SC_1} for $(z,x)\in E(\deletemargingraph)$.
    And if $\marginAfterRevAlt{z,y}{R_y}=\splitnumAlt{C_y}{T^{R_y}}$ for some $C_y=(y,c_1,\dots,c_k,z,y)$ in $\margingraph^{R_y}$ (\cref{z_y_edge_SC_2} for $(z,y)\in E(\deletemargingraph)$), then
    \begin{alignat*}{2}
        \marginAfterRevAlt{z,x}{R_x}\quad
        &\leq \quad\marginAfterRevAlt{z,y}{R_y}\\
        &\eqcom{Def. \splitnum{}}{\leq} \quad\min\{\marginAfterRevAlt{y,c_1}{R_y},
        &&\marginAfterRevAlt{c_i,c_{i+1}}{R_y}, \marginAfterRevAlt{c_k,z}{R_y} \colon i\in[k-1]\}\\
        &\eqcom{Def. $R_y$}{=} \quad\min\{\margin{y,c_1} + R_y(y,c_1),
        &&\marginAfterRevAlt{c_i,c_{i+1}}{R_x}, \marginAfterRevAlt{c_k,z}{R_x} \colon i\in[k-1]\}\\
        &\eqcom{$x$ covers $y$}{\leq} \quad\min\{\margin{x,c_1} + R_x(x,c_1),
        &&\marginAfterRevAlt{c_i,c_{i+1}}{R_x}, \marginAfterRevAlt{c_k,z}{R_x} \colon i\in[k-1]\}\\
        &\eqcom{Def. $R_x$}{=} \quad\min\{\marginAfterRevAlt{x,c_1}{R_x},
        &&\marginAfterRevAlt{c_i,c_{i+1}}{R_x}, \marginAfterRevAlt{c_k,z}{R_x} \colon i\in[k-1]\},
    \end{alignat*}
    \ie \cref{z_x_edge_SC_2} for $(z,x)\in E(\deletemargingraph)$.
    Therefore, there is a cycle $C_x=(x,c_1,\dots,c_k,z,x)$, s.t. $\marginAfterRevAlt{z,x}{R_x}=\splitnumAlt{C_x}{T^{R_x}}$.
    In both cases $(z,x)\notin E(\deletemargingraphRevAlt{R_x})$, and as this holds for every $(z,x)\in E(\deletemargingraph)$, we have $x\in \SC(T^{R_x})$, \ie $R_x$ is a wCRS for $x$ in $T$.
    Refer to \Cref{fig:CoverConsistency_proofCase2} for an illustration on the basis of a small example.
    \dividersmall{Case 4 ($x\in \SC(T), y\in \SC(T)$):} 
    We can argue analogous to Case 3.
    We have to show $\MoVfuncExt{\SC}{x,T}\geq \MoVfuncExt{\SC}{y,T}$.
    Given a minimum wDRS $R_x$ for $x$ in $T$, we construct a wDRS $R_y$ for $y$ in $T$ of size $\lvert R_x\rvert \geq \lvert R_y\rvert$.
    Note, $R_x(x,z)\leq 0$, respectively $R_x(z,x)\geq 0$, for all $z\in V\setminus\{x\}$, as otherwise, by monotonicity, setting $R_x(x,z)$ to 0 would keep $R_x$ as a wDRS for $x$ contradicting the minimality of $R_x$.
    Since $R_x$ is a wDRS, there exists an edge $(z,x)\in E(\deletemargingraphRevAlt{R_x})$, \ie for all cycles $C_x=(x,c_1,\dots,c_k,z,x)$ in $\margingraph^{R_x}$ we have, \begin{align}
        \marginAfterRevAlt{z,x}{R_x}>\splitnumAlt{C_x}{T^{R_x}}.\label{eq:CoverConsistencyCase3}
    \end{align}
    First, assume $z\neq y$.
    Let $z_1, z_2\in V\setminus\{x,y\}$ and define $R_x$ as follows :
    \begin{alignat*}{4}
        &R_x(z_1,x)&&=R_y(z_1,y),\quad     && R_x(x,z_1)&&=R_y(y,z_1),  \\
        &R_x(z_1,y)&&= R_y(z_1,x),\quad    && R_x(y,z_1)&&= R_y(x,z_1),  \\
        &R_x(x,y)&&=0,            &&R_x(z_1,z_2)&&=R_y(z_1,z_2).
    \end{alignat*}
    Obviously, $\lvert R_x\rvert\geq\lvert R_y\rvert$, and furthermore,
    \[\marginAfterRevAlt{z,y}{R_y} = \margin{z,y} + 2\cdot R_y(z,y) \geq \margin{z,x} + 2\cdot R_x(z,x) = \marginAfterRevAlt{z,x}{R_x} >0.\]
    Now, let $C_y:=(y,c_1,\dots,c_k,z,y)$ be an arbitrary cycle in $\margingraph^{R_y}$.
    Since $\marginAfterRevAlt{z,x}{R_x}>0$ and $\marginAfterRevAlt{x,c_1}{R_x}=\margin{x,c_1}+2\cdot R_x(x,c_1)\geq\margin{y,c_1}+2\cdot R_y(y,c_1)=\marginAfterRevAlt{y,c_1}{R_y}>0$, the cycle $C_x=(x,c_1,\dots,c_k,z,)x$ is in $\margingraph^{R_x}$ and \Cref{eq:CoverConsistencyCase3} holds.
    It follows
    \begin{alignat*}{2}
        \marginAfterRevAlt{z,y}{R_y}\quad    &\geq\quad \marginAfterRevAlt{z,x}{R_x}\\
        &\eqcom{\Cref{eq:CoverConsistencyCase3}}{>}\quad \splitnumAlt{C_x}{T^{R_x}}\\
        &\eqcom{Def. \splitnum{}}{=} \quad \min\{\marginAfterRevAlt{x,c_1}{R_x},
        &&\marginAfterRevAlt{c_i,c_{i+1}}{R_x}, \marginAfterRevAlt{c_k,z}{R_x} \colon i\in[k-1]\}\\
        &\eqcom{Def. $R_x$}{=} \quad\min\{\margin{x,c_1}+2\cdot R_x(x,c_1),
        &&\marginAfterRevAlt{c_i,c_{i+1}}{R_y}, \marginAfterRevAlt{c_k,z}{R_y} \colon i\in[k-1]\}\\
        &\eqcom{$x$ covers $y$}{\geq} \quad\min\{\margin{y,c_1}+2\cdot R_y(y,c_1),
        &&\marginAfterRevAlt{c_i,c_{i+1}}{R_y}, \marginAfterRevAlt{c_k,z}{R_y} \colon i\in[k-1]\}\\
        &\eqcom{Def. $R_y$}{=} \quad\min\{\marginAfterRevAlt{y,c_1}{R_y},
        &&\marginAfterRevAlt{c_i,c_{i+1}}{R_y}, \marginAfterRevAlt{c_k,z}{R_y} \colon i\in[k-1]\}\\
        &\eqcom{Def. \splitnum{}}{\geq} \quad\splitnumAlt{C_y}{T^{R_y}}.
    \end{alignat*}
    Therefore, $\marginAfterRevAlt{z,y}{R_y}>\splitnumAlt{C_y}{T^{R_y}}$ for every cycle $C_y$ containing $(z,y)$ and thus $(z,y)\in E(\deletemargingraphRevAlt{R_y})$. Thus, $y$ is dominated in $\deletemargingraphRevAlt{R_y}$, $y\notin SC(T^{R_y})$ and $R_y$ is a wCRS for $y$ in $T$.

    Now, assume $z=y$. 
    Let $z_1, z_2\in V\setminus\{x,y\}$ and define $R_y$ as follows :
    \begin{alignat*}{3}
        &R_y(z_1,z_2)&&=R_x(z_1,z_2),\\
        &R_y(x,y)    &&=\weightAfterRevAlt{y,x}{R_x}-\weight{x,y},\\
        &R_y(z_1,x)  &&=R_x(z_1,y),  \\
        &R_y(z_1,y)  &&= R_x(z_1,x).
    \end{alignat*}
    Obviously, $\lvert R_x\rvert\geq\lvert R_y\rvert$, and furthermore,
    \begin{alignat*}{2}
        \marginAfterRevAlt{x,y}{R_y} &= \margin{x,y} + 2\cdot R_y(x,y)\\
        &= \margin{x,y} + 2\cdot (\weight{y,x} + R_x(y,x) - \weight{x,y})\\
        &= \margin{x,y} + (\weight{y,x}-\weight{x,y}) - (\weight{x,y} - \weight{y,x}) &&+ 2\cdot R_x(y,x)\\
        &= \margin{x,y} + \hphantom{(}\margin{y,x}\hspace{4.58em} - \hphantom{(}\margin{x,y} &&+ 2\cdot R_x(y,x)\\
        &= \margin{y,x} + 2\cdot R_x(y,x)\\
        &= \marginAfterRevAlt{y,x}{R_x} >0.
    \end{alignat*}
    Now, let $C_y:=(y,c_1,\dots,c_k,x,y)$ be an arbitrary cycle in $\margingraph^{R_y}$.
    Again, because of
    $\marginAfterRevAlt{x,c_1}{R_x}=\margin{x,c_1}+2\cdot R_x(x,c_1)\geq\margin{y,c_1}+2\cdot R_y(y,c_1)=\marginAfterRevAlt{y,c_1}{R_y}>0$\hphantom{,} and\
    $\marginAfterRevAlt{c_k,y}{R_x}=\margin{c_k,y}+2\cdot R_x(c_k,y)\geq\margin{c_k,y}+2\cdot R_y(c_k,x)=\marginAfterRevAlt{c_k,x}{R_y}>0$, the cycle~$C_x:=(x,c_1,\dots,c_k,y,x)$ is in $\margingraph^{R_x}$ and \Cref{eq:CoverConsistencyCase3} holds.
    It follows
    \begin{alignat*}{2}
        \marginAfterRevAlt{x,y}{R_y}\quad    &=\quad \marginAfterRevAlt{y,x}{R_x}\\
        &\eqcom{\Cref{eq:CoverConsistencyCase3}}{>}\quad \splitnumAlt{C_x}{T^{R_x}}\\
        &\eqcom{Def. \splitnum{}}{=} \quad \min\{\marginAfterRevAlt{x,c_1}{R_x},
		&&\marginAfterRevAlt{c_i,c_{i+1}}{R_x}, \marginAfterRevAlt{c_k,y}{R_x} \colon i\in[k-1]\}\\
		&\eqcom{Def. $R_y$}{=} \quad\min\{\marginAfterRevAlt{y,c_1}{R_y},
		&&\marginAfterRevAlt{c_i,c_{i+1}}{R_y}, \marginAfterRevAlt{c_k,x}{R_y} \colon i\in[k-1]\}\\
		&\eqcom{Def. \splitnum{}}{\geq} \quad\splitnumAlt{C_y}{T^{R_y}}.
	\end{alignat*}
	Therefore, $\marginAfterRevAlt{x,y}{R_y}>\splitnumAlt{C_y}{T^{R_y}}$ for every cycle $C_y$ containing $(x,y)$ and thus $(x,y)\in E(\deletemargingraphRevAlt{R_y})$. Thus, $y$ is dominated in $\deletemargingraphRevAlt{R_y}$, $y\notin SC(T^{R_y})$ and $R_y$ is a wCRS for $y$ in $T$.
\end{proof}

\subsection{Degree-Consistency}\label{sec:SR_DegreeConsistency}
The last structural property we consider is degree-consistency.
\citet[Definition 5.7.]{brill2022margin} used the notion of degree-consistency to analyze closeness of the ranking naturally induced by the \MoV, to the ranking induced by the Copeland scores. 
For weighted tournaments, we define weighted degree-consistency indicating closeness of the ranking naturally induced by the \MoV, to the ranking induced by Borda scores, the weighted extension of Copeland scores. 
\begin{definition}
For a tournament solution $S$, let $T=(V,\wfunc)$ be an $n$-weighted tournament with alternatives $x,y\in V$. We say $\MoV_S$ is \wDegContent\ / \eqwDegContent\ / \strowDegContent, if
\[ \fctwoutdeg(x)\ (>/=/\geq)\ \fctwoutdeg(y)\ \text{ implies }\ \MoVfuncExt{S}{x,T}\ (>/=/\geq)\ \MoVfuncExt{S}{y,T}.\]
\end{definition}
As for the unweighted notion of degree-consistency, $\MoV_S$ satisfies \strowDegCony\ if and only if it satisfies both \wDegCony\ and \eqwDegCony.
Further note that \wDegCony\ implies cover-consistency.
\begin{lemma}
    Let $S$ be a weighted tournament solution. If $\MoVExt{S}$ satisfies \wDegCony, $\MoVExt{S}$ is cover-consistent.
\end{lemma}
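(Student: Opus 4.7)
The plan is to observe that w-covering is a strictly stronger condition than dominance in weighted out-degree, and then invoke w-degree-consistency directly.

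First, I would unpack the covering hypothesis. Suppose $x$ w-covers $y$ in an $n$-weighted tournament $T=(V,\wfunc)$. By \Cref{def:wUC}, this means $\margin{x,y} > 0$, equivalently $\weight{x,y} > \weight{y,x}$, and $\weight{x,z} \geq \weight{y,z}$ for all $z \in V \setminus \{x,y\}$. Splitting the weighted out-degree of each alternative into the contribution from the $(x,y)$-edge and the contribution from all other alternatives, I would write
\[
\fctwoutdeg(x) = \weight{x,y} + \sum_{z \in V \setminus \{x,y\}} \weight{x,z}, \qquad \fctwoutdeg(y) = \weight{y,x} + \sum_{z \in V \setminus \{x,y\}} \weight{y,z}.
\]
Combining $\weight{x,y} > \weight{y,x}$ with the termwise inequality $\weight{x,z} \geq \weight{y,z}$ on the remaining summands yields the strict inequality $\fctwoutdeg(x) > \fctwoutdeg(y)$.

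Second, I would apply w-degree-consistency of $\MoVExt{S}$. By definition, $\fctwoutdeg(x) > \fctwoutdeg(y)$ implies $\MoVfuncExt{S}{x,T} > \MoVfuncExt{S}{y,T}$, and in particular $\MoVfuncExt{S}{x,T} \geq \MoVfuncExt{S}{y,T}$. Since $x, y$ were arbitrary alternatives satisfying the covering relation, this is exactly cover-consistency, completing the argument.

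There is no real obstacle here: the lemma reduces to the simple arithmetic fact that the weighted covering relation refines the strict weighted out-degree order, after which the implication is immediate. I would only need to be careful to note that, unlike for unweighted tournaments where one might imagine ties in degree, the strict inequality $\weight{x,y} > \weight{y,x}$ forced by $\margin{x,y} > 0$ guarantees the strict separation of the weighted out-degrees, so the version of w-degree-consistency with strict inequality suffices and no appeal to \strowDegCony\ is needed.
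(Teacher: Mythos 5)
Your proof is correct and follows essentially the same route as the paper's: decompose $\fctwoutdeg(x)$ and $\fctwoutdeg(y)$ into the $(x,y)$-edge term plus the remaining summands, use $\weight{x,y}>\weight{y,x}$ together with $\weight{x,z}\geq\weight{y,z}$ to obtain the strict inequality $\fctwoutdeg(x)>\fctwoutdeg(y)$, and then invoke \wDegCony. Your added remark that the strict form of w-degree-consistency suffices (no appeal to \strowDegCony) is accurate and matches the paper's usage.
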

\begin{proof}
	If alternative $x$ covers alternative $y$ in an $n$-weighted tournament $T=(V,w)$, 
	$\fctwoutdeg(x) = \weight{x,y} + \sum_{z\in V\setminus\{x,y\}} \weight{x,z} > \weight{y,x} + \sum_{z\in V\setminus\{x,y\}} \weight{y,z} = \fctwoutdeg(y)$, and by \wDegCony, $\MoVfunc{x,T}>\MoVfunc{y,T}$. Thus, cover-consistency holds.
\end{proof}
Degree-consistency is a rather undesirable property for a tournament solution, as it implies a higher out-degree to be the key argument when comparing alternatives. This would reduce any \MoV refinement to the Borda ranking and limit its expressiveness for different tournament solutions.
Fortunately for us, none of the three considered tournament solutions satisfies any weighted degree-consistency. We show this by providing a counterexample each.
\begin{proposition}\label{prop:resultSR_degree_BO}
	$\MoV_\BO$ does satisfy neither \wDegCony\ nor \eqwDegCony. This implies, it also does not satisfy \strowDegCony.
\end{proposition}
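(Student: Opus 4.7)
The plan is to produce a small four-alternative tournament that refutes equal-$w$-degree-consistency, and then to obtain a single-edge perturbation of it that refutes $w$-degree-consistency; failure of strong-$w$-degree-consistency follows immediately.

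The structural idea is that, for a Borda non-winner $d$ in a tournament with a unique winner $a$, a unit of weight reversed from $a$ to $d$ closes the decisive gap $\sBO{a,T}-\sBO{d,T}$ by two (raising $\sBO{d,T}$ by one and lowering $\sBO{a,T}$ by one), whereas a unit reversed from any other non-winner $d'$ to $d$ only closes the (non-decisive) gap $\sBO{d',T}-\sBO{d,T}$ and leaves $\sBO{a,T}$ untouched. Hence if two non-winners $x,y$ share the same Borda score but $x$'s losses concentrate against $a$ (\ie $\weight{a,x}$ is large), while $y$'s loss to $a$ is capped by a small $\weight{a,y}$, the minimum wCRS for $y$ must---once the $a$-to-$y$ reversals saturate---fall back on inefficient reversals from other non-winners, strictly inflating $\lvert\MoVfuncExt{\BO}{y,T}\rvert$. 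A third non-winner $z$ with Borda score strictly between $\sBO{x,T}=\sBO{y,T}$ and $\sBO{a,T}$ amplifies the effect, since every wCRS must also push the target past $\sBO{z,T}$.

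Concretely I would take $V=\{a,x,y,z\}$, $n=10$, and set for instance $\weight{a,x}=10$, $\weight{a,y}=5$, $\weight{a,z}=10$, $\weight{x,y}=\weight{x,z}=5$, and $\weight{y,z}=0$; this gives $\sBO{a,T}=25$, $\sBO{x,T}=\sBO{y,T}=10$, $\sBO{z,T}=20$, so $\BO(T)=\{a\}$ and $\fctwoutdeg(x)=\fctwoutdeg(y)$. A short bounded search over the three reversal variables (the integer program implicit in the proof of \Cref{thm:MoVBordaConstructive}) then yields $\MoVfuncExt{\BO}{x,T}=-9$ and $\MoVfuncExt{\BO}{y,T}=-10$, refuting equal-$w$-degree-consistency. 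Changing only $\weight{y,z}$ from $0$ to $1$ raises $\sBO{y,T'}$ to $11$ without altering the binding cap $\weight{a,y}=5$, and a parallel computation yields the two \MoV\ values $-8$ and $-9$, so $\fctwoutdeg(y)>\fctwoutdeg(x)$ while $\MoVfuncExt{\BO}{y,T'}<\MoVfuncExt{\BO}{x,T'}$, refuting $w$-degree-consistency.

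The only delicate step is the calibration: the weights must simultaneously make the cap $\weight{a,y}$ strictly binding in the minimization for $\MoVfuncExt{\BO}{y,T}$ while the analogous cap $\weight{a,x}$ stays slack, and must realize the desired ordering of Borda scores (equality in the first tournament, and the intended strict inequality in the second). Once such weights are pinned down, verifying the claimed \MoV\ values reduces to a finite, routine case analysis over the three reversal variables.
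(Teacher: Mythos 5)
Your construction is essentially the paper's own approach: the paper likewise refutes both properties with a small four-alternative tournament in which two non-winners have equal (resp.\ ordered) weighted out-degrees, but the weight available for reversal on the edge to the unique \BO\ winner differs, so that one alternative can close its score gap with efficient ``double-counting'' reversals while the other's budget is capped and its minimum wCRS is strictly larger; the paper merely presents the strict-inequality example first and perturbs one edge weight to get the equality case, whereas you do the reverse.

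One arithmetic slip to fix: with your weights, $\sBO{z,T}=\weight{z,a}+\weight{z,x}+\weight{z,y}=0+5+10=15$, not $20$ (indeed all Borda scores must sum to $n\cdot\binom{m}{2}=60$, and $25+10+10+20=65$ is impossible). Consequently $\MoVfuncExt{\BO}{x,T}=-8$ rather than $-9$: reversing $8$ weight from $a$ to $x$ already yields scores $18,17,15,10$, and since every unit of reversal changes $\sBO{x}-\sBO{a}$ by at most $2$, the gap of $15$ forces at least $\lceil 15/2\rceil=8$ reversals. This does not damage the refutation of \eqwDegCony, since $\MoVfuncExt{\BO}{x,T}=-8\neq-10=\MoVfuncExt{\BO}{y,T}$ (the cap $\weight{a,y}=5$ bounds the $2$-efficient reversals for $y$, giving the lower bound $5+(15-10)=10$, which is attained), and your values for $T'$ ($-8$ and $-9$) check out, so both counterexamples stand.
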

\begin{figure}[t]
	\centering
	\includegraphics[width=0.6\textwidth]{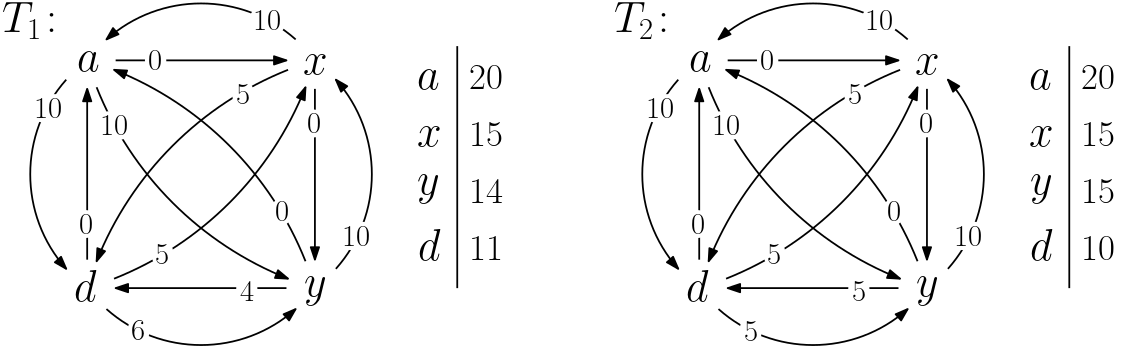}
	\caption{Illustration of the counterexamples in the proof of \Cref{prop:resultSR_degree_BO} for \BO. On the left for \wDegCony\ and on the right for \eqwDegCony.}
	\label{fig:SR_degree_BO}
\end{figure}
\begin{proof} \dividerNoLineEnd{\wDegCony}
	We construct a counterexample with four alternatives $x,y,a$ and $d$. See $T_1$ in \Cref{fig:SR_degree_BO} for an illustration. Alternative $a$ is the \BO\ winner with a Borda score of 20, $\sBO{x}=15$, $\sBO{y}=14$, and $\sBO{d}=11$. Thus,
	$\fctwoutdeg(x)=\sBO{x}=15>14=\sBO{y}=\fctwoutdeg(y)$,
	but we claim $\MoVfuncExt{\BO}{x}=-5\not>-4=\MoVfuncExt{\BO}{y}$.
	This follows, as $y$ can be made a \BO\ winner by reversing 4 weight from $a$ to $y$. 
	For $x$, however, $\weight{a,x}=0$, \ie reversing weight from $a$ to $x$, is not possible.
	Therefore, we cannot strengthen $x$ and weaken $a$ simultaneously, as we did with $y$. 
	We can only reverse weight from [$y$ or $d$] to $x$ or from $a$ to [$y$ or $d$].
	In any case, at least $\lvert \sBO{a}-\sBO{x}\rvert=5$ weight needs to be reversed.
    \divider{equal-w-degree-consistency}
    We adjust the counterexample by changing $\weight{x,d}$ to $5$. See $T_2$ in \Cref{fig:SR_degree_BO}.
    Alternative $a$ stays the \BO\ winner with a Borda score of 20, but now 
    $\fctwoutdeg(x)=\sBO{x}=15=\sBO{y}=\fctwoutdeg(y)$.
    We claim, $\MoVfuncExt{\BO}{x}=-5\neq-3=\MoVfuncExt{\BO}{y}$.
    The Borda score of $y$, and with it $\MoVfuncExt{\BO}{y}=-3$, stays the same compared to the original tournament.
    However, for $x$ we have to reverse $\lvert \sBO{a}-\sBO{x}\rvert=5$ weight from $a$ or to $x$.
\end{proof}

Observe that the ordering provided by the \MoV\ refinement is a bit odd in this case: 
An alternative has better chances of having a high \MoV\ if it won \textbf{not} against the Borda winner, but often enough against other alternatives.
This is counter-intuitive, especially as it is far easier to obtain using bribery and manipulation.
Observe that we even have $x\in\SC(T_1)$ and $y\notin\SC(T_1)$, even though $\MoV_{\BO}(x)<\MoV_{\BO}(y)$.

Next, we prove the result for $\MoV_\SC$.
Split Cycle is looking for Condorcet-like winners; Condorcet winners after removing the least deserved win in every majority cycle.
Therefore, it is not mainly important how many times an alternatives wins in total, respectively loses, but rather how many times it loses against a specific alternative.
Observe further that (many) incoming edges of comparably low weight are more likely to be deleted, than an incoming edge of high weight.
We can therefore create a counterexample, where an \SC\ non-winner has a higher out-degree than an \SC\ winner.
    \begin{proposition}\label{prop:resultSR_degree_SC}
        $\MoV_\SC$ does satisfy neither \wDegCony\ nor \eqwDegCony. This implies, it also does not satisfy \strowDegCony.
    \end{proposition}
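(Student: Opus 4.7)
My plan is to refute all three forms of weighted degree-consistency for \SC by small explicit counterexamples that exploit the key structural intuition already articulated before the statement: Split Cycle rewards Condorcet-like winners rather than high total weighted out-degree. Concretely, I will choose a tournament in which a Condorcet winner $y$ wins only narrowly against most alternatives, while a second alternative $x$ beats the remaining alternatives with the maximum margin but loses decisively to $y$. Then $\SC(T)=\{y\}$, so $\MoVfuncExt{\SC}{y,T}>0>\MoVfuncExt{\SC}{x,T}$, while $x$'s large victories over the other alternatives inflate its weighted out-degree above (or to the same value as) $y$'s.

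For w-degree-consistency, I would take $n=10$ and five alternatives $V=\{y,x,a,b,c\}$ with $\weight{y,x}=10$, $\weight{y,a}=\weight{y,b}=\weight{y,c}=6$, $\weight{x,a}=\weight{x,b}=\weight{x,c}=10$, and $\weight{a,b}=\weight{b,c}=\weight{c,a}=5$. Then every margin internal to $\{a,b,c\}$ is zero, so the margin graph contains no directed cycle, $y$ is a Condorcet winner, and $\SC(T)=\{y\}$. Computing, $\fctwoutdeg(x)=30>28=\fctwoutdeg(y)$, but $\MoVfuncExt{\SC}{x,T}<0<\MoVfuncExt{\SC}{y,T}$, contradicting w-degree-consistency.

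For equal-w-degree-consistency, I would modify the same construction by lowering $\weight{x,c}$ to $8$ (hence $\weight{c,x}=2$); all that changes is the magnitude of $x$'s margin over $c$, which is still positive, so the margin graph remains acyclic and $\SC(T)=\{y\}$ persists. Now $\fctwoutdeg(x)=0+10+10+8=28=\fctwoutdeg(y)$, yet the MoV values still have opposite sign. Since strong-w-degree-consistency is equivalent to the conjunction of w- and equal-w-degree-consistency, its failure follows immediately.

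The only non-trivial verification, and hence the main (minor) obstacle, is confirming that no unintended majority cycle emerges in either tournament that could make $x$ a Split Cycle winner or change the winning set; this is settled by observing that $y$ dominates every other alternative while $\{a,b,c\}$ have pairwise zero margin, so the margin graph is in fact a DAG in both examples, and \SC coincides with the set of undominated alternatives, namely $\{y\}$.
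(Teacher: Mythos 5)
Your proposal is correct and follows essentially the same route as the paper: both arguments build an acyclic margin graph in which $y$ is a Condorcet winner with narrow margins while $x$ loses only to $y$ but wins everything else decisively, so that $\SC(T)=\{y\}$ forces $\MoVfuncExt{\SC}{x,T}<0<\MoVfuncExt{\SC}{y,T}$ despite $\fctwoutdeg(x)\geq\fctwoutdeg(y)$, and both handle the equal-degree case by nudging a single weight. The paper's counterexample uses only three alternatives instead of your five, but the structural idea, the reduction of \SC to the undominated set via acyclicity, and the appeal to the stated equivalence for \strowDegCony\ are identical.
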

    \begin{figure}[t]
    \centering
        \includegraphics[width=0.4\columnwidth]{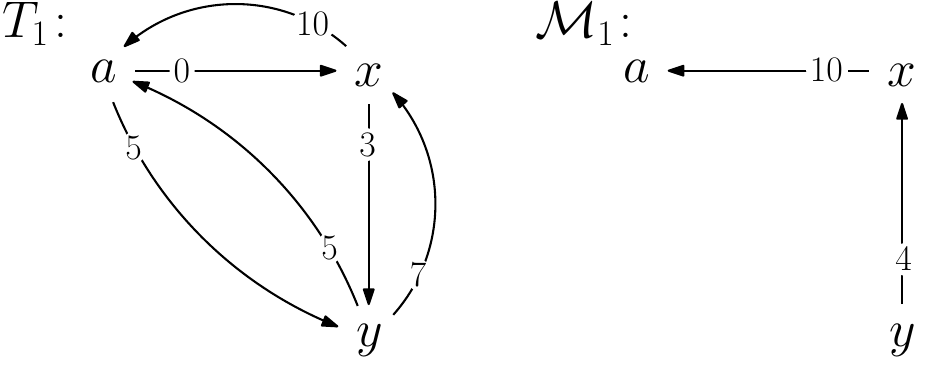}
        \hspace{3.5em}
        \includegraphics[width=0.4\columnwidth]{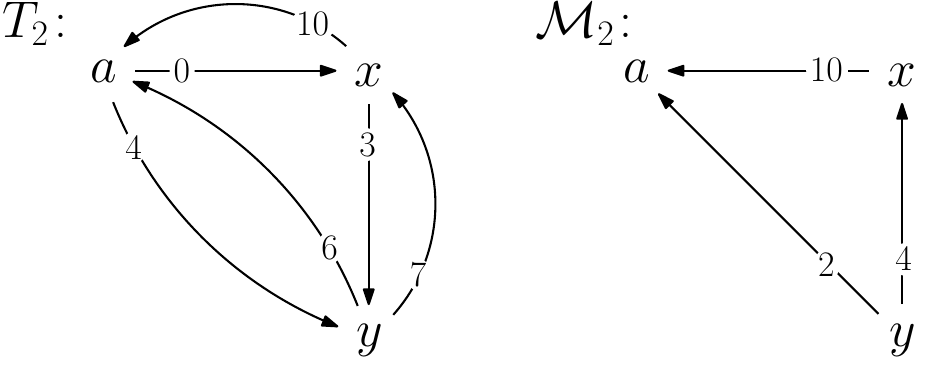}
        \caption{
            Illustration of the counterexamples in the proof of \Cref{prop:resultSR_degree_SC} for \SC. On the left for \wDegCony\ and on the right for \eqwDegCony.}
        \label{fig:SR_degree_SC}
    \end{figure}
    \begin{proof}
        \dividerNoLineEnd{\wDegCony}
        We construct a counterexample with three alternatives $x,y$ and $a$.
        See $T_1$ in \Cref{fig:SR_degree_SC} for an illustration. It holds
        $\fctwoutdeg(x)=13 > 12 = \fctwoutdeg(y)$,
        but we claim $\MoVfunc{x}<0<\MoVfunc{y}$.
        Alternative $y$ is undominated in $\margingraph$, and since there are no cycles, $\margingraph=\deletemargingraph$.
        Thus, $y\in SC(T)$ and $\MoVfunc{y}>0$.
        Alternative $x$ is dominated by $y$ in $\deletemargingraph$, thus $x\not\in\SC(T)$ and $\MoVfunc{x}<0$.
        \divider{equal-w-degree-consistency} We slightly adjust the counterexample by changing $\weight{y,a}=5$ to $\weight{y,a}=6$.
        See $T_2$ in \Cref{fig:SR_degree_SC}.
        Now, 
        $\fctwoutdeg(x)=13=\fctwoutdeg(y)$,
        but we claim $\MoVfunc{x}<0<\MoVfunc{y}$.
        This follows immediately, as $y$ is still undominated in $\deletemargingraph$, while still dominating $x$, \ie $y\in\SC(T)$ and $x\not\in\SC(T)$.
    \end{proof}
    Observe, that we could make the distance $\fctwoutdeg(x)-\fctwoutdeg(y)$ in the proof for \wDegCony\ arbitrarily large, while maintaining $\MoVfunc{x}<0<\MoVfunc{y}$, by making $n$ arbitrary large and/or adding arbitrarily many more alternatives like $a$.
    
    Lastly, we prove the result for $\MoV_\wUC$. Similar to the proof for $\MoV_\SC$ it is possible to find counterexamples, where the weighted outdegree of a \wUC\ non-winner is higher than that of a \wUC\ winner, which contradicts \wDegCony.
    \begin{proposition}\label{prop:resultSR_degree_wUC}
        $\MoV_\wUC$ does satisfy neither\wDegCony\ nor \eqwDegCony. This implies, it also does not satisfy \strowDegCony.
    \end{proposition}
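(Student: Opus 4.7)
The plan is to exhibit two small counterexamples, mirroring the style of \Cref{prop:resultSR_degree_BO} and \Cref{prop:resultSR_degree_SC}: one $n$-weighted tournament witnessing the failure of \wDegCony, and a slight perturbation of it witnessing the failure of \eqwDegCony. The failure of \strowDegCony\ then follows automatically, since the strong notion implies both weaker ones.

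For \wDegCony\ I would construct a tournament on four or five alternatives containing two designated alternatives $x,y$ with $\fctwoutdeg(x) > \fctwoutdeg(y)$, yet $x \notin \wUC(T)$ (hence $\MoVfuncExt{\wUC}{x,T} < 0$) and $y \in \wUC(T)$ (hence $\MoVfuncExt{\wUC}{y,T} > 0$). The construction employs a third alternative $a$ that weighted-covers $x$, which forces $x$ out of the winning set by \Cref{def:wUC}. To push $\fctwoutdeg(x)$ above $\fctwoutdeg(y)$, I let $x$ beat several ``filler'' alternatives with large weight, while $a$, being the coverer, is forced to beat those fillers with at least the same weight. Meanwhile $y$ is given only moderate wins, but its weights are chosen so that no alternative dominates $y$ coordinate-wise: making $y$ tie with both $a$ and $x$ on their direct edges removes those two as potential coverers, and a single deliberate win for $y$ against one filler, or a sufficient draw, prevents any filler from covering $y$ either.

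For \eqwDegCony\ I would perturb one or two weights in the above tournament so that $\fctwoutdeg(x) = \fctwoutdeg(y)$ without disturbing the covering structure ($a$ still covers $x$, no one still covers $y$). The sign pattern of the two $\MoV$ values then remains as before, and the counterexample transfers immediately.

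The main obstacle is juggling three simultaneous constraints: (i) $a$ weighted-covers $x$, which forces $a$ to dominate $x$ on every coordinate and thus also $\fctwoutdeg(a) > \fctwoutdeg(x)$, so $x$ cannot be chosen as a high-outdegree alternative globally, only relative to $y$; (ii) no alternative weighted-covers $y$, which forbids any coordinate-wise domination of $y$; and (iii) the required comparison of $\fctwoutdeg(x)$ and $\fctwoutdeg(y)$. With $n = 10$ and four to five alternatives there is ample slack to satisfy all three simultaneously, and the verification reduces to a direct check of the weighted covering condition from \Cref{def:wUC} on each pair of alternatives.
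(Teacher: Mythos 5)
Your proposal is correct and follows essentially the same route as the paper: the paper's proof also exhibits a four-alternative tournament $T_1$ (with alternatives $x,y,a,d$) in which $a$ weighted-covers $x$ so that $x\notin\wUC(T_1)$ while $y$ remains uncovered, giving $\fctwoutdeg(x)=9>6=\fctwoutdeg(y)$ yet $\MoVfunc{x}<0<\MoVfunc{y}$, and then perturbs a single weight ($\weight{y,d}$ from $0$ to $3$) to equalize the weighted outdegrees without disturbing the covering structure. Your sign-of-\MoV\ argument and the observation that \strowDegCony\ fails as a consequence match the paper exactly.
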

    \begin{figure}[t]
        \centering
        \includegraphics[width=0.45\columnwidth]{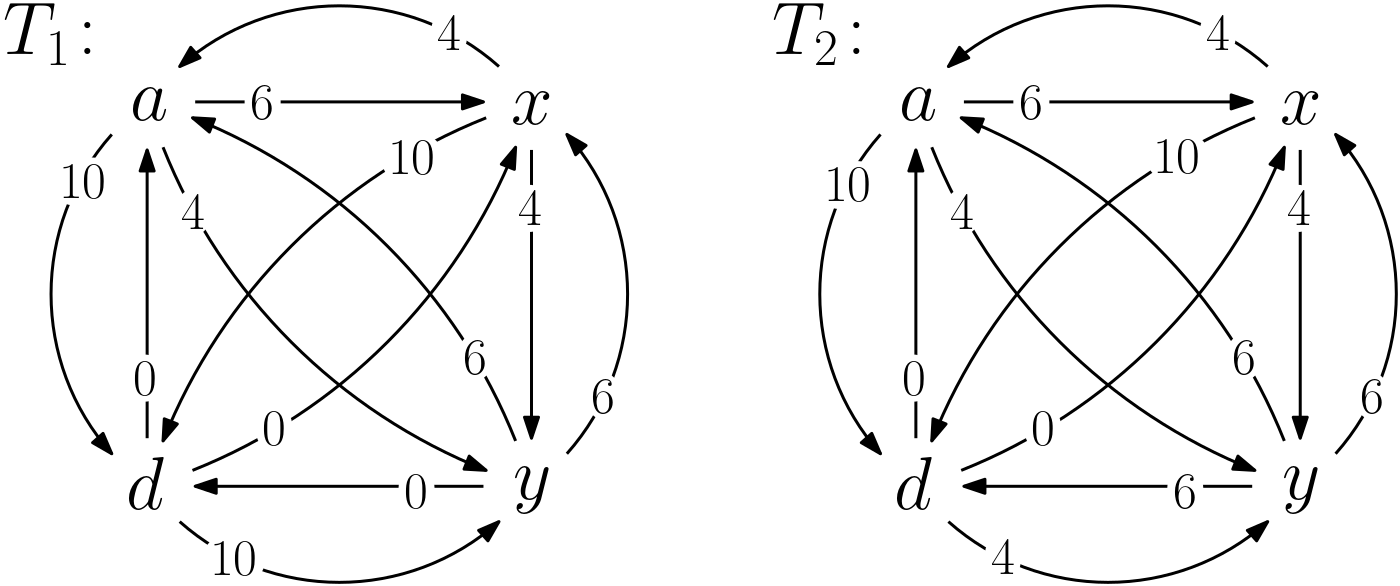}
        \caption{Illustration of the counterexamples used in the proof of \Cref{prop:resultSR_degree_wUC} for \wUC. On the left for \wDegCony\ and on the right for \eqwDegCony.}
        \label{fig:SR_degree_wUC}
    \end{figure}
    \begin{proof}
        \dividerNoLineEnd{w-degree-consistency}
        We construct a counterexample with four alternatives $x,y,a$ and $d$.
        See $T_1$ in \Cref{fig:SR_degree_wUC} for an illustration. It holds,
        $\fctwoutdeg(x)=9>6=\fctwoutdeg(y)$,
        but we claim $\MoVfunc{x}<0<\MoVfunc{y}$.
        Alternative $x$ is covered by $a$, because $\margin{a,x}>0$, $\weight{a,y}=2\geq2=\weight{x,y}$, and $\weight{a,d}=5\geq5=\weight{y,d}$.
        Therefore, $x\not\in\wUC(T)$.
        Alternative $y$ is not covered by any alternative, \ie $y\in\wUC(T)$.
        \divider{equal-w-degree-consistency} We slightly adjust the counterexample by changing $\weight{y,d}=0$ to $\weight{y,d}=3$.
        See $T_2$ in \Cref{fig:SR_degree_wUC}.
        Now,
        $\fctwoutdeg(x)=9=\fctwoutdeg(y)$,
        but we claim $\MoVfunc{x}<0<\MoVfunc{y}$.
        This follows immediately as $x$ is still covered by $a$ and $y$ is still uncovered, \ie $x\notin\wUC(T)$ and $y\in\wUC(T)$.
    \end{proof}

    \section{Bounds on the \titleMoV} \label{ch:Bounds}
    At last, we give bounds on the \MoV, \ie upper bounds for winners and lower bounds for non-winners. These bounds give context to extreme \MoV values and allow us to compare the innate robustness of tournament solutions. To avoid case distinctions, we assume $m > 2$. Otherwise, $\lceil \frac{n}{2}\rceil$ reversals are sufficient and necessary for all three studied tournament solutions.
    
    Observe the following worst case property implied by monotonicity.
    \begin{observation*}
        Let $S$ be a monotonic tournament solutions.
        The \MoV\ value of any winning alternative of some $n$-weighted tournament with $m$ alternatives is at most as high as the \MoV\ value of a Condorcet winner of any $n$-weighted tournament with $m$ alternatives. Equivalently, the \MoV\ of any non-winner is at least as high as the \MoV\ of a Condorcet loser.\end{observation*}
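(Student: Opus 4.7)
The plan is to derive both halves of the observation directly from \Cref{prop:resultMonotonicityMoV}, the monotonicity of MoV for monotonic tournament solutions, applied iteratively. For the winner half, I start from an arbitrary $n$-weighted tournament $T=(V,w)$ on $m$ alternatives with an arbitrary winner $a \in S(T)$, and construct a tournament $T^\ast$ on the same vertex set in which $a$ is a Condorcet winner, satisfying $\MoVfuncExt{S}{a,T} \leq \MoVfuncExt{S}{a,T^\ast}$.

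Concretely, I obtain $T^\ast$ from $T$ by a sequence of unit reinforcements of $a$: each step is a reversal function $Q$ with $Q(a,z)=1$ (and consequently $Q(z,a)=-1$), performed for each $z \in V\setminus\{a\}$ until $w^\ast(a,z)=n$. After all steps, $a$ dominates every other alternative with margin $n$, so $a$ is a Condorcet winner of $T^\ast$. By \Cref{prop:resultMonotonicityMoV}, each single unit reinforcement weakly increases $\MoVfuncExt{S}{a,\cdot}$; chaining these inequalities yields $\MoVfuncExt{S}{a,T} \leq \MoVfuncExt{S}{a,T^\ast}$, which proves the winner statement. Since the right-hand side is the MoV of a Condorcet winner of some $n$-weighted tournament on $m$ alternatives, the desired upper bound follows.

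The non-winner half proceeds symmetrically. For any non-winner $d \notin S(T)$, I define $T^\ast$ by setting $w^\ast(d,z)=0$ and $w^\ast(z,d)=n$ for every $z \neq d$, which makes $d$ a Condorcet loser in $T^\ast$. Moving from $T^\ast$ back up to $T$ via unit reinforcements of $d$, \Cref{prop:resultMonotonicityMoV} tells us $\MoVfuncExt{S}{d,\cdot}$ weakly increases at each step, hence $\MoVfuncExt{S}{d,T^\ast} \leq \MoVfuncExt{S}{d,T}$. Crucially, \Cref{prop:resultMonotonicityMoV} is stated and proved for arbitrary alternatives (winners and non-winners alike, as visible in Case 2 of its proof), so no separate verification that $d$ remains outside the winning set along the path is needed.

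No significant obstacle is expected; the whole argument is a straightforward chaining of a single already-established monotonicity inequality. The only subtlety worth flagging is the sign convention: since the MoV of a non-winner is non-positive, \emph{at least as high} in the non-winner case means \emph{less negative than the Condorcet loser's MoV}, which is exactly what the chain of inequalities above delivers.
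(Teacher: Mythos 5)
Your proof is correct and takes essentially the same route as the paper's: the paper's one-line argument is precisely that, by monotonicity of the \MoV\ (\Cref{prop:resultMonotonicityMoV}), each unit reinforcement of a winner weakly increases its \MoV\ (equivalently, each unit weakening of a non-winner weakly decreases it), which you simply chain explicitly until the alternative becomes a Condorcet winner (resp.\ Condorcet loser). Your added remark that Case~2 of \Cref{prop:resultMonotonicityMoV} covers non-winners, so winner status need not be tracked along the chain, is a valid and slightly more careful spelling-out of the same idea.
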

    \begin{proof}
        Assume there was a winning alternative $a$ with $\margin{a,x}\leq n$ for some alternative $x$. Then, by monotonicity, reversing weight from $x$ to $a$ does not decrease the \MoV\ of $a$. Equivalently for any non-winner $d$, reversing weight from $d$ to some alternative $x$ cannot increase the \MoV\ value.
    \end{proof}
    Therefore, we assume winners to be Condorcet winners and non-winners to be Condorcet losers without loss of generality.
    We fic the following notation. Let $T=(V,w)$.
    For $x\in V(T)$, let $T_{-x}$ be the subtournament of $T$ obtained by deleting the alternative $x$ and all its relations from the tournament.
\newcommand{\sboBBB}{\ensuremath{\left\lceil\frac{n}{2}\right\rceil\cdot\left\lceil\frac{m-2}{2}\right\rceil + \left\lfloor\frac{n}{2}\right\rfloor\cdot\left\lfloor\frac{m-2}{2}\right\rfloor}}
\newcommand{\sboBBBshort}{\ensuremath{\left\lceil \frac{n\cdot(m-2)}{2}\right\rceil}}
\newcommand{\sboBBBshortinline}{\ensuremath{\lceil \frac{n\cdot(m-2)}{2}\rceil}}
\newcommand{\sboBBBshortsmall}{\ensuremath{\lceil \frac{n\cdot(m-2)}{2}\rceil}}
For \BO, we show that both the upper and lower bound are in the order of $n \cdot m$. 
Both values can therefore get arbitrarily large with increasing number of voters or alternatives.
\begin{restatable}{theorem}{bordabounddest}\label{thm:result_BoundsBordaDestructive}
    For any $n$-weighted tournament $T$ with $m$ alternatives, \BO winner $a\in \BO(T)$ and non-winner $d \in V\setminus\BO(T)$, we have
    \[-(n\cdot(m-2)) \le \MoVfunc{d,T} \le \MoVfunc{a,T}\leq \left\lfloor \frac{n\cdot(m-2)}{2}\right\rfloor + 1.\]
    Moreover, both bounds are tight.
\end{restatable}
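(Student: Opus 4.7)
The plan is to prove the three parts of the inequality separately: the upper bound on $\MoVfunc{a,T}$, the lower bound on $\MoVfunc{d,T}$, and then exhibit matching tight examples. The middle inequality $\MoVfunc{d,T} \le \MoVfunc{a,T}$ follows directly from the sign convention in \Cref{def:MoV} (winners have nonnegative MoV, non-winners nonpositive), so the work is in the two extreme inequalities.

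For the upper bound on $\MoVfunc{a,T}$, I would first use the monotonicity observation stated just before the theorem to reduce to the worst case where $a$ is a Condorcet winner with $\sBO{a} = n(m-1)$. Since the sum of all Borda scores in any $n$-weighted tournament is $\tfrac{n\cdot m(m-1)}{2}$, the remaining $m-1$ alternatives share a total score of $\tfrac{n(m-1)(m-2)}{2}$, so by averaging there is some $b \neq a$ with $\sBO{b} \ge \tfrac{n(m-2)}{2}$, yielding a Borda gap $G \le \tfrac{nm}{2}$. Feeding this $b$ into the greedy algorithm from \Cref{thm:MoVBordaDestructive}: since $\weight{a,b} = n$ by the Condorcet assumption, one can reverse up to $n$ units along $(a,b)$, each reducing the gap by $2$, and then close the residual gap by single-effect reversals on edges adjacent to $a$ or $b$. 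The total reversed weight is at most $n + (G + 1 - 2n) = G + 1 - n \le \tfrac{n(m-2)}{2} + 1$.

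For the lower bound on $\MoVfunc{d,T}$, I would again invoke monotonicity to assume $d$ is a Condorcet loser, so $\sBO{d} = 0$. The construction of a wCRS of size at most $n(m-2)$ proceeds by sorting the alternatives $a_1, \ldots, a_{m-1}$ in $V \setminus \{d\}$ by decreasing Borda score and fully reversing the $m-2$ edges $(a_i, d)$ for $i = 1, \ldots, m-2$, at total cost at most $(m-2)n$. After this reversal, $\sBO{d} \ge (m-2)n$, each $\sBO{a_i}$ with $i \le m-2$ is reduced by $n$ to at most $(m-2)n$, and the only score left untouched is $\sBO{a_{m-1}}$. The averaging bound $\sBO{a_{m-1}} \le \tfrac{nm}{2}$ (the minimum over $m-1$ non-$d$ alternatives summing to $\tfrac{nm(m-1)}{2}$) then gives $\sBO{d} \ge \sBO{a_{m-1}}$ when $m \ge 4$; the small-$m$ configuration is handled by a direct LP balancing argument on the two active constraints $R + r_a \ge \sBO{a}$ and $R + r_b \ge \sBO{b}$, whose sum $3R \ge \sBO{a} + \sBO{b} = 3n$ forces a feasible $R \le n = n(m-2)$.

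For tightness, the upper bound is witnessed by a tournament where $a$ is Condorcet and the sub-tournament on $V \setminus \{a\}$ is balanced so that each non-$a$ alternative has Borda score exactly $\tfrac{n(m-2)}{2}$ (for instance by a regular pattern when $n(m-2)$ is even). The lower bound is witnessed by the strict linear order $a_1 \succ a_2 \succ \cdots \succ a_{m-1} \succ d$ with every pairwise weight equal to $n$, where the per-edge cap of $n$ together with $\sBO{a_1} = n(m-1)$ forces at least $m-2$ edges incident to $d$ to be fully reversed. The main obstacle I anticipate is the lower-bound analysis: the full-reversal strategy is clean for $m \ge 4$, but verifying that cost $\le n(m-2)$ is always achievable in the remaining small-$m$ configurations requires the LP/water-filling argument sketched above, which I expect to be the subtlest step of the proof.
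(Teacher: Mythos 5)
Your upper-bound argument is essentially the paper's own: both reduce to a Condorcet winner via the monotonicity observation, locate a competitor $b$ with $\sBO{b}\geq\lceil n(m-2)/2\rceil$ by averaging over $T_{-a}$, feed $b$ into the greedy of \Cref{thm:MoVBordaDestructive}, and tally $n+(G+1-2n)=G+1-n$. Where you genuinely diverge is the constructive (lower) bound. The paper reverses the full $n$ on $(a,d)$ and then spreads up to $n(m-3)$ further units toward $d$ arbitrarily; you instead fully reverse the $m-2$ incoming edges from the highest-scoring opponents and certify $d$'s victory by bounding the one untouched alternative through averaging ($\sBO{a_{m-1}}\leq nm/2\leq n(m-2)$ for $m\geq4$), with a separate balancing argument for $m=3$. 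Your route is more careful precisely where the paper is loose: the paper never checks that alternatives other than the top-scoring one end up below $d$, and for $m=3$ its budget of $n(m-3)=0$ extra reversals visibly fails to deal with the third alternative — exactly the situation your split $r_a=t$, $r_b=n-t$ (summing to $3R\geq\sBO{a}+\sBO{b}=3n$, so $R=n$ suffices and is necessary) repairs. The tightness witnesses coincide up to presentation (balanced sub-tournament under a Condorcet winner; Condorcet winner plus Condorcet loser with $n$-weight edges), and like the paper you only exhibit the matching reversal set rather than spelling out the optimality argument (each reversed unit closes the $a$--$d$ score gap by at most $2$, and only on the capacity-$n$ edge $(a,d)$), which is easy but worth one sentence.

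One caveat you share with the paper rather than introduce yourself: the bookkeeping $n+(G+1-2n)$ silently assumes $G+1\geq 2n$, which holds for $m\geq4$ but fails for $m=3$, where the residual is negative and the true cost of closing the gap using only the edge $(a,b)$ is $\lceil (G+1)/2\rceil$. Concretely, for $m=3$, $n=10$, with $a$ a Condorcet winner and $\weight{b,c}=\weight{c,b}=5$, the gap is $G=15$ and the minimum wDRS has size $8$, exceeding the stated bound $\lfloor n/2\rfloor+1=6$; indeed this is exactly the paper's own tightness construction, so the destructive bound as printed breaks at $m=3$. Since your proposal reproduces the paper's arithmetic at this step, this is a defect of the theorem statement and both proofs, not a divergence of your attempt from the paper, but it is the one place where your anticipated "subtlest step" was mislocated: the small-$m$ trouble sits in the upper bound, not only in the lower bound you patched.
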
 
\begin{proof}
    Let $T=(V,w)$ be an $n$-weighted tournament with $m>2$ alternatives and Borda winner $a\in \BO(T)$.
    Without loss of generality, we assume $a$ to be a Condorcet winner, \ie $\sBO{a,T}=n\cdot(m-1)$.
	
    The \MoV\ of $a$ is bounded by the distance of its Borda score to the Borda score of the other alternatives:
    For $a$ to drop out of the winning set, we need to reverse weight until there is an alternative $b$ with a higher Borda score, $\sBO{a,T^\wrevfunc}<\sBO{b,T^\wrevfunc}$.
    This implies that
    $\MoV(a,T)\leq\min\{\lvert\sBO{a,T}-\sBO{x,T}\rvert \colon x\in V(T)\setminus\{a\}\}$.
    Let $b$ be a Borda winner in $T_{-a}$.
    Since $T_{-a}$ consists of $m-1$ alternatives, $b$ has a Borda score of at least \sboBBBshortsmall (refer to \Cref{lem:BordaScoreMinimum}).
    We claim, achieving $\sBO{a,T^\wrevfunc}<\sBO{b,T^\wrevfunc}$ requires at most $\lfloor \frac{n\cdot(m-2)}{2}\rfloor + 1$ reversals.    
    First, we reverse all weight between $a$ and $b$.
    This reduces the distance between their Borda scores by $2n$.
    Next, we reverse 
    \begin{align*}
        \sBO{a,T}-\sBO{b,T}-2n  & = n\cdot(m-1) - \sboBBBshort -2n 
        = \left\lfloor \frac{n\cdot(m-2)}{2}\right\rfloor -n,
    \end{align*}
    from $a$ to arbitrary other alternatives, achieve $\sBO{a,T^R}=\sBO{b,T^R}$, and after one last reversal we get $\sBO{a,T^R}<\sBO{b,T^R}$.
	In total we reversed
	\begin{align*}
		n+ \left\lfloor \frac{n\cdot(m-2)}{2}\right\rfloor -n + 1
		= \left\lfloor \frac{n\cdot(m-2)}{2}\right\rfloor + 1.
	\end{align*}
	
	To show tightness, consider an $n$-weighted tournament $T=(V,w)$ on $m$ alternatives with Condorcet winner $a$.
	For brevity, we show the construction only for the case where $n$ is even.
	Let $\vminus{a}=V\setminus\{a\}$.
	As $a$ is the Condorcet winner in $T$, $\weight{a,x}=n$ for all $x\in \vminus{a}$.
	For all other alternatives $x,y\in\vminus{a}$, we define the weight function by
	$\weight{x,y}=\frac{n}{2}$.
	Thus, $\sBO{x,T}=\frac{n\cdot(m-2)}{2}$.
	The Greedy algorithm of \Cref{thm:MoVBordaDestructive} iterates over all alternatives $b\in\vminus{a}$ and computes the size of a minimum wCRS for making $b$ a Borda winner instead of $a$.
	As every alternative has Borda score $\sboBBBshortsmall=\frac{n\cdot(m-2)}{2}$, the claim follows by the argument provided in the proof. 	
	
	For the other direction, we consider a Borda non-winner $d\in  V\setminus\BO(T)$.
	Without loss of generality, we assume $d$ to be a Condorcet loser, \ie $\sBO{d,T}=0$.

	The \MoV of $d$ is bounded by the distance of its Borda score to the Borda score of the winning alternative:
	For $d$ to get into the winning set, we need to reverse weight until the Borda score of $d$ is higher than the Borda score of all other alternatives.
	Since $T$ consists of $m$ alternatives, $a$ has a Borda score of at most $n\cdot(m-1)$.
	By reversing the possible $n$ weight between $d$ and $a$, we reduce the distance between their Borda scores by $2n$.
	We still need to reverse at most $
		\sBO{a,T}-\sBO{d,T}-2n \leq n\cdot (m-3)
	$ weight.
	Therefore, after a reversal of $n\cdot (m-3) + n$ weight, $\sBO{d,T^R}=\sBO{a,T^R}$, and ${d\in S(\BO(T^R))}$.
	Therefore, the \MoV\ of any Borda winner is bounded by
	\[n\cdot (m-3) + n    =   n\cdot (m-2).\]	
 
	To show tightness, consider an $n$-weighted tournament on $m$ alternatives with Condorcet winner $a$ and Condorcet loser $d$.
	The weight of all other alternatives can be arbitrary.
	Observe that $a$ is the only Borda winner, $\sBO{a,T}=n\cdot(m-1)$, $\sBO{d,T}=0$, and $\max\{\sBO{x,T}\colon x\in V(T)\setminus\{a,d\}\} \leq n\cdot (m-2)$.
	After reversing $n$ weight from $a$ to $d$, $a$ is still a Borda winner with $\sBO{a,T^\wrevfunc}=n\cdot(m-2)$.
	Now, we reverse $n\cdot(m-3)$ weight from any arbitrary alternatives to $d$.
	This results in $\sBO{d,T^\wrevfunc}=n+n\cdot(m-3) = n\cdot (m-2)$ and $d\in\BO(T^\wrevfunc)$.
	This adds to a total reversal of weight
	$n + n\cdot(m-3) = n\cdot(m-2)$,
	which proves the claim.
\end{proof}

The upper and lower bounds for the \MoV\ of an \SC\ winner are also polynomial in $n$ and $m$, although they do not grow as fast as the upper bound for \BO.
\begin{restatable}{theorem}{scbounddest}\label{thm:result_BoundsSplitCycleDestructive}
    For any $n$-weighted tournament $T$ with $m$ alternatives, \SC winner $a\in \SC(T)$ and non-winner $d\in V\setminus\SC(T)$, we have \[- \boundSCCon \le \MoVfunc{d,T} \le\MoVfunc{a,T}\leq n + \boundSC.\]
    Moreover, both bounds are tight.
\end{restatable}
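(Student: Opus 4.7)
The plan is to establish each bound via an explicit reversal construction, and then witness tightness with extremal tournaments. Throughout, I would rely on the monotonicity of \SC (\Cref{prop:resultMonotonicity,prop:resultMonotonicityMoV}) to reduce to the worst-case setting, assuming that $a$ is a Condorcet winner for the upper bound and that $d$ is a Condorcet loser for the lower bound.

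For the upper bound $\MoVfunc{a,T}\leq n+\lceil(m-2)/2\rceil$, after reducing to the Condorcet-winner case (so $\weight{a,x}=n$ for all $x\neq a$), I would choose a target alternative $d\neq a$ of minimum in-degree in the margin graph $\margingraph$ restricted to $V\setminus\{a\}$. Since the average in-degree in the $(m-1)$-vertex subtournament is $(m-2)/2$, some $d$ admits in-degree $r\leq\lceil(m-2)/2\rceil$; let $N^-(d)=\{z_1,\ldots,z_r\}$ be its predecessors. Define the reversal function $\wrevfunc$ that fully reverses $(a,d)$ (setting $\wrev{d,a}=n$ so that $\marginAfterRev{d,a}=n$) and reverses one unit of weight on each edge $(z_i,d)$, reducing each such margin to at most $n-2$. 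Every $a$-$d$-path in $\margingraph^\wrevfunc$ must enter $d$ through some $z_i$ (other vertices have no positive-margin edge to $d$), so every such path has strength at most $n-2<n=\marginAfterRev{d,a}$. By \Cref{lem:strongestpath}, the edge $(d,a)$ therefore survives in $\deletemargingraphRevAlt{\wrevfunc}$, so $a$ is dominated and hence $a\notin\SC(T^\wrevfunc)$. The total reversal size is $n+r\leq n+\lceil(m-2)/2\rceil$.

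For the lower bound $\MoVfunc{d,T}\geq-\lceil n/2\rceil\cdot(m-1)$, after reducing to the Condorcet-loser case, I would employ a flattening strategy: reverse $\lceil n/2\rceil$ weight on each of the $m-1$ edges $(x,d)$. This yields $\marginAfterRev{x,d}\leq 0$ for every $x\neq d$, removing every incoming edge of $d$ from $\margingraph^\wrevfunc$; vacuously, $d$ is undominated in $\deletemargingraphRevAlt{\wrevfunc}$ and so $d\in\SC(T^\wrevfunc)$, giving a wCRS of size exactly $\lceil n/2\rceil\cdot(m-1)$.

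For tightness I plan to exhibit explicit extremal tournaments. The upper bound is tight in a tournament where $a$ is a Condorcet winner and the subtournament on $V\setminus\{a\}$ is a regular or near-regular tournament with all margins equal to $n$, forcing minimum in-degree $\lceil(m-2)/2\rceil$; proving optimality of the construction requires analyzing how any partial reversal of size $k$ on $(a,d)$ combined with reversals $\ell_i$ on predecessor edges must satisfy $k+\ell_i>n$ for each $i$, forcing total weight at least $n+r$. The lower bound is tight for Condorcet-loser tournaments whose subtournament structure prevents cheap cycle-based deletion of incoming edges of $d$. The main obstacle I anticipate is the upper-bound tightness argument: ruling out creative hybrid strategies that combine partial reversals across many edges simultaneously requires a careful case analysis of the feasibility constraints imposed by \Cref{lem:strongestpath}.
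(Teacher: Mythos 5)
Your two constructive bounds are correct and essentially identical to the paper's own proof: the same monotonicity-based reduction to a Condorcet winner (resp.\ Condorcet loser), the same choice of a target $b$ in $T_{-a}$ with at most $\left\lceil\frac{m-2}{2}\right\rceil$ positive-margin in-edges, the same reversal set of size $n+\left\lceil\frac{m-2}{2}\right\rceil$ (full reversal of the $a$-to-$b$ edge plus one unit on each in-edge of $b$), and the same flattening of all $m-1$ in-edges of the Condorcet loser at cost $\left\lceil\frac{n}{2}\right\rceil$ each. Your verification via \Cref{lem:strongestpath} (every $a$-$b$-path ends in an in-edge of $b$ of margin at most $n-2$, hence has strength below $n$) is, if anything, slightly cleaner than the paper's equivalent phrasing that every cycle through the new edge $(b,a)$ contains an in-edge of $b$ and so has splitting number at most $n-2$.

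The genuine gap is tightness, which the theorem asserts for both bounds and which your proposal does not establish. On the upper side you name the paper's extremal tournament (Condorcet winner $a$; the alternatives of $\vminus{a}$ arranged on a circle, each beating the next $\frac{m-2}{2}$ with weight $n$) and the correct constraint skeleton $k+\ell_i> n$, but you stop exactly where the proof has to finish. The missing step is short: by symmetry fix the target $b$; every $a$-$b$-path ends in one of the $r=\frac{m-2}{2}$ in-edges of $b$, and blocking a path at its first edge $(a,z_i)$ yields the \emph{same} constraint $k+j_i\geq n+1$, so each predecessor $z_i$ forces a cost of at least $n+1-k$ somewhere; minimizing $k+r\,(n+1-k)$ over $k\le n$ is decreasing in $k$ for $r\geq 1$, so the optimum is $k=n$ with total $n+r$, matching the construction. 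This disposes of the ``hybrid strategies'' you flag as an obstacle, and it is, in compressed form, exactly the counting the paper performs.

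On the lower side the gap is more serious: ``tournaments whose subtournament structure prevents cheap cycle-based deletion'' is not a construction. The paper's witness is the Condorcet loser $d$ together with $\weight{x,y}=\frac{n}{2}$ for all $x,y\in\vminus{d}$ (for $n$ even), so that the margin graph on $\vminus{d}$ is empty and $T$ contains no dominance cycles at all. Each of the $m-1$ in-edges of $d$ then has margin $n$ and can be neutralized only by reversing $\frac{n}{2}$ weight on it directly, while the alternative---having it deleted as a splitting edge---first requires \emph{creating} cycles through $d$, i.e., creating an out-edge of $d$ (margin $-n$, costing more than $\frac{n}{2}$) and raising internal margins from $0$ by at least $\left\lceil\frac{n}{2}\right\rceil$ on $m-2$ further edges, which is no cheaper; either way the total reversal is at least $\boundSCCon$. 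Without these two closing arguments your proposal proves the displayed inequalities but not the theorem as stated.
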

\begin{proof}
    Let $T=(V,w)$ be an $n$-weighted tournament with $m>2$ alternatives and $a\in \SC(T)$, \ie $a$ has no incoming edge in $\deletemargingraph$.
    Without loss of generality, we assume $a$ to be a Condorcet winner.
    
    Let $b$ be the alternative in the subtournament $T_{-a}$ with at most $\left\lceil\frac{m-2}{2}\right\rceil$ incoming edges with positive margin.
    If all alternatives in $T_{-a}$ had more incoming edges with positive margin, $T_{-a}$ would not be a tournament.
    We show that with a budget of $n + \left\lceil\frac{m-2}{2}\right\rceil$ reversals we can reverse the edge from $a$ to $b$, and ensure that it is not deleted as the splitting edge of any cycle.
    Then, $(b,a)\in E(\deletemargingraphRevAlt{R})$ and $a\notin \SC(T^R)$.
    First. we reverse $n$ weight from $a$ towards $b$. This leads to $(b,a)\in E(\margingraph)$ with $\marginAfterRev{b,a}=n$.
    We have to ensure that any cycle containing the edge $(b,a)$ contains an edge with margin smaller than $n$.
    With a leftover budget of $\left\lceil\frac{m-2}{2}\right\rceil$ we can decrease the weight of every incoming edge of $b$ by 1, leaving the margin of these edges at at most $n-2$.
	As every cycle containing the edge $(b,a)$ has to contain an incoming edge of $b$, the splitting number of every such cycle is at most $n-2$, and $(b,a)$ is not in the set of splitting edges.
	Therefore, we achieved $(b,a)\in E(\deletemargingraphRevAlt{R})$ and with it $a\notin \SC(T^R)$ with a reversal of at most \[n+\left\lceil\frac{m-2}{2}\right\rceil.\]
	
	To show tightness, consider an $n$-weighted tournament $T=(V,w)$ on $m$ alternatives with Condorcet winner $a$.
	For brevity, we show the construction only for the case where $m$ is even.
	Let $\vminus{a}=V\setminus\{a\}=\{x_1,\dots,x_{m-1}\}$.
	As $a$ is the Condorcet winner in $T$, we have $\weight{a,x_i}=n$ for all $x_i\in \vminus{a}$.
	We arrange all other alternatives of \vminus{a} on a circle in clockwise order.
	Every alternative $x_i$ in \vminus{a} has weight $n$ over the next $\frac{m-2}{2}$ alternatives on the cycle.
    We claim, taking $a$ out of the winning set requires a reversal of at least $n + \lceil \frac{m-2}{2}\rceil=n + \frac{m-2}{2}$ weight.
    For $a$ to drop out of the winning set, it needs to be dominated by at least one alternative $b\in\vminus{a}$ in the margin graph after deletion: $(b,a)\in E(\deletemargingraphRevAlt{R})$.
    As~all alternatives in \vminus{a} are structurally equivalent, it does not matter which we choose.
    For $(b,a)$ to be in the margin graph, we need to reverse at least $\lfloor \frac{n}{2}\rfloor +1$ weight from $a$ to $b$.
    Observe that this edge is part of $\frac{m-2}{2}$ cycles: one for each incoming edge of $b$.
    The margin of all edges on these cycles, despite $(b,a)$, is $n$.
    The edge $(b,a)$ is going to be the splitting edge of a cycle, until its margin is higher than the margin of at least one edge of each cycle.
    Thus, reversing $\lfloor \frac{n}{2}\rfloor +1$ weight from $a$ to $b$ is not sufficient, as we would have to lower one edge per cycle by $\lceil \frac{n}{2}\rceil -1$.
    This~would result in a reversal of $\lfloor \frac{n}{2}\rfloor +1 + \frac{m-2}{2}\cdot \lceil \frac{n}{2}\rceil -1 > n + \lceil \frac{m-2}{2}\rceil$.
    A minimum reversal set is given by minimizing the weight needed to be reversed per cycle.
    We can do so by reversing all possible weight from $a$ to $b$, and reversing 1 weight on every incoming edge of $b$.
    This results in a total reversal of weight
    $n + \frac{m-2}{2}\cdot 1 = n + \left\lceil \frac{m-2}{2}\right\rceil$.
    This proves the claim.
    
    For the other bound consider Split Cycle non-winner $d\in V\setminus\SC(T)$.
	Without loss of generality, we assume $d$ to be a Condorcet loser, \ie $\sBO{d,T}=0$. 
	This implies, $(z,d)\in E(\deletemargingraph)$, for all $z\in V\setminus\{d\}$.
	
	By reversing up to $\left\lceil\frac{n}{2}\right\rceil$ weight from any $z$ towards $d$, we achieve $(z,d)\notin E(\margingraph)$, because $\marginAfterRev{d,z}\in\{0,1\}$, depending on the parity of $n$.
	Then, $(z,d)\notin E(\deletemargingraphRevAlt{R})$ for all $z\in V(T)\setminus\{d\}$ and $d\in \SC(T^R)$.
	The claim follows as there are $m-1$ alternatives besides $d$.
	
	To show tightness, consider an $n$-weighted tournament $T=(V,w)$ on $m$ alternatives with Condorcet loser $d$.
	For brevity, we show the construction only for the case where $n$ is even.
	Let $\vminus{d}=V\setminus\{d\}$.
	As $d$ is the Condorcet loser in $T$, $\weight{d,x}=0$ for all $x\in \vminus{d}$.
	For all other alternatives $x,y\in\vminus{d}$, we define the weight function by
	$\weight{x,y}=\frac{n}{2}$.
	This results in $\margin{x,y}=0$, and $(x,y)\notin E(\margingraph)$.
	We claim, getting $d$ into the winning set requires a reversal of at least $\lceil \frac{n}{2}\rceil\cdot(m-1)=\frac{n}{2}\cdot(m-1)$ weight.
    For $d$ to be an \SC\ winner, it cannot be dominated by any alternative $x\in\vminus{d}$ in the margin graph after deletion: $(x,d)\notin E(\deletemargingraphRevAlt{R})$.
    This can be achieved either
	    by reversing $\frac{n}{2}$ weight from each $x$ to d to achieve $\marginAfterRev{x,d}=0$, or
	    by reversing weight such that each edge $(x,d)$ gets deleted as a splitting edge. 
	For the latter, there has to be at least one outgoing edge of $d$ in the margin graph after deletion in order to form cycles. 
	Because the margin of every incoming edge is $n$, we would have to lower the margin on all those edges to match the rest of the cycle, or raise the margin on the edges of the cycle to $n$.
	For every incoming edge $(x,d)$, we have at least one edge $(z,x)$ for which we would have to raise the margin by at least $\lceil\frac{n}{2}\rceil$.
	There are $m-2$ of those edges, plus the reversal of $\lceil\frac{n}{2}\rceil$ on the one edge to form the cycles.
	Either way, both options result in a total reversal of weight
	    \[\frac{n}{2} + (m-1) = \left\lceil\frac{n}{2}\right\rceil\cdot(m-1).\]
	This proves the claim.
\end{proof}

Lastly, we analyse the \MoV\ of \wUC. The upper bound differs from the \BO\ upper bound only by an additional $\lceil\frac{n}{2}\rceil$.
Note that for \wUC\ the upper bound also holds for $m=2$.
\begin{restatable}{theorem}{wucbounddest}\label{thm:result_BoundswUCDestructive}
    For any $n$-weighted tournament $T$ with $m$ alternatives, \wUC winner $a\in \wUC(T)$ and non-winner $d \in V\setminus\wUC(T)$, we have
    \[
    -\log_2(m)\cdot\left\lceil\frac{n+1}{2}\right\rceil \le \MoVfunc{b,T}
        \le \MoVfunc{a,T}
        \leq \left\lceil \frac{n+1}{2} \right\rceil + \left\lfloor \frac{n\cdot(m-2)}{2}\right\rfloor.\]
    Moreover, the upper bound is tight.
\end{restatable}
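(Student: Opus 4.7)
The plan is to prove the two bounds separately. For the upper bound on winners, I use monotonicity of $\MoVExt{\wUC}$ (\cref{prop:resultMonotonicityMoV}) to assume without loss of generality that $a$ is a Condorcet winner, so $\weight{a,z}=n$ for every $z\neq a$. Pushing $a$ out of $\wUC(T)$ amounts to making some $b\neq a$ $w$-cover $a$ in $T^R$, i.e., arranging $\marginAfterRev{b,a}>0$ together with $\weightAfterRev{b,z}\ge\weightAfterRev{a,z}$ for all $z\in V\setminus\{a,b\}$. I choose $b$ to be a Borda winner of the subtournament $T_{-a}$; applying \cref{lem:BordaScoreMinimum} to $T_{-a}$ gives $\sBO{b,T_{-a}}\ge\lceil n(m-2)/2\rceil$. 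Then I reverse $\lceil (n+1)/2\rceil$ weight from $a$ to $b$ (ensuring $\marginAfterRev{b,a}>0$) and, for each $z\neq a,b$, shift exactly $n-\weight{b,z}$ weight on the edge $(a,z)$ so that $\weightAfterRev{a,z}=\weight{b,z}=\weightAfterRev{b,z}$. The total reversed weight equals $\lceil(n+1)/2\rceil + n(m-2)-\sBO{b,T_{-a}} \le \lceil(n+1)/2\rceil+\lfloor n(m-2)/2\rfloor$. Tightness I obtain from the tournament where $a$ is a Condorcet winner and every pair in $V\setminus\{a\}$ is as balanced as possible; then every $b\neq a$ realizes $\sBO{b,T_{-a}}=\lceil n(m-2)/2\rceil$ exactly, and a counting argument in the spirit of \cref{thm:result_BoundsBordaDestructive} shows that any w$\mathsf{DRS}$ for $a$ must match this cost.

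For the lower bound on non-winners I again use monotonicity to assume $d$ is a Condorcet loser. Set $L:=\lceil(n+1)/2\rceil$ and, on $V\setminus\{d\}$, define the relation $z\triangleright v$ iff $\weight{v,z}<L$, equivalently $\weight{z,v}\ge\lceil n/2\rceil$. Because $\weight{v,z}+\weight{z,v}=n$, at least one of $z\triangleright v$ or $v\triangleright z$ holds for every pair, so breaking ties arbitrarily produces a tournament on the $m-1$ vertices of $V\setminus\{d\}$. By the classical halving argument, this tournament admits a dominating set $Z$ with $|Z|\le\lceil\log_2(m-1)\rceil$. I set $R(z,d)=-L$ for each $z\in Z$ and $0$ elsewhere. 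For every $z\in Z$, $\weightAfterRev{d,z}=L>n-L\ge\weightAfterRev{z,d}$, so $d$ reaches $z$ by a length-$1$ decreasing path. For every $v\in V\setminus(Z\cup\{d\})$ the dominating-set property furnishes a $z\in Z$ with $\weight{v,z}<L=\weightAfterRev{d,z}$, and the covering characterization of $\wUC$ in \cref{lem:wUCasDecreasingPaths} then shows that $v$ does not $w$-cover $d$ in $T^R$. Hence no alternative $w$-covers $d$, i.e., $d\in\wUC(T^R)$, and the total reversal cost is $|Z|\cdot L\le\log_2(m)\cdot\lceil(n+1)/2\rceil$.

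The main obstacle is the even-$n$ case of the lower-bound argument: the strict margin graph can contain missing edges (tied pairs with $\weight{v,z}=\weight{z,v}=n/2$) and therefore is not automatically a tournament, ruling out the direct logarithmic-dominating-set argument. The fix is the weak threshold $\weight{z,v}\ge\lceil n/2\rceil$ used above, which restores the tournament property at the cost of possibly bidirectional edges, together with the specific choice $L=\lceil(n+1)/2\rceil$, which guarantees that a single reversal on each $(z,d)$ is large enough to play both roles required by the covering characterization (direct $1$-path to $z\in Z$ and $2$-path witness for $v\notin Z$) regardless of parity. The tightness portion of the upper bound is routine counting but requires parity-dependent case splits that I would defer to the full write-up.
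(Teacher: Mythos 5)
Your proposal is correct, and while your winner half coincides with the paper's proof, your non-winner half takes a genuinely different route. For the upper bound you make exactly the paper's moves: reduce to a Condorcet winner by monotonicity, pick a Borda winner $b$ of $T_{-a}$, apply \cref{lem:BordaScoreMinimum} to $T_{-a}$, spend $\lceil (n+1)/2\rceil = \lfloor n/2\rfloor+1$ on the edge $(a,b)$ and then $n(m-2)-\sBO{b,T_{-a}}\le\lfloor n(m-2)/2\rfloor$ to equalize $\weight{a,z}$ with $\weight{b,z}$, with the same tightness witness (Condorcet winner over a remainder balanced at $n/2$). For the lower bound the paper builds the wCRS \emph{iteratively}: it maintains the set of alternatives that $d$ reaches by decreasing paths of length at most two, repeatedly reverses $\lfloor n/2\rfloor+1$ weight onto $d$ from a vertex of the remainder having nonnegative margin against at least half of it, and halves the remainder. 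You instead do it in \emph{one shot}: orient the weak-majority relation $\weight{z,v}\ge\lceil n/2\rceil$ on $V\setminus\{d\}$ into a tournament, take a logarithmic dominating set $Z$, and reverse $L=\lceil(n+1)/2\rceil$ on every edge $(z,d)$ with $z\in Z$, so that $d$ gets a length-one decreasing path to each $z\in Z$ and, via \cref{lem:wUCasDecreasingPaths}, a length-two path to every dominated $v$ because $\weightAfterRev{d,z}=L>\weight{v,z}$. Both implement the same halving and give the same $\log_2(m)\cdot L$ total, but your version has a concrete advantage: the paper anchors its halving step to the \emph{Borda winner} $x$ of the remainder and asserts $\margin{x,z}\ge 0$ against at least half of it, which does not actually follow from $x$ being a Borda winner (a Borda winner can offset many narrow losses with a few weight-$n$ wins); the correct choice is a maximum-outdegree vertex of the weak-majority tournament, which is precisely what your dominating-set greedy uses. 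Your threshold $L$ also handles the even-$n$ tie problem cleanly, and your formulation makes explicit the connection to the unweighted setting of \citet{brill2022margin}, where the MoV of \UC\ non-winners is likewise tied to dominating sets.

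Two small caveats. First, the off-the-shelf bound $\lvert Z\rvert\le\lceil\log_2(m-1)\rceil$ does not literally imply $\lvert Z\rvert\cdot L\le\log_2(m)\cdot L$ for all $m$ (e.g., $m=5$ gives $\lceil\log_2 4\rceil+1$-type slack); you need the greedy recursion $f(k)\le 1+f(\lfloor(k-1)/2\rfloor)$, which yields $\lvert Z\rvert\le\lceil\log_2(m+1)\rceil-1\le\log_2 m$ on the $m-1$ vertices --- the paper's own iteration count rests on the same computation. Second, you defer the tightness of the upper bound; since your witness tournament and the per-alternative cost of $n/2$ (to equalize $\weight{a,z}$ and $\weight{b,z}$) plus $\lfloor n/2\rfloor+1$ on $(a,b)$ are exactly the counting carried out in the paper's proof and in \cref{thm:result_BoundsBordaDestructive}, nothing essential is missing, but a complete write-up must include it.
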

\begin{proof}
	Let $T=(V,w)$ be an $n$-weighted tournament with $m>1$ alternatives and \wUC\ winner $a\in \wUC(T)$. Without loss of generality, we assume $a$ to be a Condorcet winner, \ie $\sBO{a,T}=n\cdot(m-1)$.
    Since $a\in\wUC(T)$, $a$ can reach every other alternative by a decreasing path of length at most two.
	The wDRS needs to destroy all decreasing paths to at least one alternative.
	
	Let $b$ be a Borda winner in $T_{-a}$. 
	Since $T_{-a}$ contains $m-1$ alternatives, $b$ has a Borda score of at least \sboBBBshortinline (\Cref{lem:BordaScoreMinimum}).
    With a reversal of $\lfloor \frac{n}{2}\rfloor+1$ from $a$ to $b$, we can destroy the decreasing path of length one from $a$ to $b$, \ie the direct edge.
	Now, we need to reverse weight such that $\weight{a,z}\leq\weight{b,z}$ for all $z\in V(T)\setminus\{a,b\}$.
	This can be achieved by reversing weight from $a$ to $z$ to lower \weight{a,z} to \weight{b,z}, in total 
	\begin{align*}
		\sum\limits_{z\in V(T)\setminus\{a,b\}} \weight{a,z}-\weight{b,z} &= \sBO{a,T-b}- \sBO{b,T-a} = \left\lfloor \frac{n\cdot(m-2)}{2}\right\rfloor.
	\end{align*}
	Therefore, the \MoV\ of any \wUC\ winner is bounded by \begin{align*}
		\left\lfloor \frac{n}{2} \right\rfloor+1 + \left\lfloor \frac{n\cdot(m-2)}{2}\right\rfloor.
	\end{align*}
	
	To show tightness, consider the same $n$-weighted tournament $T=(V,w)$ on $m$ alternatives as in \Cref{thm:result_BoundsBordaDestructive}. 
	Again, for brevity, we show the tightness only for the case where $n$ is even.
	Let $\vminus{a}=V\setminus\{a\}$.
	We have, $\weight{a,x}=n$ for all $x\in \vminus{a}$, and $\weight{x,y}=\frac{n}{2}$ for all $y\in\vminus{a}\setminus\{x\}$.
	As $a$ is a \wUC\ winner, it reaches every other alternative with a decreasing path of length at most two.
	We claim, taking $a$ out of the winning set requires a reversal of at least $\lfloor \frac{n}{2} \rfloor+1 + \lfloor \frac{n\cdot(m-2)}{2}\rfloor$ weight.
	
    For $a$ to drop out of the winning set, there needs to be an alternative $b$, such that there is no decreasing path of length at most two from $a$ to $b$.
    As~all alternatives in \vminus{a} are structurally equivalent, it does not matter which we choose.
    To eliminate the decreasing path of length one, we need to reverse $\frac{n}{2}+1$ weight from $a$ to $b$.
    Now, for every alternative $x\in\vminus{a}\setminus\{b\}$ there is a decreasing path of length two via $x$ from $a$ to $b$.
    This follows immediately from $\weight{a,x}=n$ and $\weight{b,x}=\frac{n}{2}$.
    To eliminate these paths, we need to reverse either $\frac{n}{2}$ weight from $a$ to $x$ to achieve $\weightAfterRev{a,x}=\frac{n}{2}=\weightAfterRev{b,x}$, or $n$ weight from $x$ to $b$ to achieve $\weightAfterRev{a,x}=n=\weightAfterRev{b,x}$.
    Either way, this results in a total reversal of weight
        \[\frac{n}{2}+1 + (m-2)\cdot\frac{n}{2} = \left\lfloor \frac{n}{2} \right\rfloor+1 + \left\lfloor \frac{n\cdot(m-2)}{2}\right\rfloor.\]
    
    For the other direction, consider \wUC non-winner $d\in  V\setminus\wUC(T)$. Without loss of generality, we assume $d$ to be a Condorcet loser, \ie $\sBO{d,T}=0$. This implies that $d$ is covered by all alternatives.
    We construct a wCRS for $d$ iteratively.
    Let $T^d$ be a subtournament of $T$ containing all alternatives that $d$ can reach via a decreasing path of length at most two. Initially, $T^d$ only contains the alternative $d$.
    Let $x$ be the Borda winner of $T-T^d$. Observe that $\margin{x,z}\geq 0$ for at least $\left\lceil\frac{m-2}{2} \right\rceil$ alternatives $z\in V(T)\setminus\{x,d\}$.
    We reverse $\left(\left\lfloor\frac{n}{2}\right\rfloor+1\right)$ weight from this alternative $x$ to $d$, insert $x$ into the subtournament $T^d$ and also all alternatives alternatives $z$ with $\margin{x,z}\geq 0$.
    Observe that after this reversal, $d$ has a decreasing path of length one to $x$ and decreasing paths of length two via $x$ to all $z$ with $\margin{x,z}\geq 0$.
    This concludes one iteration. Now, we take the Borda winner of the updated $T-T^d$ and repeat.
    After the first iteration, the size of $V(T)\setminus V(T^d)$ reduces to at most $m - 2 - \left\lceil\frac{m-2}{2} \right\rceil= \left\lfloor\frac{m-2}{2} \right\rfloor$.
    Since $\lvert V(T)\setminus V(T^d) \rvert = m-1$ at the initialization, it becomes empty after at most \[\log_2(m) \cdot \left( \left\lfloor\frac{n}{2}\right\rfloor+1 \right)\] reversals.
    At that point, $d$ reaches every other alternatives via a decreasing path of length at most two and thus $d\in\wUC(T^\wrevfunc)$.
    \end{proof}

\section{Experiments} \label{ch:Experiments}
We analyse the behaviour of the \MoV for the different tournament solutions on actual tournaments.
We expected insight into the expressiveness of the \MoV\ values, their range in correspondence to the size of the tournaments, and hopefully also some comparison between the \MoV\ of the different tournament solutions.

We followed suit of \citet{brill2022margin} and conducted experiments on randomly generated tournaments.
Overall, we were able to adopt the experiment structure.
We had to adjust only the generation functions to obtain weighted tournaments instead of unweighted tournaments, and also the considered parameters of each experiment.
Unweighted tournaments require only one parameter $m$, the number of alternatives of a tournament.
Weighted tournaments additionally need the weight $n$, the number of voters or duels.
Given~a tournament solution $S$ and an $n$-weighted tournament $T$, we were interested in
\begin{enumerate}
	\item the number of alternatives with maximum \MoV\ value,
	\item the number of unique \MoV\ values taken on by the alternatives, and
	\item the average \MoV\ in different tournaments with the same weight $n$.
\end{enumerate}

The first value provides insight into the explanatory power of the \MoV\ as a refinement of $S$. Remember, this refinement selects all alternatives with maximum \MoV\ value.
We~compared this number to the size of the winning set chosen by the tournament solution.
Should the number of alternatives with maximum \MoV\ value stay small, even when the size of the winning set increases, then the \MoV\ might be a useful tool for refining the tournament solution.

The second value measures more generally the ability of the \MoV\ refinement to distinguish between winning alternatives and to obtain a ranking based on the tournament solution. 

Lastly, the third value might be of interest in the context of bribery and manipulation.
In a tournament setting prone to manipulation it is not desirable to use a tournament solution with a low average maximum \MoV\ value, as it makes manipulation on average quite cheap. 
Using this value, we can compare different tournament solutions to another, while also observing the behaviour of the average maximum \MoV\ value of one tournament solution in different tournament settings.

After conducting the experiments, we observed another use case of our analysis: analysing the different generation models. Firstly, one can compare the three measured numbers for a fixed tournament solution and fixed $n$ and $m$ of different generation models. Secondly, one can measure the development with increasing/decreasing values for $m$ and $n$.

\subsection{Set-up}
We started with the implementation of \citet{brill2022margin} as a framework.
They used six stochastic models to generate the unweighted tournaments: the uniform random model, two variants of the \defstyle{Condorcet noise} model, the \defstyle{impartial culture} model, the \defstyle{Mallows} model and the \defstyle{urn} model. 

The uniform random model and the direct variant of the Condorcet noise model generate unweighted tournaments directly.
The uniform random model starts with a complete undirected graph and adds an orientation to it by flipping a fair coin for every edge.
The Condorcet noise model proceeds similarly but is slightly more influenceable:
The algorithm is initialized with some order $\succ$ on the alternatives and some fixed parameter $p\in[0.5,1]$.
Again, the algorithm starts with a complete undirected graph and for $a\succ b$ it adds the orientation $(a,b)$ with probability $p$.
Otherwise, the orientation $(b,a)$ is added. 
In our experiments we used $p=0.55$ as this was the choice of \citet{brill2022margin}.
To obtain weighted tournaments from these two generation models, we added weights randomly to every edge.

The other four stochastic generation models first generate a preference profile of a set of $n$ voters and then transform those preferences into an $n$-weighted tournament.
In a preference profile each voter has a complete and antisymmetric preference relation $\succ_i$ over the set of alternatives. Note that the preference relations need not necessarily be transitive.
Given such a profile, the weighted tournament is generated by defining the weight function of the tournament as $\weight{a,b}=\lvert \{i \colon a\succ_i b\} \rvert$.
The second variant of the Condorcet noise model, the \defstyle{Condorcet noise model with voters}, generates a weighted tournament using voter preferences that are not necessarily transitive.
Again, the algorithm is initialized with some order $\succ$ on the alternatives and some fixed parameter $p\in[0.5,1]$.
Now, for each voter $i$ and every pair of alternatives with $a\succ b$, the algorithm sets $a\succ_i b$ with probability $p$. Otherwise it sets $b\succ_i a$. Again, we chose $p=0.55$ for our experiments.
The impartial culture model, urn model and Mallows model all generate a weighted tournament using transitive voter preferences, \ie a linear ranking over the alternatives.

For the impartial culture model the preference relation for each voter is randomly selected from a set of all possible strict rankings. This yields transitive preferences that are independent of the choice of the other voters.
Although this model is commonly known to be unrealistic, it is still useful to create reproducible, worst-case tournament samples.

The Mallows model is parameterized by a reference ranking and a dispersion parameter $\varphi$ \citep{mallows1957non}.
The reference ranking could be interpreted as the \defstyle{ground truth} regarding the ranking of alternatives, while the dispersion parameter $\varphi$ controls how close the generated preference profiles will be to the reference ranking. When $\varphi=1$, we obtain a uniform distribution over all possible rankings similar to the impartial culture model. As $\varphi$ diverges to 0, we get a distribution that concentrates the probability more and more on the reference ranking.
For our experiments we chose $\varphi=0.95$.

Lastly, the Pólya-Eggenberger urn model suggested by \citet{berg1985paradox}, is parameterized only with one replication parameter $\alpha$. Initially, the urn contains every possible ranking exactly once. For each voter, a ranking is drawn from the urn and defined as the preference relation of that voter. It is then placed back in the urn together with $\alpha$ copies of itself before the algorithm draws again. The parameter $\alpha$ therefore controls how similar the preference profiles of the voters are.

The parameters for our experiments were as follows:\begin{itemize}
	\item number of tournaments: 100,
	\item number of alternatives per tournament:  $m\in\{5,10,15,20,25,30\}$,
	\item number of voters per tournament:    $n\in\{10,51,100\}$.
\end{itemize}
So for each of the stochastic models \{uniform random model, Condorcet noise model -- direct, Condorcet noise model -- voters, impartial culture model, urn model, Mallows model\}, each number of alternatives $m\in\{5,10,15,20,25,30\}$, and each number of voters $n\in\{10,51,100\}$, we sampled 100 tournaments.
In total 10800 tournaments.
Using the methods described in \Cref{ch:ComputingMoV} of this thesis, we implemented algorithms to compute the \MoV\ values for destructive \BO, destructive \SC, and destructive \wUC.
Due to computational intractability of the non-winner (constructive) procedures for \SC\ and \wUC, these were not implemented. 
The implementation of constructive \BO\ is left for future research.

The software of \citet{brill2022margin} was implemented in python 3.7.7, while we used python 3.10.2 and the libraries networkx 2.8.6, matplotlib 3.5.0, numpy 1.23.1, and {pandas~1.4.3}.
For implementing the Mallows and urn model, \citet{brill2022margin} utilized implementations contributed by \citet*{mattei2013library}. For the implementation of \SC\ we utilized implementations contributed by \citet*{HP22a}.

\subsection{Results}
The following grouped bar charts illustrate the average size of alternatives with maximum \MoV\ value ($\lvert S_{max\MoV}\rvert$), the average number of unique \MoV\ values ($\lvert S\text{unique}\rvert$), and the absolute value of the average maximum \MoV ($S_{max\MoV}$).
Due to similarity of the data for $\lvert S_{max\MoV}\rvert$ and $\lvert S\text{unique}\rvert$ for the three values of $n\in\{10,51,100\}$, we only provide the results for $n=51$. 

\newpage
\begin{center}
\textbf{Average Size of Alternatives with Maximum \MoV\ Value ($n=51$)}
\end{center}\vspace{-1em}
\begin{figure}[H]
    \includegraphics[%
        trim=0 100 0 0, clip,%
        width=0.5\textwidth]{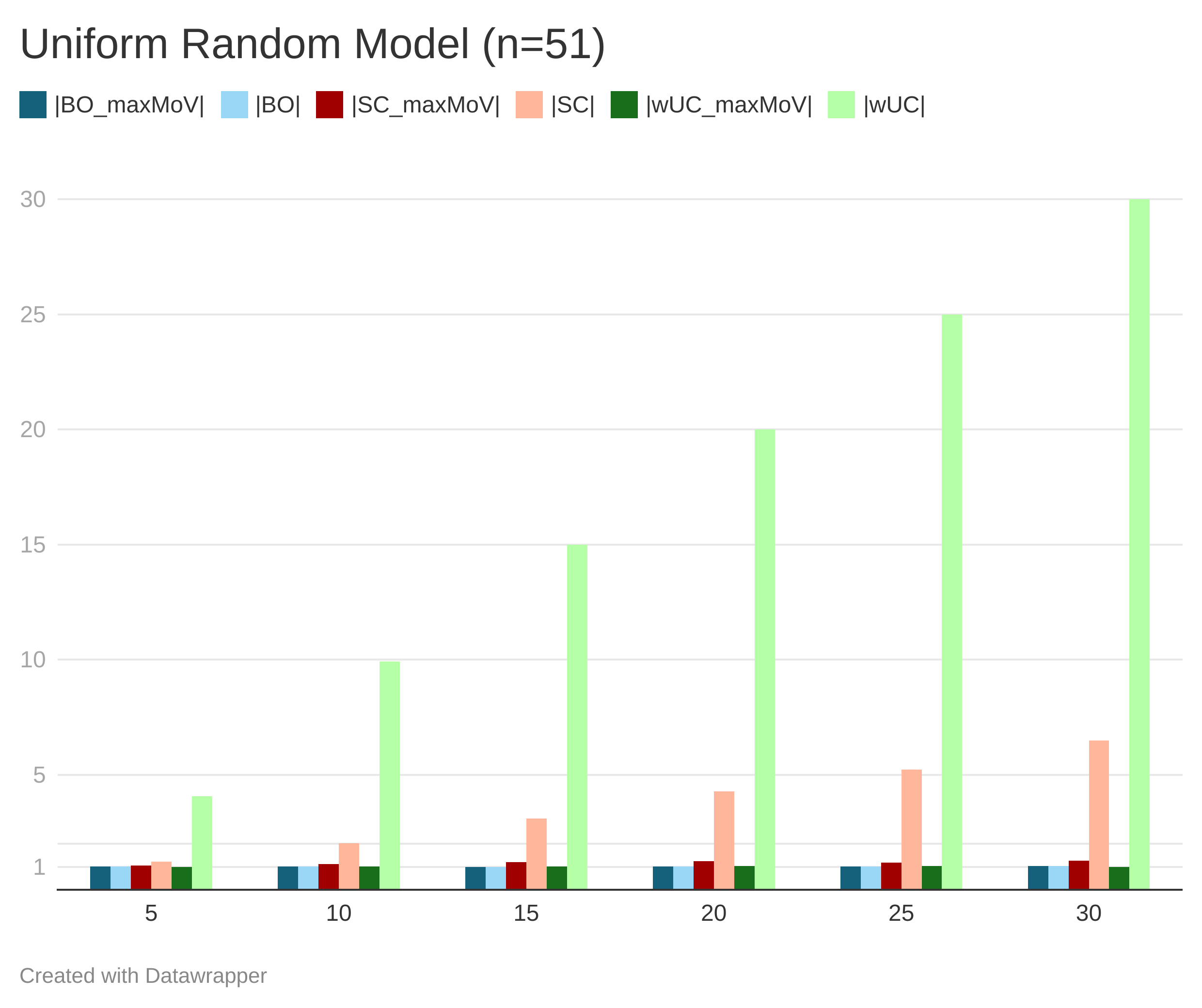}
    \includegraphics[%
        trim=0 100 0 0, clip,%
        width=0.5\textwidth]{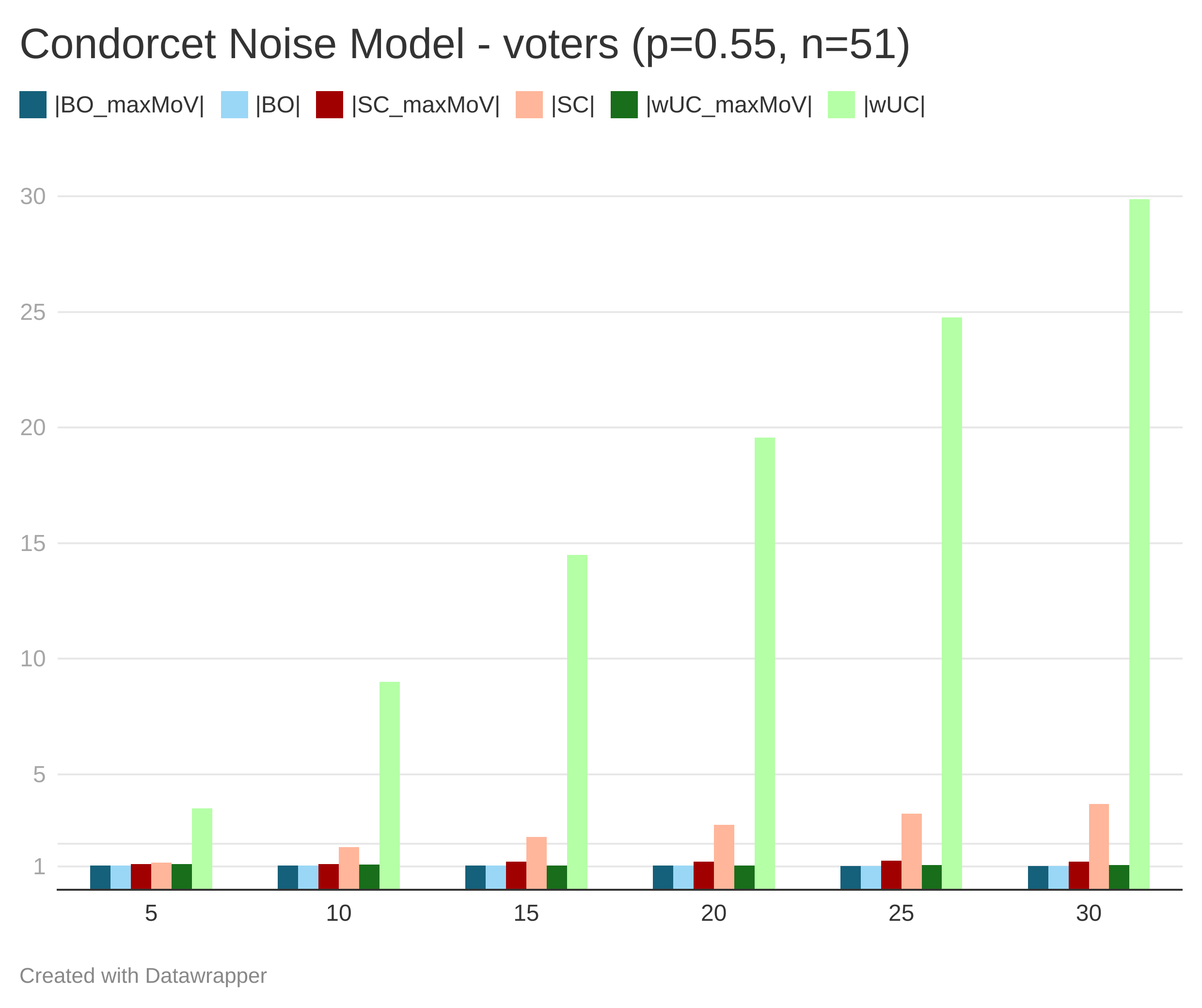}

    \includegraphics[%
        trim=0 100 0 0, clip,%
        width=0.5\textwidth]{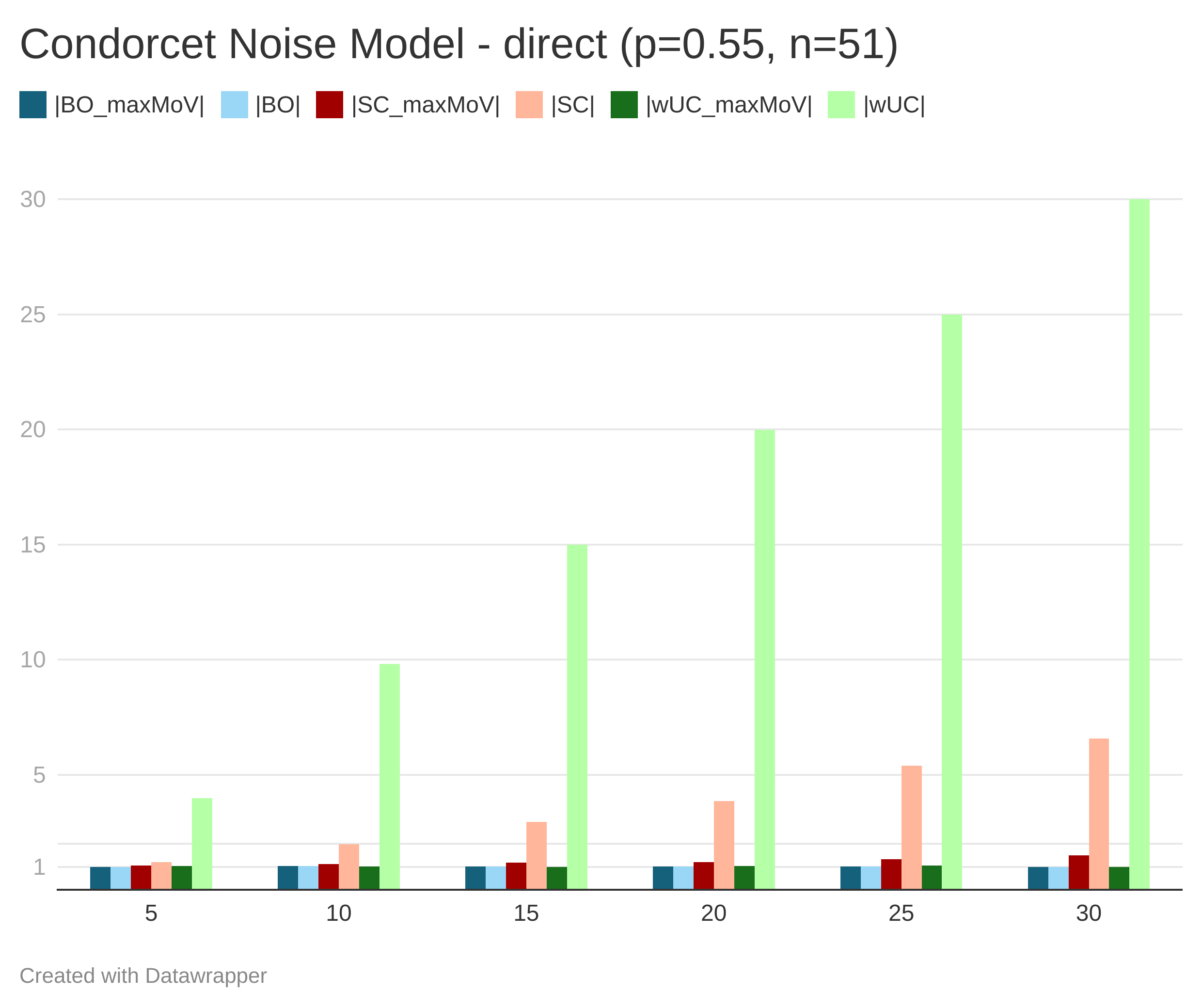}
    \includegraphics[%
        trim=0 100 0 0, clip,%
        width=0.5\textwidth]{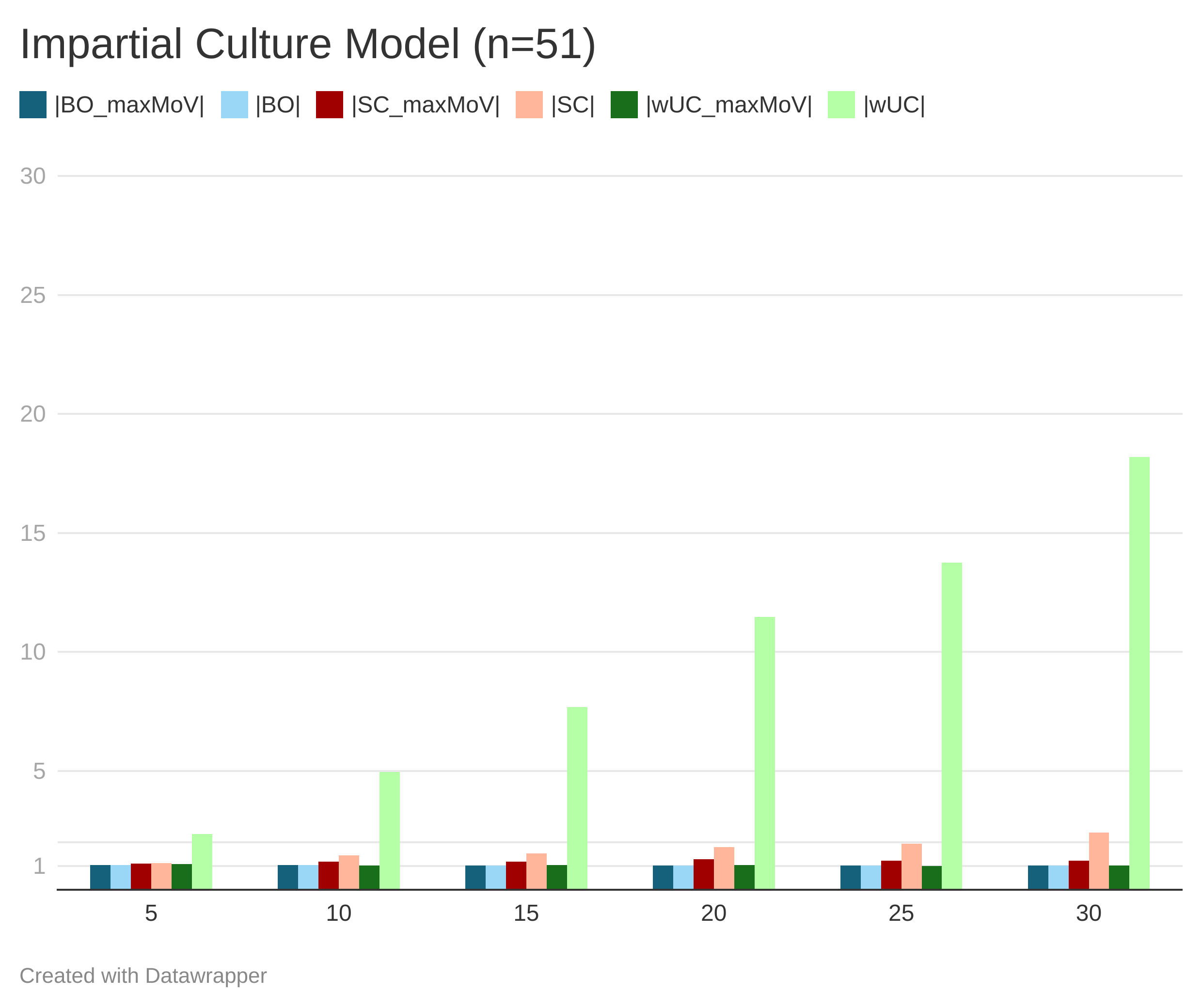}

    \includegraphics[%
        trim=0 100 0 0, clip,%
        width=0.5\textwidth]{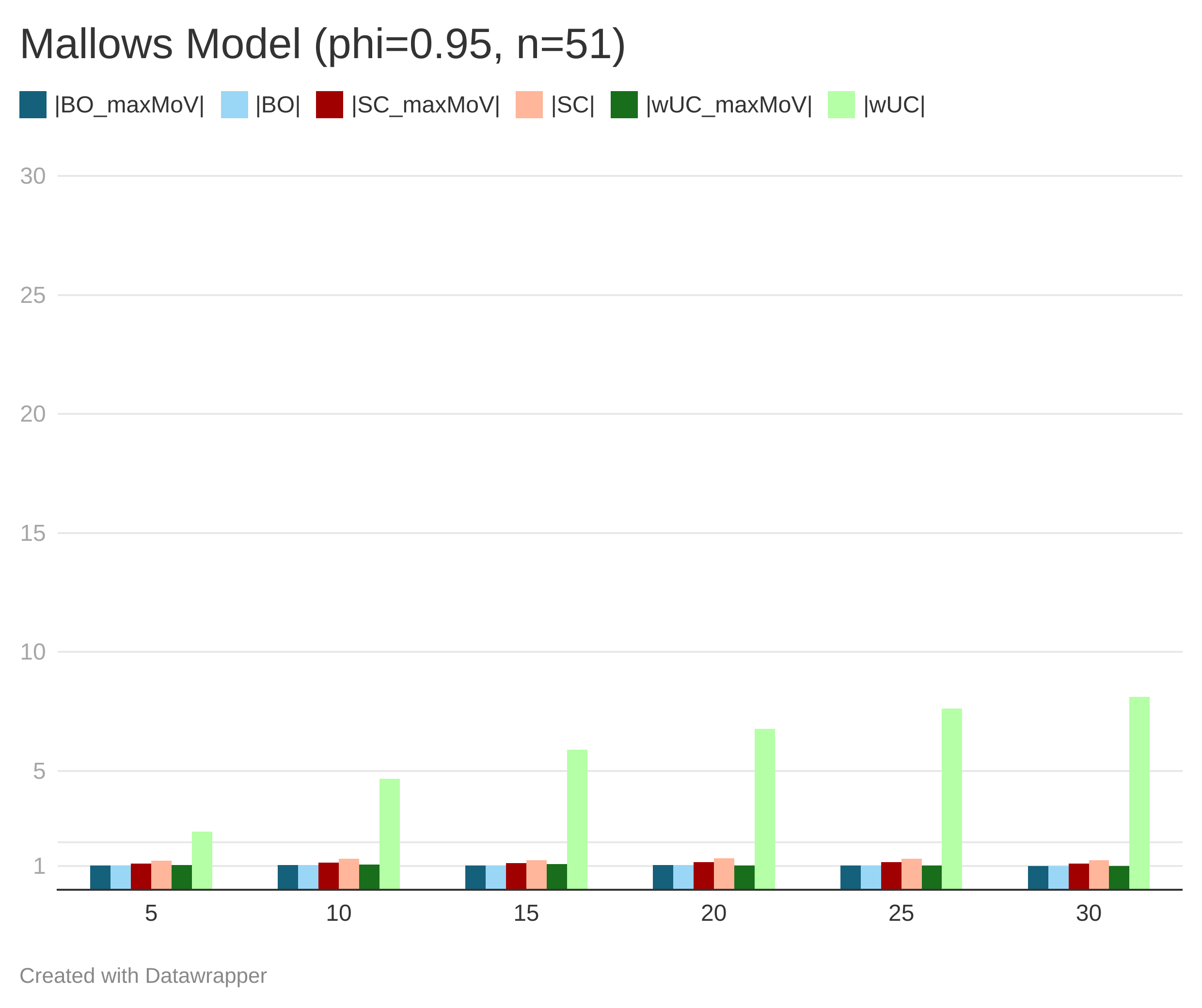}
    \includegraphics[%
        trim=0 100 0 0, clip,%
        width=0.5\textwidth]{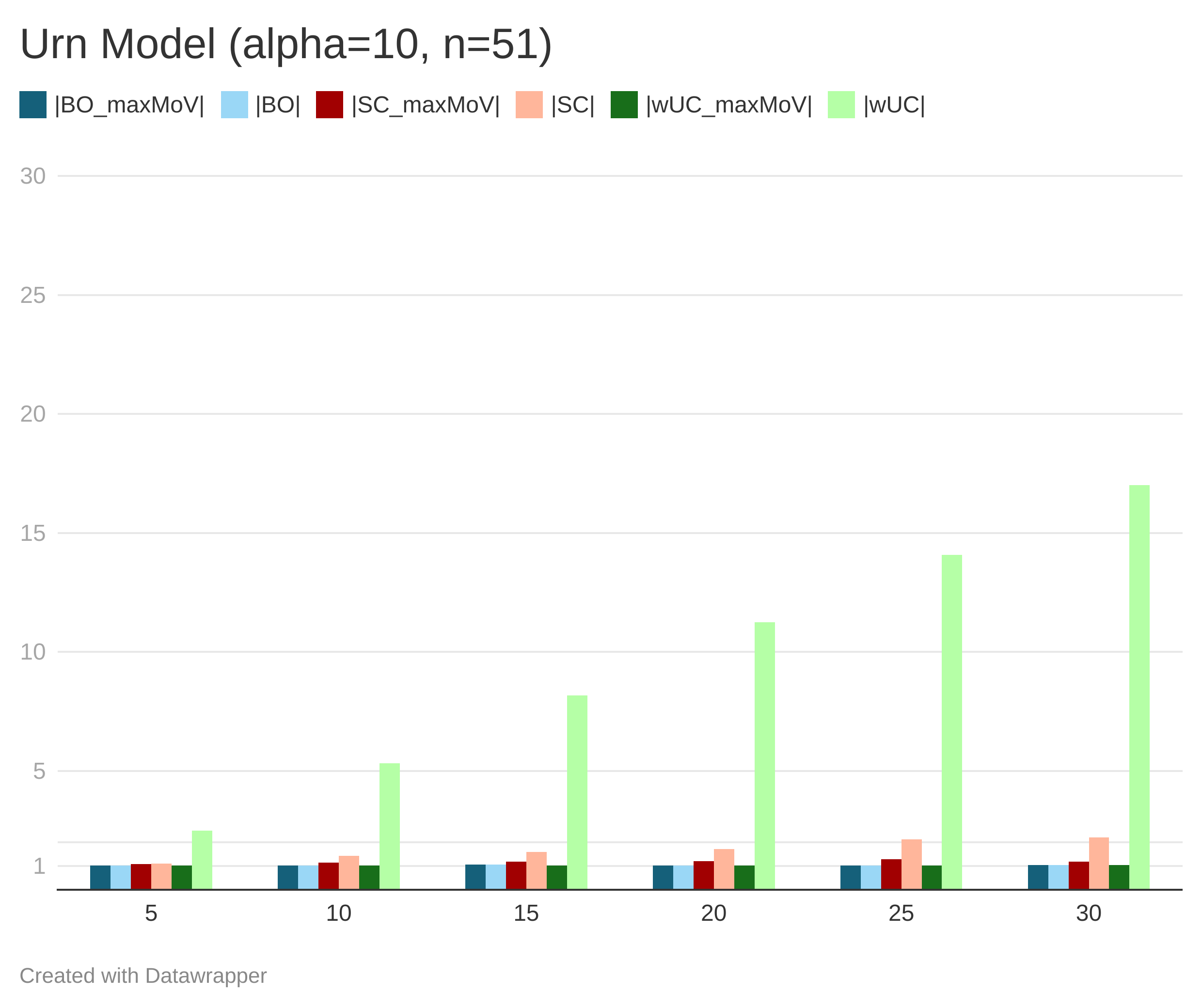}
        
    \caption{The figures show the average number of alternatives with maximum \MoV\ value for tournaments of weight $n=51$ for the six stochastic generation models. For comparison, the lighter shade illustrates the average size of the winning set chosen by the corresponding tournament solution.}
    \label{fig:AverageSizewithMaximumMoVValue51}
\end{figure}
\newpage
\begin{center}
\textbf{Average Number of Unique \MoV\ Values ($n=51$)}
\end{center}\vspace{-1em}
\begin{figure}[H]
    \includegraphics[%
        trim=0 100 0 0, clip,%
        width=0.5\textwidth]{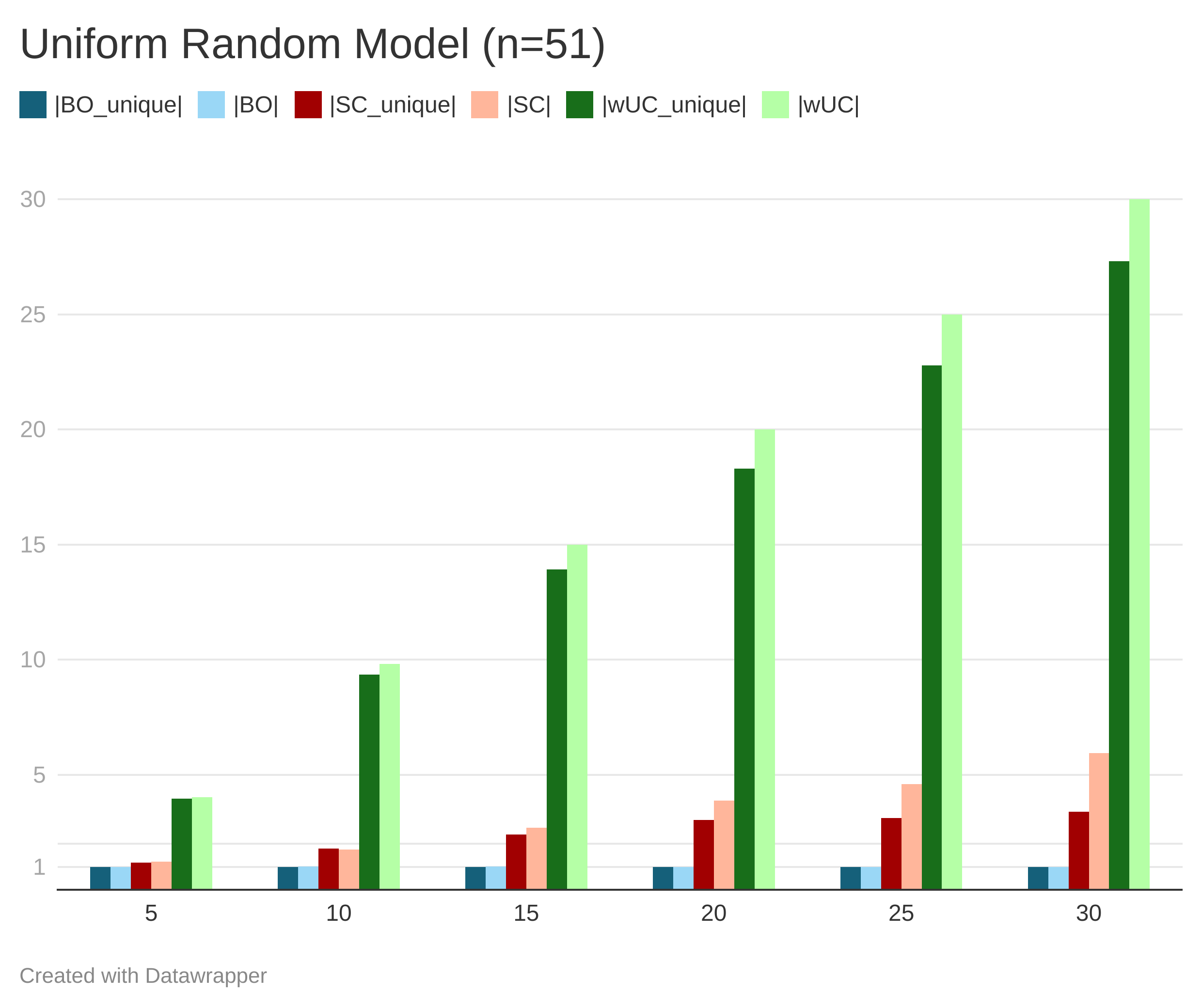}
    \includegraphics[%
        trim=0 100 0 0, clip,%
        width=0.5\textwidth]{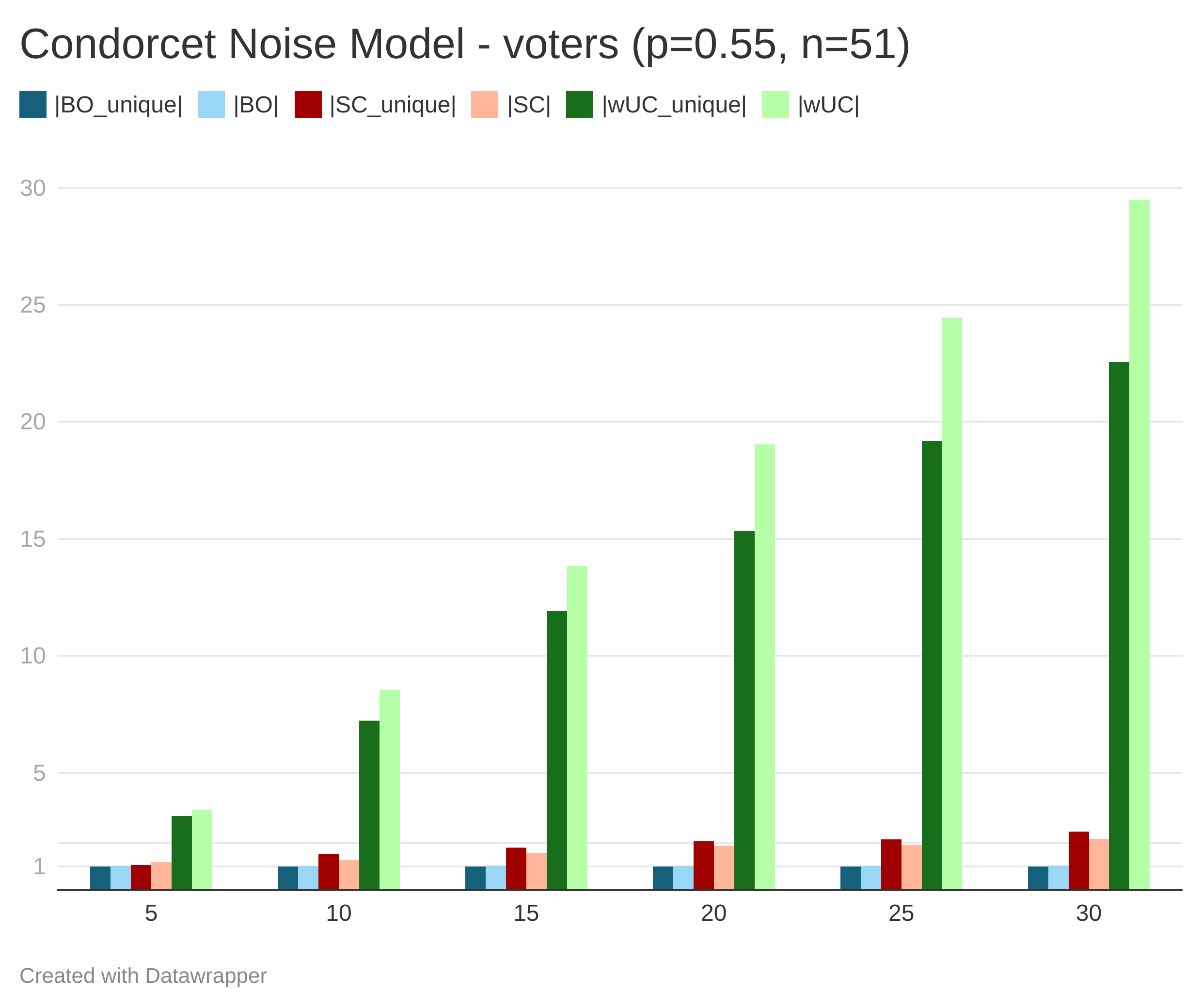}
    \includegraphics[%
        trim=0 100 0 0, clip,%
        width=0.5\textwidth]{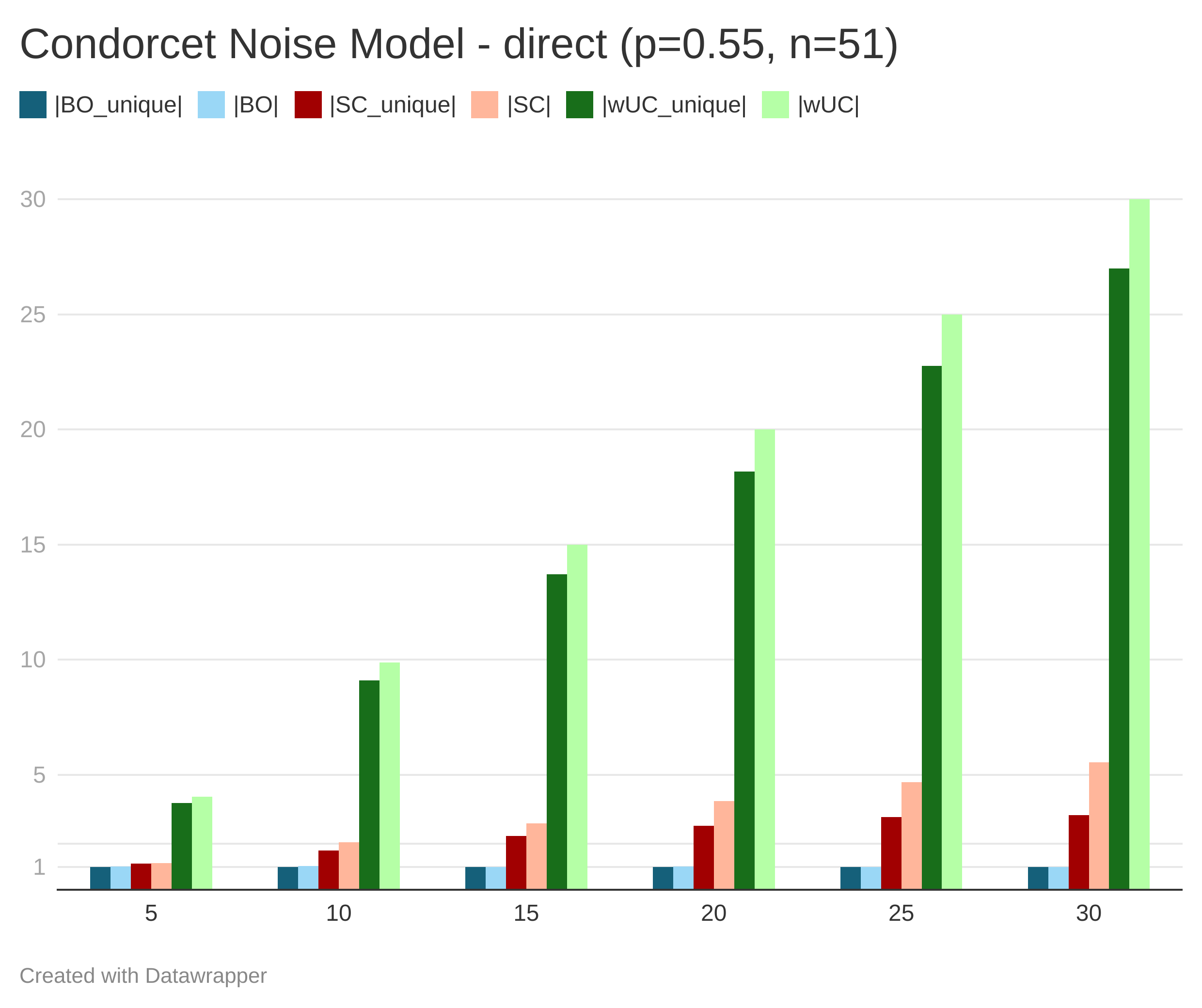}
    \includegraphics[%
        trim=0 100 0 0, clip,%
        width=0.5\textwidth]{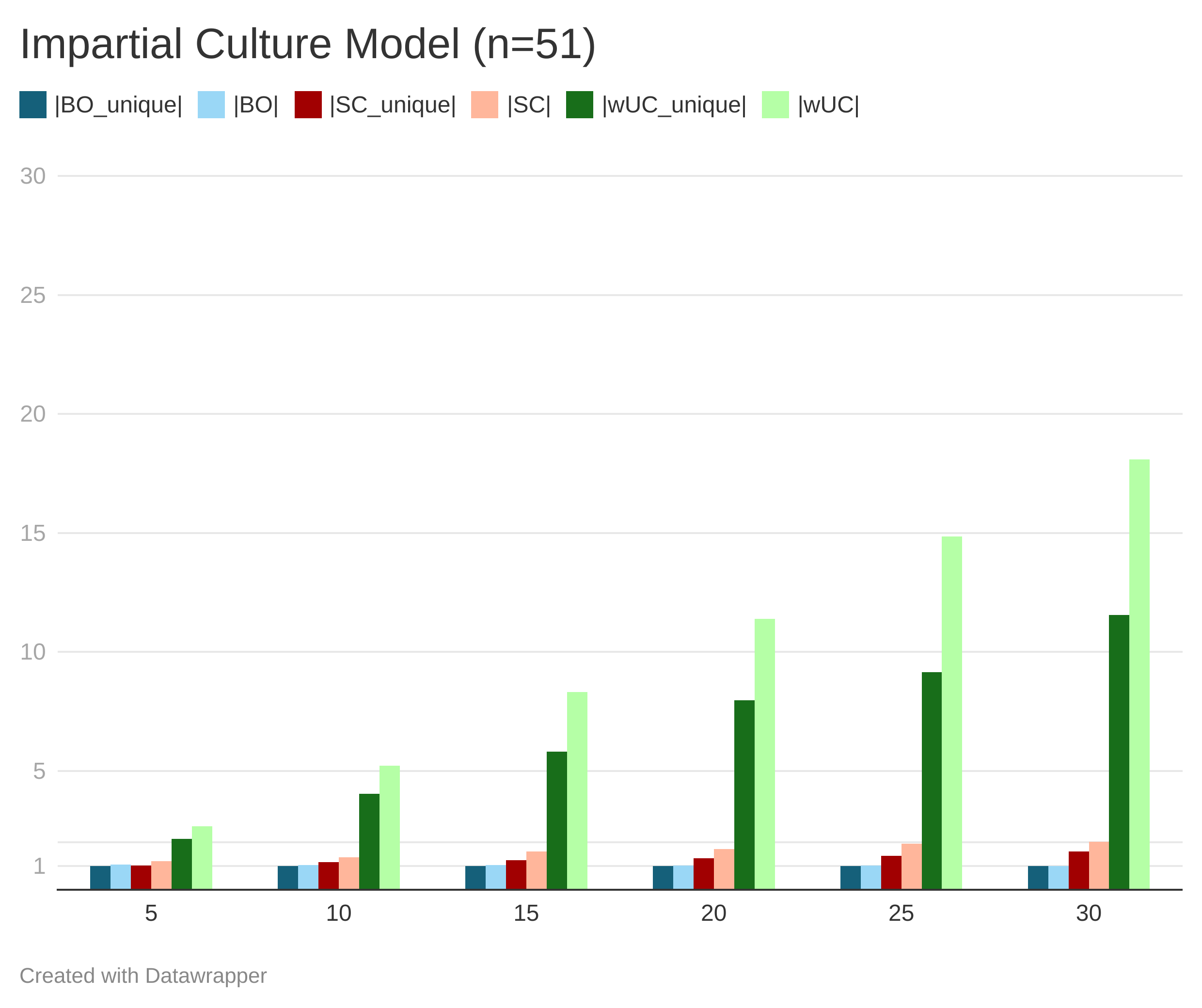}
    \includegraphics[%
        trim=0 100 0 0, clip,%
        width=0.5\textwidth]{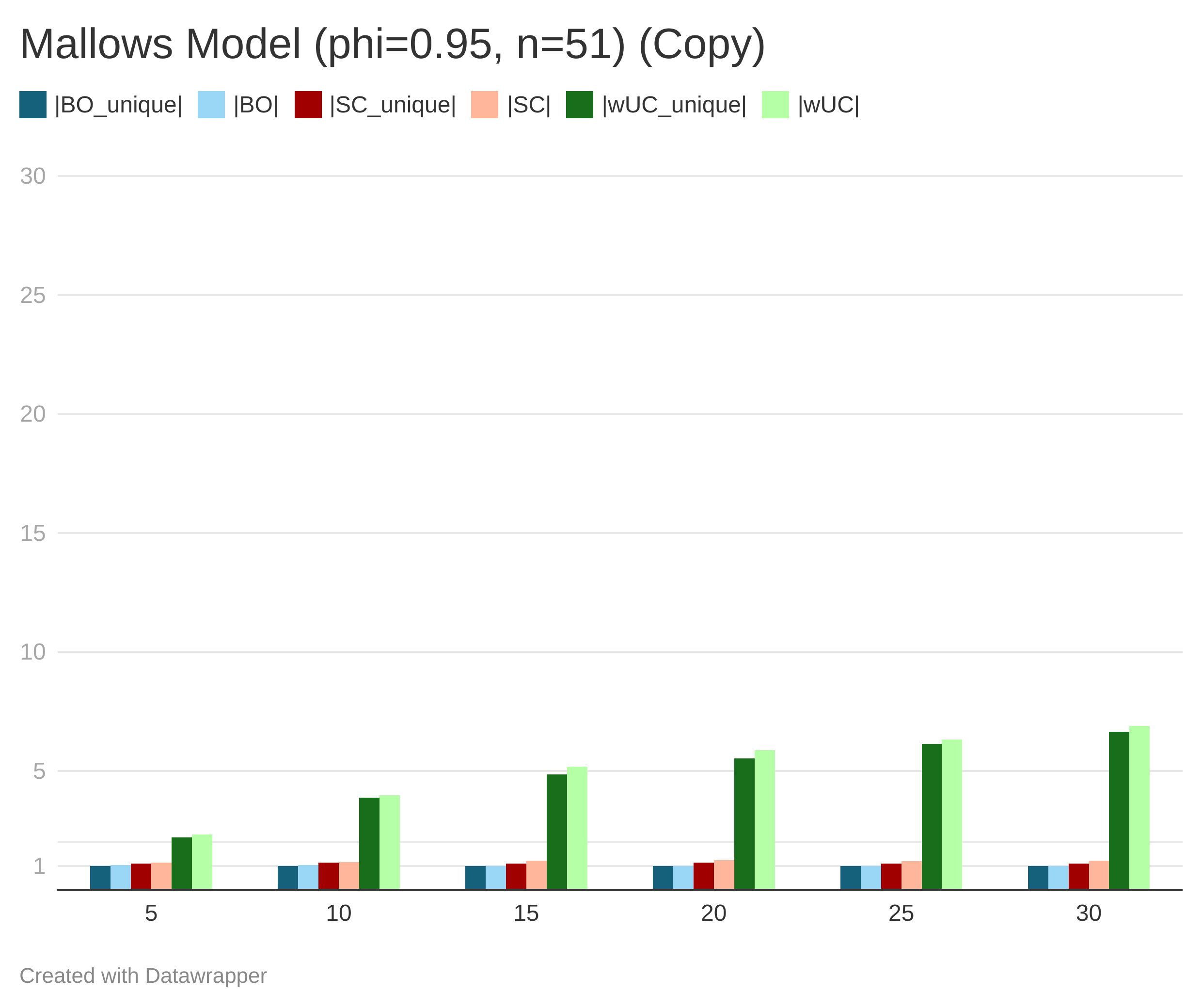}
    \includegraphics[%
        trim=0 100 0 0, clip,%
        width=0.5\textwidth]{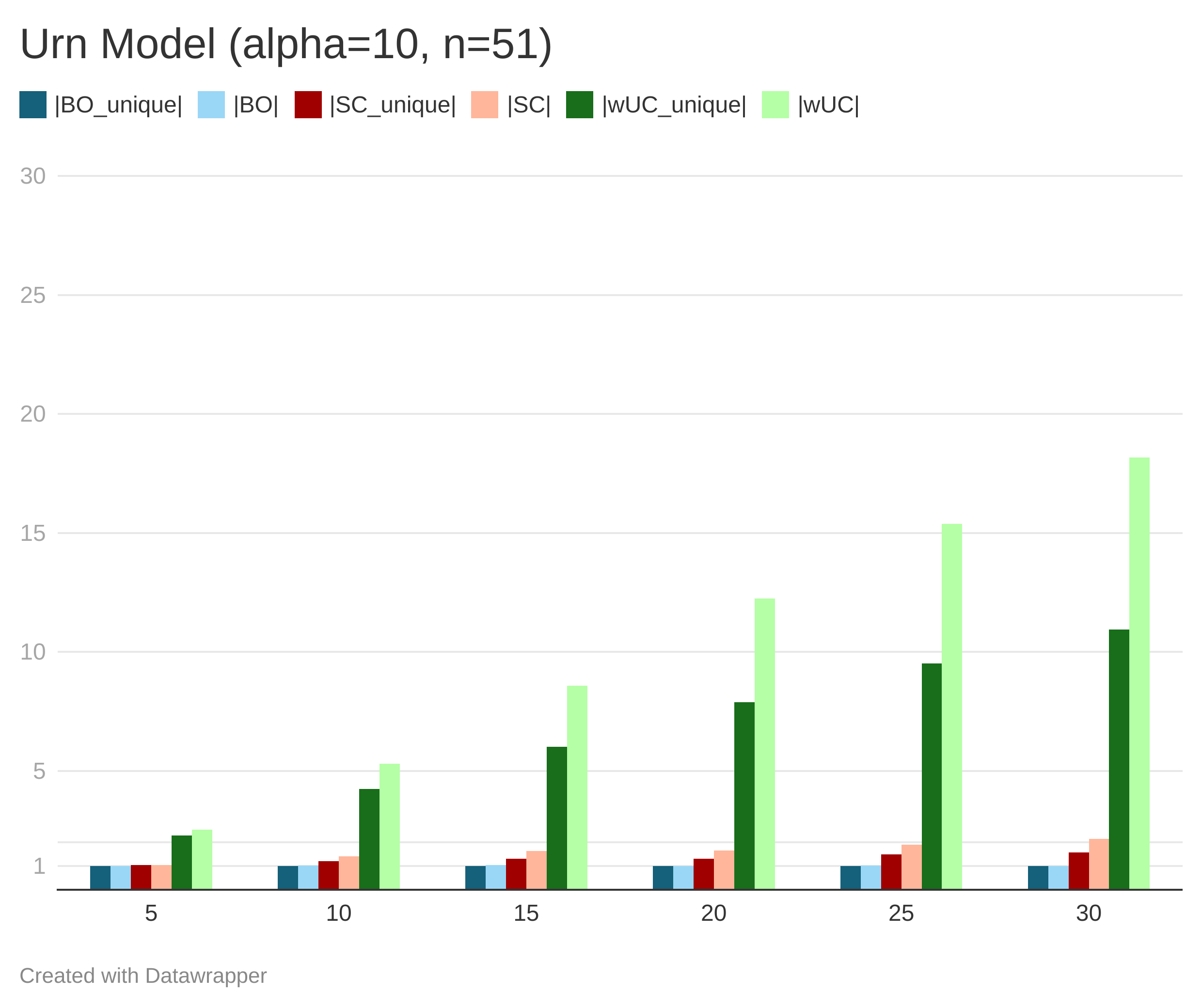}
        
    \caption{The figures show the average number of unique \MoV\ values for tournaments of weight $n=51$ for the six stochastic generation models. For comparison, the lighter shade illustrates the average size of the winning set chosen by the corresponding tournament solution.}
    \label{fig:AverageSizewithUniqueMoVValue}
\end{figure}

\newpage
\begin{center}
    \textbf{Average Maximum \MoV\ Value for \BO\ ($n\in\{10,51,100\}$)}
\end{center}\vspace{-1em}
\begin{figure}[H]
    \centering
    \includegraphics[%
        trim=0 100 0 0, clip,%
        width=0.325\textwidth]{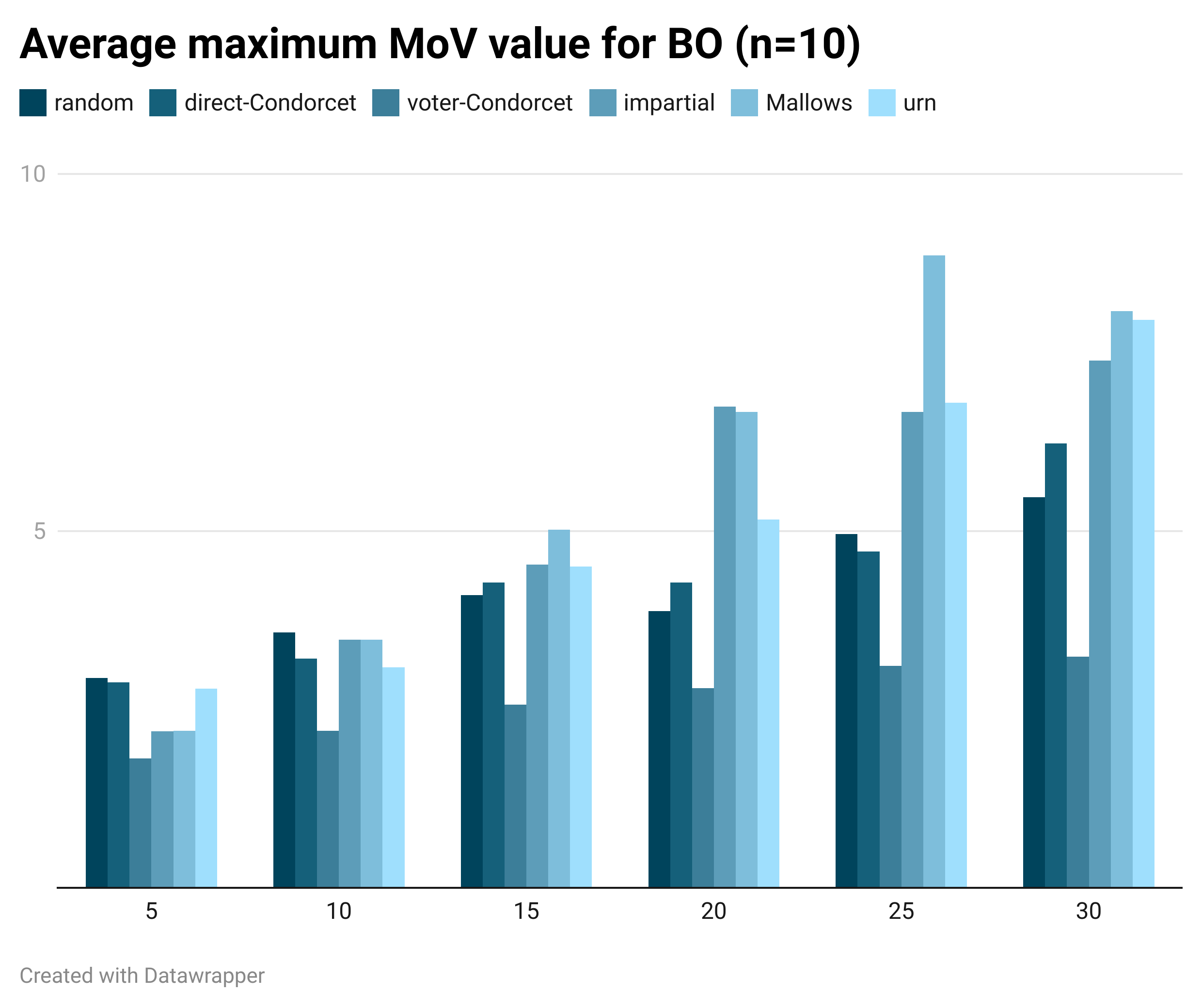}
    \includegraphics[%
        trim=0 100 0 0, clip,%
        width=0.325\textwidth]{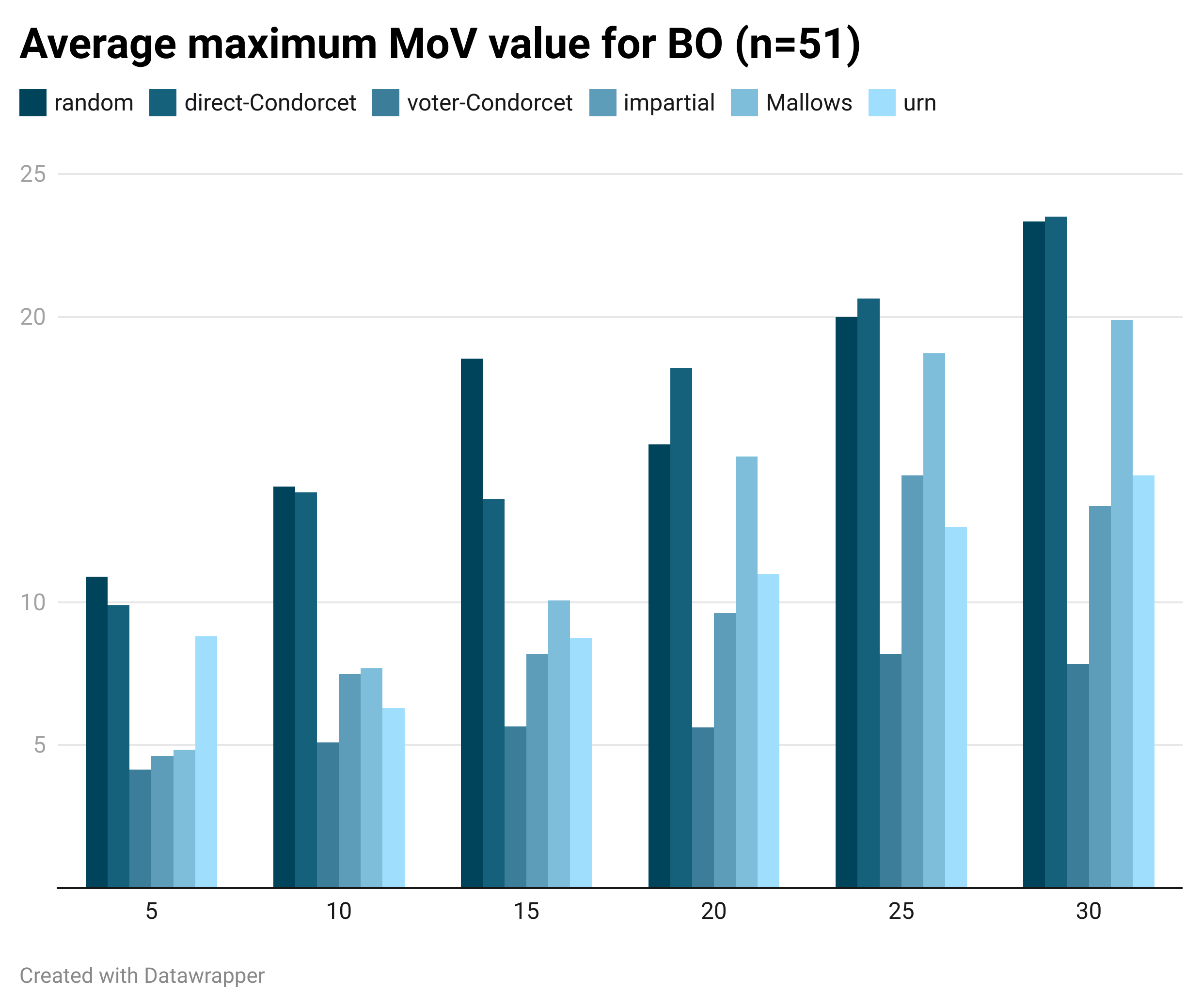}
    \includegraphics[%
        trim=0 100 0 0, clip,%
        width=0.325\textwidth]{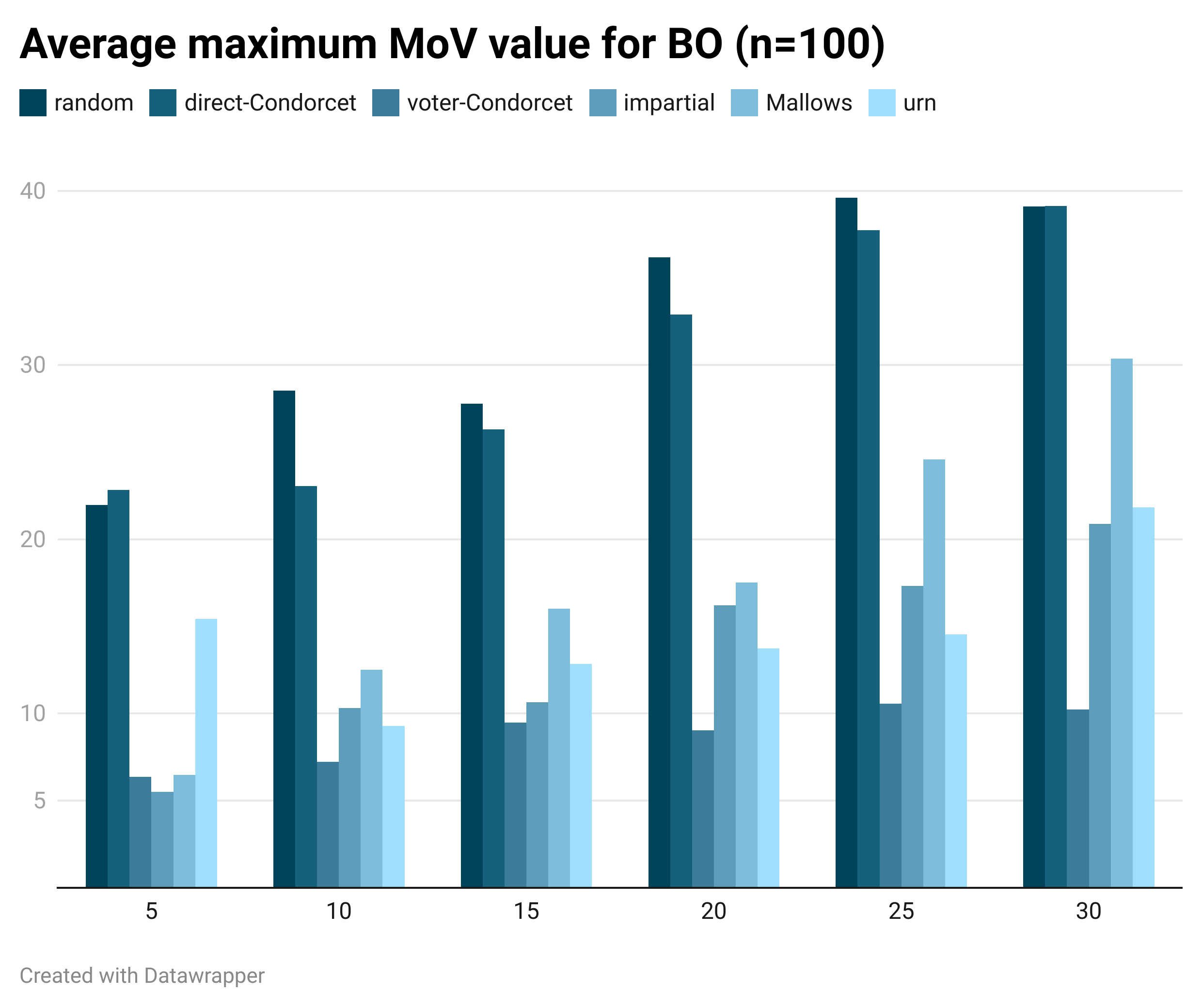}
    
    \caption{The figures show the average maximum \MoV\ value for tournaments of weight $n\in\{10,51,100\}$ for \BO.
    The six stochastic generation models are illustrated next to each other for comparison.}
    \label{fig:AverageMaximumMoVValueBO}
\end{figure}
\begin{center}
    \textbf{Average Maximum \MoV\ Value for \SC\ ($n\in\{10,51,100\}$)}
\end{center}\vspace{-1em}
\begin{figure}[H]
    \centering
    \includegraphics[%
        trim=0 100 0 0, clip,%
        width=0.325\textwidth]{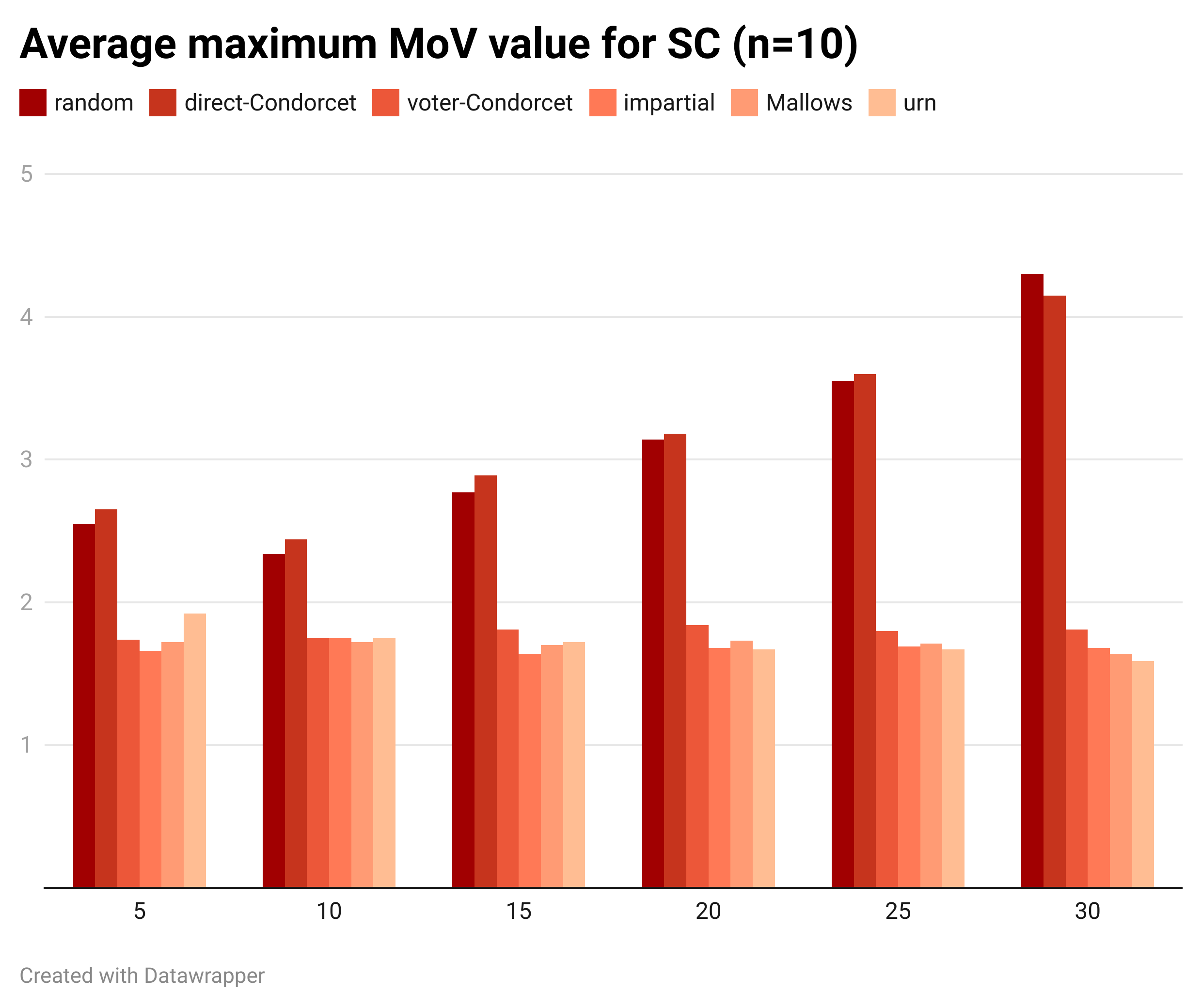}
    \includegraphics[%
        trim=0 100 0 0, clip,%
        width=0.325\textwidth]{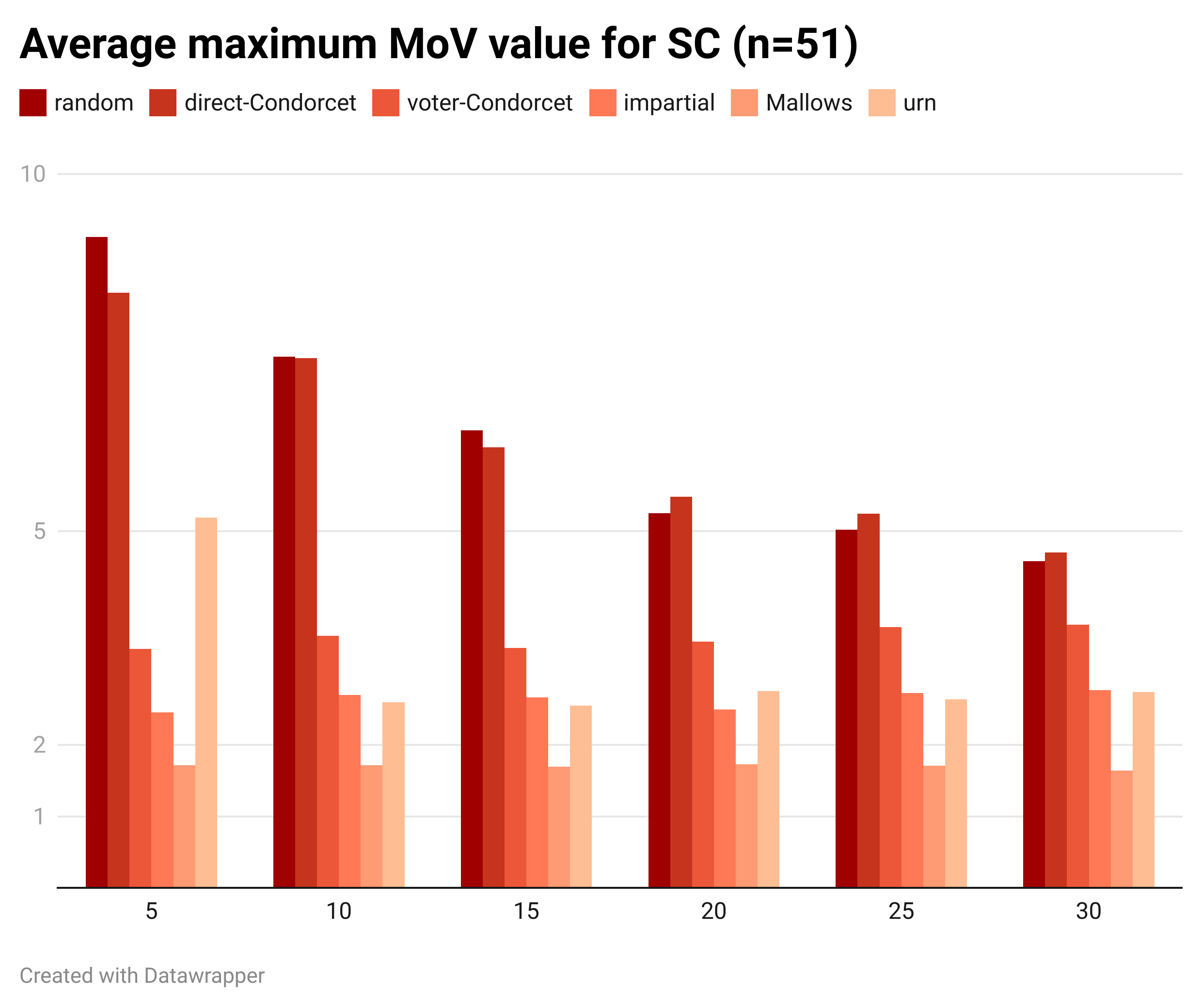}
    \includegraphics[%
        trim=0 100 0 0, clip,%
        width=0.325\textwidth]{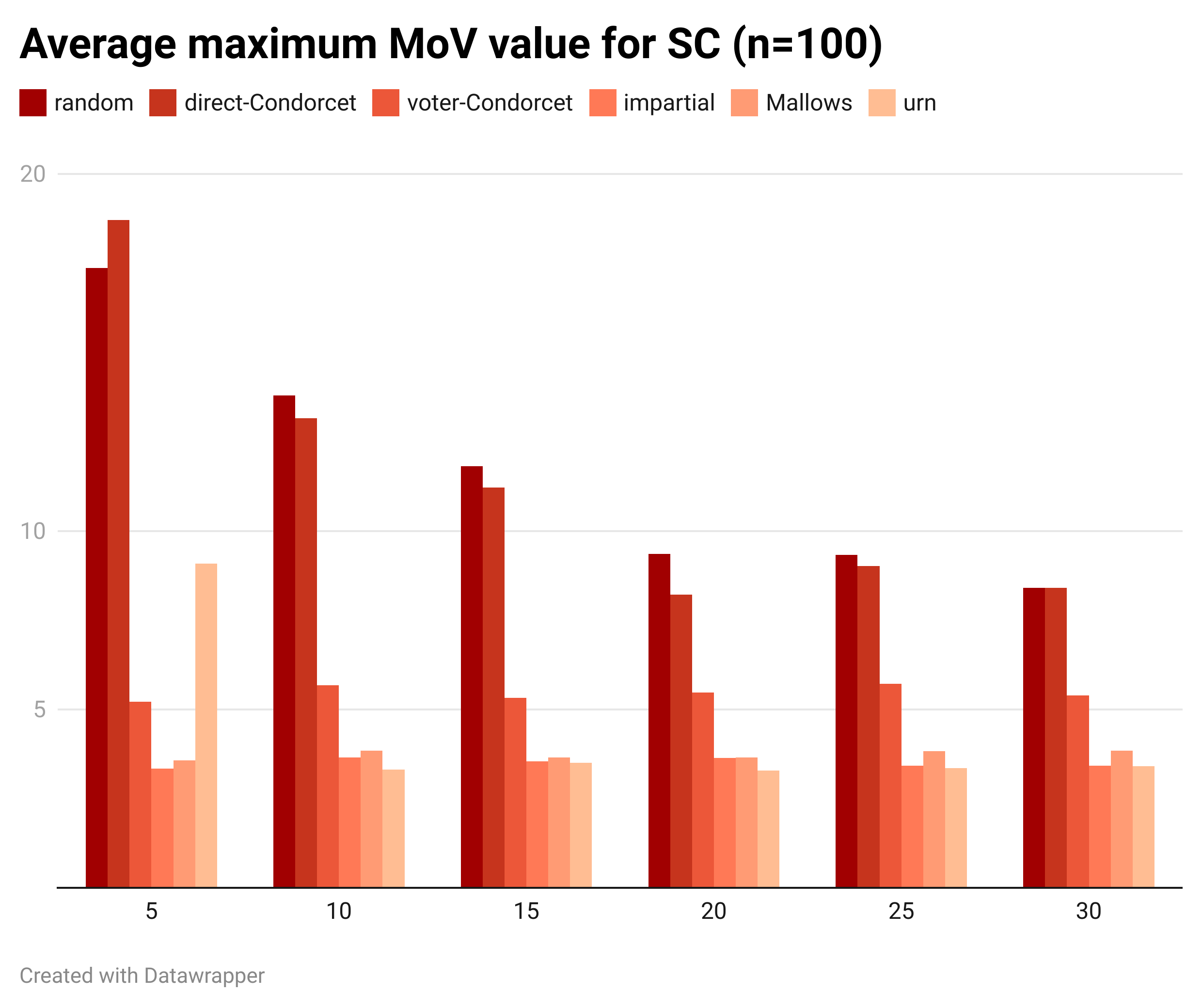}
    
    \caption{The figures show the average maximum \MoV\ value for tournaments of weight $n\in\{10,51,100\}$ for \SC.
    The six stochastic generation models are illustrated next to each other for comparison.}
    \label{fig:AverageMaximumMoVValueSC}
\end{figure}

\begin{center}
    \textbf{Average Maximum \MoV\ Value for \wUC\ ($n\in\{10,51,100\}$)}
\end{center}\vspace{-1em}
\begin{figure}[H]
    \centering
    \includegraphics[%
        trim=0 100 0 0, clip,%
        width=0.325\textwidth]{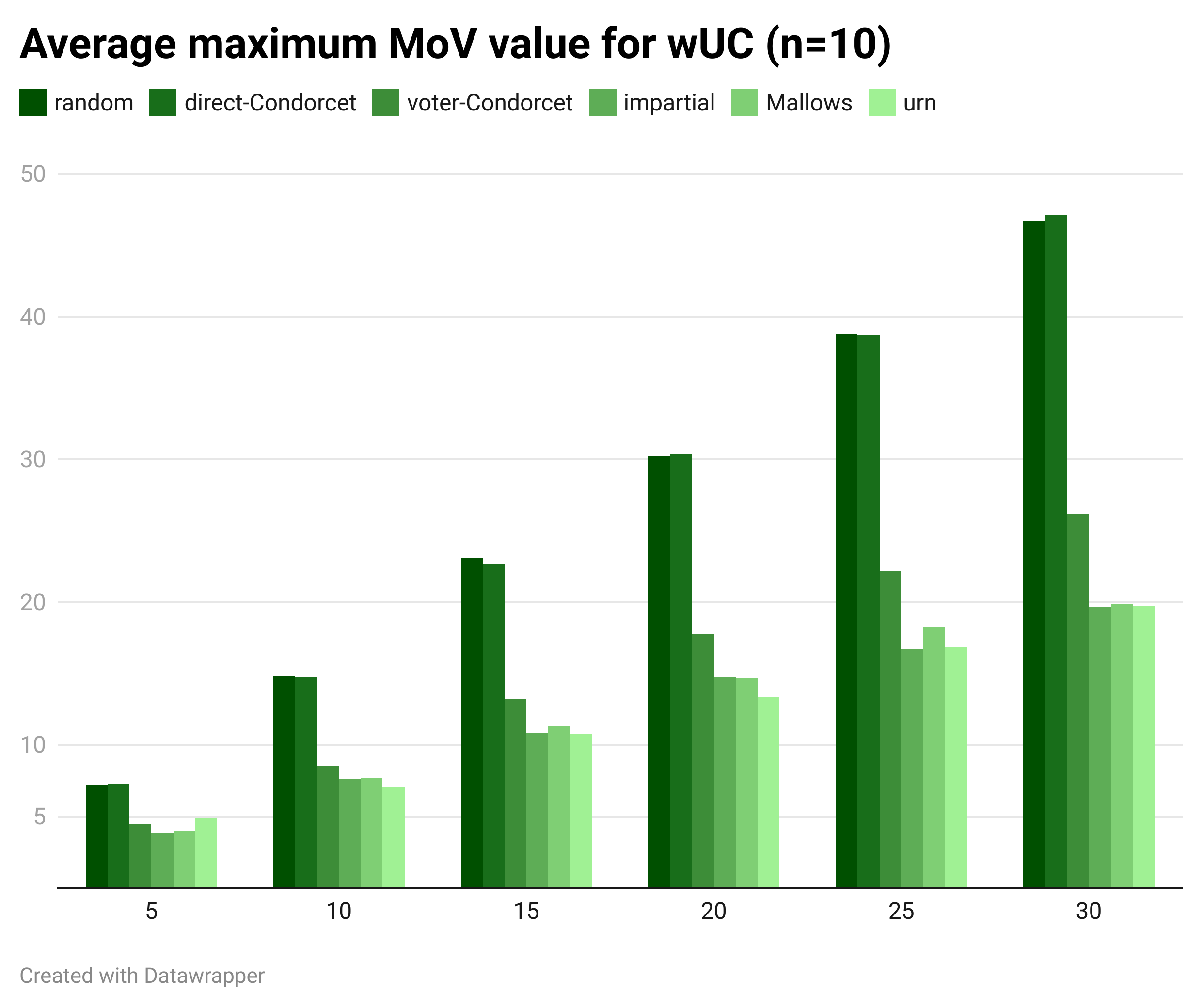}
    \includegraphics[%
        trim=0 100 0 0, clip,%
        width=0.325\textwidth]{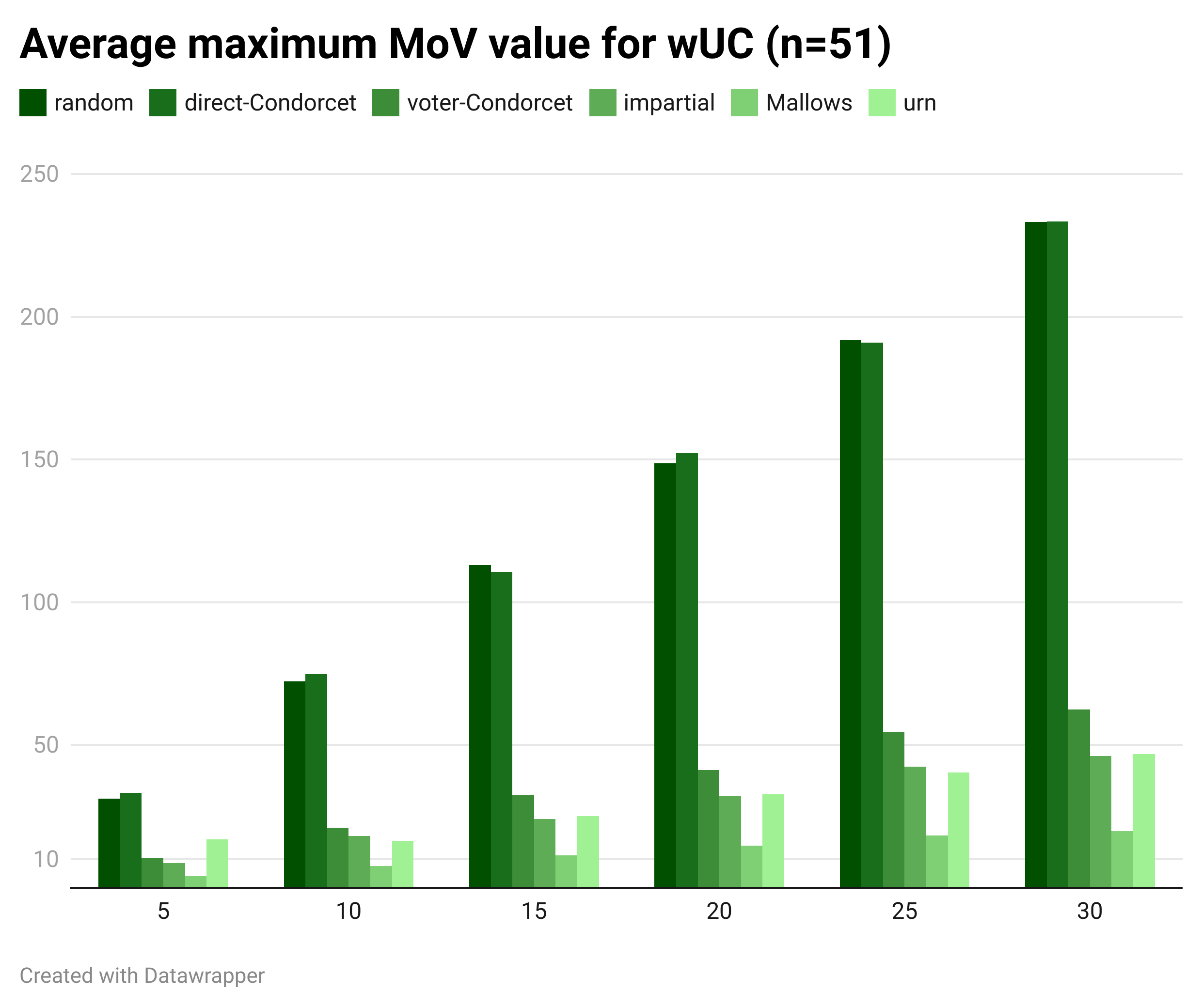}
    \includegraphics[%
        trim=0 100 0 0, clip,%
        width=0.325\textwidth]{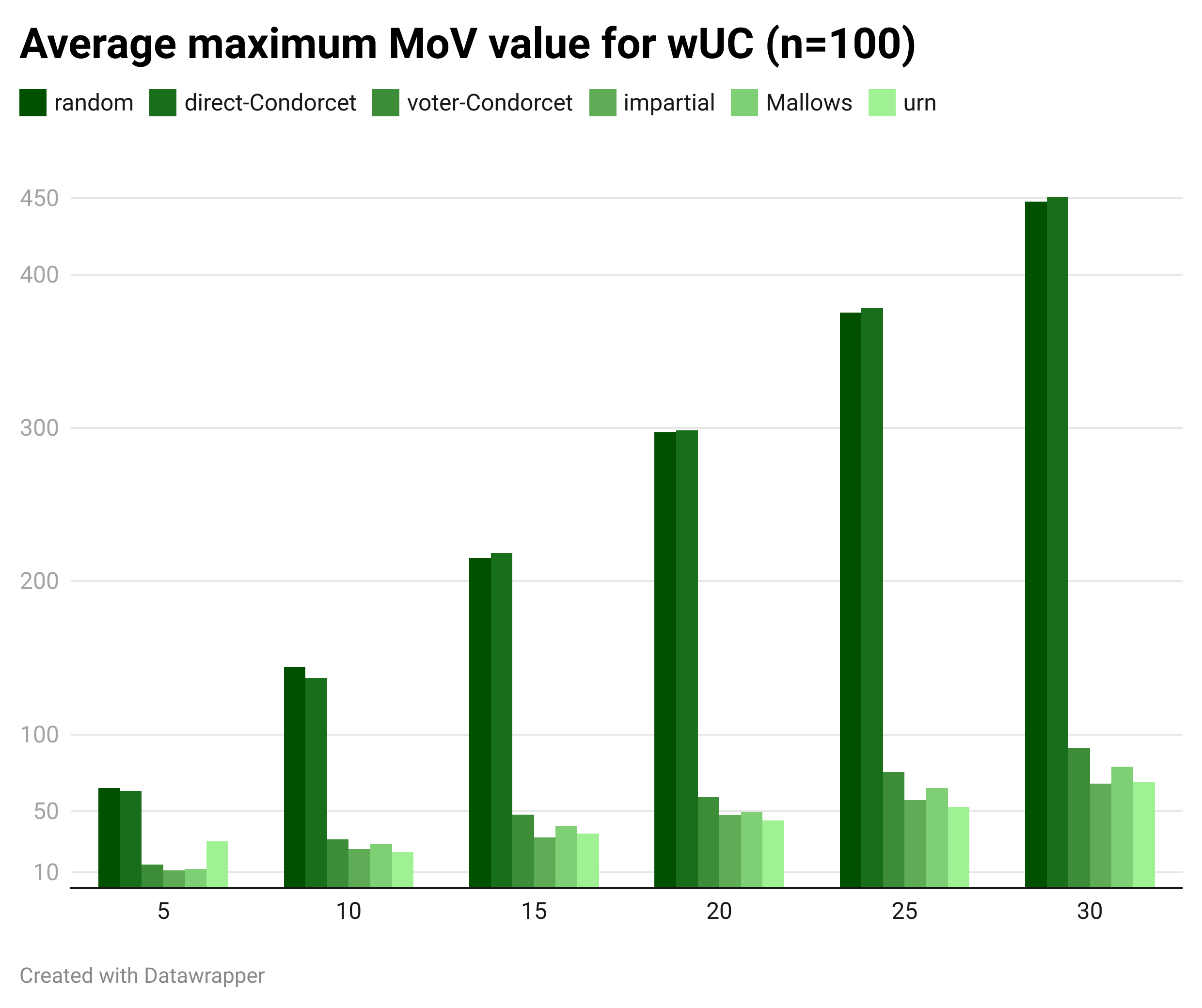}
    
    \caption{The figures show the average maximum \MoV\ value for tournaments of weight $n\in\{10,51,100\}$ for \wUC.
    The six stochastic generation models are illustrated next to each other for comparison.}
    \label{fig:AverageMaximumMoVValuewUC}
\end{figure}

\subsection{Observations}
\dividerNoLineEnd{Comparison of generation models}
There are many observation to make when comparing the generation models, even based on this little amount of data.
We are going to name only a few, most apparent observations.

One can observe that the uniform random model and the direct Condorcet noise model behave practically indistinguishable.
That is, both regarding the size of the winning sets and also regarding the three measured \MoV\ related values ($\lvert S_{max\MoV}\rvert$, $\lvert S\text{unique}\rvert$, $S_{max\MoV}$) for all three tournament solutions.
This is explainable by the similarity of their generation processes.
The average maximum \MoV\ value $S_{max\MoV}$ of those two generation models is considerably higher than the value of the other four generation models.
This is the case for all three tournament solutions $S\in\{\BO,\SC,\wUC\}$, but can be observed in particular for \wUC\ with increasing values of $m$:
For $m=5$, $\wUC_{max\MoV}$ is on average about 3 times higher for uniform random and direct--Condorcet (roughly 32) than for voter--Condorcet, the impartial model, Mallows and urn (roughly 10).
However, for $m=30$, $\wUC_{max\MoV}$ is on average 5 times higher. 
For a final comparison, $\SC_{max\MoV}$ for $m=30$ for uniform random and direct--Condorcet is only 1.5 times higher on average than for the other four generation models.

The voter variant of the Condorcet noise model also behaves similiar to random and direct--Condorcet.
But this correlation is not consistent over the three obtained values and for the different values of $m$.
For example regarding \wUC, voter--Condorcet aligns with random and direct--Condorcet on $\lvert\wUC\rvert$ and on $\wUC_{unique}$, but when it comes to $\wUC_{max\MoV}$, it aligns with impartial, Mallows, and urn.

Generally, it can be observed that any relations regarding $S_{max\MoV}$ are quite inconsistent and change for varying values of $m$ and $n$. This can be observed for all three tournament solutions.
Over all three values of $n$, voter--Condorcet for $m=5$ correlates with impartial and Mallows on $\BO_{max\MoV}$, while for $m=30$ it stands alone with the lowest value out of of all generation models.
One can also observe that for $n=10$, the value of $\BO_{max\MoV}$ for impartial, Mallows and urn (lighter shades) is higher than the value for random and direct-Condorcet (darker shades), while it is lower for $n=100$.

Further analysis in these directions might reveal compelling insight into the generation models, but also the tournament solutions and their \MoV.
Possibly in particular when comparing similar generation models; for example different versions of the Mallows model.
It might also be of interest what the different values $\lvert S_{max\MoV}\rvert$, $S unique$, and $S_{max\MoV}$ actually reveal about the structure of generation models for the different tournament solutions.
Another possible analysis might include not only the average of the measured values, but also the more detailed data.
\dividersmall{\MoV\ as a refinement ($\lvert S_{max\MoV}\rvert$)}
First, we observe that a refinement for \SC\ and \wUC\ could actually be useful, while it is not that necessary for \BO:
Although the actual values vary depending on the generation model, \wUC\ chooses almost every alternative as a winner on average.
This also aligns with results from earlier research.
And even if we only consider the more complex models using transitive preferences, \wUC\ still chooses roughly one third to half of the alternatives.
On the other hand, \BO\ consistently chooses only 1 alternative on average, while \SC\ stays in the 1 to 6 range. 
If we disregard the uniform random and direct-Condorcet model, \SC\ chooses 1 to 3 alternatives on average.
Interestingly, \SC\ chooses for the uniform random and direct-Condorcet model on average consistently one fifth of the alternatives for every $m\in\{5,10,15,20,25,30\}$.
We conclude that a tie breaker is needed for \SC\ and \wUC, in particular for an increasing number of alternatives.

Now, we observe that the \MoV\ is actually useful as a refinement:
The number of alternatives with maximum \MoV\ value is on average between 1 and 2 for all three tournament solutions regardless of the other parameters.
Therefore, the \MoV\ is utilizable as a refinement or tie breaker in particular for \wUC.
Interestingly, the alternatives with maximum \MoV\ value for \wUC\ were exactly those alternatives chosen by \BO\ and \SC\ in almost every tournament.
Unfortunately, this cannot be read off the presented data.
But the full tables would have exceeded the scope of this paper by a lot, this being why they were not included.
An interested reader is invited to run the program and test it for themselves.
\dividersmall{Differentiating between winning alternatives using \MoV\ ($\lvert S\text{unique}\rvert$)}
As \BO\ and \SC\ choose only a very small set of winners, the number of unique \MoV\ values is just as small. However, for \wUC\ the values promise more insight.
The number of unique \MoV\ values for \wUC\ increases almost linear with the number of alternatives.
This implies, that almost every \wUC\ winner has a unique \MoV\ value, which enables a more detailed differentiation between the \wUC\ winners.
Although we can observe a similar behaviour for \SC, because \SC\ does not choose many alternatives on average anyway, the application is not as significant as for \wUC.
\dividersmall{Average maximum \MoV\ value and bribery ($S_{max\MoV}$)}
The average maximum \MoV\ value overall decreases for \SC with increasing number of alternatives, while it increases for \BO\ and \wUC.
The value is in general very low for \SC; ranging on average between 2 and 5 for 30 alternatives and 51 voters.
This could be explained by the mechanism through which \SC\ chooses the winning alternatives.
For a destructive reversal set, we just need to find one incoming edge that will not be deleted. 
The more alternatives there are, the higher are the chances, that there exists already an incoming edge in the margin graph for every winning alternative.
This property suggests, \SC should possibly be avoided in a social choice setting bound to bribery and manipulation.

One last interesting observation is that the average maximum \MoV\ value changes depending on the number of voters, but the size of the winning set and the number of alternatives with maximum \MoV\ value do not really change.

\section{Conclusion}

The notion of margin of victory (\MoV), introduced by \citet*{brill2022margin} for unweighted tournaments, provides a generic framework for refining tournament solutions.
In this paper, we extended the notion to weighted tournaments.
We considered Borda's rule, Split Cycle, and the weighted Uncovered Set, and 
analyzed for each the computational complexity of computing the \MoV, observed differences and similarities in structural behaviour and revealed experimental differentiation.

There are several natural weighted tournament solutions whose \MoV we did not study, for instance, the Maximin rule \cite{Youn77a} and some refinements of Split Cycle, namely, Ranked Pairs, Beat Path, and the most recent Stable Voting and River Method.

Any connection between the behavior of a tournament solution $S$, or a class of tournament solutions, and the
structural properties of \MoVExt{S} could help with understanding both.
In our conducted, yet omitted, experiments we stuck to the state of the art using uniform random distributions or transitive preferences. Instead, one might use a \textit{ground truth of strength} of the players presented as a tournament or using systems like Elo ranking or True Skill. Given this input, analyzing the behavior of $S$ and $\MoVExt{S}$ or looking for bounds in expectation seems compelling.

One very practical generalization would be to work with partial tournaments, relaxing the requirement of $n$ comparisons between all pairs of alternatives. In many natural scenarios like election, resp. tournament, prognosis or data acquisition settings, in which a group of voters is asked for pairwise comparing only a certain subset of alternatives, we  have to work with partial information. An \MoV notion for such partial tournaments could be of assistance.

One open question by \citet{brill2022margin} asks for the number of distinct minimum reversal sets and its meaning. This of course might also be interesting for weighted tournaments.



\section{Acknowledgments}
This work was supported by 
the German Federal Ministry for Education and Research (BMBF) through the project ``KI Servicezentrum Berlin Brandenburg'' (01IS22092), 
the Deutsche Forschungsgemeinschaft under grant BR 4744/2-1,
and the Graduiertenkolleg “Facets of Complexity” (GRK 2434). We thank Markus Brill and Ulrike Schmidt-Kraepelin for helpful discussions and comments on the paper. We further thank the anonymous AAMAS'23 reviewers for their helpful feedback.
\bibliographystyle{ACM-Reference-Format} 
\bibliography{mov}

\balance

\end{document}